\newcommand{\edge}[1]{%
	\ \begin{tikzpicture}
		\begin{scope}{yshift=0cm}
    \draw[->,line width=0.3pt] (0.0,0) -- (0.6,0);
    \node[ xshift=9, above=-0.04](0,-0.20){  $\escale{$#1$}$ };
    	\end{scope}
    \end{tikzpicture}\
}
\newcommand{\nscale}[1]{\ensuremath{\scalebox{0.8}{#1}}}
\newcommand{\escale}[1]{\ensuremath{\scalebox{0.7}{#1}}}
\begin{document}

\title{The Hardness of Synthesizing Elementary Net Systems from Highly Restricted Inputs }
\author{Christian Rosenke and Ronny Tredup}
\maketitle


\begin{abstract}

Elementary net systems (ENS) are the most fundamental class of Petri nets.
Their synthesis problem has important applications in the design of digital hardware and commercial processes.
Given a labeled transition system (TS) $A$, feasibility is the NP-complete decision problem whether $A$ can be equivalently synthesized into an ENS.
It is well known that $A$ is feasible if and only if it has the event state separation property (ESSP) and the state separation property (SSP).
Recently, these properties have also been studied individually for their practical implications. 
A fast ESSP algorithm, for instance, would allow applications to at least validate the language equivalence of $A$ and a synthesized ENS.
Being able to efficiently decide SSP, on the other hand, could serve as a quick-fail preprocessing mechanism for synthesis.
Although a few tractable subclasses have been found, this paper destroys much of the hope that many practically meaningful input restrictions make feasibility or at least one of ESSP and SSP efficient.
We show that all three problems remain NP-complete even if the input is restricted to linear TSs where every event occurs at most three times or if the input is restricted to TSs where each event occurs at most twice and each state has at most two successor and two predecessor states.
\end{abstract}


\section{Introduction}
\label{sec:Introduction}

Synthesis of elementary net systems (ENS) is a mechanism that transforms a transition system (TS) $A$, which has a global concept of states, into an ENS $N$ having a local concept of states.
More precisely, $A$ is an automaton with states $S$ and arcs labeled by an event set $E$ and $N$ is supposed to be a Petri net, a network with a bipartite node set consisting of the events $E$ and so-called places $P$ which are linked by a flow arc set $F$.
In $N$, every state of $S$, including initial state $s_0$, corresponds to a marking, a specific subset of the places.
For every arc $s \edge{e} s'$ of $A$ the flow $F$ needs to translate the marking $M \subseteq P$ of $s$ into the marking $M' \subseteq P$ of $s'$ in case the event $e$ occurs.
This means that the deallocated places $M \setminus M'$ have to be the inputs $p$ of $e$, that is, where $(p,e) \in F$, and that the occupied places $M' \setminus M$ have to be the outputs $p'$ of $e$, where $(e,p') \in F$.

The act of synthesizing an elementary net systems (ENS) is useful in the description of processes, as for instance in digital hardware and commercial businesses.
ENSs provide a lot of useful properties for the specification, verification, and synthesis of asynchronous or self-timed circuits \cite{C1996,YK1998}.
Their inherent concepts of choice and causality also make ENSs the ideal starting point for process modeling languages like the Business Process Modeling Notation (BPMN) or the Event Driven Process Chains (EPC) \cite{MBD2016}.
Their simpleness is especially useful for the specifications of workflow management systems like \textsc{milano} \cite{AM2000}.

Not every TS $A$ can be transformed into an ENS $N$ as described above.
Hence, the feasibility problem of ENS synthesis is a relevant decision problem.
Traditionally it is approached by the following properties:
The {\em state separation property, SPP}, asserts that different states of $A$ correspond to different markings in $N$.
The {\em event state separation property, ESSP}, makes sure that events $e$ are disabled in markings of $N$ where the corresponding state of $A$ has no outgoing $e$-labeled arc.
By \cite{BBD2015}, $A$ is feasible if and only if it satisfies both properties.
Yet, the SSP and ESSP are worth studying even when considered alone.
TSs having only the ESSP can still be synthesized for ENSs that implement the given behavior but have less states \cite{BBD2015}.
Accordingly, state faithful ENSs that, however, generalize requested behavior, result from synthesizing TSs where only the SSP holds.

Deciding the SSP or ESSP is NP-complete \cite{H1994}.
Naturally, feasibility is NP-complete \cite{BBD1997}, too.
Despite this, practice needs efficient synthesis algorithms for relevant TSs.
Free-Choice Acyclic ENSs, for instance, are applied in workflow modeling and, fortunately, can be synthesized in polynomial time \cite{AM2000}.
There is hope that feasibility or any of the SSP or ESSP could be decided efficiently for many other reasonably large subclasses of TSs with practical significance.

In this paper, we propose two natural fundamental parameters of TSs that, at first glance, should have a positive impact on the synthesizing complexity when being put restrictions on.
We formalize these parameters as follows:
\begin{description}
\item[Event manifoldness]
of a TS $A$ is the maximum amount $k$ of edges in $A$ that can be labeled with the same event.
Accordingly, we shall speak about $k$-fold TSs and the problems $k$-SSP, $k$-ESSP, and $k$-feasibility.  
\item[State degree]
of a TS $A$ is the maximum amount $g$ of in- and, respectively, out-edges at the states of $A$.
Then, the decision problems where input is restricted to so called $g$-grade TSs are refered to as $g$-grade SSP, $g$-grade ESSP, and $g$-grade feasibility.
In case of $g=1$ where, additionally, the initial state has no predecessor we use the term \emph{linear}.
\end{description}
Linear TSs where the cardinality of the event set is bounded by $2$ have already been studied in \cite{BESW2016}.
Their feasibility can be decided by a letter counting algorithm in quadratic time when the sought net is a place/transition Petri net.

Our work shows that even simultaneous and extreme restrictions of event manifoldness and state degree do not help reducing ENS synthesis complexity.
In fact, this paper establishes that all three problems, SSP, ESSP, and feasibility, remain NP-complete for linear $k$-fold TSs if $k\geq 3$ and for $g$-grade $k$-fold TSs if $g\geq 2$ and $k\geq 2$.

On the other hand, $1$-SSP, $1$-ESSP, and $1$-feasibility, that is, when events occur only once, are all trivial and therefore tractable even for unbounded state degree.
In this paper, we show that SSP can also be solved in polynomial time for linear $2$-fold TSs.
Hence, with respect to the two TS parameters, linear $2$-fold TSs are the only non-trivial inputs that are not entirely intractable.
However, it remains for future work to answer if linear $2$-ESSP and linear $2$-feasibility are also efficiently solvable.
We believe that this is the case as the three problems proved to be equivalently challenging for all other examined input restrictions.
Figure \ref{fig:result_overview} shows an overview of our findings.
\begin{figure}[h]
\centering
\newcommand{\tEntry}[3]{
\fill[#3] (#1) +(-0.5,-0.25) rectangle +(0.5,0.25);
\node at (#1) {#2};
}
\begin{tikzpicture}
\tikzstyle{istrivial} = [black!16]
\tikzstyle{inthispaper}=[white]
\tikzstyle{inpreviouspaper}=[white]
\tikzstyle{isopen} = [black!08]
%
\tEntry{-3.125,1}{$g$}{white}
\tEntry{-1.625,1}{problem {\textbackslash} $k$}{white}
\foreach \i in {1,...,4} {\tEntry{\i-1,1}{\i}{white}}
\tEntry{4,1}{\dots}{white}
%
\foreach \j in {1,...,2} {\tEntry{-3.125,-1.5*\j+1.5}{\j}{white}}
\tEntry{-3.125,-2.5}{\vdots}{white}
\tEntry{-1.625,-2.5}{\vdots}{white}
\foreach \j in {1,...,2} {\tEntry{-1.625,-1.5*\j+2}{SSP}{white}}
\foreach \j in {1,...,2} {\tEntry{-1.625,-1.5*\j+1.5}{ESSP}{white}}
\foreach \j in {1,...,2} {\tEntry{-1.625,-1.5*\j+1}{Feasibility}{white}}
%
\foreach \j in {-4,...,1} {\tEntry{0,0.5*\j}{P}{istrivial}}
%
\tEntry{1,0.5}{P}{inthispaper}
\tEntry{1,0}{open}{isopen}
\tEntry{1,-0.5}{open}{isopen}
%
\foreach \j in {-2,...,-4} {
\foreach \i in {1,...,3}{\tEntry{\i,0.5*\j}{NPC}{inthispaper}}
}
%
\foreach \j in {-1,0,1} {
\foreach \i in {2,3}{\tEntry{\i,0.5*\j}{NPC}{inpreviouspaper}}
}
%
\foreach \i in {1,...,4} {\tEntry{\i,-2.5}{\dots}{inthispaper}}
\tEntry{0,-2.5}{\dots}{istrivial}
\foreach \j in {1,...,-1} {\tEntry{4,0.5*\j}{\dots}{inpreviouspaper}}
\foreach \j in {-2,...,-4} {\tEntry{4,0.5*\j}{\dots}{inthispaper}}
%
\tikzstyle{dottedline} = [black!64,dotted]
\tikzstyle{boldline} = [black,line width = 1pt]
\draw[->,boldline] (-3.5,0.75) -- ++(8.75,0);
\foreach \j in {0,-1.5} {
\draw[boldline] (-3.5,\j-0.75) -- ++(8.25,0);
\draw[dottedline] (-2.5,\j-0.25) -- ++(7,0);
\draw[dottedline] (-2.5,\j+0.25) -- ++(7,0);
}
\draw[->,boldline] (-2.75,1.25) -- ++(0,-4.5);
\foreach \i in {0,...,4} {\draw[boldline] (\i-0.5,1.25) -- ++(0,-4.25);}
%
\draw [decorate, decoration={brace,mirror, amplitude=3pt}, yshift=0pt]
(-0.5,-3.25) -- (0.5,-3.25) node [midway, below, yshift=-0.25cm] {trivial cases};
\end{tikzpicture}
\caption{Overview of our results regarding the complexity of SSP, ESSP, and feasibility depending on the parameters event manifoldness $k$ and state degree $g$.}
\label{fig:result_overview}
\end{figure}

This paper is organized as follows:
The following two sections introduce preliminary notions.
Section~\ref{sec:The_Hardness_of_Linear_3Feasibility_and Linear_3ESSP} gives a polynomial time reduction of cubic monotone one-in-three $3$-SAT to linear $3$-ESSP.
Moreover, it shows that, for linear TSs, the ESSP implies the SSP. 
That means, for this class of TSs, feasibility and ESSP are equivalent.
Section~\ref{sec:The_Hardness_of_Linear_3-SSP} provides a polynomial time reduction of linear $3$-ESSP to linear $3$-SSP.
Taking these three results together, we have already shown that feasibility, ESSP, and SSP remain NP-complete on $g$-grade $k$-fold TSs for all $g \geq 1$ and $k \geq 3$.

As $1$-fold TSs are trivial, this just leaves the classes of $2$-fold TSs as unanswered questions. 
Therefore, Section~\ref{sec:The_Hardness_of_2-grade_2-ESSP_2-Feasibility} introduces a polynomial time reduction of cubic monotone one-in-three $3$-SAT to $2$-grade $2$-ESSP.
Our reduction makes sure that if the produced TS instances have the ESSP then they always have the SSP, too.
In this way, ESSP and feasibility become the same problem with respect to the generated instances.
Next, Section~\ref{sec:The_Hardness_of_2-grade_2-SSP} applies the NP-completeness of linear $3$-feasibility from Section~\ref{sec:The_Hardness_of_Linear_3Feasibility_and Linear_3ESSP} and reduces this problem to $2$-grade $2$-SSP in polynomial time.
Consequently, feasibility, ESSP, and SSP are also NP-complete for $g$-grade $2$-fold TSs where $g \geq 2$.

At this point, nearly all cells of the chart in Figure \ref{fig:result_overview} are filled.
Linear $2$-fold TSs are left as the only remaining case of interest.
To attack this last survivor, Section \ref{sec:The_Tractability_of_linear_2-SSP} presents an easy to check property of TSs which is equivalent to linear $2$-SSP, hence, putting this problem into P.
Moreover, this section provides an algorithm that, given a linear 2-fold input TS $A$ and two states $s_i,s_j$ of $A$, computes a separating region for $s_i,s_j$ in quadratic time.

For the sake of readability we transfer some rather technical proofs from Section~\ref{sec:The_Hardness_of_Linear_3Feasibility_and Linear_3ESSP} and Section~\ref{sec:The_Hardness_of_2-grade_2-ESSP_2-Feasibility} to Section~\ref{sec:Secondary_Proofs}.


\section{Preliminaries}
\label{sec:Preliminaries}

In this paper, we deal with (deterministic) \emph{transition systems} (TS) $A = (S,E,\delta,s_0)$ which are determined by finite disjoint sets $S$ of states and $E$ of events, a partial transition function $\delta: S\times E\rightarrow S$, and an initial state $s_0 \in S$.
Usually, we think of $A$ as an edge-labelled directed graph with node set $S$ and where every triple $\delta(s,e)=s'$ is interpreted as an $e$-labelled edge $s\edge{e}s'$ from $s$ to $s'$.
To improve readability, we say that an event $e$ \emph{occurs} at a state $s$ if $\delta(s,e)=s'$ for some state $s'$ and we formally abbreviate this with $s\edge{e}$.
Aside from determinism, TSs are required to be \emph{simple}, that is, there are no multi-edges $s\edge{e}s'$ and $s\edge{e'}s'$, \emph{loop-free}, which rules out instant state recurrence like $s\edge{e}s$, \emph{reachable}, where every state can be reached from $s_0$ by a directed path, and \emph{reduced}, which means free of unused events in $E$.
To describe subclasses of TSs we formally introduce the parameters \emph{event manifoldness} and \emph{state degree}.
We say that $A$ is a \emph{$k$-fold} TS, if every event labels at most $k$ edges, that is, the cardinality of the set $\{(s,s') \mid \delta(s,e)=s'\}$ is at most $k$ for all $e \in E$.
Moreover, $A$ is a \emph{$g$-grade} TS, if the states of $A$ have at most $g$ successors, respectively predecessors, that is, the cardinalities of the sets $\{s' \mid \exists e \in E: \delta(s',e) = s\}$ and $\{s' \mid \exists e \in E: \delta(s,e) = s'\}$ are at most $g$ for all $s \in S$.

We use the term \emph{linear} for $1$-grade TS where the initial state has no predecessor.
A linear TSs can be defined by $A = s_0 \edge{e_1} s_1 \edge{e_2} \dots \edge{e_t}  s_t$, the sequence of states and events starting with $s_0$ and ending at a terminal $s_t$.
The only present arcs $s_{i-1} \edge{e_i} s_i$ link consecutive states for $i \in \{1, \dots, t\}$.
The events are $E = \bigcup_{i = 1}^t \{e_i\}$.
Of  course, defining linear TSs like this assures determinism, simplicity, loop-freeness, reachability and reducedness.

The key concept for following notions are \emph{regions} of TSs.
A set of states $R \subseteq S$ is called a region of a TS $A$ if it permitts a so-called \emph{signature} $sig: E\rightarrow \{-1,0,1\}$.
This means, all edges $s\edge{e}s'$ have to satisfy the equation $R(s')=sig(e)+R(s)$, where, by a little abuse of notation, $R(s)=1$ if $s\in R$ and otherwise $R(s)=0$ for all $s\in S$.
It is easy to see that every region $R$ has a unique signature which is therefor called \emph{the} signature $sig_{R}$ of $R$.
We use $enter_{R}=\{e\mid sig_R(e) = 1\}$, $exit_{R}=\{e\mid sig_R(e) = -1\}$, and $obey_{R}=\{e\mid sig_R(e) = 0\}$ to cumulate events according to their orientation with respect to $R$'s border.
Analogously, we say an event $e$ exits (enters) $R$ when $e\in exit_R$ ($e\in enter_R$) or, otherwise, obeys.
By $\mathcal{R}(A)$ we refer to the set of all regions of $A$.

Based on the previous definition, we say that two states $s,s' \in S$ are \emph{separable} in $A$ if there is a region $R \in \mathcal{R}(A)$ with $R(s) \not= R(s')$.
Moreover, an event $e\in E$ is called \emph{inhibitable} at state $s \in S$ if there is a region $R \in \mathcal{R}(A)$ with either $R(s)=0$ and $sig_R(e)=-1$ or $R(s)=1$ and $sig_R(e)=1$.

Using these notions, we are able to define the conditions of TSs studied in this paper.
Formally, a TS $A$ has the \emph{state separation property} (SSP), if all states of $A$ are pairwise separable.
For the \emph{event state separation property} (ESSP), all events $e$ of $A$ are required to be inhibitable at all states $s$ that have no occurence of $e$, that is, where $s\edge{e}$ is not fulfilled.
We also say that $A$ is \emph{feasible} if and only if it has the SSP and the ESSP.

For convenience, we reuse the names SSP and ESSP for the computational problems of deciding the respective property while we use the new term \emph{feasibility} for the decision if a given TS is feasible.
To study the problems for restricted event manifoldness, we define $k$-SSP, $k$-ESSP, and $k$-feasibility for all naturals $k$ where input is restricted to $k$-fold TSs.
Analoguously, for all naturals $g$, we let $g$-grade SSP, $g$-grade ESSP, and $g$-grade feasibility be the decision problems with input restricted to $g$-grade TSs.
Needless to say that for all, SSP, ESSP, and feasibility, the $g$-grade $k$-problem restricts the input to TSs where the event manifoldness is bounded by $k$ and, at the same time, the state degree is bounded by $g$.

In this context it is noteworthy that the set of $g$-grade $k$-fold TSs is a subclass of $g'$-grade $k'$-fold TSs in case $k \leq k'$ and $g \leq g'$.
Hence, hardness results for a specific class propagate to higher classes of the hierarchy and an efficient algorithm that solves a particular case is also a legitimate solution for lower classes.

\emph{Elementary net systems} (ENS), as the computational objective of synthesis, play a surprisingly secondary role in the argumentation of our paper.
The reason is that we approach the problem rather by the SSP and ESSP which are defined on the basis of TSs.
As a consequence, we postpone their definition to Section \ref{sec:The_Hardness_of_Linear_3Feasibility_and Linear_3ESSP}, the only place in this paper where it is really needed.
For the interested reader, we recommend the monograph of Badouel, Bernardinello, and Darondeau \cite{BBD2015} that gives an excellent introduction to the topic.


\section{Unions, Transition System Containers}
\label{sec:UnionsTSContainers}

NP-completeness proofs are often built on gadgets to modularize arguments.
We introduce the concept of \emph{unions} to allow the independent creation of gadget TSs with specific properties.
In a union they can be grouped together and treated as if they were parts of the same big TS.
A final \emph{joining} operation puts together the independent parts of a union and, in the process, preserves the SSP and feasibility in the joined TS; however not necessarily the ESSP.

Formally, if $A_0=(S_0,E_0,\delta_0,s_0^0), \dots ,A_m=(S_m,E_m,\delta_m,s_0^m)$ are TSs with pairwise disjoint states then we say that $U(A_0, \dots, A_m)$ is their \emph{union}.
By $S(U)$ we denote the entirety of all states in $A_0, \dots, A_n$ and $E(U)$ is the aggregation of all events.
The joint transition function $\Delta^U = \bigcup_{i=0}^m \delta_i$ of $U$ is defined as 
\[
\Delta^U(s,e) = \begin{cases}
\delta_i(s,e), & \text{if } s \in S_i \text{ and } e \in E_i,\\
\text{undefined}, & \text{else}
\end{cases}
\]
for all $s\in S(U)$ and all $e \in E(U)$.
If every event in $E(U)$ occurs at most $k$ times in $U$, not necessarily as part of the same TS, we say that $U$ is $k$-fold.

For their versatility, we allow to build unions recursively as follows:
First of all, we identify any TS $A$ with the union consisting only of $A$, that is, we let $U(A) = A$.   
Next, assume we have a collection of unions $U_1= U(A^1_0,\dots,A^1_{m_1})$ to $U_n=(A^n_0,\dots,A^n_{m_n})$.
Notice that, by the equalization of monadic unions and the respective TSs, this collection possibly includes plain TSs, namely for all $i \in \{1, \dots, n\}$ where $m_i = 0$.
Finally, the complex union $U(U_1,\dots,U_n)$ is simply defined as the flattened union $U(A^1_0,\dots,A^1_{m_1},\dots, A^n_0,\dots,A^n_{m_n})$.
This allows us to nest unions for the sake of a higher degree of modularity.

The merit of unions is to combine independent TSs with specific functions and treat them as if they were part of the same TS.
To this end, we lift the concept of regions as well as the SSP and ESSP to unions as follows:
Let $U = U(A_0, \dots, A_{m})$ be a union of the TSs $A_0, \dots, A_{m}$.
We say that  $R \subseteq S(U)$ is a region of $U$ if and only if it permits a signature $sig_R: E(U) \rightarrow \{-1, 0, 1\}$.
Hence, for all $i \in \{0, \dots, m\}$ the subset $R_i = R \cap S_i$ of $R$, coming from the states $S_i$ of $A_i$, has to be a region of $A_i$ with a signature $sig_{R_i}$ that resembles $sig_R$ on the events $E_i$ of $A_i$.
This means $sig_{R_i}(e) = sig_R(e)$ for all $e \in E_i$.
Then, $U$ has the SSP if and only if for all distinct $s, s' \in S(U)$ we have one of the following conditions:
Either $s$ and $s'$ are part of different TSs or, if both are in $A_i$, there is a region $R$ of $U$ such that $R \cap S_i$, the subregion of $A_i$, separates $s$ and $s'$.
Moreover, $U$ has the ESSP if and only if for all events $e \in E(U)$ and all states $s \in S(U)$ we are in one of the following cases:
Either $A_i$, the TS that contains $s$, fulfills $s\edge{e}$ or there is a region $R$ of $U$ such that $R_i = R \cap S_i$, the subregion of $A_i$, inhibits $e$ at $s$, that is, $R_i(s)=0$ and $sig_{R_i}(e)=-1$ or $R_i(s)=1$ and $sig_{R_i}(e)=1$.
Naturally, $U$ is called feasible if it has both, the SSP and ESSP.

Finally, we need to merge the elements of a given union $U$ into one TS $A$ that preserves the behavior of $U$.
To this end, let $U = U(A_0, \dots, A_{m})$ be a union of the TSs $A_0, \dots, A_{m}$ and let $s^0_0, \dots, s^m_0$ be the respective initial states.
Moreover, for all $i \in \{0, \dots, m-1\}$, let $t^i$ be an arbitrary but fixed discretionary state of $A_i$ that has been chosen to support the respective construction.
Then we define the joining $A(U) = (S(U) \cup Z, E(U) \cup Y, \delta, s^0_0)$ to be the TS with additional connector events and states
\[Y = \bigcup_{i=1}^{m} \{y_1^i,y_2^i\} \text{ and } Z= \{z^1, \dots, z^{m}\}\]
which link together the loose elements of $U$ by
\[\delta(s,e) = \begin{cases}
\Delta^U(s,e), & \text{if } s \in S(U) \text{ and } e \in E(U),\\
z^i, & \text{if } s = t^{i-1} \text{ and } e=y^i_1,\\
s^{i}_0, & \text{if } s = z^i \text{ and } e=y^i_2,\\
\text{undefined,} & \text{else.}
\end{cases}\]
Notice that the construction of $A(U)$ preserves determinism, simplicity, loop-freeness, reachablity, and reducedness.

In this paper, we refer to $t^i$ as the \emph{terminal} state of $A_i$.
In particular, for a linear TS $A_i$ the state $t^i$ is the actual terminal state $s_{t_i}$.
Therefor, joining a union of linear TSs $A_0, \dots, A_{m}$ yields a linear TS
\[
A(U) = A_0 \edge{y_1^1} z^1 \edge{y_2^1} A_1 \edge{y_1^2} z^2 \edge{y_2^2} \dots \edge{y_1^{m}} z^{m} \edge{y_2^{m}} A_{m}.
\]

To justify the replacement of bulky composite TSs with modular unions, we use the following lemma:
\begin{lemma}\label{lem:union_validity}
If $U = U(A_0,\ldots,A_m)$ is a union of TSs $A_0,\ldots,A_m$ then $U$ has the SSP, respectively is feasible, if and only if the joining $A(U)$ has the SSP, respectively is feasible.
\end{lemma}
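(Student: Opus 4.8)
The plan is to move regions back and forth between $U$ and $A(U)$; two observations carry most of the weight. \emph{Restriction:} if $R$ is a region of $A(U)$, then $R\cap S(U)$ is a region of $U$ whose signature is that of $R$ restricted to $E(U)$, since every edge lying inside some $A_i$ is also an edge of $A(U)$. \emph{Extension:} if $R'$ is a region of $U$, then for \emph{every} choice of which connector states $z^1,\dots,z^m$ to include there is a region $R$ of $A(U)$ with $R\cap S(U)=R'$ and $sig_R$ agreeing with $sig_{R'}$ on $E(U)$; indeed the connector path $t^{i-1}\edge{y_1^i}z^i\edge{y_2^i}s_0^i$ only forces $sig_R(y_1^i)=R(z^i)-R'(t^{i-1})$ and $sig_R(y_2^i)=R'(s_0^i)-R(z^i)$, and both values lie in $\{-1,0,1\}$ no matter whether $z^i$ is taken in or not. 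I also use that an event $e\in E(U)$ occurs at $s\in S(U)$ in $A(U)$ exactly when it occurs at $s$ in the component TS containing $s$, and that the complement of a region is again a region with negated signature. With restriction alone, the ``only if'' directions follow: restricting a region of $A(U)$ that separates two states of some $A_i$, or that inhibits some $e\in E(U)$ at some $s\in S(U)$, yields a region of $U$ of the same kind; hence the SSP, respectively feasibility, of $A(U)$ descends to $U$.

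For the converse directions I first record three families of explicit regions of $A(U)$ built from connectors only: the singleton $\{z^i\}$, which has $y_1^i$ entering, $y_2^i$ exiting and every other event obeying; the suffix $R^{(i)}=\{z^i,\dots,z^m\}\cup\bigcup_{j\ge i}S_j$, which has $y_1^i$ entering and every other event obeying; and $S_{i-1}$ (with no connector state included), which has $y_1^i$ exiting and every other event obeying. Now assume $U$ has the SSP. Two states of the same $A_i$ are separated by extending a region of $U$ that separates them; two distinct connector states, or a connector state and a state of some $A_i$, are separated by the appropriate $\{z^l\}$; and a state of $A_i$ and a state of $A_j$ with $i<j$ are separated by $R^{(i+1)}$. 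Thus $A(U)$ has the SSP. Assume next that $U$ is feasible; then $A(U)$ has the SSP, and we must inhibit every event at every state that lacks it. For $e\in E(U)$ and a state $s$ in some $A_l$ where $s\edge{e}$ is not fulfilled, the ESSP of $U$ --- whose definition ranges over all of $E(U)$, even events not occurring in $A_l$ at all --- gives a region of $U$ inhibiting $e$ at $s$, which we extend. For $e\in E(U)$ at a connector state $z^l$ we use the SSP of $U$ together with reducedness of the $A_i$: picking an $e$-labelled edge in some $A_i$, separating its endpoints by a region of $U$ and complementing if needed produces a region $R'$ of $U$ with $sig_{R'}(e)=1$, whose extension with $z^l$ included inhibits $e$ at $z^l$. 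The event $y_2^i$ occurs only at $z^i$ and is inhibited at every other state by $\{z^i\}$. Finally $y_1^i$ occurs only at $t^{i-1}$: it is inhibited at every state outside $S_{i-1}$ by the region $S_{i-1}$, and at a state $s\in S_{i-1}\setminus\{t^{i-1}\}$ by extending a region of $U$ that separates $s$ from $t^{i-1}$, including $z^i$ in the extension or not according to the side on which $t^{i-1}$ falls.

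The one genuinely delicate step is this last one, namely converting a region of $U$ that merely tells the component state $s$ apart from the terminal $t^{i-1}$ into a region of $A(U)$ that inhibits the connector event $y_1^i$ at $s$: one must choose whether to include $z^i$ according to whether $t^{i-1}$ is inside or outside the chosen region (possibly after complementing), confirm that the induced value $sig(y_2^i)=R'(s_0^i)-R(z^i)$ remains admissible, and check that the resulting orientation of $y_1^i$ is the one that inhibits at $s$ and not merely at $t^{i-1}$. Everything else is routine bookkeeping with restriction and extension; the case $m=0$, where $A(U)=A_0=U$, is vacuous.
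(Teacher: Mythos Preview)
Your proof is correct and follows essentially the same approach as the paper: restrict regions from $A(U)$ to $U$ for the ``if'' direction, and extend regions from $U$ to $A(U)$ for the ``only if'' direction, exploiting the uniqueness of the connector events $y_1^i,y_2^i$ to freely adjust signatures. The differences are cosmetic---you use suffix regions $R^{(i)}$ to separate states of different components (a case the paper's proof arguably glosses over), and you inhibit $e\in E(U)$ at connector states via a separate SSP-based argument whereas the paper simply observes that its standard extension (all $z^i$ excluded) already handles this---but the core mechanism is identical.
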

\begin{proof}
\emph{If}:
Projecting a region separating $s$ and $s'$, respectively inhibiting $e$ at $s$, in $A(U)$ to the component TSs yields a region separating $s$ and $s'$, respectively inhibiting $e$ at $s$ in $U$.
Hence, the (E)SSP of $A(U)$ trivially implies the (E)SSP of $U$.

\emph{Only if}:
A region $R$ of $U$ separating $s$ and $s'$, respectively inhibiting $e$ at $s$, can be completed to become an equivalent region of $A(U)$ by setting
\[R(z^i) = 0, sig_{R}(y^i_1) = -R(t^{i-1}), \text{ and } sig_{R}(y^i_2) = R(s^{i}_0)\]
for all $i \in \{1, \dots, m\}$, where $t^i$ is the transfer state of $A_i$.
Notice that $R$ also inhibits $e$ at all connector states.
Hence, constructing one region for every event as demonstrated inhibits all events at all connector states.

For the (E)SSP of $A(U)$ it is subsequently sufficient to analyze  (event) state separation concerning the connector states (events).
In fact, to find a region $R^i$ separating $z^i$ and another state $s \in S \cup Z$ we select $R^i(z^i) = 1$ and $R^i(s') = 0$ for all other states $s' \in S \cup Z$, and all events become obeying except for $sig_{R^i}(y^i_1) = 1$ and $sig_{R^i}(y^i_2) = -1$.
This means, if $U$ has the SSP then $A(U)$ has the SSP, too.
Moreover, notice that this region $R^i$ also inhibits $y^i_2$ at all required states of $A(U)$.
Hence, constructing a region $R^i$ for every $i \in \{1, \dots, m\}$ inhibits all secondary connector events in $A(U)$.

Let $U$ be feasible, which implies that $A(U)$ has the SSP.
To get that $A(U)$ is feasible, $A(U)$ needs the ESSP, where only the inhibition of $y^i_1$ at all appropriate states is left to show for all $i \in \{1, \dots, m\}$.
Firstly, the set $S_i$ of states from component $A_i$ is a region of $A(U)$ that makes sure that $y^i_1$ is inhibited at all required states in $S \setminus S_i$.
Secondly, the event $y^i_1$ can be inhibited at any state $s \in S_i$ as follows:
As $A(U)$ has the SSP, there is a region $R$ of $A(U)$ with $R(s) = 0$ and $R(t^i) = 1$.
If $R(z^i) = 0$ then $R$ already inhibits $y^i_1$ at $s$.
Otherwise, as each connector event is unique in $A(U)$, we simply get what we  need by removing $z^i$ from $R$ which yields a region $R'=R\setminus \{z^i\}$ which behaves like $R$ except for $R'(z^i) = 0$, $sig_{R'}(y^i_1)=-1$ and $sig_{R'}(y^i_2) = R'(s^{i}_0)$.
\end{proof}
Unfortunately, it is not easy to merge a union $U$ and preserve the ESSP of $U$ for $A(U)$.
Generally, this works only in case $U$ has the SSP.
Therefore, reductions to the unions ESSP problem introduced in this paper need to make sure that generated instances always have the SSP.


\section{The Hardness of Linear 3-Feasibility and Linear 3-ESSP}
\label{sec:The_Hardness_of_Linear_3Feasibility_and Linear_3ESSP}

This section starts with feasibility and the ESSP and shows that both problems remains hard even if the input is restricted to linear $3$-fold TSs:
\begin{theorem}
\label{thm:The_Hardness_of_Linear_3Feasibility_and Linear_3ESSP}
The feasibility problem and the ESSP problem are NP-complete on $g$-grade $k$-fold transition systems for all $g \geq 1$ and all $k \geq 3$.
\end{theorem}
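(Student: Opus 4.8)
\emph{Membership in NP.}
The plan is to first settle membership and then establish hardness for the smallest class in the hierarchy. A region of a TS is completely determined by its signature $sig_R\colon E\to\{-1,0,1\}$ together with the single bit $R(s_0)$, so it has size linear in $|E|$, and one verifies that such data defines a genuine region simply by checking the equation $R(s')=sig_R(e)+R(s)$ on every edge. Hence a ``yes''-certificate for SSP is a family of at most $\binom{|S|}{2}$ regions, one separating each pair of states; for ESSP it is a family of at most $|S|\cdot|E|$ regions, one inhibiting each event at each state where the event has no occurrence; and for feasibility it is the union of both families. All of these are polynomial in size and polynomially checkable, so SSP, ESSP and feasibility, and therefore all their $g$-grade $k$-fold restrictions, lie in NP.

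\emph{The reduction.}
For hardness it suffices, by the subclass containment noted in the preliminaries, to treat the smallest class $g=1$, $k=3$, that is, linear $3$-fold TSs. I would reduce from cubic monotone one-in-three $3$-SAT, the NP-complete problem of deciding, for a formula $\varphi$ whose clauses are three-element sets of variables and in which each variable occurs in exactly three clauses, whether some assignment makes exactly one variable true in every clause (a \emph{one-in-three model}). From such a $\varphi$ I would build in polynomial time a $3$-fold union $U_\varphi = U(\dots)$ of linear TSs: gadgets realising the variables and the clauses, arranged so that each variable event occurs exactly three times overall and no event ever labels more than three edges, together with a constant number of auxiliary ``translator'' gadgets whose only purpose is to pin down the signature of a distinguished event at a distinguished state. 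The intended correspondence is that a region witnessing the critical inhibition must have $sig_R(X)=1$ precisely on the true variables, while the clause gadgets force exactly one true variable per clause. Joining $U_\varphi$ then produces $A(U_\varphi)$ which, all components being linear, is again a linear — hence deterministic, simple, loop-free, reachable and reduced — $3$-fold TS, computable in polynomial time.

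\emph{Correctness.}
The gadgets would be designed so that $U_\varphi$ has the SSP unconditionally — a routine inspection, since in a linear component any two states are separated by an ``interval'' region of the path — whereas $U_\varphi$ has the ESSP if and only if $\varphi$ admits a one-in-three model. This last equivalence is the heart of the reduction: its forward direction turns a model into the required inhibiting regions, and its backward direction analyses the admissible signatures of a region inhibiting the critical event at the critical state and shows that they force a globally consistent assignment with exactly one true variable per clause. Combining the two, $U_\varphi$ is feasible iff $\varphi$ has a one-in-three model, and by Lemma~\ref{lem:union_validity} so is $A(U_\varphi)$. Finally, since $A(U_\varphi)$ is linear, the implication ``ESSP $\Rightarrow$ SSP'' for linear TSs — which I would establish separately in this section by converting, for any two states, a region inhibiting a suitable event at one of them into a region separating the two — shows that $A(U_\varphi)$ has the ESSP iff it is feasible. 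Chaining everything, $\varphi$ has a one-in-three model iff $A(U_\varphi)$ has the ESSP iff $A(U_\varphi)$ is feasible, so both linear $3$-ESSP and linear $3$-feasibility are NP-hard, and by the subclass containment the theorem follows for all $g\ge 1$ and $k\ge 3$.

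\emph{Main obstacle.}
I expect the technical core to be the gadget design together with its verification. The $3$-fold budget is extremely tight — after its defining occurrence an event can be reused only twice — so one must simultaneously keep every component linear, obtain the SSP for free, make the critical event inhibitable exactly for one-in-three assignments, and guarantee that every remaining inhibition demanded by the ESSP is satisfiable no matter the shape of $\varphi$. Confirming the last two points amounts to enumerating, for the relevant events and states, all regions that could conceivably witness their inhibition and checking that the constraints these regions impose coincide precisely with the one-in-three conditions; this case analysis is where most of the work goes, part of it being deferred to Section~\ref{sec:Secondary_Proofs}.
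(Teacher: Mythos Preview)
Your overall architecture is right and matches the paper --- reduce from cubic monotone one-in-three $3$-SAT, build a $3$-fold union $U_\varphi$ of linear gadgets, establish the ESSP$\Rightarrow$SSP implication for linear TSs, and use Lemma~\ref{lem:union_validity} to transfer to the joining $A(U_\varphi)$. The NP-membership argument and the identification of the main difficulty are also fine.

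There is, however, a genuine gap in your correctness paragraph. You claim that $U_\varphi$ has the SSP \emph{unconditionally}, calling it ``a routine inspection, since in a linear component any two states are separated by an `interval' region of the path''. This justification is wrong on two counts. First, an interval $\{s_0,\dots,s_i\}$ of a linear TS is \emph{not} a region in general once events repeat: in $s_0\edge{a}s_1\edge{a}s_2$ the set $\{s_0\}$ forces $sig(a)=-1$ on the first edge but then $R(s_2)=R(s_1)+sig(a)=-1$, which is impossible. Since your gadgets are $3$-fold, events do repeat within components. Second, even if an interval were a region of one component, SSP for a \emph{union} requires a region of the whole union, so the signature must be consistent across every component sharing that event --- which is exactly where the difficulty of the whole reduction lives. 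Unconditional SSP of $U_\varphi$ is therefore neither routine nor obviously true.

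The paper sidesteps this entirely: it never claims unconditional SSP. Instead it applies the ESSP$\Rightarrow$SSP implication (your Lemma~\ref{lem:StrengtheningLinESSPimpliesLinSSP}) already at the \emph{union} level, componentwise, to conclude that for unions of linear TSs ESSP and feasibility coincide. Then Lemma~\ref{lem:union_validity} transfers feasibility (hence, by the same implication applied to the linear $A(U_\varphi)$, also ESSP) to the joining. So you should drop the unconditional-SSP claim and instead invoke ESSP$\Rightarrow$SSP for $U_\varphi$, not only for $A(U_\varphi)$; that is the step that makes Lemma~\ref{lem:union_validity} applicable without having to verify SSP by hand.
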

To prove Theorem~\ref{thm:The_Hardness_of_Linear_3Feasibility_and Linear_3ESSP}, we present a polynomial time reduction of a cubic monotonic set $\varphi$ of boolean $3$-clauses to a $3$-fold union $U^\varphi$ of linear TSs such that $\varphi$ has a one-in-three model $M$ if and only if $U^\varphi$ has the ESSP.
As deciding the existence of $M$ is NP-complete \cite{MR2001} we thereby deduce that it is NP-complete to decide if a $3$-fold union of linear TSs has the ESSP.
However, as discussed in Section \ref{sec:UnionsTSContainers}, it is difficult to preserve the ESSP when going from union $U^\varphi$ to joining $A(U^\varphi)$.
To solve this issue, this section shows that unions of linear TSs inherit their SSP from the ESSP.
This means, a union of linear TSs that has the ESSP, automatically gets the SSP and, thus, turns feasible.
As, reversely, a feasible union of TS has the ESSP by definition, the problems feasibility and ESSP become equivalent on unions of linear TSs.
Feasibility, however, is preserved by $A(U^\varphi)$ according to Lemma \ref{lem:union_validity}.
Hence, we are allowed to conclude that NP-completeness of deciding the ESSP, respectively feasibility, passes from $U^\varphi$ to $A(U^\varphi)$.

To start the reduction, we let, in compliance to \cite{MR2001}, $\varphi$ be a set $\{K_0, \dots, K_{m-1}\}$ of $m$ clauses.
Each clause $K_i$ is a subset of exactly three elements from $V(\varphi)$, the set of all boolean variables in $\varphi$.
Moreover, every variable occurs in exactly three clauses of $\varphi$ which implies $|V(\varphi)| = m$.
Hence, we assume a numbering of $V(\varphi)$ such that every clause $K_i = \{X_a, X_b, X_c\}$ can be specified by three distinct indices $a, b, c \in \{0, \dots, m-1\}$.
A one-in-three model $M$ of $\varphi$ is a subset of the variables $V(\varphi)$ such that $\vert M \cap K_i \vert =1$ for all $i\in \{0, \dots, m-1\}$.

The development of union $U^\varphi= U(B, T)$ is divided into the subunions $B$ and $T$.
Basically, $B$ provides all \emph{basic} components for the translation of one-in-three satisfiability to ESSP.
It implements a single key ESSP instance, that is, a key event $k$ which is inhibited at a certain key state of $B$ by a unique region $R^B$.
By design, $R^B$ fixes a negative signature for an event series called the key copies.

In $T$, the fixed signature of key copies is used for the actual \emph{translation} of one-in-three satisfiability to ESSP.
In fact, a region $R$ of $U^\varphi$ inhibiting $k$ at the key state has to extend the unique region $R^B$ of $B$ by a region $R^T$ of $T$ that has a consistent signature for all events shared by $B$ and $T$.
As the only shared events are exactly the key copies which in $T$ occur, $R^T$ inherits their negative signature from $R^B$.

Next, $T$ applies the exiting key copies to make sure that $R^T$ exists if and only if $\varphi$ has a one-in-three model.
To this end, $T$ encodes all variables $X_j \in V(\varphi)$ as an event $X_j \in E(T)$ and every clause $K_i = \{X_a,X_b,X_c\}$ is implemented by a \emph{translator} union $T_i$.
In $T_i$, the three events of $K_i$ are arranged in such a way that, if the key copies exit then exactly one of $X_a,X_b,X_c$ has a positive signature while the other two obey.
As this happens simultaneously for all $i\in \{0, \dots, m-1\}$, there is a region $R^B\cup R^T$ that inhibits $k$ at the key state if and only if exactly one event $X_a,X_b$ or $X_c$ enters in every translator $T_i$ if and only if a variable subset $M \subseteq V(\varphi)$ intersects every clause $K_i = \{X_a,X_b,X_c\}$ in exactly one element if and only if there is a one-in-three model $M$ of $\varphi$.

The behavior of $B$ and $T$ is created by several gadget TSs.
A single \emph{master} $M$ provides the key event and the key state.
Next, there are $6m$ \emph{refreshers} $F_j$ and $6m$ \emph{duplicators} $D_j$ that generate the negative signature of all key copies.
Hence, the union $B = U(M, F_0, \dots, F_{6m-1}, D_0, \dots, D_{6m-1})$ consists of altogether $12m+1$ TSs.
The union $T = U(T_0, \dots, T_{m-1})$ comprises $m$ \emph{translators}, each a union $T_i = U(T_{i,0}, T_{i,1}, T_{i,2})$ of three linear TSs.
To create a complete picture of our reduction, we subsequently introduce the details of all these gadget TSs:
\begin{description}
\item[Master] 
$M$ is a linear TS providing the key event $k$ and the key state $m_6$.
Figure~\ref{fig:Linear3-foldESSP_B}(a) defines $M$ and shows $R^M$ the part of region $R^B$ belonging to $M$.
Notice that the region $R^M$ that inhibits $k$ at $m_6$ is unique.
The so-called \emph{opposites} $o_0,o_1$ enter and the \emph{zero} $z_0$ obeys.
These events initialize the setup of the key copies' negative signature.
The subsequent refreshes and duplicators implement this synchronization in an assembly line fashion. 

More precisely, for all $j \in \{0, \dots, 6m-1\}$ refresher $F_j$ takes the two previously prepared opposites $o_{2j},o_{2j+1}$ and forces a negative signature onto the key copies $k_{3j},k_{3j+1},k_{3j+2}$.
This consumes both remaining applications of $o_{2j},o_{2j+1}$ such that no further opposites are available at this point.
Every refresher, however, is accompanied by a duplicator $D_j$ that eats up the remaining two applications of $k_{3j},k_{3j+1}$ and consumes one application of $k_{3j+2}, z_{2j}$.
As a result, $D_j$ generates the next opposites $o_{2j+2},o_{2j+3}$ and the next zero $z_{2j+2}$, which prepares the work of $F_{j+1}$ and $D_{j+1}$.

Main result of step $j$, however, is the event $k_{3j+2}$ which has a negative signature and one free application.
Therefore, the whole process chain produces $6m$ key copies $k_{2}, k_{5}, \dots, k_{18m-1}$, each of them free to be applied one more time.
\item[Refreshers]
$F_j$ are linear TSs that consume the opposites  $o_{2j},o_{2j+1}$ to generate key copies $k_{3j},k_{3j+1},k_{3j+2}$.
The definition of $F_j$ together with its fraction $R^{F_j}$ of region $R^B$ is given in Figure~\ref{fig:Linear3-foldESSP_B}(b).
Lemma~\ref{lemma:KeyRegionB} shows that, if the input opposites enter, there is no other way to choose $R^{F_j}$ and that this leads to exiting $k_{3j},k_{3j+1},k_{3j+2}$.

\item[Duplicators]
$D_j$ are linear TSs that take $k_{3j},k_{3j+1},k_{3j+2},z_{2j}$ to provide the next opposites $o_{2j+2},o_{2j+3}$ and zero $z_{2j+2}$.
Figure~\ref{fig:Linear3-foldESSP_B}(c) introduces $D_j$ as well as the part $R^{D_j}$ of region $R^B$.
Lemma~\ref{lemma:KeyRegionB} verifies that, if the input key copies exit and $z_{2j}$ obey, $R^{D_j}$ is the only possible region of $D_j$. 
\end{description}
\begin{figure}[t]
\centering
\begin{tikzpicture}[scale = 0.86]
\begin{scope}[yshift =-1cm]
\node at (-1,0) {a)};
\foreach \i in {0,...,8} {\coordinate (\i) at (\i*1.5cm,0);}
\foreach \i in {0, 3, 7} {\fill[black!15, rounded corners] (\i) +(-0.5,-0.3) rectangle +(0.5,0.3);}
\foreach \i in {0,...,8} {\node (m\i) at (\i) {\nscale{$m_{\i}$}};}
\graph { (m0) ->["\escale{$k$}"] (m1) ->["\escale{$z_0$}"] (m2) ->["\escale{$o_0$}"] (m3) ->["\escale{$k$}"] (m4) ->["\escale{$h$}"] (m5)->["\escale{$z_0$}"](m6)->["\escale{$v_1$}"](m7)->["\escale{$k$}"](m8);};
\end{scope}
\begin{scope}[yshift = -2.5cm]
\node at (-1,0) {b)};
\foreach \i in {0,...,7} {\coordinate (\i) at (\i*1.7143cm,0);}
\foreach \i in {1, 3, 5, 7} {\fill[black!15, rounded corners] (\i) +(-0.5,-0.3) rectangle +(0.5,0.3);}
\foreach \i in {0,...,7} {\node (p\i) at (\i) {\nscale{$f_{j,\i}$}};}
\graph { (p0) ->["\escale{$o_{2j}$}"] (p1) ->["\escale{$k_{3j}$}"] (p2) ->["\escale{$o_{2j+1}$}"] (p3) ->["\escale{$k_{3j+1}$}"] (p4) ->["\escale{$o_{2j}$}"] (p5) ->["\escale{$k_{3j+2}$}"] (p6) ->["\escale{$o_{2j+1}$}"] (p7);};
\end{scope}
\begin{scope}[yshift = -5cm]
\node at (-1,0) {c)};
\foreach \i in {0,...,8} {\coordinate (\i) at (\i*1.5cm,0.75);}
\foreach \i in {9,...,14} {\coordinate (\i) at (25.5cm-\i*1.5cm,-0.75);}
\foreach \i in {0, 3, 10, 13} {\fill[black!15, rounded corners] (\i) +(-0.5,-0.3) rectangle +(0.5,0.3);}
\fill[black!15, rounded corners] (6) +(-0.5,-0.3) rectangle +(2,0.3);
\foreach \i in {0,...,14} {\node (d\i) at (\i) {\nscale{$d_{j,\i}$}};}
\graph { (d0) ->["\escale{$k_{3j}$}"] (d1) ->["\escale{$z_{2j}$}"] (d2) ->["\escale{$h_{j}$}"] (d3) ->["\escale{$k_{3j}$}"] (d4) ->["\escale{$z_{2j+1}$}"] (d5) ->["\escale{$h_{j}$}"] (d6) ->["\escale{$z_{2j+2}$}"] (d7) ->["\escale{$k_{3j+1}$}"] (d8) ->["\escale{$z_{2j+1}$}"] (d9) ->["\escale{$o_{2j+2}$}"] (d10) ->["\escale{$k_{3j+1}$}"] (d11) ->["\escale{$z_{2j+2}$}"] (d12) ->["\escale{$o_{2j+3}$}"] (d13)->["\escale{$k_{3j+2}$}"] (d14);};
\end{scope}
\end{tikzpicture}
\caption{
Gadgets of union $B$ together with their parts of the region $R^B$.
States in the respective region are shown with a gray background.
(a) $M$ with $R^{M}$, (b) $F_j$ mit $R^{F_j}$, (c) $D_j$ with $R^{D_j}$.
}
\label{fig:Linear3-foldESSP_B}
\end{figure}
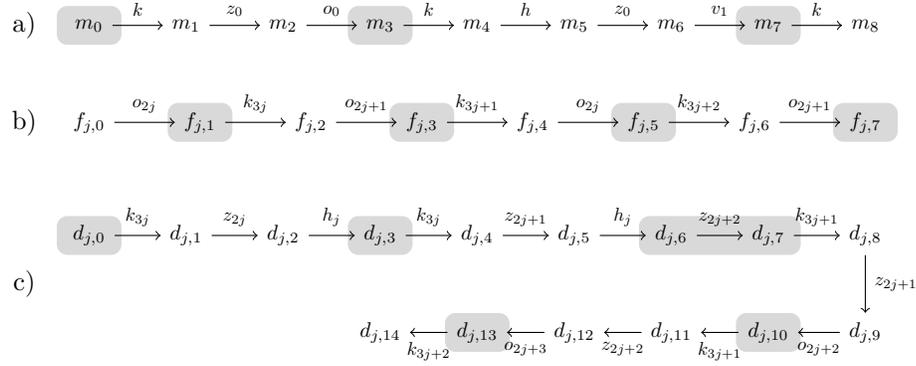
The following lemma summarizes the functionality of union $B$:
\begin{lemma}
\label{lemma:KeyRegionB}
Let $R^{M}=\{m_0,m_3,m_7\}$ and let $R^{F_j}=\{f_{j,1},f_{j,3},f_{j,5},f_{j,7}\}$ and $R^{D_j}=\{d_{j,0},d_{j,3},d_{j,6},d_{j,7},d_{j,10},d_{j,13}\}$ for all $j \in \{0, \dots, 6m-1\}$.
Except for the complement, the set of states 
\[R^B = R^M\cup R^{D_0}\cup \dots R^{D_{6m-1}}\cup R^{F_0}\cup \dots \cup R^{F_{6m-1}}\] 
is the only region of $B$ that inhibits $k$ at $m_6$.
For all $i \in \{0, \dots, 18m-1\}$ the signature of the key copy $k_i$ is exiting, that is, $sig_{R^B}(k_i) = -1$.
\end{lemma}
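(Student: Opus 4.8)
The argument is a propagation-of-constraints computation along the linear structure of $B$. The plan is to start at the master $M$, show that the requirement ``$R$ inhibits $k$ at $m_6$'' pins down $R^M$ uniquely (up to complement), then walk through the refreshers and duplicators in the order $F_0, D_0, F_1, D_1, \dots$, at each stage showing that the signatures forced so far leave exactly one admissible choice for the region on the next gadget, and that this choice gives the key copies an exiting signature. Since $B$ is a $3$-fold union of linear TSs, an event may occur at most three times, and these occurrences are distributed across $M$, the $F_j$'s and the $D_j$'s in a way that glues the gadgets together; the whole point is that these shared occurrences transmit a signature computed in one gadget to the next.

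\textbf{Step 1 (Master).} First I would analyze $M = m_0 \edge{k} m_1 \edge{z_0} m_2 \edge{o_0} m_3 \edge{k} m_4 \edge{h} m_5 \edge{z_0} m_6 \edge{v_1} m_7 \edge{k} m_8$. Inhibiting $k$ at the non-$k$-source state $m_6$ means finding a region $R$ with $R(m_6)=0$ and $sig_R(k)=-1$, or $R(m_6)=1$ and $sig_R(k)=+1$; up to complement assume the former. From $sig_R(k)=-1$ and the three $k$-edges one reads off $R(m_0)=1,R(m_1)=0$, $R(m_3)=1,R(m_4)=0$, $R(m_7)=1,R(m_8)=0$. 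Then the edge $m_1\edge{z_0}m_2$ forces $sig_R(z_0)=R(m_2)$, the edge $m_5\edge{z_0}m_6$ forces $R(m_5)-R(m_6)=\dots$, and combining the chain $m_1\to m_2\to m_3$ with $R(m_1)=0,R(m_3)=1$ forces $sig_R(z_0)+sig_R(o_0)=1$, while the chain $m_4\to m_5\to m_6$ with $R(m_4)=0,R(m_6)=0$ forces $sig_R(h)+sig_R(z_0)=0$; likewise $m_6\to m_7$ gives $sig_R(v_1)=R(m_7)-R(m_6)=1$, and so on. A short case check on the $\{-1,0,1\}$-valued signatures shows the only consistent solution is $z_0$ obeys, $o_0$ enters, which yields $R^M=\{m_0,m_3,m_7\}$ exactly as claimed, and in particular $o_0, o_1$ (the latter living only in $F_0$, say) are forced to enter. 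I would state this as the base case.

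\textbf{Step 2 (Inductive step through $F_j$ then $D_j$).} I would set up an induction on $j$ with the hypothesis: before processing $F_j$, the opposites $o_{2j},o_{2j+1}$ are forced to enter, the zero $z_{2j}$ is forced to obey, and their ``already used'' occurrences are accounted for. Then I analyze $F_j = f_{j,0}\edge{o_{2j}}f_{j,1}\edge{k_{3j}}f_{j,2}\edge{o_{2j+1}}f_{j,3}\edge{k_{3j+1}}f_{j,4}\edge{o_{2j}}f_{j,5}\edge{k_{3j+2}}f_{j,6}\edge{o_{2j+1}}f_{j,7}$: since $o_{2j}$ enters and occurs twice in $F_j$ (edges $f_{j,0}\to f_{j,1}$ and $f_{j,4}\to f_{j,5}$) and $o_{2j+1}$ enters and occurs twice ($f_{j,2}\to f_{j,3}$, $f_{j,6}\to f_{j,7}$), one gets $R(f_{j,0})=0, R(f_{j,1})=1$, $R(f_{j,4})=0,R(f_{j,5})=1$, $R(f_{j,2})=0,R(f_{j,3})=1,R(f_{j,6})=0,R(f_{j,7})=1$. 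The interleaved $k_{3j}$-edge $f_{j,1}\to f_{j,2}$ then forces $sig_R(k_{3j})=R(f_{j,2})-R(f_{j,1})=-1$; similarly $k_{3j+1}$ from $f_{j,3}\to f_{j,4}$ gives $-1$, and $k_{3j+2}$ from $f_{j,5}\to f_{j,6}$ gives $-1$. So $R^{F_j}=\{f_{j,1},f_{j,3},f_{j,5},f_{j,7}\}$ is forced and all three key copies exit. Next I feed the now-known negative signatures of $k_{3j},k_{3j+1},k_{3j+2}$ and the obeying $z_{2j}$ into $D_j$ (Figure~\ref{fig:Linear3-foldESSP_B}(c)) and run the same edge-by-edge computation along its $15$-state chain: the two remaining occurrences of $k_{3j}$, the two of $k_{3j+1}$, the one of $k_{3j+2}$ together with $z_{2j}$ force, successively, the states on $D_j$, then determine $R^{D_j}=\{d_{j,0},d_{j,3},d_{j,6},d_{j,7},d_{j,10},d_{j,13}\}$ uniquely and pin $z_{2j+1}$ to obey, $z_{2j+2}$ to obey, $o_{2j+2},o_{2j+3}$ to enter --- which is exactly the induction hypothesis for stage $j+1$. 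This closes the induction, and taking the union of the forced local regions over $M$, all $F_j$, all $D_j$ gives $R^B$; uniqueness up to complement follows because at every step the choice was forced. The statement $sig_{R^B}(k_i)=-1$ for all $i\in\{0,\dots,18m-1\}$ is then immediate since every key copy $k_i$ is the $k_{3j}$, $k_{3j+1}$, or $k_{3j+2}$ of some $F_j$, all shown to exit.

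\textbf{Main obstacle.} The real work --- and the only place where something could go wrong --- is verifying that at each gadget the local region is genuinely \emph{forced}, i.e. that no alternative assignment of $\{-1,0,1\}$ signatures to the gadget-internal events (the $h$, $h_j$, $v_1$, $z_{2j+1}$, $z_{2j+2}$, and the not-yet-constrained opposites) is consistent with the incoming constraints. This is a finite but somewhat fiddly case analysis on each of the three gadget shapes, and one must be careful that the multiplicities are exactly exhausted so that the ``next'' opposites really do inherit an enter signature rather than being left free; I would handle it by writing the region equation for each edge in a gadget as a linear equation over the incidence, then observing the system has a unique solution given the boundary data. Everything else --- assembling the pieces into $R^B$, the up-to-complement clause, and reading off the key-copy signatures --- is bookkeeping. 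I would defer the most tedious of these gadget-local verifications to Section~\ref{sec:Secondary_Proofs} as the introduction promises.
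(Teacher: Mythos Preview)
Your approach is essentially identical to the paper's: both start by pinning down $R^M$ from the three $k$-edges together with $R(m_6)=0$, then run an induction on $j$ in which the entering opposites force $R^{F_j}$ and hence exiting key copies, which together with the obeying zero force $R^{D_j}$ and produce the next generation of entering opposites and obeying zeros. Two small corrections: the event labelled $v_1$ on the edge $m_6\to m_7$ in the master figure is in fact $o_1$ (a typo in the figure; compare the text ``the opposites $o_0,o_1$ enter'' and the paper's proof line ``we get $sig_R(o_0)=sig_R(o_1)=1$ from the master''), so $o_1$ does live in $M$ and its entering signature is established there rather than left dangling for $F_0$; and the paper does \emph{not} defer any of this lemma's verifications to Section~\ref{sec:Secondary_Proofs} --- that section handles only Lemma~\ref{lem:Secondary_Proofs_for_The_Hardness_of_Linear_3Feasibility_and Linear_3ESSP} --- so the full gadget-by-gadget forcing computation is carried out inline in the proof of Lemma~\ref{lemma:KeyRegionB} itself.
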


\begin{proof}
As Figure~\ref{fig:Linear3-foldESSP_B}(a) shows, $R^M$ is a region of $M$ that inhibits $k$ at $m_6$.
Moreover, Figure~\ref{fig:Linear3-foldESSP_B}(b,c) demonstrates that the set $R^{F_j}\cup R^{D_j}$ is a region of $U(F_j, D_j)$ with exiting key copies, obeying zeros obey and entering opposites.
The master $M$ shares events only with $U(F_0,D_0)$, namely $o_0,o_1$ and $z_0$.
The signature of these three events coincides with respect to $R^{M}$ and $R^{F_0}\cup R^{D_0}$.
Furthermore, for all $i \not= j$, the unions $U(F_i, D_i)$ and $U(F_j, D_j)$ do not share events except for the zeros $z_{2j}, z_{2j+1}$ and the opposites $o_{2j}, o_{2j+1}$ in case $i+1 = j$.
By definition, the signature of these events agree with respect to $R^{F_i}\cup R^{D_i}$ and $R^{F_j}\cup R^{D_j}$.
Hence, $R^B$ is a region of $B$ that inhibits $k$ at $m_6$ where all key copies exit.

Next, assume that $R$ is a region of $B$ that inhibits $k$ at $m_6$ where, without lost of generality, $sig_R(k) = -1$ and $R(m_6)=0$.
That $R \cap \{m_0, \dots, m_8\} = R^{M}$ follows simply from $sig_R(k)=-1$ forcing $R(m_0)=R(m_3)=R(m_7)=1$ and $R(m_1)=R(m_4)=R(m_8)=0$, which, in turn, yields $sig_R(z_0)=0$ and by that $R(m_2)=R(m_5)=0$.
By inductively iterating $j$ over the sequence $0, \dots, 6m-1$, we show that $R\cap \{f_{j,0}, \dots, f_{j,7}\}=R^{F_j}$ and $R\cap \{d_{j,0}, \dots, d_{j,14}\}=R^{D_j}$:
For a start, we get $sig_R(o_0) = sig_R(o_1) = 1$ and $sig_R(z_0) = 0$ from the master.
As $o_{2j}, o_{2j+1}$ enter it follows immediately that $R(f_{j,1})=R(f_{j,3})=R(f_{j,5})=R(f_{j,7})=1$ and $R(f_{j,0})=R(f_{j,2})=R(f_{j,4})=R(f_{j,6})=0$ which provides exiting $k_{3j}, k_{3j+1}, k_{3j+2}$.
Together with obeying $z_{2j}$ the just fixed exiting key copies imply $R(d_{j,0})=R(d_{j,3})=R(d_{j,7})=R(d_{j,10})=R(d_{j,13})=1$ and $R(d_{j,1})=R(d_{j,2})=R(d_{j,4})=R(d_{j,8})=R(d_{j,11})=R(d_{j,14})=0$. This also gives $sig_{R}(h_j)=1$, which, in turn, means $R(d_{j,5})=0$ and $R(d_{j,6})=1$ and by that $sig_{R}(z_{2j+1})=sig_{R}(z_{2j+2})=0$.
Using the obeying zeros $R(d_{j,8})=R(d_{j,11})=0$ follows.
\end{proof}

After $B$ has finished the job, we have $6m$ key copies with one free application, namely the events $k_{3j+2}$ for all $j \in \{0, \dots, 6m-1\}$.
In the construction of $T$, we need a sequence of six free key copies for every translator $T_i$. 
Consequently, we assign to $T_i$ the events $k_{18i+2+3\ell}$ for all $\ell \in \{0, \dots, 5\}$.
We continue with the description of our gadgets:
\begin{description}
\item[Translators] 
$T_i = U(T_{i,0},T_{i,1},T_{i,2})$ are unions that represent the clauses $K_i = \{X_a,X_b,X_c\}$ of $\varphi$.
Figure~\ref{fig:Linear3-foldESSP_T} defines the three linear TSs $T_{i,0},T_{i,1},T_{i,2}$ of $T_i$ and presents $R^{T_i}_b$, a possible part of a region of $T$.
In the remainder of this section, we argue that a negative signature of the key copies makes sure that exactly one of the variable events $X_a,X_b,X_c$ gets a positive signature while the other two obey.
In fact, $R^{T_i}_b$ selects the event $X_b$ and there are two other regions $R^{T_i}_a$ and $R^{T_i}_c$ of $T_i$ for the selection of $X_a$, respectively $X_c$.

The TSs $T_{i,1}$ and $T_{i,2}$ create a copy of $X_b$, namely the event $\tilde{X}_b$, and guarantee that the signature of both events cannot be negative.
To achieve this, both TSs surround a sequence, $X_b, p_i$ or, respectively, $\tilde{X}_b, p_i$, with key copies.
As the proxy event $p_i$ and the key copies behave equally in both TSs, $X_b$ and $\tilde{X}_b$ have to be equal, too.
Moreover, the negative signature of the key copies makes sure that their neighboring events $X_b, \tilde{X}_b, p_i$ cannot exit.

The TS $T_{i,0}$ is simply the event sequence $X_a,\tilde{X}_b,X_c$ surrounded by key copies.
Again, the negative signature of key copies prevents a negative signature for their neighboring events, $X_a$ and $X_c$.
The exiting key copies also imply that the signature of the event sequence $X_a,\tilde{X}_b,X_c$ has to add up to one.
Hence, by the equality of $\tilde{X}_b$ and $X_b$, exactly one of $X_a,X_b,x_c$ enters.
\end{description}
\begin{lemma}
\label{lemma:KeyRegionT}
Let $R^T$ be a region of $T$ where all contained key copies exit, that is, where $sig_{R^T}(k_{3j+2}) = -1$ for all $j \in \{0, \dots, 6m-1\}$.
Then $R^T = \bigcup_{i=0}^{m-1} R^{T_i}$ with $R^{T_i}$ being one of  
\begin{align*}
R^{T_i}_c &= \{t_{i,0,0}, t_{i,0,4}, t_{i,1,0}, t_{i,1,3}, t_{i,2,0}, t_{i,2,3}\},\\
R^{T_i}_b &= R^{T_i}_c \cup \{t_{i,0,3}, t_{i,1,2}, t_{i,2,2}\},\\
R^{T_i}_a &= R^{T_i}_c \cup \{t_{i,0,2}, t_{i,0,3}\}
\end{align*}
for all $i \in \{0, \dots, m-1\}$.
Hence, for all clauses $K_i = \{X_a, X_b, X_c\}$ there is exactly one variable event $X \in K_i$ such that $sig_{R^T}(X) = 1$.
\end{lemma}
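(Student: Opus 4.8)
The plan is to analyse $R^T$ translator by translator and, inside a translator, component by component, using the elementary fact that a region of a linear TS is nothing but a walk through the values $\{0,1\}$ whose increments are the event signatures. Since the state sets of the three linear TSs $T_{i,0},T_{i,1},T_{i,2}$ partition $S(T_i)$ and the sets $S(T_i)$ partition $S(T)$, it suffices to show that $R^T\cap S(T_i)$ equals one of $R^{T_i}_a,R^{T_i}_b,R^{T_i}_c$ for each fixed $i$; the decomposition $R^T=\bigcup_{i=0}^{m-1}(R^T\cap S(T_i))$ is then automatic, and no interaction between distinct translators will be needed because all constraints used are internal to $T_i$ plus the key-copy hypothesis.

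Fix $i$ and start at the boundaries. By construction (Figure~\ref{fig:Linear3-foldESSP_T}) each of $T_{i,0},T_{i,1},T_{i,2}$ begins and ends with an edge labelled by one of the key copies assigned to $T_i$, and all of these exit $R^T$ by hypothesis. As region values lie in $\{0,1\}$, an exiting edge from $s$ to $s'$ forces $R^T(s)=1$ and $R^T(s')=0$; applying this to the two boundary key-copy edges of each component pins down the region values adjacent to those edges and leaves only the interior events open. In $T_{i,0}$ the interior is the sequence $X_a,\tilde X_b,X_c$ running from a state of value $0$ to a state of value $1$, so $sig_{R^T}(X_a)+sig_{R^T}(\tilde X_b)+sig_{R^T}(X_c)=1$ and the two intermediate values form a walk $0,\ast,\ast,1$ inside $\{0,1\}$. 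There are exactly four such walks, giving the candidate signature triples $(0,0,1)$, $(0,1,0)$, $(1,0,0)$ and $(1,-1,1)$ for $(X_a,\tilde X_b,X_c)$; note that $sig_{R^T}(X_a)$ and $sig_{R^T}(X_c)$ lie in $\{0,1\}$ in all four, so the outer two variable events never exit.

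It remains to kill the triple $(1,-1,1)$ and to read off $R^T\cap S(T_i)$; this is where $T_{i,1}$ and $T_{i,2}$ enter. In $T_{i,2}$ the event $\tilde X_b$ occurs immediately after an exiting key copy, so its target state has value $0+sig_{R^T}(\tilde X_b)\in\{0,1\}$, which excludes $sig_{R^T}(\tilde X_b)=-1$ and removes the bad triple. In the same TS $p_i$ follows $\tilde X_b$ and precedes an exiting key copy, forcing $sig_{R^T}(\tilde X_b)+sig_{R^T}(p_i)=1$; the identical computation in $T_{i,1}$ gives $sig_{R^T}(X_b)+sig_{R^T}(p_i)=1$. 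Since $R^T$ is a region of the whole union $T$, its signature is single-valued on the shared events $\tilde X_b$ and $p_i$, hence $sig_{R^T}(X_b)=sig_{R^T}(\tilde X_b)$. Combined with the $T_{i,0}$ analysis, $(sig_{R^T}(X_a),sig_{R^T}(X_b),sig_{R^T}(X_c))$ is one of $(0,0,1)$, $(0,1,0)$, $(1,0,0)$, i.e.\ exactly one variable event of $K_i$ enters while the other two obey. Substituting each surviving triple back into the now fully determined region values of $T_{i,0},T_{i,1},T_{i,2}$ identifies $R^T\cap S(T_i)$ as $R^{T_i}_c$, $R^{T_i}_b$ or $R^{T_i}_a$ respectively, which completes the argument.

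The only delicate point — and the genuine obstacle — is the exclusion of $sig_{R^T}(\tilde X_b)=-1$: inside $T_{i,0}$ on its own the walk $0,1,0,1$ is perfectly legal, so the proof really needs the duplicated occurrence of $\tilde X_b$ in $T_{i,2}$ together with the consistency of a union region on events shared by its components. Everything else is routine enumeration of $\{0,1\}$-valued walks and substitution, which is exactly why the translator was built from three pieces held together by the copy $\tilde X_b$ and the proxy $p_i$.
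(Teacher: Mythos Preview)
Your proof is correct and follows essentially the same route as the paper's own argument: both pin down the boundary states from the exiting key copies, use $T_{i,1}$ and $T_{i,2}$ together with the shared proxy $p_i$ to obtain $sig_{R^T}(X_b)=sig_{R^T}(\tilde X_b)$, and exploit the sum constraint $sig_{R^T}(X_a)+sig_{R^T}(\tilde X_b)+sig_{R^T}(X_c)=1$ in $T_{i,0}$ together with non-negativity to isolate the three cases. The only cosmetic difference is that the paper first argues non-negativity of all five interior events $X_a,X_b,X_c,\tilde X_b,p_i$ in one stroke (each occurs immediately after an exiting key copy in some component) and then reads off the unique entering event directly, whereas you enumerate the four $\{0,1\}$-walks in $T_{i,0}$ and kill the spurious $(1,-1,1)$ afterwards via $T_{i,2}$; the content is the same.
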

\begin{proof}
In this proof, we use that a region $R$ of a linear TS $A=s_0\edge{e_1}\dots\edge{e_t}s_t$ fulfills $\sum_{k=i+1}^{j}(e_k)=R(s_j)-R(s_i)$ for any subsequence determined by a starting state $s_i$ and a final state $s_j$.
This trivial fact does not need a proof and is called signature aggregation.

If $R$ is a region of $T_i$ such that the key copies of $T_i$ exit then $R(t_{i,0,1}) = R(t_{i,1,1}) = R(t_{i,2,1}) = R(t_{i,0,5}) = R(t_{i,1,4}) = R(t_{i,2,4}) = 0$ and $R(t_{i,0,0}) = R(t_{i,1,0}) = R(t_{i,2,0}) = R(t_{i,0,4}) = R(t_{i,1,3}) = R(t_{i,2,3}) = 1$.
Hence, all of $sig_R(X_a)$, $sig_R(X_b)$, $sig_R(X_c)$, $sig_R(\tilde{X}_b)$, $sig_R(p_i)$ are not negative.
Moreover, by signature aggregation, $sig_R(X_b)+sig_R(p_i)= 1 =sig_R(\tilde{X}_b)+sig_R(p_i)$ implying $sig_R(X_b) = sig_R(\tilde{X}_b)$.
Again by signature aggregation, we obtain $sig_R(X_a)+sig_R(\tilde{X}_b)+sig_R(X_c) = 1$ which, by the equality of $X_b$ and $\tilde{X}_b$ and the non-negativity of $X_a, X_b, X_c$ means that exactly one of $X_a, X_b, X_c$ enters while the other two obey.
If $sig_R(X_c)=1$ and $sig_R(X_a)=sig_R(X_b)=sig_R(\tilde{X}_b)=0$, we immediately obtain that $R=R^{T_i}_c$.
If $sig_R(X_b)=sig_R(\tilde{X}_b)=1$ and $sig_R(X_a)=sig_R(X_c)=0$ we have $R(t_{i,0,3})=R(t_{i,1,2})=R(t_{i,2,2})=1$, that is, $R=R^{T_i}_b$.
Otherwise, $sig_R(X_a)=1$, $sig_R(X_c)=sig_R(X_b)=sig_R(\tilde{X}_b)=0$ implies $R(t_{i,0,2})=R(t_{i,0,3})=1$ and $R(t_{i,2,2})=0$ meaning $R=R^{T_i}_a$.

Finally, if $R^T$ is a region of $T$ where all key copies exit, by definition and the above argumentation, it has to be a unification of regions $R^{T_i}$ for all subunions $T_i$.
\end{proof}
\begin{figure}
\centering
\begin{tikzpicture}[scale = 0.86]
\begin{scope}[yshift = 0cm]
\node at (-1,0) {a)};
\foreach \i in {0,...,5} {\coordinate (\i) at (\i*1.9cm,0);}
\foreach \i in {0} {\fill[black!15, rounded corners] (\i) +(-0.5,-0.3) rectangle +(0.5,0.3);}
\fill[black!15, rounded corners] (3) +(-0.5,-0.3) rectangle +(2.4,0.4);
\foreach \i in {0,...,5} {\node (p\i) at (\i) {\nscale{$t_{i,0,\i}$}};}
\graph { (p0) ->["\escale{$k_{18i+2}$}"] (p1) ->["\escale{$X_a$}"] (p2) ->["\escale{$\tilde{X}_b$}"] (p3) ->["\escale{$X_c$}"] (p4) ->["\escale{$k_{18i+11}$}"] (p5);};
\end{scope}
\begin{scope}[yshift = -1.5cm]
\node at (-1,0) {b)};
\foreach \i in {0,...,4} {\coordinate (\i) at (\i*1.9cm,0);}
\foreach \i in {0} {\fill[black!15, rounded corners] (\i) +(-0.6,-0.3) rectangle +(0.5,0.3);}
\fill[black!15, rounded corners] (2) +(-0.5,-0.3) rectangle +(2.4,0.4);
\foreach \i in {0,...,4} {\node (p\i) at (\i) {\nscale{$t_{i,1,\i}$}};}
\graph { (p0) ->["\escale{$k_{18i+5}$}"] (p1) ->["\escale{$X_b$}"] (p2) ->["\escale{$p_{i}$}"] (p3) ->["\escale{$k_{18i+14}$}"] (p4);};
\end{scope}
\begin{scope}[yshift = -3cm]
\node at (-1,0) {c)};
\foreach \i in {0,...,4} {\coordinate (\i) at (\i*1.9cm,0);}
\foreach \i in {0} {\fill[black!15, rounded corners] (\i) +(-0.5,-0.3) rectangle +(0.5,0.3);}
\fill[black!15, rounded corners] (2) +(-0.5,-0.3) rectangle +(2.4,0.4);
\foreach \i in {0,...,4} {\node (p\i) at (\i) {\nscale{$t_{i,2,\i}$}};}
\graph { (p0) ->["\escale{$k_{18i+8}$}"] (p1) ->["\escale{$\tilde{X}_b$}"] (p2) ->["\escale{$p_{i}$}"] (p3) ->["\escale{$k_{18i+17}$}"] (p4);};
\end{scope}
\end{tikzpicture}
\caption{
The translators (a) $T_{i,0}$, (b) $T_{i,1}$, (c) $T_{i,2}$ with region $R^{T_i}_b$, one of three possibilities in case of exiting key copies.
Here just $X_b$ has a positive signature while $X_a, X_c$ obey.
}
\label{fig:Linear3-foldESSP_T}
\end{figure}
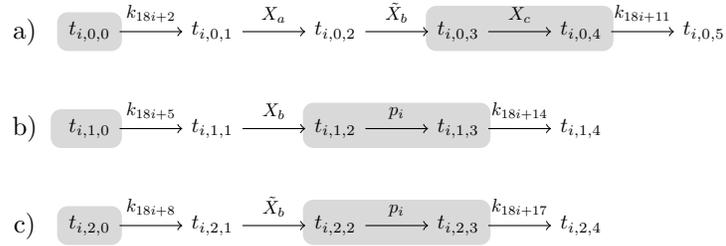

The following lemma establishes the foundation for the correctness of our reduction:
\begin{lemma}
\label{lemma:KeyRegionTruthAssignement}
In the union $U^\varphi$, the key event $k$ is inhibitable at the key state $m_6$ if and only if $\varphi$ has a one-in-three model.
\end{lemma}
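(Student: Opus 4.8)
The plan is to assemble the statement directly from the two structural lemmas about the subunions $B$ and $T$, using the fact that $B$ and $T$ share exactly the free key copies $k_{3j+2}$, $j\in\{0,\dots,6m-1\}$. I would argue both directions around a single region $R$ of $U^\varphi$ that inhibits $k$ at $m_6$ (with the convention $sig_R(k)=-1$, $R(m_6)=0$, legitimate since $R$ and its complement are regions with opposite signatures). Restricting $R$ to the states of $B$ gives a region $R^B$ of $B$ inhibiting $k$ at $m_6$, and restricting to the states of $T$ gives a region $R^T$ of $T$; the signatures must agree on the shared events, i.e.\ on the free key copies.

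First, the forward direction. Suppose $k$ is inhibitable at $m_6$ in $U^\varphi$ via such a region $R$. By Lemma~\ref{lemma:KeyRegionB}, $R\cap S(B)=R^B$ (up to complement, which we have fixed away), and in particular $sig_R(k_i)=-1$ for every key copy, hence certainly for the shared copies $k_{3j+2}$. Now $R\cap S(T)$ is a region of $T$ in which all the key copies it contains exit, so Lemma~\ref{lemma:KeyRegionT} applies: $R\cap S(T)=\bigcup_{i=0}^{m-1}R^{T_i}$ with each $R^{T_i}\in\{R^{T_i}_a,R^{T_i}_b,R^{T_i}_c\}$, and consequently for every clause $K_i=\{X_a,X_b,X_c\}$ exactly one variable event among $X_a,X_b,X_c$ enters $R$ while the other two obey. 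Define $M=\{X_j\in V(\varphi)\mid sig_R(X_j)=1\}$. Since each $X_j$ occurs in exactly three clauses and the event $X_j$ is the same event across all three translators $T_i$ that contain it, $R$ assigns a single consistent signature to $X_j$; thus $|M\cap K_i|=1$ for every $i$, so $M$ is a one-in-three model of $\varphi$.

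Conversely, suppose $\varphi$ has a one-in-three model $M$. I would build the desired region $R$ explicitly as $R^B\cup R^T$, where $R^B$ is the region of Lemma~\ref{lemma:KeyRegionB} (so $k$ is inhibited at $m_6$ and every key copy exits) and $R^T=\bigcup_{i=0}^{m-1}R^{T_i}$, choosing for each clause $K_i=\{X_a,X_b,X_c\}$ the part $R^{T_i}_a$, $R^{T_i}_b$, or $R^{T_i}_c$ according to whether $X_a$, $X_b$, or $X_c$ is the unique element of $M\cap K_i$. One then checks that this is a well-defined region of $U^\varphi$: within each $T_i$ the three listed sets are regions by Figure~\ref{fig:Linear3-foldESSP_T}; the only events shared between distinct translators are the variable events, and because $M$ picks exactly one variable per clause the signature forced on a variable event $X_j$ is $1$ if $X_j\in M$ and $0$ otherwise — consistent across all three occurrences; the only events shared between $B$ and $T$ are the free key copies, whose signature is $-1$ in $R^B$ and, by the defining equations of the $R^{T_i}$, also $-1$ in every translator that uses them. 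Hence signatures agree on all shared events, $R$ is a region of $U^\varphi$, and since $R\cap S(B)=R^B$ it inhibits $k$ at $m_6$.

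The main obstacle is the bookkeeping in the converse direction — verifying that the piecewise choice of $R^{T_i}$ really glues into a global region, i.e.\ that no variable event is forced to take incompatible values across the (up to three) translators that mention it, and that the key-copy signatures coming from $B$ and from $T$ coincide. Both facts ultimately reduce to the one-in-three condition $|M\cap K_i|=1$ and to the explicit descriptions of $R^B$ and of $R^{T_i}_a,R^{T_i}_b,R^{T_i}_c$, so the work is routine signature aggregation rather than anything conceptually deep; the earlier lemmas do the heavy lifting.
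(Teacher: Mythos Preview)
Your proposal is correct and follows essentially the same approach as the paper: both directions hinge on restricting to $B$ and $T$, invoking Lemma~\ref{lemma:KeyRegionB} to force the key copies to exit, and then invoking Lemma~\ref{lemma:KeyRegionT} to translate between the region on $T$ and a one-in-three model. If anything, you are slightly more explicit than the paper about why the piecewise choice of $R^{T_i}$ glues into a global region of $T$ (consistency of variable-event signatures across translators sharing a variable), which the paper leaves implicit.
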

\begin{proof}
\noindent\emph{If:}
If $M$ is a one-in-three model of $\varphi$, we firstly define a region $R^{T} = \bigcup_{i = 0}^{m-1} R^{T_i}_j$ of $T$ by adding $R^{T_i}_j$ for every $T_i$ where $X_j = M \cap K_i$ is the one variable $X_a, X_b$ or $X_c$ of $K_i$ covered by $M$.
By definition, this lets all contained key copies exit.
By Lemma~\ref{lemma:KeyRegionB} the region $R^B$ of $B$ inhibits $k$ at $m_6$ and lets all key copies exit, too.
As key copies are the only events shared by $B$ and $T$, the two regions are compatible and $R^B \cup R^T$ is a region of $U^\varphi$ inhibiting $k$ at $m_6$.

\noindent\emph{Only if:}
If $R$ is a region of $U^\varphi$ that inhibits $k$ at $m_6$ then Lemma~\ref{lemma:KeyRegionB} states that, without loss of generality, $R$ contains $R^B$ as subregion for $B$.
This implies that all key copies exit.
By Lemma~\ref{lemma:KeyRegionT}, {\em every} $i \in \{0, \dots, m-1\}$ {\em exactly} defines one variable event $X \in K_i = \{X_a,X_b,X_c\}$ that has $sig_R(X) = 1$, while $sig_R(Y) = 0$ for the other two $Y \in K_i \setminus X$.
This yields a one-in-three model $M = \{X \mid X \in V(\varphi), sig_R(X) = 1\}$.
\end{proof}
Having constituted the connection between the key region and the original satisfiability problem, it remains to show for all other combinations of event $e$ and state $s$ that, independent of the existence of a one-in-three-model, $e$ can be inhibited at $s$.
That this is possible, is stated in the following Lemma:
\begin{lemma}
\label{lem:Secondary_Proofs_for_The_Hardness_of_Linear_3Feasibility_and Linear_3ESSP}
For all events $e \in E(U^\varphi)$ and all states $s \in S(U^\varphi)$ that fulfill $(e,s) \not= (k,m_6)$ and $\neg(s\edge{e})$ there is a region inhibiting $e$ at $s$.
\end{lemma}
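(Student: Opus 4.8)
The plan is to establish the statement by an exhaustive case distinction over the events $e \in E(U^\varphi)$; for a fixed $e$, one then runs through every state $s$ with $\neg(s\edge{e})$ and $(e,s) \neq (k,m_6)$ and exhibits a concrete region of $U^\varphi$ that inhibits $e$ at $s$. The workhorse is the elementary observation that in a linear gadget with states $s_0, \dots, s_t$ an interval of states $\{s_p, \dots, s_q\}$ is a region whose signature is $+1$ on the entering edge from $s_{p-1}$ to $s_p$, $-1$ on the exiting edge from $s_q$ to $s_{q+1}$, and $0$ on all remaining edges, provided these boundary values do not clash with the gadget's repeated labels; more general signatures inside a gadget are obtained from disjoint unions of a few such intervals. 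By sliding or combining intervals one can give a chosen event a chosen nonzero signature while placing a chosen non-occurrence state on the inhibiting side of the border, and --- the point requiring care --- while keeping every event that also occurs in another gadget obeying.

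First I would settle the events that can be treated locally: the master events $h$ and $v_1$, the duplicator events $h_j$, the odd zeros $z_{2j+1}$, and also the key event $k$ at the master states $m_1, m_2, m_4, m_5, m_8$. Note here that Lemma~\ref{lemma:KeyRegionB} asserts uniqueness only for the region inhibiting $k$ at $m_6$; at the other states of $M$ there is enough slack to place an interval region $R$ of $M$ with $sig_R(k) \neq 0$ that leaves $o_0, o_1, z_0$ obeying. In each such case one picks the region inside the single relevant gadget so that its only non-obeying events are private to that gadget; it then extends to $U^\varphi$ by being empty on every other component and still inhibits $e$ at $s$. Running through both signs of the target event covers all of its required non-occurrence states.

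The bulk of the work, and the main obstacle, are the genuinely shared events: the opposites $o_{2j}, o_{2j+1}$, the even zeros $z_{2j}, z_{2j+2}$, and above all the key copies $k_i$, each of which threads two or three consecutive gadgets among $M$, the $F_j$'s and the $D_j$'s (a copy $k_{3j+2}$ even reaching into a translator $T_i$), together with the translator events $X_a, X_b, X_c$, their copies $\tilde{X}_b$ and the proxies $p_i$, which are shared among $T_{i,0}, T_{i,1}, T_{i,2}$ and across translators. For such an $e$ the plan is to thread one region of $U^\varphi$ through all gadgets containing $e$, forcing $e$ to enter everywhere (or, with the opposite choice, to exit everywhere) while keeping every other shared event obeying, so that the pieces fit into one consistent region; the laborious part is to verify, gadget by gadget along each chain, that this is always possible for both signs, hence for every non-occurrence state of $e$. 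Where cheaper, I would instead recycle regions already available: $R^B$ of Lemma~\ref{lemma:KeyRegionB} has every opposite entering and every key copy exiting, so $R^B$ and its complement inhibit every opposite and every key copy at all states on the appropriate side of $R^B$; and the translator regions $R^{T_i}_a, R^{T_i}_b, R^{T_i}_c$ of Lemma~\ref{lemma:KeyRegionT}, joined with $R^B$ when needed, inhibit $X_a, X_b, X_c$, the copies $\tilde{X}_b$ and the proxies $p_i$ at many states. The single pair deliberately left uncovered by all of these regions is $(k,m_6)$, which is exactly the instance tied to one-in-three satisfiability in Lemma~\ref{lemma:KeyRegionTruthAssignement}; so once this shared-event bookkeeping is completed, the lemma follows.
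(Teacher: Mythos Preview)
Your overall plan---case distinction over events, explicit inhibiting regions for each---matches the paper's. But two of your key moves fail.

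First, the claim that for a shared event $e$ one can ``thread one region through all gadgets containing $e$ \dots\ while keeping every other shared event obeying'' is false. Take the opposite $o_0$, which occurs twice in $F_0$, at positions $0$ and $4$; between these two occurrences sit $k_0, o_1, k_1$, all shared with other gadgets. If these three obey and $sig_R(o_0)\neq 0$, signature aggregation along $f_{0,0}\to\dots\to f_{0,5}$ gives $R(f_{0,5}) = R(f_{0,0}) + 2\,sig_R(o_0)\notin\{0,1\}$, a contradiction. The same obstruction hits every opposite and several key copies. The paper never isolates a single non-obeying event: each region in the tables of Section~\ref{sec:Secondary_Proofs_for_The_Hardness_of_Linear_3Feasibility_and Linear_3ESSP} carries a carefully chosen \emph{set} of two to five exiting and entering events, and the real work is checking that this set propagates consistently through the chain of two to four affected gadgets while leaving all other gadgets untouched. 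Your proposal does not supply these sets, and the stated strategy for finding them does not work.

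Second, recycling $R^B$ and the translator regions $R^{T_i}_a, R^{T_i}_b, R^{T_i}_c$ is illegitimate here. The paper states, just before the lemma, that the inhibitions must hold ``independent of the existence of a one-in-three model''. But $R^B$ is a region of $B$, not of $U^\varphi$; by Lemma~\ref{lemma:KeyRegionT} it extends to $U^\varphi=U(B,T)$ only when the exiting key copies force a one-in-three selection in every $T_i$, i.e.\ only when $\varphi$ has a model. The translator regions likewise have all key copies exiting and glue together across translators (and with $R^B$) only when a model exists. The paper avoids this entirely: for a variable event, for instance, it uses a region in which that event \emph{exits} and a single key copy \emph{enters}, then invokes Lemma~\ref{lemma:easyLiftedRegion} to extend across the two other translators in which the variable event is unique.
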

The proof of Lemma \ref{lem:Secondary_Proofs_for_The_Hardness_of_Linear_3Feasibility_and Linear_3ESSP} is pretty long and technical.
Therefore it has been split into multiple lemmas and moved to the auxiliary proofs in Section~\ref{sec:Secondary_Proofs}.
Since, furthermore, the polynomial running time of our reduction is obvious, we have established the NP-completeness of deciding the ESSP for $3$-fold unions of linear TSs.

It remains to show that the NP-completeness of deciding the ESSP reaches the joining $A(U^\varphi)$.
The principal element for this is the following Lemma~\ref {lem:StrengtheningLinESSPimpliesLinSSP} showing that, in case of linear TS, the ESSP implies the SSP.
More precisely, if there is a subset $\mathcal{R}\subset \mathcal{R}(A)$ of regions that suffices for the ESSP of $A$ then $\mathcal{R}$ also contains enough regions to separate all states of $A$.
For the following formalism, we call a set $\mathcal{R} \subset \mathcal{R}(A)$ a witness for the ESSP of TS $A$, if for all states $s$ and all events $e$ of $A$ that fail $s\edge{e}$ there is a region in $\mathcal{R}$ inhibiting $e$ at $s$.
Analogously, $\mathcal{R} \subset \mathcal{R}(A)$ a witness for the SSP of $A$, if for all states $s,s'$ there is a region in $\mathcal{R}$ separating $s$ and $s'$.
\begin{lemma}
\label{lem:StrengtheningLinESSPimpliesLinSSP}
If $A$ is a linear TS and $\mathcal{R} \subset \mathcal{R}(A)$ is a witness for the ESSP of $A$ then $\mathcal{R}$ is a witness for the SSP of $A$.
\end{lemma}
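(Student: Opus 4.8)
The plan is to work directly with the linear shape $A = s_0 \edge{e_1} s_1 \edge{e_2} \cdots \edge{e_t} s_t$ and show that for every pair of states $s_i,s_j$ with $i<j$ the witness set $\mathcal{R}$ already contains a region separating them. The basic accounting tool is signature aggregation (as in the proof of Lemma~\ref{lemma:KeyRegionT}): for any region $R$ one has $R(s_j)-R(s_i)=\sum_{\ell=i+1}^{j} sig_R(e_\ell)$, so $R$ fails to separate $s_i$ and $s_j$ precisely when this telescoping sum vanishes.

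The heart of the argument is the following observation. Suppose some event $e$ occurs at exactly one of $s_i,s_j$; call that state $u$ and the other one $w$, so $u\edge{e}$ holds but $w\edge{e}$ fails. Since $\mathcal{R}$ is an ESSP-witness, it contains a region $R$ inhibiting $e$ at $w$, i.e.\ $R(w)=0$ and $sig_R(e)=-1$, or $R(w)=1$ and $sig_R(e)=1$. Now exploit that $u$ has an outgoing $e$-edge $u\edge{e}u'$ and that region values lie in $\{0,1\}$: in the first case $0\le R(u')=R(u)-1$ forces $R(u)=1\ne R(w)$, and in the second case $1\ge R(u')=R(u)+1$ forces $R(u)=0\ne R(w)$. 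Either way the very same region $R$ separates $s_i$ and $s_j$. Hence the only pairs still to be handled are those that no event distinguishes.

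Here linearity becomes essential: in a linear TS the only event occurring at $s_\ell$ is $e_{\ell+1}$ when $\ell<t$, and no event occurs at $s_t$. So if no event distinguishes $s_i$ from $s_j$, then necessarily $i,j<t$ and $e_{i+1}=e_{j+1}=:e$. For this common event the edges $s_i\edge{e}s_{i+1}$ and $s_j\edge{e}s_{j+1}$ yield $R(s_{i+1})-R(s_{j+1})=R(s_i)-R(s_j)$ for \emph{every} region $R$, so a region separates $(s_i,s_j)$ if and only if it separates the shifted pair $(s_{i+1},s_{j+1})$. I would iterate this shift: as long as the current pair is not distinguished by any event, pass to the next pair. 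The larger index strictly increases and is capped by $t$; once it reaches $t$, the larger state has no outgoing edge while the smaller one does, so that terminal pair \emph{is} distinguished by an event. Applying the observation of the previous paragraph to that terminal pair and transporting the separating region back along the chain of shifts produces a region of $\mathcal{R}$ separating the original $s_i$ and $s_j$, which proves that $\mathcal{R}$ is an SSP-witness.

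I expect the main, though modest, obstacle to be making the shift argument watertight: verifying that ``no event distinguishes $s_i$ from $s_j$'' genuinely forces $e_{i+1}=e_{j+1}$ in a linear TS — this is exactly where determinism, simplicity, $1$-gradedness and the source property of $s_0$ enter — and confirming that the iteration terminates in the distinguished case rather than running off the end of $A$. The rest is the short case analysis on forced region values sketched above.
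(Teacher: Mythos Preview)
Your argument is correct and takes a genuinely different route from the paper's own proof. The paper argues indirectly via the theory of synthesized nets: it invokes the fact (imported from \cite{BBD2015}) that an ESSP witness $\mathcal{R}$ yields a surjective morphism $\psi\colon S\to\{R_s\mid s\in S\}$ with $L(A)=L(RG(SN_{\mathcal{R}}(A)))$, observes that two states not separated by $\mathcal{R}$ would collapse under $\psi$ and create a cycle in the reachability graph, hence an infinite language, and derives a contradiction from the finiteness of $L(A)$ for linear $A$. Your proof stays entirely at the level of regions and the combinatorics of the word $e_1\cdots e_t$: you extract a separating region directly from inhibition whenever the two states are distinguished by some event, and otherwise shift the pair forward along equal edge labels until the terminal state forces a distinction. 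What your approach buys is self-containment and elementarity---no appeal to reachability graphs, markings, or language equivalence is needed---whereas the paper's approach is shorter once the cited machinery is granted and arguably explains \emph{why} linearity matters (finite language). One small remark: you do not actually need simplicity or the source property of $s_0$ in the shift step; the linear shape $s_0\edge{e_1}\cdots\edge{e_t}s_t$ already guarantees that each $s_\ell$ with $\ell<t$ has a unique outgoing label $e_{\ell+1}$, which is all you use.
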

Before we prove the lemma, we include some additional definitions and facts from \cite{BBD2015} that are only relevant for this section:
Let $A=(S,E,\delta,s_0)$ be a TS.
First of all, the language of $A$ is defined as the set  
\[
L(A)=\{e_0 \dots e_m \in E^* \mid s_0 \edge{e_1}\dots \edge{e_m} s_m \text{ is a (maybe not simple) directed path of } A\}.
\]
Notice, that the language is trivially finite if $A$ is linear.
If $\mathcal{R} \subseteq \mathcal{R}(A)$ is a subset of $A$'s regions then the $\mathcal{R}$-restricted synthesized net $SN_{\mathcal{R}}(A)=(P,T,F,M_0)$ of $A$ is defined as follows:
The set $P$ of places equals $\mathcal{R}$ and the set $T$ of transitions is just $E$.
The set of flow arcs is defined by $F = \{(R,e) \mid R \in \mathcal{R}, e \in exit_R\} \cup \{(e,R) \mid R \in \mathcal{R}, e \in enter_R\}$.
A marking $M \subseteq P$ is a subset of places and $M_0$ is the initial marking given by $M_0 = \{R \in \mathcal{R} \mid R(s_0)=1\}$.
A marking $M$ reaches marking $M'$ by transition $e \in T$, denoted by $M[e>M'$, if and only if the input $\{R \mid (R,e) \in F\}$ of $e$ is exactly $M\setminus M'$ and the output $\{R \mid (e,R) \in F\}$ of $e$ is exactly $M'\setminus M$.

The reachability graph of $SN_{\mathcal{R}}(A)$ is the TS $RG(SN_{\mathcal{R}}(A))=(S',E,\delta', s'_0)$ with states $S'=\{M \subseteq P \mid \exists e_1, \dots,  e_m \in T: M_0[e_1> \dots [e_m> M\}$, the set of all reachable markings of $SN_{\mathcal{R}}(A)$, including the initial state $s'_0$ which equals $M_0$.
For all $M \in S'$ and all $e \in E$, the transition function is defined as $\delta'(M,e) = M'$ if and only if $M[e>M'$.
The states of TS $A$, the reachable markings of a $\mathcal{R}$-restricted synthesized net $SN_{\mathcal{R}}(A)$  and the states of the reachability graph $RG(SN_{\mathcal{R}}(A))$ are related as follows:
For every $s \in S$ we have $R_s=\{R \in \mathcal{R} \mid R(s)=1\}$, the set of all regions in $\mathcal{R}$ containing $s$, is a reachable marking of $SN_{\mathcal{R}}(A)$. 
Moreover, for each reachable marking $M$ of $SN_{\mathcal{R}}(A)$ there is a state $s\in S$ such that $M=R_s$. 
As markings become states in $RG(SN_{\mathcal{R}}(A))$, we get that $S' =\{R_s \mid s\in S\}$.

A set $\mathcal{R} \subset \mathcal{R}(A)$ is a witness for the ESSP of $A$ if and only if the function $\psi: S \longrightarrow S'$ with $\psi(s)=R_{s}$ for all $s \in S$ is a surjective morphism between $A$ and $RG(SN_{\mathcal{R}}(A))$.
That means, for every arc $s\edge{e} s'$ in $A$ there is an arc $\psi(s)\edge{e} \psi(s')$ in $RG(SN_{\mathcal{R}}(A))$.
This is equivalent to $L(A)=L(RG(SN_{\mathcal{R}}(A)))$~\cite{BBD2015}.
\begin{proof}[Proof of Lemma \ref{lem:StrengtheningLinESSPimpliesLinSSP}]
Let $A=s_0\edge{e_1}\dots\edge{e_n}s_n$ and $\mathcal{R} \subset \mathcal{R}(A)$ be witness for the ESSP of $A$.
Then we have a surjective morphism $\psi: S \longrightarrow \{R_s\vert s\in S\}$ implying $L(A)=L(RG(SN_{\mathcal{R}}(A)))$.
Assume we have states $s_i,s_j\in S$ that are not separable by any region of $\mathcal{R}$.
Without loss of generality, let $i < j$.
Then, for every region $R \in \mathcal{R}$ we get $R(s_i)=R(s_j)$.
From $s_0\edge{e_1}\dots \edge{e_i} s_i \edge{e_{i+1}} \dots \edge{e_j} s_j$ in $A$ we can deduce that $\psi(s_0)\edge{e_1} \dots \edge{e_{i}}\psi(s_i)\edge{e_{i+1}}\dots\edge{e_j}\psi(s_j)$ is in $RG(SN_{\mathcal{R}}(A))$, where $\psi(s_i)=R_{s_i}=R_{s_j}=\psi(s_j)$. 
Because of $s_i\not=s_j$ we have that $\vert e_{i+1}\dots e_j\vert \geq 1$.
In particular, we get $\{e_1 \dots e_{i} (e_{i+1} \dots e_{j})^m \mid m\in \mathbb{N}\}$ as a subset of $L(RG(SN_{\mathcal{R}}(A)))$. 
This contradicts $L(A)=L(RG(SN_{\mathcal{R}}(A)))$ because $L(A)$ is finite.
\end{proof}

Assume that $U^\varphi$ has the ESSP.
To complete the proof of this section's theorem, we can now use Lemma \ref{lem:StrengtheningLinESSPimpliesLinSSP} to deduce the SSP of $U^\varphi$.
We define for all TSs $A = (S, E, \delta, s_0)$ in $U^\varphi$ the region set $R_A = \{R \cap S \mid R \in \mathcal{R}(U^\varphi)\}$.
By definition, $R_A$ is a witness for the ESSP of $A$.
According to Lemma \ref{lem:StrengtheningLinESSPimpliesLinSSP}, $R_A$ is a witness for the SSP of $A$, too.
Consequently, for any states $s,s' \in S(U^\varphi)$ coming from the same TS $A$, there is a region $R \in \mathcal{R}(U^\varphi)$ such that $R \cap S \in R_A$ is a region of $A$ that separates $s, s'$. 
Hence, $U^\varphi$ has the SSP.

By Lemma~\ref{lem:union_validity}, we obtain that $U^\varphi$ is feasible, respectively has the ESSP, if and only if $A(U^\varphi)$ is feasible, respectively has the ESSP.
This proves Theorem~\ref{thm:The_Hardness_of_Linear_3Feasibility_and Linear_3ESSP}.


\section{The Hardness of Linear 3-SSP}
\label{sec:The_Hardness_of_Linear_3-SSP}

If SSP was of a lesser complexity than ESSP, it could serve as a fast fail preprocess for feasibility.
That means, if a linear TS fails the efficient SSP test in question one could save the costly ESSP test.
However, this possibility is ruled out by a polynomial time reduction of linear $3$-ESSP to linear $3$-SSP provided in this section:
\begin{theorem}\label{thm:The_Hardness_of_SSP}
To decide the SSP is NP-complete on $g$-grade $k$-fold transition systems for all $g \geq 1$ and all $k \geq 3$.
\end{theorem}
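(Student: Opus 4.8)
The plan is to give a polynomial time reduction from linear $3$-ESSP, which is NP-complete by Theorem~\ref{thm:The_Hardness_of_Linear_3Feasibility_and Linear_3ESSP}, to linear $3$-SSP. Membership in NP is immediate, since a family of regions is a polynomial size certificate and checking that it separates every pair of states is polynomial. Moreover, by the subclass monotonicity observed in Section~\ref{sec:Preliminaries}, once linear $3$-SSP is shown to be NP-hard the hardness propagates to $g$-grade $k$-fold SSP for all $g\ge 1$ and $k\ge 3$; hence it suffices to treat the linear case.

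Given a linear $3$-fold TS $A=s_0\edge{e_1}\cdots\edge{e_t}s_t$, I would construct, in polynomial time, a linear $3$-fold TS $A'$ with the property that $A$ has the ESSP if and only if $A'$ has the SSP. Since the ESSP of $A$ amounts to only polynomially many demands, namely the pairs $(e,s)$ with $\neg(s\edge{e})$, I would realise $A'$ as the joining $A'=A(U)$ of a union $U=U(A^\ast,G_1,\dots,G_r)$, where $A^\ast$ is $A$ together with a light local rewiring and each $G_\ell$ is a short linear gadget dedicated to one demand $(e,s)$. Each $G_\ell$ is designed to contain one distinguished pair of states $x_\ell,y_\ell$ whose separating regions are, up to complementation, precisely those regions of $U$ that restrict on the events shared by $G_\ell$ and $A^\ast$ in a way which forces the restriction to $A^\ast$ to inhibit $e$ at $s$; the forcing is achieved with auxiliary "anchor" events that pin the relevant signature values to the boundary values $0$ and $1$, exactly in the spirit of the key copies and the $X_b/\tilde X_b$ equality device of the previous section. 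All remaining state pairs of $U$ lie in distinct components, or inside $A^\ast$, or inside a single small gadget, so they must be handled separately but easily. Finally, Lemma~\ref{lem:union_validity} transfers the SSP between $U$ and $A'=A(U)$, and the joining of linear TSs is again linear.

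For correctness I would argue as follows. If $A$ has the ESSP, then Lemma~\ref{lem:StrengtheningLinESSPimpliesLinSSP} gives that $A$, hence $A^\ast$, has the SSP, so every pair of states inside $A^\ast$ is separable; each distinguished pair $x_\ell,y_\ell$ is separable because an inhibiting region for the corresponding demand exists and, by the shared-event design, extends to a separating region of $U$; the easy residual pairs are separable by projection-type regions as in the proof of Lemma~\ref{lem:union_validity}. Thus $U$, and therefore $A'$, has the SSP. Conversely, if $A'$ has the SSP then so does $U$, so every $x_\ell,y_\ell$ is separable, and the construction of $G_\ell$ forces the separating region to yield an inhibiting region for $(e,s)$ in $A^\ast$, hence in $A$; doing this for all demands establishes the ESSP of $A$.

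I expect the main obstacle to be the gadget design under the $3$-fold budget. Since an event $e$ may already label three edges of $A$, a gadget $G_\ell$ cannot simply reuse $e$; instead one must relocate one occurrence of $e$ (and, more delicately, arrange access near the state $s$) to a fresh event and then reconstruct the lost region constraints through anchor-forced equalities, all while the rewired $A^\ast$ and the gadgets themselves stay $3$-fold and linear. The second delicate point is ensuring that the demand on $R(s)$ inside an inhibiting region is captured purely by the forced signature profile on the shared events, so that no auxiliary state pair created by the gadget is hard to separate for an extraneous reason; verifying this absence of spurious obstructions, together with a case analysis mirroring Lemmas~\ref{lemma:KeyRegionB}--\ref{lem:Secondary_Proofs_for_The_Hardness_of_Linear_3Feasibility_and Linear_3ESSP}, is where the bulk of the technical work lies.
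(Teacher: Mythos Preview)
Your overall strategy coincides with the paper's: reduce linear $3$-ESSP (NP-complete by Theorem~\ref{thm:The_Hardness_of_Linear_3Feasibility_and Linear_3ESSP}) to linear $3$-SSP, build one gadget per ESSP demand $(e,s)$ whose distinguished state pair is separable iff $e$ is inhibitable at $s$, and use Lemma~\ref{lem:StrengtheningLinESSPimpliesLinSSP} for the forward direction and Lemma~\ref{lem:union_validity} to pass from the union to the joined linear TS.

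The one genuine architectural divergence is your choice of a \emph{single} rewired copy $A^\ast$ shared by all gadgets. The paper instead builds, for every demand $(e,s)$, a completely independent union $U^e_s=U(M,D_0,\dots,D_4,P,C)$ containing its \emph{own} modified copy $C$ of $A$: the three occurrences of $e$ in $C$ are replaced by fresh $e$-copies $e_1,e_3,e_5$, two helper edges are inserted at $s$, the mapper $M$ spends all three uses of $e$ to force $sig(e)=-1$, the duplicators $D_0,\dots,D_4$ manufacture the $e$-copies with the right signature, and the provider $P$ pins the helper signatures so that $R(s)=0$. After the renaming of Lemma~\ref{lem:rectified_union} the unions $\tilde U^e_s$ are pairwise event- and state-disjoint, and $U^A$ is just their disjoint union. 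This buys exactly the simplification you are worried about: the $3$-fold budget only needs to close \emph{inside one $U^e_s$ at a time}, never globally across all demands that mention the same event. In your single-$A^\ast$ design, all demands $(e,s_1),(e,s_2),\dots$ for the same $e$ compete for the same relocated occurrences and anchor events living in $A^\ast$; forcing $sig(e)=-1$ \emph{and} $R(s_i)=0$ from the $i$-th gadget without exceeding three uses of any shared event, and without one gadget's anchors interfering with another's separating regions, is precisely the obstacle you flag, and it is not clear it can be made to close. The paper sidesteps it entirely at the price of a polynomial (roughly cubic) blowup, after which the correctness proofs (Lemmas~\ref{lem:union_separability}, \ref{lem:region_translation}, \ref{lem:ESSP_to_SSP}) become local to a single $U^e_s$ and are straightforward.
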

Given a linear $3$-fold TS $A=(S, E, \delta, s_0)$, our reduction creates a separate union $U_s^e$ of linear $3$-fold TSs for every pair of event $e \in E$ and state $s \in S$ that do not fulfill $s \edge{e}$.
By construction, $U_s^e$ has two key states that are separable if and only if $e$ can be inhibited at $s$ in $A$.
Then, as the unions get mutually disjoint state and event sets by design, they can be smoothly merged into an aggregate union $U^A = U(U_s^e \mid e \in E, s \in S, \neg (s \edge{e}))$.
If $U^A$ has the SSP then the key states of $U_s^e$ are separable for all relevant $(e,s)$, which implies the ESSP for $A$.
Reversely, if $A$ has the ESSP, we use Lemma \ref{lem:StrengtheningLinESSPimpliesLinSSP} to get the SSP for $U^A$.
Finally, we join the TSs of $U^A$ simply by concatenating the individual linear components, which results in a linear $3$-fold TS $A'$.
Applying Lemma \ref{lem:union_validity}, we get that $A'$ has the SSP if and only if $U^A$ has the SSP if and only if $A$ has the ESSP.

Having an outline of the primal reduction approach, the remainder of this section focuses on the introduction of $U_s^e$ for event $e \in E$ and state $s \in S$ failing $s \edge{e}$.
Basically, $U_s^e$ installs a TS $M$ with two key states $m_0, m_1$ and a TS $C$ representing a copy of $A$ such that, effectively, $m_0$ and $m_1$ can be separated only by a key region $R$ that has $e$ as leaving event and excludes $s$.
Hence, the separability of $m_0$ and $m_1$ implies that $e$ is inhibitable at $s$ and vice versa.
One difficulty with this idea is to get along with just three assignments of event $e$.
The union solves this by including additional TSs that exploit properties of a key region to copy the signature of $e$ to other events subsequently serving as replacements for $e$.
To become more specific, we go into the details of the applied gadget TSs:
\begin{description}
\item[Mapper]
$M$ is a linear TS on states $\{m_0, \dots, m_5\}$ where the separation of $m_0$ and $m_1$ by a suitable region $R^M$ makes sure that $e \in exit_{R^M}$.
Figure \ref{fig:LinearThreeSSPGadgets}(a) introduces $M$ together with $M$'s part of a key region, $R^M$.
The event $e$ is already used three times by $M$ to force the first two so-called vice events $v_0$ and $v_1$ into $enter_{R^M}$.
\item[Duplicators]
$D_j$ are linear TSs on states $\{d_{j,0}, \dots, d_{j,13}\}$ that, for a duplicator part $R^{D_j}$ of a key region, are supposed to synchronize the signature of $e$ with events $\{e_{2j}, e_{2j+1}\}$, so-called $e$-copies.
In this way we work around the limitation of using $e$ only three times.
There are five duplicators $D_0, \dots, D_4$ and each of them generates one $e$-copy $e_{2j+1}, j \in \{0, \dots, 4\}$ that has a free assignment to be used somewhere else.
As Figure \ref{fig:LinearThreeSSPGadgets}(b) demonstrates, every duplicator $D_j$ exploits the previous vice events $v_{2j}, v_{2j+1}$, which are appointed to $enter_{R^{D_j}}$, to force a negative signature onto the $e$-copies and to synchronize two further vice events $v_{2j+2}, v_{2j+3}$ to be used in the next duplicator.
\item[Provider]
$P$ is a linear TS on states $\{p_0, \dots, p_7\}$ which, for the provider part $R^P$ of a key region, applies the negative signature of the $e$-copies $e_7$ and $e_9$ and the positive signature of the last two vice events $v_{10}$ and $v_{11}$ to provide helper events $h_1$ and $h_2$ with $sig_{R^P}(h_1) = 1$ and $sig_{R^P}(h_2) = -1$.
The purpose of $h_1$ and $h_2$ is to enhance the following copy of $A$ in such a way that $s$ is guaranteed to be outside the key region.
See Figure \ref{fig:LinearThreeSSPGadgets}(c) for a definition of $P$ and $R^P$.
\item[Copy]
$C$ is a linear TS that basically copies $A$ into the union.
However, $C$ replaces every of the at most three occurrences of $e$ by a free $e$-copy $e_1, e_3, e_5$.
Also, $s$ is enhanced by two edges $s \edge{h_1} p \edge {h_2} s'$.
If there was an edge $s \edge{e'} q$ in $A$ we let $C$ continue with this edge after $s'$, that is, $s' \edge{e'} q$.
In $C$'s part $R^C$ of a key region, the $e$-copies inherit the negative signature of $e$.
Moreover, we get $sig_{R^C}(h_1) = 1$ and $sig_{R^C}(h_2) = -1$ which means that neither $s$ nor $s'$ are in $R^C$.
Combining these two facts, we get a slightly modified version of $R^C$ that can be used as a region of $A$ to inhibit $e$ at $s$.
\end{description}
Altogether, the construction of $U_s^e$ results in $U(M, D_0, \dots, D_4, P, C)$.
Notice that the same construction scheme generates unions $U_s^e$ for multiple instances $(e,s)$.
Although not explained in detail at this point, we later enhance the construction by a renaming mechanism that prevents possible state or event clash by enhancing the used state and event names with a unique identifier of the union $U_s^e$ they occur in.
\begin{figure}
\centering
\begin{tikzpicture}[scale = 0.87]
\begin{scope}[yshift = 2.5cm]
\node at (-1,0) {a)};
\foreach \i in {0,...,5} {\coordinate (\i) at (\i*2cm,0);}
\foreach \i in {0, 2, 4} {\fill[black!15, rounded corners] (\i) +(-0.5,-0.3) rectangle +(0.5,0.3);}
\foreach \i in {0,...,5} {\node (m\i) at (\i) {\nscale{$m_{\i}$}};}
\graph { (m0) ->["\escale{$e$}"] (m1) ->["\escale{$v_0$}"] (m2) ->["\escale{$e$}"] (m3) ->["\escale{$v_1$}"] (m4) ->["\escale{$e$}"] (m5);};
\end{scope}
\begin{scope}
\node at (-1,0) {b)};
\foreach \i in {0,...,6} {\coordinate (\i) at (\i*2cm,0.75);}
\foreach \i in {7,...,13} {\coordinate (\i) at (26cm-\i*2cm,-0.75);}
\foreach \i in {1, 3, 5, 10, 12} {\fill[black!15, rounded corners] (\i) +(-0.5,-0.3) rectangle +(0.5,0.3);}
\fill[black!15, rounded corners] (8) +(-0.5,-0.3) rectangle +(2.5,0.3);
\foreach \i in {0,...,13} {\node (d\i) at (\i) {\nscale{$d_{j,\i}$}};}
\graph { (d0) ->["\escale{$v_{2j}$}"] (d1) ->["\escale{$e_{2j}$}"] (d2) ->["\escale{$v_{2j+1}$}"] (d3) ->["\escale{$b_{2j}$}"] (d4) ->["\escale{$v_{2j}$}"] (d5) ->["\escale{$e_{2j+1}$}"] (d6) ->["\escale{$v_{2j+1}$}"] (d7) ->["\escale{$b_{2j+1}$}"] (d8) ->["\escale{$e_{2j}$}"] (d9) ->["\escale{$v_{2j+2}$}"] (d10) ->["\escale{$e_{2j+1}$}"] (d11) ->["\escale{$v_{2j+3}$}"] (d12) ->["\escale{$e_{2j}$}"] (d13);};
\end{scope}
\begin{scope}[yshift = -2.5cm]
\node at (-1,0) {c)};
\foreach \i in {0,...,7} {\coordinate (\i) at (\i*1.7143cm,0);}
\foreach \i in {0, 2, 5, 7} {\fill[black!15, rounded corners] (\i) +(-0.5,-0.3) rectangle +(0.5,0.3);}
\foreach \i in {0,...,7} {\node (p\i) at (\i) {\nscale{$p_{\i}$}};}
\graph { (p0) ->["\escale{$e_7$}"] (p1) ->["\escale{$h_1$}"] (p2) ->["\escale{$e_9$}"] (p3) ->["\escale{$b$}"] (p4) ->["\escale{$v_{10}$}"] (p5) ->["\escale{$h_2$}"] (p6) ->["\escale{$v_{11}$}"] (p7);};
\end{scope}
\end{tikzpicture}
\caption{
Gadgets of union $U_s^e$ together with their parts of a key region.
(a) $M$ with $R^{M}$, (b) $D_j$ with $R^{D_j}$, (c) $P$ with $R^{P}$.
}
\label{fig:LinearThreeSSPGadgets}
\end{figure}
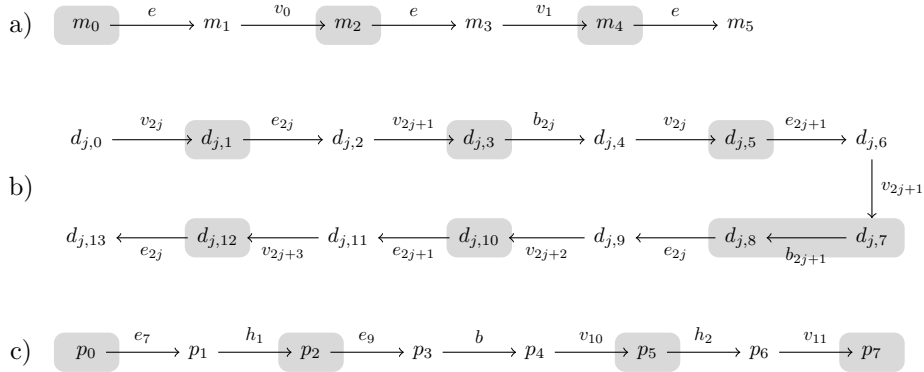

The correctness of the given reduction is based on the following argumentation.
Firstly, the following lemma formalizes that separation of key states implies a unique key region that inhibits $e$ at $s$:
\begin{lemma}
\label{lem:union_separability}
If $R$ is a region of $U^e_s$ with $m_0 \in R$ and $m_1 \not\in R$ then
\begin{enumerate}
\item $R \cap \{m_0, \dots, m_5\} = R^M = \{m_0,m_2,m_4\}$,
\item $R \cap \{d_{j,0}, \dots, d_{j,13}\} = R^{D_j} = \{d_{j,1},d_{j,3},d_{j,5},d_{j,7},d_{j,8},d_{j,10},d_{j,12}\}$ for all $j \in \{0, \dots, 4\}$,
\item $R \cap \{p_0, \dots, p_5\} = R^P = \{p_0,p_2,p_5,p_7\}$,
\item $sig_R(e) = sig_R(e_0) = \dots = sig_R(e_9) = -1$,
\item $sig_R(v_0) = \dots = sig_R(v_{11}) = 1$,
\item $sig_R(h_1) = 1$, and $sig_R(h_2) = -1$, and
\item the set $R' = R \cap S$ is a region of $A$ inhibiting $e$ at $s$.
\end{enumerate}
\end{lemma}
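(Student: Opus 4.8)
The plan is to proceed through the seven claims in the listed order, since each one feeds the next, and to exploit the fact that all gadgets are linear TSs so that the \emph{signature aggregation} identity from the proof of Lemma~\ref{lemma:KeyRegionT} applies throughout: along any linear segment $s_i \edge{e_{i+1}} \dots \edge{e_j} s_j$ we have $\sum_{\ell=i+1}^{j} sig_R(e_\ell) = R(s_j) - R(s_i)$, and in particular each partial sum stays in $\{0,1\}$ because region indicators are $0/1$-valued. First I would establish (1): the hypothesis $m_0 \in R$, $m_1 \notin R$ forces $sig_R(e) = -1$ (from $m_0 \edge{e} m_1$), whence along $M = m_0 \edge{e} m_1 \edge{v_0} m_2 \edge{e} m_3 \edge{v_1} m_4 \edge{e} m_5$ the only way to keep all indicators in $\{0,1\}$ while $e$ exits each time is $R(m_0)=R(m_2)=R(m_4)=1$, $R(m_1)=R(m_3)=R(m_5)=0$, which pins $sig_R(v_0)=sig_R(v_1)=1$. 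This simultaneously gives the $j=0$ base of claims (4) and (5) for $e$ and $v_0,v_1$.

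Next I would prove (2), (4), (5) together by induction on $j \in \{0,\dots,4\}$, mirroring the argument in the proof of Lemma~\ref{lemma:KeyRegionB}. Assume $sig_R(v_{2j}) = sig_R(v_{2j+1}) = 1$ (true for $j=0$ by (1), and supplied by $D_{j-1}$ otherwise). Walking along $D_j$ as drawn in Figure~\ref{fig:LinearThreeSSPGadgets}(b): the entering $v_{2j}$ at $d_{j,0}\edge{v_{2j}}d_{j,1}$ forces $d_{j,1}\in R$, then $e_{2j}$ must exit (else the indicator exceeds the range or $v_{2j+1}$ cannot enter at $d_{j,2}\edge{v_{2j+1}}d_{j,3}$), and continuing in this fashion one is forced into $R\cap\{d_{j,0},\dots,d_{j,13}\} = \{d_{j,1},d_{j,3},d_{j,5},d_{j,7},d_{j,8},d_{j,10},d_{j,12}\}$, which gives $sig_R(e_{2j}) = sig_R(e_{2j+1}) = -1$, $sig_R(b_{2j}) = sig_R(b_{2j+1}) = 1$, and the next pair $sig_R(v_{2j+2}) = sig_R(v_{2j+3}) = 1$. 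Here one uses that $e_{2j}$ appears three times inside $D_j$ (at $d_{j,1},d_{j,8},d_{j,12}$), so the three exits plus the boundary values determine all internal states uniquely — exactly the same rigidity phenomenon as in Lemma~\ref{lemma:KeyRegionB}. The induction then yields (2), and also (4) for $e_0,\dots,e_9$ and (5) for $v_0,\dots,v_{11}$ in full.

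With the $e$-copies' and vice events' signatures fixed, claim (3) and claim (6) fall out by a single pass through $P = p_0 \edge{e_7} p_1 \edge{h_1} p_2 \edge{e_9} p_3 \edge{b} p_4 \edge{v_{10}} p_5 \edge{h_2} p_6 \edge{v_{11}} p_7$: $sig_R(e_7) = -1$ with indicators in $\{0,1\}$ forces $p_0 \in R$, $p_1 \notin R$, hence $sig_R(h_1) = 1$ and $p_2 \in R$; then $sig_R(e_9) = -1$ gives $p_3 \notin R$, $sig_R(b)$ and the entering $v_{10}$ at $p_4 \edge{v_{10}} p_5$ force $p_5 \in R$, $sig_R(h_2) = -1$ gives $p_6 \notin R$, and entering $v_{11}$ gives $p_7 \in R$, so $R \cap \{p_0,\dots,p_7\} = \{p_0,p_2,p_5,p_7\}$, proving (3) and (6). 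Finally, for (7): by construction $C$ is a copy of $A$ with each of the (at most three) occurrences of $e$ replaced by a distinct free $e$-copy among $e_1,e_3,e_5$, and with the detour $s \edge{h_1} p \edge{h_2} s'$ inserted at $s$. Restricting $R$ to $C$'s states is a region of $C$; by (4) every substituted $e$-copy has signature $-1$, and by (6) the detour forces $R(s) = R(s')$ with both equal to $0$ (since $s \edge{h_1} p$ with $sig_R(h_1)=1$ means $s \notin R$). Transporting this region back through the identification of $C$'s states with $A$'s states (identifying $s$ with $s'$ and collapsing $p$), and reinstating $e$ in place of $e_1,e_3,e_5$, produces a region $R'$ of $A$ with $sig_{R'}(e) = -1$ and $R'(s) = 0$, i.e.\ a region inhibiting $e$ at $s$.

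I expect the main obstacle to be claim (7), specifically the bookkeeping of the map that turns $R\cap S$ — a region of the \emph{modified} copy $C$ — back into a bona fide region of the original $A$: one must check that collapsing the inserted detour states is consistent (which is why $R(s)=R(s')$ from (6) is essential) and that merging the three $e$-copies $e_1,e_3,e_5$ back to the single event $e$ is legitimate, which works precisely because (4) gives all three the \emph{same} signature $-1$. The inductive rigidity arguments for (1)–(6) are routine once the signature-aggregation bound is in hand; the only care needed there is to verify at each gadget that the count of occurrences of the triggering event suffices to pin down every internal state, exactly as in the proofs of Lemmas~\ref{lemma:KeyRegionB} and~\ref{lemma:KeyRegionT}.
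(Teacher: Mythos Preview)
Your proposal is correct and follows essentially the same route as the paper's proof: first pin down $sig_R(e)=-1$ and the states of $M$ from the hypothesis, then inductively propagate through the duplicators $D_0,\dots,D_4$ using the entering vice events to force the $e$-copies to exit and produce the next pair of vice events, then read off $P$ to obtain $sig_R(h_1)=1$, $sig_R(h_2)=-1$, and finally transfer the restriction $R\cap S$ back to a region of $A$ using that all $e$-copies share signature $-1$ and that $R(s)=R(s')=0$. The only stylistic difference is that the paper argues each duplicator step directly from the entering vice events (which immediately fix the endpoints of each $v_{2j},v_{2j+1}$-edge) rather than via signature aggregation bounds, but the content is the same.
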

\begin{proof}
By $m_0 \in R$ and $m_1 \not\in R$ it is easy to see that $sig_R(e) = -1$, which implies $m_2,m_4 \in R$ and $m_1, m_3, m_5 \not\in R$ and, thus, $sig_R(v_0) = sig_R(v_{1}) = 1$.
Iterating through the duplicators for $j \in \{0, \dots, 4\}$ , we get from $sig_R(v_{2j}) = sig_R(v_{2j+1}) = 1$ that $d_{j,0}, d_{j,2}, d_{j,4}, d_{j,6} \not\in R$ and $d_{j,1}, d_{j,3}, d_{j,5}, d_{j,7} \in R$ and $sig_R(e_{2j}) = sig_R(e_{2j+1}) = -1$, which in turn means that $d_{j,8}, d_{j,10}, d_{j,12} \in R$ and $d_{j,9}, d_{j,11}, d_{j,13} \not\in R$ and that $sig_R(v_{2j+2}) = sig_R(v_{2j+3}) = 1$.
Finally, using $sig_R(e_7) = sig_R(e_9) = -1$ and $sig_R(v_{10}) = sig_R(v_{11}) = 1$ we get $p_0,p_2,p_5,p_7 \in R$ and $p_1,p_3,p_4,p_6 \not\in R$ which means that $sig_R(h_1) = 1$ and $sig_R(h_2) = -1$.

To see the correctness of statement 7, recall that, by definition, $R$ provides a (sub) region $R^C$ for $C$.
If $C$ had not been modified with $e$-copies and $s \edge{h_1} p \edge {h_2} s'$ we basically could use $R^C$ as a region of $A$.
However, as $sig_R(e) = sig_R(e_1) = sig_R(e_3) = sig_R(e_5) = -1$, we get that the signature of $e$ is negative and appropriately translated into $R'$.
Moreover, by $sig_R(h_1) = 1$ and $sig_R(h_2) = -1$, we correctly have $s,s' \not\in R$ which means that $s \not\in R'$ and that, in case the edge $s \edge{e'} q$ exists, $sig_R(x) = sig_{R'}(x)$.
\end{proof}
Hence, as the part $R^C$ basically is a region of $A$ that inhibits $e$ at $s$, the SSP for $U^A$, which includes the separability of the key states in $U_s^e$ for all relevant $(e,s)$, implies the ESSP of $A$.

Next, the following lemma makes sure that every region $R$ of $A$ can be translated into a meaningful region $R'$ of $U^e_s$.
Meaningful is to say that $R'$ is a region that adopts the signature of $A$ for the according events of $C$.
We take care that the translation works in a way that forces as many events of $U^A$ as possible to be included in $obey_{R'}$.
This will make it easier to conclude the SSP of $U^A$ from the ESSP of $A$.
\begin{lemma}
\label{lem:region_translation}
If $R$ is region of $A$, then the set $R' = R^C \cup R^U$ with
\[R^U= \begin{cases}
\bigcup_{j = 0}^2 \{d_{j,6}, d_{j,7}, d_{j,11}, d_{j,12}, d_{j,13}\}, & \text{if } sig_R(e) = -1,\\
\emptyset,  & \text{if } sig_R(e) = 0,\\
\bigcup_{j = 0}^2 \{d_{j,0}, \dots,  d_{j,5}, d_{j,8}, d_{j,9}, d_{j,10}\}, & \text{if } sig_R(e) = 1, \text{ and}\\
\end{cases}\]
\[R^C= \begin{cases}
R, & \text{if } s \not\in R,\\
R \cup \{p,s'\}, & \text{otherwise,}\\
\end{cases}\]
is a region of $U^e_s$ where for all $x \in E \setminus \{e\}$ it is true $sig_R(x) = sig_{R'}(x)$ and $sig_R(e) = sig_{R'}(e_1) = sig_{R'}(e_3) = sig_{R'}(e_5)$ and $R(y) = R'(y)$ for all $y \in S$ and $R'(s) = R'(p) = R'(s')$.
\end{lemma}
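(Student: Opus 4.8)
The plan is to verify directly that $R'$ satisfies the defining criterion for a region of a union: the restriction of $R'$ to each component transition system $M, D_0,\dots,D_4,P,C$ is a region of that system, and the resulting component signatures agree on every event that two components share. I would split the components into three groups. The easy group is $M$, $D_3$, $D_4$, and $P$: by construction $R'$ is disjoint from all of their states, so on each of them $R'$ restricts to the empty set, which is a region whose signature is identically $0$. This already pins down $sig_{R'}(e)=sig_{R'}(v_0)=sig_{R'}(v_1)=0$ (from $M$) and $sig_{R'}(e_7)=sig_{R'}(e_9)=sig_{R'}(v_6)=\dots=sig_{R'}(v_{11})=sig_{R'}(h_1)=sig_{R'}(h_2)=0$ (from $D_3$, $D_4$, $P$), together with $0$ on all the internal $b$-events.

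For $C$ I would use that, by definition, $R^C$ agrees with $R$ on every state of $S$; hence for every edge of $C$ descending from an edge of $A$ the region equation is inherited verbatim from $R\in\mathcal{R}(A)$. In particular the (at most three) renamed edges labelled $e_1,e_3,e_5$ force $sig_{R'}(e_1)=sig_{R'}(e_3)=sig_{R'}(e_5)=sig_R(e)$, and $sig_{R'}(x)=sig_R(x)$ for every $x\in E\setminus\{e\}$ — including the possible edge $s'\edge{e'}q$, since $R^C(s')=R^C(s)$ makes its signature equal to that of $s\edge{e'}q$ in $A$. For the two inserted edges $s\edge{h_1}p\edge{h_2}s'$ the two-case definition of $R^C$ is exactly what is needed: if $s\notin R$ then $p,s'\notin R^C$, so $R^C(s)=R^C(p)=R^C(s')=0$; if $s\in R$ then $R^C=R\cup\{p,s'\}$, so $R^C(s)=R^C(p)=R^C(s')=1$; either way $sig_{R'}(h_1)=sig_{R'}(h_2)=0$, matching the zero signature already obtained from $P$, and $R'(s)=R'(p)=R'(s')$.

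The computational heart is the third group, $D_0$, $D_1$, $D_2$. For each $j\in\{0,1,2\}$ I would check that the $j$-th block of $R^U$ is a region of the linear system $D_j$ and compute its signature by signature aggregation along $D_j$. The three cases of $R^U$ are designed so that, uniformly in $j$, the chosen block gives $sig_{R'}(e_{2j+1})=sig_R(e)$ while forcing the seam events $v_{2j},v_{2j+1},v_{2j+2},v_{2j+3}$ (and $e_{2j}$) into $obey_{R'}$; the case $sig_R(e)=0$ is trivial because the block is empty. With this in hand, the final consistency check is immediate: the shared events $v_0,v_1$ (between $M$ and $D_0$), $v_{2j+2},v_{2j+3}$ (between $D_j$ and $D_{j+1}$ for $j\in\{0,1,2\}$), $e_7$ (between $D_3$ and $P$), $e_9,v_{10},v_{11}$ (between $D_4$ and $P$), and $h_1,h_2$ (between $P$ and $C$) all obey on both sides, while $e_1,e_3,e_5$ carry $sig_R(e)$ on the $D_j$ side as well as on the $C$ side. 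Hence $R'$ is a region of $U^e_s$ with the asserted signature and membership data.

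I expect the main obstacle to be precisely this third group: one must pick the $R^U$-blocks so that the $e$-copies feeding $C$ inherit exactly $sig_R(e)$, yet every vice event sitting at a seam of the duplicator assembly line is pinned to $obey$, so that placing $M$, $D_3$, $D_4$, $P$ entirely outside the region remains consistent. Getting the membership pattern and the sign bookkeeping of these blocks right for both $sig_R(e)=1$ and $sig_R(e)=-1$ is the one place that needs genuine care; everything else is either inherited from $R$ being a region of $A$ or is a one-line remark about empty regions.
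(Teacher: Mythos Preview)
Your approach is correct and essentially the same as the paper's: both verify that $R'$ restricts to a region on each component and that the resulting signatures agree on all shared events. The only cosmetic difference is that the paper bundles $M,D_0,\dots,D_4,P$ into one subunion $U$ and writes down the full signature of $R^U$ at once (including $sig_{R^U}(b_{2j+1})=-sig_R(e)$ on the internal events of $D_0,D_1,D_2$), whereas you split off the empty components $M,D_3,D_4,P$ first and then handle $D_0,D_1,D_2$; this is the same verification, just organised slightly differently.
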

\begin{proof}
We start by showing that $R^U = R' \cap S(U)$ is a region of the union $U = U(M, D_0, \dots, D_4, P)$ and that $R_C = R' \cap S^C$ is a region of the TS $C$ having state set $S^C$.
Subsequently, we argue that the signatures $sig_{R^U}$ and $sig_{R^C}$ merge into an aggregate signature $sig_{R'}$ such that $R'$ becomes a region of $U^e_s$.

For $R^U$ it is easy to check that it admits the signature
\[sig_{R^U}(x) = \begin{cases}
0, & \text{if } x \in \{e, e_0, e_2, e_4, e_6, \dots, e_9, h_1, h_2, v_0, \dots, v_{11}, b, b_0,\\  &\hspace{1cm} b_2, b_4, b_6, \dots, b_9\},\\
sig_R(e), & \text{if } x \in \{e_1, e_3, e_5\},\\
-sig_R(e), & \text{if } x \in \{b_1, b_3, b_5\}
\end{cases}\]
and therefore is a region of $U$.

Next, to show that $R^C$ is a region of $C$ is pretty straightforward as $C$ is an enhanced copy of $A$ and $R^C$ is a superset of $R$, a region of $A$.
It is sufficient to argue that, (1) with respect to events $E \setminus \{e\}$, the signature $sig_R$ can be kept for $R^C$, (2) the $e$-copies $e_1, e_3, e_5$ inherit their signature from $e$, and (3) the signature of $h_1$ and $h_2$ is simply zero.
Observe, $s, p$ and $s'$ are either all in $R^C$ or all not in $R ^C$ such that a possible edge $s' \edge{e'} q$ in $C$ behaves just like the original edge $s \edge{e'} q$ and therefore fulfills $sig_{R ^C}(e') = sig_R(e')$.
Hence, $R^C$ admits the signature
\[sig_{R ^C}(x) = \begin{cases}
sig_R(x), & \text{if } x \in E \setminus \{e\},\\
sig_R(e), & \text{if } x \in \{e_1, e_3, e_5\},\\
0, & \text{if } x \in \{h_1, h_2\}
\end{cases}\]
and, consequently, is a region of $C$.
Having $R^U$ as a region of $U$ and $R^C$ as a region of $C$, it remains to show that $R' = R^U \cup R^C$ is a region of $U^e_s = U(U, C)$.
This requires to show that the signatures $sig_{R^U}$ and $sig_{R^C}$ coincide on shared events and, thus, can be merged to a signature $sig_{R^e_s}$.
However, by construction, the only events shared by $U$ and $C$ are $e_1, e_3, e_5, h_1, h_2$ which, by definition, have equal signatures in both regions.
\end{proof}
In the following argumentation, Lemma \ref{lem:region_translation} is used to show that the ESSP of $A$, which by Lemma~\ref {lem:StrengtheningLinESSPimpliesLinSSP} also provides the SSP of $A$, implies the SSP of $U^A$, too.
But before we aim for $U^A$, we show the property for $U^e_s$:
\begin{lemma}
\label{lem:ESSP_to_SSP}
If $A$ has the ESSP, then for every event $e \in E$ and state $s \in S$ with $\neg (s \edge{e})$ the union $U^e_s$ has the SSP. 
\end{lemma}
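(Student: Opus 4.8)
The plan is to reduce the SSP of $U^e_s = U(M, D_0, \dots, D_4, P, C)$ to inhibition obligations inside its individual components, so that Lemma~\ref{lem:StrengtheningLinESSPimpliesLinSSP} applies. By the definition of the SSP of a union, two distinct states only have to be separated when they lie in the same component, and every component $X$ here -- the mapper $M$, a duplicator $D_j$, the provider $P$, or the copy $C$ -- is a linear TS; write $S_X, E_X$ for its states and events. For fixed $X$ let $\mathcal{R}_X = \{R \cap S_X \mid R \in \mathcal{R}(U^e_s)\}$; by the definition of regions of a union each member of $\mathcal{R}_X$ is a region of $X$. Since $X$ is linear, Lemma~\ref{lem:StrengtheningLinESSPimpliesLinSSP} tells us that it is enough to show $\mathcal{R}_X$ is a witness for the ESSP of $X$, that is, that for every $\varepsilon \in E_X$ and every $x \in S_X$ with $\neg(x \edge{\varepsilon})$ there is a region $R$ of $U^e_s$ with $R \cap S_X$ inhibiting $\varepsilon$ at $x$; then $\mathcal{R}_X$ is a witness for the SSP of $X$, and ranging over all $X$ gives the SSP of $U^e_s$. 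This per-component detour is genuinely necessary, not just convenient: $U^e_s$ itself does not have the ESSP -- for instance $e$ is never inhibitable at $d_{0,1}$ -- but since $e$ lies only in $E_M$ the obligation above never asks to inhibit it outside $M$.

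To meet these obligations I would build regions of $U^e_s$ from three sources. First, a \emph{key region} $R^{*}$: since $A$ has the ESSP there is, after complementing if needed, a region $R_0$ of $A$ with $s \notin R_0$ and $sig_{R_0}(e) = -1$; taking $R^M, R^{D_j}, R^P$ of Lemma~\ref{lem:union_separability} on $M$, the $D_j$, and $P$, extending $R_0$ on $C$ as forced by $sig(h_1) = 1$ and $sig(h_2) = -1$, and checking consistency along the gadget chain exactly as in the proof of Lemma~\ref{lemma:KeyRegionB}, yields a region $R^{*}$ of $U^e_s$ with $m_0 \in R^{*}$, $m_1 \notin R^{*}$, which by Lemma~\ref{lem:union_separability} then has precisely the shape listed there. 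Second, for $C$, translations of regions of $A$ by Lemma~\ref{lem:region_translation}: as $A$ has the ESSP and hence, by Lemma~\ref{lem:StrengtheningLinESSPimpliesLinSSP}, the SSP, every event of $A$ is inhibited at every non-occurring state of $A$ by some region, and Lemma~\ref{lem:region_translation} turns such a region into a region of $U^e_s$ that agrees with it on $E \setminus \{e\}$ and on the states of $A$, hence inhibits the corresponding event at the corresponding state of $C$. Third, a bounded number of \emph{ad hoc} regions, obtained by relaxing the forced entering or exiting of the vice, $b$-, and helper events to obeying and again completing along the chain; these cover what the first two sources miss -- typically an $e$-copy at a copy state of $C$ that already carries a \emph{different} $e$-copy, the events $h_1$ and $h_2$ at interior states of $C$ and $P$ that $R^{*}$ does not already handle, the event of $A$ originally at $s$ (which in $C$ sits at $s'$) at the states $s$ and $p$, and the vice and $b$-events at interior states of the duplicators.

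The laborious core of the proof, and the main obstacle, is the verification that every region claimed above really is a region of $U^e_s$: a signature fixed on one gadget must extend to a consistent signature over the whole concatenation $M \to D_0 \to \dots \to D_4 \to P \to C$ with all state values staying in $\{0,1\}$, and one must then confirm that the resulting family meets \emph{every} inhibition obligation in \emph{every} component. This is a finite but long case distinction, organised by component and, within a component, by event, reusing $R^{*}$ wherever its signature already inhibits the event at the state in question and constructing a fresh region only otherwise. The bookkeeping is of exactly the same flavour as -- and no harder than -- the rigidity analyses behind Lemma~\ref{lemma:KeyRegionB}, Lemma~\ref{lemma:KeyRegionT}, Lemma~\ref{lem:union_separability}, and the auxiliary Lemma~\ref{lem:Secondary_Proofs_for_The_Hardness_of_Linear_3Feasibility_and Linear_3ESSP}, so I would present it in that style, deferring the routine chain-consistency checks to explicit signature tables for the constructed regions.
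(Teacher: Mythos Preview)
Your approach is sound: reducing to per-component ESSP and then applying Lemma~\ref{lem:StrengtheningLinESSPimpliesLinSSP} to each linear component $M,D_j,P,C$ does yield the SSP of the union, and your observation that the lemma cannot be applied to $U^e_s$ globally (because, e.g., $e$ is not inhibitable at $d_{0,1}$) is correct and worth recording. The paper, however, takes a more direct route. It proves the SSP of $U^e_s$ by exhibiting, for each pair of states in the same component, a separating region of $U^e_s$; Lemma~\ref{lem:StrengtheningLinESSPimpliesLinSSP} is invoked only once, on $A$, to obtain the SSP of $A$, whose separating regions are then lifted to $C$ via Lemma~\ref{lem:region_translation}. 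For $M$, the $D_j$, and $P$ the key region already separates every pair lying on opposite sides of it, and a short explicit list of nine further regions $R_0,\dots,R_8$ (each specified by at most two non-obeying events and extended to $C$ through the enhancement mechanism of Lemma~\ref{lemma:easyLiftedRegion}) finishes the job; one more hand-built region separates $s$ from $s'$ in $C$.

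Your detour through component-wise ESSP is strictly more demanding on $C$: you must inhibit $h_1$ at every predecessor of $s$, inhibit each $e$-copy at the sources of the other two, and inhibit at $s$ and $p$ the event that in $A$ was outgoing from $s$ --- none of which Lemma~\ref{lem:region_translation} covers, since its translations force $h_1,h_2$ to obey and keep $s,p,s'$ in the same class. These extra regions can indeed be constructed (e.g., by modifying the translated region on $P$ to $\{p_0,p_1,p_6,p_7\}$ so that $sig(h_1)=-1$, $sig(h_2)=1$ while $e_7,e_9,v_{10},v_{11}$ still obey), but the paper's direct separation argument never needs them. Your framing has the merit of reusing Lemma~\ref{lem:StrengtheningLinESSPimpliesLinSSP} uniformly across all components; the price is a visibly longer case analysis than the paper's.
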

\begin{proof}
In this proof we need the following mechanism which follows immediately from the definition of regions:
If $R$ is a region of union $U=U(A_1,\dots,A_m)$ and $A_{m+1},\dots,A_{n}$ are linear TSs such that, for all $i \in \{m+1,\dots, n\}$, there is at most one arc $s^i \edge{e^i} z^i$ in $A_i$ where the signature $sig_R(e^i)$ is defined and not zero, then $R$ can be enhanced to a region $R'$ of union $U(A_1,\dots,A_m,A_{m+1},\dots,A_{n})$ with $sig_{R'} = sig_{R}$.
In fact, we get $R'= R \cup R^{m+1} \cup \dots \cup R^{n}$ by defining for all $i \in \{m+1, \dots, n\}$ the set $R^i$ relative to $P_i$, the set of $s^i$ and all its predecessor states or, in case $s^i$ does not exist, $P_i = \emptyset$.
Then $R^i =S_i \setminus P_i$ if $sig_R(e^i) = 1$, where $S_i$ is the state set of $A_i$, and otherwise, $R^i = P_i$.
This mechanism is summarized in Lemma~\ref{lemma:easyLiftedRegion} in the auxiliary proof section.

Using region enhancement, we show that all pairs of states of $U^e_s$ are separable.
If the states originate from different TSs of $U^e_s$ then they are separable by definition.
Hence, we have to consider four cases, namely that both come from the same TS, either $M, D_j, P,$ or $C$.

We start with $M$ and in particular the key states $m_0$ and $ m_1$.
To separate them, we use $R^U = R^M \cup R^{D^0} \cup \dots \cup R^{D^4} \cup R^P$, which is a region of union $U = U(M, D_0, \dots, D_4, P)$ as easily verified by Figure \ref{fig:LinearThreeSSPGadgets}.
Notice that $sig_{R^U}(e_1) = sig_{R^U}(e_3) = sig_{R^U}(e_5) = sig_{R^U}(h_2) = -1$ and $sig_{R^U}(h_1) = 1$.
Moreover, as $A$ has the ESSP, we can find a region $R$ inhibiting $e$ at $s$ such that $sig_R(e)=-1$  and $R(s)=0$ and enhance it to $R^C = R \cup \{p\}$.
It is easy to see that $R^C$ is a region of $C$ with $sig_{R^C}(e_1) = sig_{R^C}(e_3) = sig_{R^C}(e_5) = sig_{R^C}(h_2) = -1$ and $sig_{R^C}(h_1) = 1$.
Clearly, the regions are compatible on the events $e_1, e_3, e_5, h_1, h_2$ shared by $U$ and $C$ and we can combine them to the key region $R_{key} = R^U \cup R^C$ of $U^e_s$ that separates the key states.

Moreover, in $M$ the region $R_{key}$ separates every state in $\{m_0,m_2,m_4\}$ from every state in $\{m_1,m_3,m_5\}$.
Here, it remains to show that $ m_i$ and $ m_j$ are separable if they originate from the same of the both sets, that is, if $i$ and $j$ are of the same parity.
Consider the region $R_0 = \{m_0,m_1,d_{0,0},d_{0,4}\}$ of $U$ having just obeying events except for $sig_{R_0}(v_0) = -1$ and $sig_{R_0}(b_0) = 1$.
As $v_0$ and $b_0$ do not occur in $C$, region $R_0$ can be enhanced to a region $R'_0$ of $U^e_s = U(U, C)$.
Obviously, $R'_0$ separates all states in $\{m_0, m_1\}$ from all states in $\{m_2, \dots, m_5\}$.
Consider the region $R_1 = \{m_0, \dots, m_3, d_{0,0}, d_{0,1}, d_{0,2}, d_{0,4}, d_{0,5}, d_{0,6}\}$ of $U$ where again all events are obeying except for $sig_{R_1}(v_0) = -1$ and $sig_{R_1}(v_1) = 1$.
The TS $C$ does not contain $v_0$ and $v_1$ and, consequently, region $R_1$ can be enhanced to a region $R'_1$ for $U^e_s$ that separates $m_2$ from $m_4$ and $m_3$ from $m_5$.

Next, we consider two states $s_1,s_2$ of $C$.
We firstly assume both states $s_1$ and $s_2$ to be in the state set $S$ of $A$.
Then, we use the fact that, by Lemma \ref{lem:StrengtheningLinESSPimpliesLinSSP}, $A$ inherits the SSP from the ESSP.
Let $R$ be a region of $A$ separating $s_1$ and $s_2$.
By Lemma \ref{lem:region_translation}, we can use the according region $R'$ that separates $s_1$ and $s_2$ in $C$.
If exactly one of the states, without loss of generality $s_1$, is in $S\setminus \{s\}$ and $s_2\in \{p,s'\}$ then let $R$ be a region of $A$ separating $s_1$ and $s$.
Again, we can enhance region $R$ for a region $R'$ that separates $s_1$ and $s$ in $C$.
As $s, p$ and $s'$ are not separated by this region, $R'$ separates $s_1$ and $s_2$, too.
The states $s$ and $p$ respectively $s'$ and $p$ are separated by the key region $R_{key}$ introduced above.
Finally, for $s$ and $s'$ we can generate a separating region of $U^e_s$ as follows:
We let $R^U = \{p_0, p_1\}$ and $R^C$ is the set of states from $S$ that are predecessors of or equal to $s$ in $A$.
It is easy to see that $sig_{R^U}(h_1) = sig_{R^C}(h_1) = -1$ and for all other events, both signatures are zero.
Hence, the set $R^U \cup R^C$ is a region of $U^e_s$ that contains $s$ but not $s'$.

In the following, we investigate $D_j$ for $j \in \{0, \dots, 4\}$ as well as $P$.
The key region $R_{key}$ already separates every state of $\{d_{j,0}, d_{j,2}, d_{j,4}, d_{j,6}, d_{j,9}, d_{j,11}, d_{j,13}\}$ from every state of $\{d_{j,1}, d_{j,3}, d_{j,5}, d_{j,7}, d_{j,8}, d_{j,10}, d_{j,12}\}$ in $D_j$ and in $P$ every state of $\{p_0, p_2, p_5, p_7\}$ from every state of $\{p_1, p_3, p_4, p_6\}$.
The argumentation for the remaining state pairs always works like already seen for $M$:
We define a region $R$ of $U$ by specifying the signature of all non-obeying events, we observe that $C$ has at most one edge labelled with such an event, we enhance region $R$ to a region of $U^e_s = U(U, C)$, and we list two sets of states $X$ and $Y$ such that every state in $X$ is separated from every state of $Y$ by $R$:

\vspace*{0.5cm}
\noindent\begin{tabular}{l | l | l | p{2.5cm}}
\small
 & Signature & $X$ & $Y$\\ \hline
$R_2$ & $sig_{R_2}(b_{2j}) = -1$ & $\{d_{j,0},\dots,d_{j,3}\}$ & $\{d_{j,4},\dots,d_{j,13}\}$\\
$R_3$ & $sig_{R_3}(b_{2j+1}) = -1$ & $\{d_{j,4},\dots,d_{j,7}\}$ & $\{d_{j,8},\dots,d_{j,13}\}$\\
$R_4$ & $sig_{R_5}(e_{2j+1}) = -sig_{R_5}(e_{2j}) = 1$ &$\{d_{j,0},d_{j,1},d_{j,8}, d_{j,11}\}$ & $\{d_{j,3},\dots,d_{j,5},d_{j,9},d_{j,10}\}$\\
$R_5$ & $sig_{R_6}(e_{2j+1}) = -sig_{R_6}(b_{2j}) = 1$ & $\{d_{j,8},d_{j,9}\}$ & $\{d_{j,12},d_{j,13}\}$\\
$R_6$ & $sig_{R_7}(b) = -1$ & $\{p_0, \dots, p_3\}$ & $\{p_4,\dots,p_7\}$\\
$R_7$ & $sig_{R_8}(h_1) = -1$ & $\{p_0, p_1\}$ & $\{p_2,p_3\}$\\
$R_8$ & $sig_{R_9}(b) = -sig_{R_9}(h_2) = 1$ & $\{p_4,p_5\}$ & $\{p_6,p_7\}$
\end{tabular}

\vspace*{0.5cm}
This completes the proof.
\end{proof}
Before we can put together the proof of Theorem \ref{thm:The_Hardness_of_SSP} there is one last thing that we have to consider: 
Our idea is to create a union $U^A = (U^e_s \mid e \in E, s\in S, \neg s \edge{e}\})$ to finish the reduction for input $A$.
However, in the given form, the unions $U^e_s$ would not have mutually disjoint state and event sets.
To resolve the name clash, we do the following renaming:
\begin{lemma}
\label{lem:rectified_union}
Let $U^e_s=U(M, D_0, \dots, D_4, P, C)$ be a union of TSs as defined above.
Then, we let $\tilde{U}^e_s=U(\tilde{M}, \tilde{D}_0, \dots, \tilde{D}_4, \tilde{P}, \tilde{C})$ be the rectified union where, for all TSs $T = (S, E, s_0, \delta)$ in $\{M, D_0, \dots, D_4, P, C\}$, we define $\tilde{T} = (\tilde{S}, \tilde{E}, \tilde{s}_0, \tilde{\delta})$ by
$\tilde{S} = \{(e,s,x) \mid x \in S\}$ and $\tilde{E} = \{(e,s,x) \mid x \in E\}$ and $\tilde{s}_0 = (e,s,s_0)$ and
\[\tilde{\delta}((e,s,x),(e,s,y)) = \begin{cases}
(e,s,\delta(x,y)), & \text{if $\delta(x,y)$ is defined},\\
\text{undefined}, & \text{otherwise}
\end{cases}\]
for all $(e,s,x) \in \tilde{S}$ and all $(e,s,y) \in \tilde{E}$.
Then $R$ is a region of $U^e_s$ if and only if $\tilde{R}=\{(e,s,x) \mid x \in R\}$ is a region of the rectified union $\tilde{U}^e_s$.
\end{lemma}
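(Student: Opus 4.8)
The plan is to observe that the rectified union $\tilde{U}^e_s$ is nothing more than $U^e_s$ with all states and events relabelled by the injective map $\iota: x \mapsto (e,s,x)$, where $x$ ranges over the combined state set $S(U^e_s)$ in the first coordinate-slot and over $E(U^e_s)$ in the second-slot formulation. First I would make this precise: for each component TS $T$ in $\{M, D_0, \dots, D_4, P, C\}$, the map $\iota$ restricted to $T$'s states and events is a bijection onto $\tilde T$'s states and events, and by the very definition of $\tilde\delta$ it satisfies $\tilde\delta(\iota(x),\iota(y)) = \iota(\delta(x,y))$ whenever either side is defined. Hence each $\tilde T$ is isomorphic to $T$ as a labelled transition system, and since the component initial states, terminal states, and the shared-event incidences are all transported verbatim by $\iota$, the union $\tilde{U}^e_s$ is isomorphic to $U^e_s$ via the single global relabelling $\iota$ (note $\iota$ is globally injective because the first two coordinates $e,s$ are fixed and the third coordinate already distinguishes states from events within each original TS, and the original TSs were built to have disjoint state/event sets).

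Next I would invoke the standard fact that regions are preserved under TS-isomorphism: if $\phi$ is an isomorphism of transition systems (or unions) and $R$ is a region with signature $sig_R$, then $\phi(R)$ is a region with signature $sig_R \circ \phi^{-1}$, because the defining equation $R(s') = sig_R(e) + R(s)$ for every arc $s \edge{e} s'$ is carried bijectively to the arcs of the image. Applying this to $\phi = \iota$ gives: $R$ is a region of $U^e_s$ implies $\tilde R = \iota(R) = \{(e,s,x) \mid x \in R\}$ is a region of $\tilde U^e_s$. For the converse, since $\iota$ is a bijection it has an inverse isomorphism $\iota^{-1}$, and the same preservation fact applied to $\iota^{-1}$ sends a region $\tilde R$ of $\tilde U^e_s$ back to the region $\iota^{-1}(\tilde R) = R$ of $U^e_s$; and $\iota^{-1}(\tilde R) = R$ exactly when $\tilde R = \{(e,s,x)\mid x\in R\}$. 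This establishes the "if and only if".

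The only genuinely technical point — and the one I would spell out carefully rather than wave at — is the compatibility of signatures across the shared events of the union: a region of a union must restrict to a region on each component with signatures that agree on common events. Here this is immediate because $\iota$ acts identically on the index-structure of the events (it just prepends $(e,s)$), so two components of $U^e_s$ share an event $y$ precisely when the corresponding components of $\tilde U^e_s$ share $(e,s,y)$, and a consistent signature on $U^e_s$ transports to a consistent signature on $\tilde U^e_s$ and back. I would therefore present the proof as: (1) $\iota$ is a well-defined bijection of states and of events; (2) $\iota$ respects the transition functions component-wise and respects the shared-event structure, hence is a union-isomorphism; (3) union-isomorphisms biject regions, with $\tilde R = \iota(R)$; done. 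I expect no real obstacle — the lemma is a bookkeeping statement whose content is entirely "relabelling does not change anything" — so the main care is simply to state the isomorphism cleanly enough that the region-transport step is one line.
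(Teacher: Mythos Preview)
Your proposal is correct; the paper itself does not give a proof of this lemma, stating only that it ``is straightforward and therefore omitted.'' Your isomorphism argument via the relabelling $\iota: x \mapsto (e,s,x)$ is exactly the kind of routine verification the authors had in mind, and the care you take with the shared-event compatibility is appropriate even if unnecessary to spell out in the paper's own narrative.
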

The proof of this lemma is straightforward and therefore omitted.
However, as the rectified unions have mutually disjoint state and event sets, the union $U^A$ can now be defined as
\[
U^A = (\tilde{U}^e_s \mid e \in E, s\in S, \neg s \edge{e}).
\]
\begin{proof}[Proof of Theorem \ref{thm:The_Hardness_of_SSP}]
We start by showing that $A$ has the ESSP if and only if $U^A$ has the SSP.
If $U^A$ has the SSP then, for every $e \in E$ and every $s \in S$ with $\neg (s\edge{e})$, there is a region $\tilde{R}$ of $U^A$ separating the key states $(e,s,m_0), (e,s,m_1)$ in $\tilde{U}^e_s$.
This means, by definition and by Lemma \ref{lem:rectified_union}, that
\[R^e_s = \{x \mid (e,s,x) \in \tilde{R} \cap S(\tilde{U}^e_s)\}\]
is a region of $U^e_s$ separating $m_0$ and $m_1$.
Lemma \ref{lem:union_separability} provides that $e$ is inhibtable at $s$.
Hence, $A$ has the ESSP.

Reversely, if $A$ has the ESSP then, by Lemma \ref{lem:ESSP_to_SSP}, $U^e_s$ has the SSP for all $e \in E$ and all $s \in S$ where $\neg s\edge{e}$.
Using Lemma \ref{lem:rectified_union}, we get for all $e \in E$ and all $s \in S$ where $\neg (s\edge{e})$ that $\tilde{U}^e_s$ has the SSP.
As these union are pairwise disjoint with respect to states and events, this implies, by definition, that $U^A$ has the SSP, too.

Given $A$, we can now create a linear $3$-fold TS $A' = A(U^A)$ by joining the individual TSs of $U^A$.
To make $A'$ linear, we basically concatenate the linear components of $U^A$ by selecting the terminal states as discretionary joining points.
With the help of Lemma \ref{lem:union_validity}, we subsequently get that $A'$ has the SSP if and only if $U^A$ has the SSP if and only if $A$ has the ESSP.
As creating $A'$ out of $A$ is obviously doable in polynomial time, we get that  linear $3$-ESSP $\leq_p$ linear $3$-SSP.
In Section~\ref{sec:The_Hardness_of_Linear_3Feasibility_and Linear_3ESSP} we show that ESSP of linear $3$-fold TSs is NP-complete and \cite{BBD1997} states that SSP is in NP, which, together with our argumentation, imply that linear $3$-SSP is NP-complete.
\end{proof}


\section{The Hardness of 2-Grade 2-Feasibility and 2-Grade 2-ESSP}
\label{sec:The_Hardness_of_2-grade_2-ESSP_2-Feasibility}

In this section we present our result, that feasibility and the ESSP remain NP-complete for $2$-grade $2$-fold TSs.
\begin{theorem}
\label{thm:twograde_twofold_NPcompleteness}
The feasibility problem and the ESSP problem are NP-complete on $g$-grade $k$-fold transition systems for all $g \geq 2$ and all $k \geq 2$.
\end{theorem}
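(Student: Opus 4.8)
The plan is to mirror the architecture of Section~\ref{sec:The_Hardness_of_Linear_3Feasibility_and Linear_3ESSP}, but to rebuild every gadget so that it respects the two tight bounds simultaneously: every event labels at most two edges and every state has at most two in- and two out-edges. Membership in NP is immediate: SSP is in NP by~\cite{BBD1997}, ESSP and feasibility follow analogously, so only NP-hardness has to be shown. For this I would give a polynomial-time reduction from cubic monotone one-in-three $3$-SAT, which is NP-complete by~\cite{MR2001}, producing for a clause set $\varphi$ a $2$-fold, $2$-grade union $U^\varphi$ of TSs and then its joining $A(U^\varphi)$ in the sense of Section~\ref{sec:UnionsTSContainers}.

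First I would design $U^\varphi = U(B,T)$ along the lines of the linear case: a \emph{basis} subunion $B$ containing a master gadget that provides a key event $k$ and a key state at which $k$ must be inhibited, and whose essentially unique region $R^B$ is forced to assign a negative signature to a whole family of \emph{key copy} events; and a \emph{translator} subunion $T = U(T_0,\dots,T_{m-1})$ with one gadget $T_i$ per clause $K_i$, sharing with $B$ only the key copies assigned to it. The heart of the construction — and the step I expect to be the main obstacle — is that the assembly line of refreshers and duplicators used in the linear $3$-fold case relies on using events three or more times and on high-degree bookkeeping states, neither of which is available here. I would replace each such component by \emph{diamonds}: two short directed paths sharing a source and a sink, which, by signature aggregation, force the signature sums along the two branches to agree. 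Branching is exactly what degree-$2$ states permit, so a diamond can transport the signature of one key copy onto a fresh event while every event is still used at most twice. The clause gadget $T_i$ then arranges the three variable events of $K_i$ so that, once its incident key copies are forced to exit, exactly one of the three variable events enters and the other two obey, with three correspondingly shaped regions realising the three choices.

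With the gadgets in place I would prove the analogues of Lemmas~\ref{lemma:KeyRegionB}, \ref{lemma:KeyRegionT} and \ref{lemma:KeyRegionTruthAssignement}: that $R^B$ is, up to complement, the only region of $B$ inhibiting $k$ at the key state and forces all key copies to exit; that any region of $T$ with all key copies exiting is a union of one of three regions per translator, selecting exactly one entering variable event in each clause; and hence that $k$ is inhibitable at the key state of $U^\varphi$ if and only if $\varphi$ has a one-in-three model. A separate technical lemma — which, following the pattern of Lemma~\ref{lem:Secondary_Proofs_for_The_Hardness_of_Linear_3Feasibility_and Linear_3ESSP}, I would split up and defer to Section~\ref{sec:Secondary_Proofs} — shows that every other pair of an event $e$ and a state $s$ with $\neg(s\edge{e})$ is separated by some region independently of $\varphi$, so that $U^\varphi$ has the ESSP precisely when $\varphi$ has a one-in-three model.

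Finally I would pass from $U^\varphi$ to $A(U^\varphi)$. Since, as noted after Lemma~\ref{lem:union_validity}, joining preserves the ESSP only when the union already has the SSP — and since the theorem also asks for feasibility — I would build into every gadget that $U^\varphi$ has the SSP unconditionally: each gadget is small enough that any two of its states are separated by an explicitly written local region (states from different components are separable by the union definition of SSP), and I would additionally require each gadget's chosen terminal state to have out-degree at most one and its initial state in-degree at most one. The latter guarantees that the joining, which only adds single-use connector events $y_1^i,y_2^i$ and degree-one connector states $z^i$, keeps $A(U^\varphi)$ within the $2$-grade $2$-fold class. Then the equivalences fall out: if $\varphi$ has a one-in-three model, $U^\varphi$ has both the ESSP and the SSP, hence is feasible, so by Lemma~\ref{lem:union_validity} $A(U^\varphi)$ is feasible and in particular has the ESSP; conversely, if $\varphi$ has no one-in-three model, $U^\varphi$ lacks the ESSP, and since (by the projection argument in the proof of Lemma~\ref{lem:union_validity}) the ESSP of $A(U^\varphi)$ would descend to the ESSP of $U^\varphi$, the TS $A(U^\varphi)$ lacks the ESSP and is infeasible. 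The whole construction is polynomial, and because the $2$-grade $2$-fold TSs form a subclass of all $g$-grade $k$-fold TSs with $g,k\ge 2$, NP-hardness propagates upward; together with membership in NP this yields NP-completeness, proving Theorem~\ref{thm:twograde_twofold_NPcompleteness}.
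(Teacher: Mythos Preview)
Your high-level architecture matches the paper's: reduce from cubic monotone one-in-three $3$-SAT, build a $2$-fold $2$-grade union with a master providing a key event and key state, a chain of gadgets manufacturing exiting key copies, and per-clause translators; then join, choosing terminal and initial states of small degree so that $A(U^\varphi)$ stays $2$-grade $2$-fold. Your ``diamonds'' intuition for replacing the linear refreshers/duplicators by branching substructures is also essentially what the paper does (the headmaster submodules $H_j$ and the pentagonal duplicators $D_j$ in Figure~\ref{fig:gadgets} are exactly such diamonds).

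There is, however, one genuine gap. In the $3$-fold reduction of Section~\ref{sec:The_Hardness_of_Linear_3Feasibility_and Linear_3ESSP}, each variable $X_j$ is a single event used once in each of the three clauses containing it --- exactly three uses, which $3$-foldness permits. With only two uses per event this no longer works, and your proposal does not say how different occurrences of the \emph{same} variable across different translators are kept consistent. The paper's solution is an extra layer of gadgets you do not mention: for each variable $X_i$ occurring in clauses $K_\alpha,K_\beta,K_\gamma$ it introduces three distinct events $X_i^\alpha,X_i^\beta,X_i^\gamma$ and a \emph{variable manifolder} gadget that forces them to share a common signature via four obeying \emph{consistency} events $c_{4i},\dots,c_{4i+3}$; these in turn are produced by \emph{barter} gadgets $B_q$ that consume additional key copies. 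Without this machinery the translators can independently pick entering events for different occurrences of the same variable, and the equivalence between the existence of a key region and a one-in-three model of $\varphi$ (your analogue of Lemma~\ref{lemma:KeyRegionTruthAssignement}) fails in the ``only if'' direction.

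A smaller point of comparison: the paper does not establish the SSP of $U^\varphi$ unconditionally as you plan, but instead proves (Lemma~\ref{lem:alwaysSSP}) that the ESSP of $U^\varphi$ implies its SSP, exploiting that in every gadget except the headmaster each event is locally unique, so an inhibiting region for the event leaving $s$ already separates $s$ from any $s'$. Your route --- exhibiting explicit separating regions independent of $\varphi$ --- would also work, but keep in mind that these must be regions of the whole union, not just of an isolated gadget.
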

Here, we again use the cubic monotone one-in-three $3$-SAT problem as a reduction source to argue that $2$-grade $2$-ESSP is NP-complete.
The definitions of a cubic monotone set $\varphi$ of boolean $3$-clauses as well as according one-in-three models is given in Section~\ref{sec:The_Hardness_of_Linear_3Feasibility_and Linear_3ESSP}.

Given $\varphi$, we again construct a union $U^{\varphi}$ in polynomial time.
However, this time $U^{\varphi}$ is a $2$-fold union of $2$-grade TSs.
Hence, every event is used at most twice and, even after joining the components in $A^{\varphi} = A(U^\varphi)$, every state has at most two predecessors and two successors.
We will show that if $U^{\varphi}$ has the ESSP then it has the SSP, too.
This makes ESSP and feasibility the same problem, even for $A^{\varphi}$ which is affirmed by Lemma~\ref{lem:union_validity}.
By that, we simultaneously get the NP-hardness of both problems for the input class of $2$-grade $2$-fold TSs.

With the rough proof outline in mind, we can take a closer look on the construction of $U^{\varphi}$:
First of all, the union $U^{\varphi}$ installs a TS $H$, called the \emph{headmaster}.
The headmaster initializes the connection between the satisfiability problem and the ESSP by introducing a key event $k$ that is supposed to be inhibitable at a certain key state if and only if $\varphi$ has a one-in-three model.
In order to achieve this behavior, $U^{\varphi}$ basically adds, for every clause $K_i$, a similiar translator $T_i$ as in Section~\ref{sec:The_Hardness_of_Linear_3Feasibility_and Linear_3ESSP}.
Again, for a key region, one that inhibits $k$ at the key state, $T_i$ translates the one-in-three condition of $K_i$ into the ESSP vocabulary.
This means that $T_i$ applies events that represent the three variables of $K_i$ and exactly one of them has a positive signature while the other two obey.
Consequently, there is a key region where every gadget $T_0, \dots, T_{m-1}$ has exactly one entering variable event if and only if there is a one-in-three model of $\varphi$.

Like before, the main problem is to get along with, in this case, just two occurrences of every event.
The problem is again solved by adding TSs that, for a key region, generate helper and replacement events with predefined signatures, leaving, entering, or obeying.
See Figure \ref{fig:gadgets} to visualize the technical elaborations of the following detailed description of our gadget TSs. 
\begin{description}
\item[Headmaster]
$H$ is a TS that introduces the key event $k$ and the key state $h_{0,8}$.
Moreover, $H$ cooperates with the subsequent duplicator gadgets to prepare the key copies, the event series with negative signature needed by the translators.
To this end, $H$ works with a production line of $14m$ submodules $H_j$, each cooperating with a duplicator to initialize one key copy.
More precisely, for a key region, $H_j$ prepares three events for the duplicator $D_j$.
Two so-called \emph{vice} events $v_{2j}, v_{2j+1}$ have a positive signature, that is, vice with respect to the signature of $k$.
Then there is one obeying \emph{wire} event $w_{2j}$ used to communicate the condition of one duplicator state to another.
In return, the duplicator provides two key copies $k_{3j}, k_{3j+1}$ and one obeying \emph{accordance} event $a_j$.
In $H_{j+1}$ the three resulting events are used for the synchronization of the next vice and wire events.
The main result of $D_j$, however, is $k_{3j+2}$, one of the $14m$ key copies that are free to be in translators and other gadgets.

See Figure~\ref{fig:gadgets}~a) for a definition of $H$ together with an illustration of $H$'s part of a key region, $R^H$.
Observe that there are \emph{reachability} events $r_0, \dots, r_{14m-2}$ which make every state of $H$ reachable from the initial state $h_{0,0}$.
Moreover, for $R^H$, every module $H_j$ receives the key copies $k_{2j-2}, k_{2j-1}$ and the obeying accordance event $a_{j-1}$ from duplicator $D_{j-1}$.
Thanks to $a_{j-1}$, the state $h_{j,8}$ behaves according to the key event $h_{0,8}$ and is excluded from $R^H$.
Because of the exiting key copy $k_{2j-2}$ the state $h_{j,1}$ is not in $R^H$.
The last two facts together imprint a zero signature on the \emph{zero} events $z_{2j}, z_{2j+1}$.
The exiting key copy $k_{2j-1}$ puts $h_{j,4}$ into $R^H$ and excludes $h_{j,5}$ which, together with the obedience of $z_{2j}, z_{2j+1}$, takes care that $v_{2j},v_{2j+1}$ are entering and that $w_{2j}, w_{2j+1}$ are obeying.
\item[Duplicators]
$D_j$ are TSs that, for a key region, generate three key copies $k_{3j}, k_{3j+1}, k_{3j+2}$ and one obeying accordance event $a_j$ using the vice events $v_{2j}, v_{2j+1}$ with positive signature and the obeying wire event $w_{2j}$.

Figure \ref{fig:gadgets}~b) defines $D_j$ and demonstrates the duplicator fraction $R^{D_j}$ of a key region.
It is obvious, that the predefined positive signature of the two vice events puts $d_{j,2}, d_{j,4}$ into $R^{D_j}$ and exclude $d_{j,1}, d_{j,3}$.
The obedience of the wire event is used to signal the condition of $d_{j,4}$ to $d_{j,0}$ forcing it into $R^{D_j}$, too.
By design, $k_{3j}, k_{3j+1}, k_{3j+2}$ turn into entering events and $a_j$ becomes obeying.

As $H$ consumes only $k_{3j}, k_{3j+1}$ and $a_j$, we keep the remaining duplicate $k_{3j+2}$ of $k$.
As we create $14m$ duplicators, we get $14m$ free key copies.
\item[Barters]
$B_q$ are TSs that, for a key region, barter two key copies $k_{q_1}, k_{q_2}$ for one obeying so-called \emph{consistency} event $c_q$.
Here, the indexes $q_1 = 6q + 18m + 1$ and $q_2 = q_1 + 3$ are chosen to select from the last $4 \cdot 2m$ free items in the list of generated key copies.

The use of consistency events is to make the signature of three events consistent for every variable $X$ of $\varphi$.
The reason is that we cannot represent three occurrences of $X$ by an event that can only be used twice.
This is a crucial difference with respect to Section \ref{sec:The_Hardness_of_Linear_3Feasibility_and Linear_3ESSP}, where every variable can be represented by the three occurrences of one event.
Here, we require multiple synchronized events to represent $X$, which is enabled by the consistency events.

Figure \ref{fig:gadgets}~c) introduces $B_q$ and shows a respective key region part $R^{B_q}$.
Clearly, as both key copies are leaving, we get $b_{q,0}, b_{q,2} \in R^{B_i}$ which makes $c_q$ obeying.
Altogether, we add $4m$ barters that consume $8m$ key copies to generate $4m$ consistency events. 
\item[Variable manifolders]
$X_i$ are TSs that essentially represent one variable $X_i \in V(\varphi)$.
They are used to synchronize three variable events for $X_i$ to be used in the translators.
In particular, if $\alpha, \beta, \gamma \in \{0, \dots, m-1\}$ specify the clauses $K_\alpha, K_\beta, K_\gamma$ that contain variable $X_i$ then the TS $X_i$ provides three events $X_i^\alpha, X_i^\beta, X_i^\gamma$.
The construction of $U^\varphi$ assures that, for a key region, these events have equal signatures such that they can be treated like manifestations of the same event representing $X_i$.

The definition of TS $X_i$ as well as an illustration of a possible key region fragment $R^{X_i}$ are given in Figure \ref{fig:gadgets}~d).
To create the event equivalence, $X_i$ applies four consistency events, $c_{4i}$, $c_{4i+1}$, $c_{4i+2}$, $c_{4i+3}$.
The obedience of $c_{4i}$ and $c_{4i+1}$ condemns the states $x_{i,0}, x_{i,1}, x_{i,2}$ to a consistent behavior with respect to $R^{X_i}$, that is, either all or none of them are part of the region.
The same follows for $x_{i,3}, x_{i,4}, x_{i,5}$ from the obedience of $c_{4i+2}$ and $c_{4i+3}$.
It is easy to see that this makes the signature of all three events $X_i^\alpha, X_i^\beta, X_i^\gamma$ equal.
\item[Translators]
$T_i$ are unions $T_i = U(T_{i,0}, T_{i,1}, T_{i,2})$ of three TSs, that, in accordance to the translators of Section \ref{sec:The_Hardness_of_Linear_3Feasibility_and Linear_3ESSP}, implement the clauses $K_i = \{X_a, X_b, X_c\}$.
This means, $T_i$ represents the three variables by the representation events $X^i_a, X^i_b, X^i_c$ generated for clause $K_i$ and makes sure that exactly one of these events has a positive signature.
Figure \ref{fig:gadgets}~e)-g) defines the three TSs and introduces a possible key region fragment $R^{T_i}_b$ that assigns a positive signature to event $X^i_b$ representing $X_b$.
The methods of $T_i$ are the same as in Section~\ref{sec:The_Hardness_of_Linear_3Feasibility_and Linear_3ESSP} except that every variable event $X_j$, beeing unique in $T_i$, is replaced by the representer event $X^i_j$ of clause $K_i$.
\end{description}
Altogether, the construction results in the union
\[
U^\varphi = U(H, D_0, \dots, D_{14m-1}, B_0, \dots, B_{4m-1}, X_0, \dots, X_{m-1}, T_0, \dots, T_{m-1}).
\]
\begin{figure}[h!]
\begin{tikzpicture}
	\begin{scope}
 	
		\node (H) at (-1,-1.85) {a)};
		
		\filldraw[fill=black!15,rounded corners, dotted, line width=0.001] (0,0) ++(-0.5,-0.25) rectangle ++(1,0.5);
		\filldraw[fill=black!15,rounded corners, dotted, line width=0.001] (4.5,0) ++(-0.5,-0.25) rectangle ++(1,0.5);
		\node (0) at (0,0) {\nscale{$h_{0,0}$}};
		\node (1) at (1.5,0) {\nscale{$h_{0,1}$}};
		
		\node (2) at (3,0.5) {\nscale{$h_{0,2}$}};
		\node (3) at (3,-0.5) {\nscale{$h_{0,3}$}};

		\node (4) at (4.5,0) {\nscale{$h_{0,4}$}};
		\node (5) at (6,0) {\nscale{$h_{0,5}$}};
		
		\node (6) at (7.5,0.5) {\nscale{$h_{0,6}$}};
		\node (7) at (7.5,-0.5) {\nscale{$h_{0,7}$}};
		
		\node (8) at (9,0) {\nscale{$h_{0,8}$}};
		\node () at (9.7,0) {\scalebox{0.9}{$\Biggl\rbrack H_0$}};
		
		\filldraw[fill=black!15,rounded corners, dotted, line width=0.001] (0,-1.5) ++(-0.5,-0.25) rectangle ++(1,0.5);
		\filldraw[fill=black!15,rounded corners, dotted, line width=0.001] (4.5,-1.5) ++(-0.5,-0.25) rectangle ++(1,0.5);
		\node (9) at (0,-1.5) {\nscale{$h_{1,0}$}};
		\node (10) at (1.5,-1.5) {\nscale{$h_{1,1}$}};
		
		\node (11) at (3,-1) {\nscale{$h_{1,2}$}};
		\node (12) at (3,-2) {\nscale{$h_{1,3}$}};

		\node (13) at (4.5,-1.5) {\nscale{$h_{1,4}$}};
		\node (14) at (6,-1.5) {\nscale{$h_{1,5}$}};
		
		\node (15) at (7.5,-1) {\nscale{$h_{1,6}$}};
		\node (16) at (7.5,-2) {\nscale{$h_{1,7}$}};
		
		\node (17) at (9,-1.5) {\nscale{$h_{1,8}$}};
		\node () at (9.7,-1.5) {\scalebox{0.9}{$ \Biggl\rbrack H_1$}};
		
		\node () at (0,-2) {\nscale{$\vdots$}};
		\node () at (9,-2) {\nscale{$\vdots$}};
		\node (m) at (0,-2.3) {\nscale{$$}};
		\node (n) at (9,-2.3) {\nscale{$$}};
		
		\filldraw[fill=black!15,rounded corners, dotted, line width=0.001] (0,-3.7) ++(-0.5,-0.25) rectangle ++(1,0.5);
		\filldraw[fill=black!15,rounded corners, dotted, line width=0.001] (4.5,-3.7) ++(-0.5,-0.25) rectangle ++(1,0.5);
		\node (18) at (0,-3.7) {\nscale{$h_{n,0}$}};
		\node (19) at (1.5,-3.7) {\nscale{$h_{n,1}$}};
		
		\node (20) at (3,-3.2) {\nscale{$h_{n,2}$}};
		\node (21) at (3,-4.2) {\nscale{$h_{n,3}$}};

		\node (22) at (4.5,-3.7) {\nscale{$h_{n,4}$}};
		\node (23) at (6,-3.7) {\nscale{$h_{n,5}$}};
		
		\node (24) at (7.5,-3.2) {\nscale{$h_{n,6}$}};
		\node (25) at (7.5,-4.2) {\nscale{$h_{n,7}$}};
		
		\node (26) at (9,-3.7) {\nscale{$h_{n,8}$}};
		\node () at (9.7,-3.7) {\scalebox{0.9}{$ \Biggl\rbrack H_{n}$}};

	\graph {
			(0) ->["$\escale{$k$}$"] (1);
			
			(1) ->[pos=0.7,"\escale{$z_0$}"](2) ->[pos=0.35,"\escale{$v_0$}"](4);
			(1) ->[pos=0.7,swap, "\escale{$z_1$}"](3) ->[pos=0.35,swap, "\escale{$v_1$}"](4);
			
			(4) ->["$\escale{$k$}$"](5);
		
			(5) ->[pos=0.7,"\escale{$w_0$}"](6) ->[pos=0.35,"\escale{$z_0$}"](8);
			(5) ->[pos=0.7,swap, "\escale{$w_1$}"](7) ->[pos=0.35,swap,"\escale{$z_1$}"](8);
			
			(9) ->["$\escale{$k_0$}$"](10);
			
			(10) ->[pos=0.7,"\escale{$z_2$}"](11) ->[pos=0.35,"\escale{$v_2$}"](13);
			(10) ->[pos=0.7,swap, "\escale{$z_3$}"](12) ->[pos=0.35,swap, "\escale{$v_3$}"](13);
			
			(13) ->["$\escale{$k_1$}$"](14);
		
			(14) ->[pos=0.7,"\escale{$w_2$}"](15) ->[pos=0.35,"\escale{$z_2$}"](17);
			(14) ->[pos=0.7,swap, "\escale{$w_3$}"](16) ->[pos=0.35,swap,"\escale{$z_3$}"](17);
			
			(0) ->[swap,"$\escale{$r_0$}$"](9);
			(8) ->["$\escale{$a_0$}$"](17);
			
			(m) ->[swap, "$\escale{$r_{n-1}$}$"](18);
			(n) ->["$\escale{$a_{n-1}$}$"](26);

			(18) ->["$\escale{$k_{3n-3}$}$"](19);
			
			(19) ->[pos=0.7,"\escale{$z_{2n}$}"](20) ->[pos=0.35,"\escale{$v_{2n}$}"](22);
			(19) ->[pos=0.7,swap, "\escale{$z_{2n+1}$}"](21) ->[pos=0.3,swap, "\escale{$v_{2n+1}$}"](22);
			
			(22) ->["$\escale{$k_{3n-2}$}$"](23);
		
			(23) ->[pos=0.7,"\escale{$w_{2n}$}"](24) ->[pos=0.35,"\escale{$z_{2n}$}"](26);
			(23) ->[pos=0.7,swap, "\escale{$w_{2n+1}$}"](25) ->[pos=0.3,swap,"\escale{$z_{2n+1}$}"](26);
			
	};
	\end{scope}
	
	\begin{scope}[xshift = 1cm, yshift = -7cm]
		\node (D) at (-2,0) {b)};
		
		\filldraw[fill=black!15,rounded corners, dotted, line width=0.001] (90:1.25) ++(-0.5,-0.25) rectangle ++(1,0.5);
		\filldraw[fill=black!15,rounded corners, dotted, line width=0.001] (306:1.25) ++(-0.5,-0.25) rectangle ++(1,0.5);
		\filldraw[fill=black!15,rounded corners, dotted, line width=0.001] (162:1.25) ++(-0.5,-0.25) rectangle ++(1,0.5);
		\node (d0) at (90:1.25) {\nscale{$d_{j,0}$}};
		\node (d1) at (18:1.25) {\nscale{$d_{j,1}$}};	
		\node (d2) at (306:1.25) {\nscale{$d_{j,2}$}};
		\node (d3) at (234:1.25) {\nscale{$d_{j,3}$}};
		\node (d4) at (162:1.25) {\nscale{$d_{j,4}$}};

	\graph {
			(d0) ->["$\escale{$k_{3j+1}$}$"](d1);
			(d1) ->["\escale{$v_{2j}$}"](d2);
			(d2)->["\escale{$k_{3j}$}"](d3);
			(d3) ->["\escale{$v_{2j+1}$}"](d4);
			(d4) ->["\escale{$w_{2j}$}"](d0);
			(d4) ->["\escale{$k_{3j+2}$}"](d1);
			(d1) ->[swap,"\escale{$a_j$}"](d3);
	};
	\end{scope}
	
	\begin{scope}[xshift = 4.5cm, yshift = -7.5cm]
		\node (X) at (-1,0.55) {c)};

		\filldraw[fill=black!15,rounded corners, dotted, line width=0.001] (0.65,0) ++(-1.15,-0.25) rectangle ++(2.3,0.5);
		\node (b0) at (0,0) {\nscale{$b_{q,0}$}};
		\node (b1) at (1.1,1.1) {\nscale{$b_{q,1}$}};
		\node (b2) at (1.3,0) {\nscale{$b_{q,2}$}};
		\node (b3) at (2.6,0) {\nscale{$b_{q,3}$}};	
		
	\graph {
			(b0)->["$\escale{$k_{q_1}$}$"](b1);
			(b0)->[ swap,"\escale{$c_q$}"](b2);
			(b2) ->[swap, "$\escale{$k_{q_2}$}$"](b3);
	};
	\end{scope}

	\begin{scope}[xshift = 9.5cm, yshift = -7cm]
		\node (X) at (-1,0) {d)};

		\filldraw[fill=black!15,rounded corners, dotted, line width=0.001] (2,0) ++(-0.5,-1.25) rectangle ++(1,2.5);
		\node (x1) at (0,0) {\nscale{$x_{i,1}$}};
		\node (x0) at (0,1) {\nscale{$x_{i,0}$}};
		\node (x2) at (0,-1) {\nscale{$x_{i,2}$}};	
		
		\node (x4) at (2,0) {\nscale{$x_{i,4}$}};
		\node (x3) at (2,1) {\nscale{$x_{i,3}$}};
		\node (x5) at (2,-1) {\nscale{$x_{i,5}$}};

	\graph {
			(x0) ->[swap,"$\escale{$c_{4i}$}$"](x1);
			(x1)->[ swap,"\escale{$c_{4i+1}$}"](x2);
			
			(x3) ->["$\escale{$c_{4i+2}$}$"](x4);
			(x4)->[ "\escale{$c_{4i+3}$}"](x5);
			
			(x0) ->["\escale{$X^\alpha_i$}"](x3);
			(x1) ->["\escale{$X^\beta_i$}"](x4);
			(x2) ->["\escale{$X^\gamma_i$}"](x5);
	};
	\end{scope}

\begin{scope}[yshift = -10cm]
\node at (-1,0) {e)};
\foreach \i in {0,...,5} {\coordinate (\i) at (\i*1.9cm,0);}
\foreach \i in {0} {\fill[black!15, rounded corners] (\i) +(-0.5,-0.3) rectangle +(0.5,0.4);}
\fill[black!15, rounded corners] (3) +(-0.5,-0.3) rectangle +(2.4,0.4);
\foreach \i in {0,...,5} {\node (p\i) at (\i) {\nscale{$t_{i,0,\i}$}};}
\graph { (p0) ->["\escale{$k_{18i+2}$}"] (p1) ->["\escale{$X_a^i$}"] (p2) ->["\escale{$\tilde{X}_b^i$}"] (p3) ->["\escale{$X_c^i$}"] (p4) ->["\escale{$k_{18i+11}$}"] (p5);};
\end{scope}
\begin{scope}[yshift = -11.5cm]
\node at (-1,0) {f)};
\foreach \i in {0,...,4} {\coordinate (\i) at (\i*1.9cm,0);}
\foreach \i in {0} {\fill[black!15, rounded corners] (\i) +(-0.5,-0.3) rectangle +(0.5,0.4);}
\fill[black!15, rounded corners] (2) +(-0.5,-0.3) rectangle +(2.4,0.4);
\foreach \i in {0,...,4} {\node (p\i) at (\i) {\nscale{$t_{i,1,\i}$}};}
\graph { (p0) ->["\escale{$k_{18i+5}$}"] (p1) ->["\escale{$X_b^i$}"] (p2) ->["\escale{$p_{i}$}"] (p3) ->["\escale{$k_{18i+14}$}"] (p4);};
\end{scope}
\begin{scope}[yshift = -13cm]
\node at (-1,0) {g)};
\foreach \i in {0,...,4} {\coordinate (\i) at (\i*1.9cm,0);}
\foreach \i in {0} {\fill[black!15, rounded corners] (\i) +(-0.5,-0.3) rectangle +(0.5,0.4);}
\fill[black!15, rounded corners] (2) +(-0.5,-0.3) rectangle +(2.4,0.4);
\foreach \i in {0,...,4} {\node (p\i) at (\i) {\nscale{$t_{i,2,\i}$}};}
\graph { (p0) ->["\escale{$k_{18i+8}$}"] (p1) ->["\escale{$\tilde{X}_b^i$}"] (p2) ->["\escale{$p_{i}$}"] (p3) ->["\escale{$k_{18i+17}$}"] (p4);};
\end{scope}
\end{tikzpicture}

\caption{
The gadgets of $U^\varphi$ with their respective fractions of a key region.
In every example, the gray marked states are part of the key region while the unmarked are outside.
a) The headmaster $H$ with submodules $H_0, \dots, H_n$ where $n = 14m-1$.
b) $D_j$, one of the $14m$ duplicators that provide the $14m$ key copies.
c) $B_q$, one out of $4m$ barters trading $8m$ key copies for $4m$ consistency events.
Here, $q_1 = 6q + 18m +1$ and $q_2 = q_1 + 3$.
d) The variable manifolder for $X_i$ synchronizing three variable events and using four consistency events.
Together, the $m$ variable manifolders consume all $4m$ available consistency events.
e-g) The translator $T_i$ consisting of $T_{i,0}$ (e), $T_{i,1}$ (f), and $T_{i,2}$ (g).
Using six key copies, $T_i$ implements the clause $K_i$.
All $m$ translators together consume the remaining $6m$ key copies.
}
\label{fig:gadgets}
\end{figure}
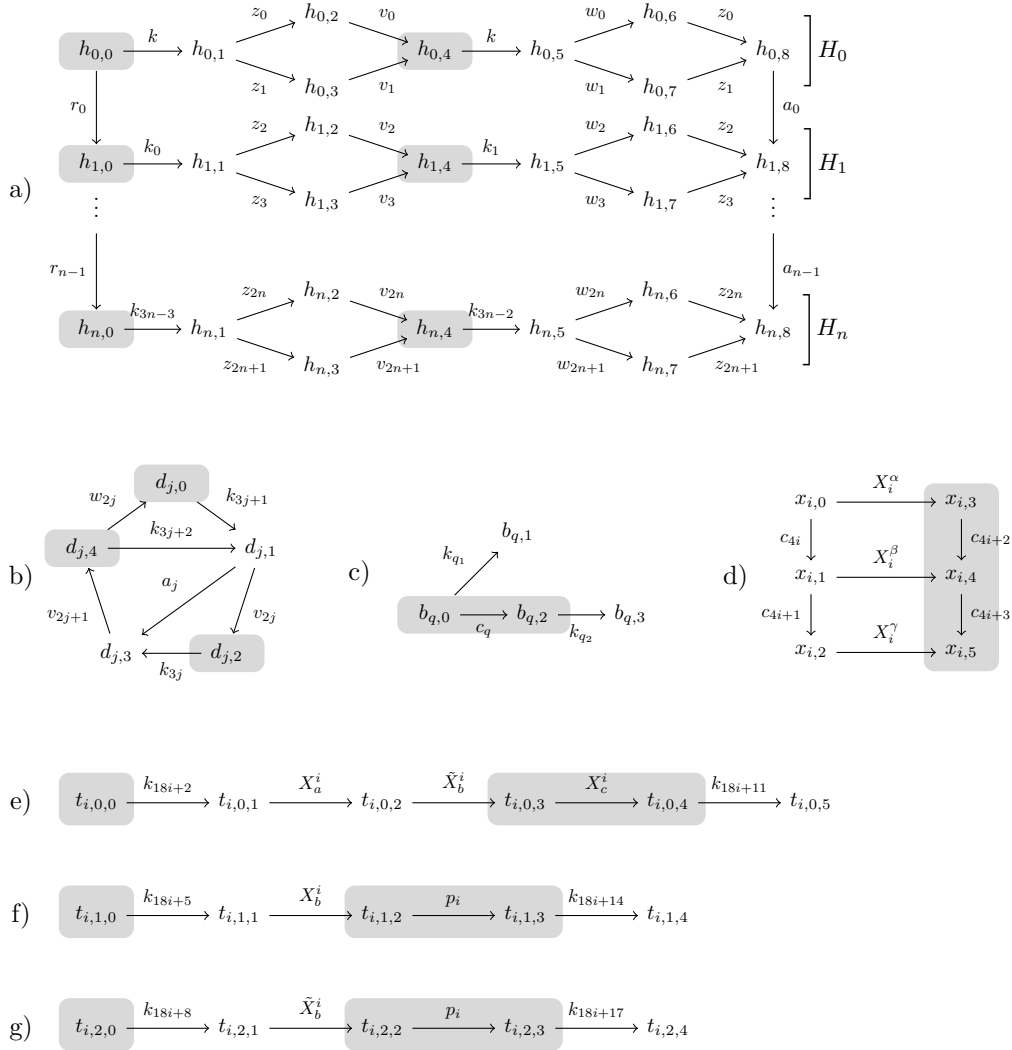

In the following, we establish the proof of the announced equivalence between the existence of a one-in-three model for $\varphi$ and the ESSP of $U^\varphi$, respectively, the feasibility of $U^\varphi$.
We start with a formalization of the properties of key regions:
\begin{lemma}
\label{lem:key_region_2-grade_2-fold_ESSP_Feasibility}
If $R$ is a key region of $U^\varphi$ inhibiting $k$ at $h_{0,8}$, that is, where, without loss of generality, $sig_{R}(k) = -1$ and $h_{0,8} \not\in R$ then
\begin{enumerate}
\item\label{lem:key_region_2-grade_2-fold_ESSP_Feasibility:item1}
for all $j \in \{0, \dots, 14m-1\}$ the region contains $h_{j,0}, h_{j,4}$ and excludes $h_{j,1}$, $h_{j,2}$, $h_{j,3}$, $h_{j,5}$, $h_{j,6}$, $h_{j,7}$, $h_{j,8}$,
\item\label{lem:key_region_2-grade_2-fold_ESSP_Feasibility:item2}
for all $j \in \{0, \dots, 14m-1\}$ the region contains $d_{j,0}, d_{j,2}, d_{j,4}$ and excludes $d_{j,1}, d_{j,3}$,
\item\label{lem:key_region_2-grade_2-fold_ESSP_Feasibility:item3}
all key copies exit, that is, for all $j \in \{0, \dots, 14m-1)$ the events $k_{3j}$, $k_{3j+1}$, $k_{3j+2}$ have a negative signature,
\item\label{lem:key_region_2-grade_2-fold_ESSP_Feasibility:item4}
for all $q \in \{0, \dots, 4m-1\}$ the region contains $b_{q,0}, b_{q,2}$, excludes $b_{q,1}, b_{q,3}$, and defines $sig_{R}(c_q) = 0$,
\item\label{lem:key_region_2-grade_2-fold_ESSP_Feasibility:item5}
for all $i \in \{0, \dots, m-1\}$ the variable $X_i$, which occurs in clauses $K_\alpha$, $K_\beta$, $K_\gamma$, is represented by events $X^\alpha_i$, $X^\beta_i$, $X^\gamma_i$, that is, they have the same signature
\[sig_{R}(X^\alpha_i) = sig_{R}(X^\beta_i) = sig_{R}(X^\gamma_i), \text{ and}\]
\item\label{lem:key_region_2-grade_2-fold_ESSP_Feasibility:item6}
for all $i \in \{0, \dots, m-1\}$ the clause $K_i = \{X_a, X_b, X_c\}$ is realized in translator $T_i$ by making exactly one of the events $X^i_a, X^i_b, X^i_c$ enter while the other two obey. 
\end{enumerate}
\end{lemma}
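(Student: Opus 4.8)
The plan is to prove all six items by a single forward‑propagation argument, exactly as in the proofs of Lemma~\ref{lemma:KeyRegionB} and Lemma~\ref{lemma:KeyRegionT}, but adapted to the new, cyclic gadgets. The only tool needed is the region equation: along any arc $s\edge{e}s'$ of a region $R$ we have $sig_R(e)=R(s')-R(s)\in\{-1,0,1\}$, so knowing $R$ on two adjacent states pins the joining event's signature, and knowing $sig_R(e)$ together with $R(s)$ forces $R(s')=R(s)+sig_R(e)$ (which must again lie in $\{0,1\}$). Since $U^\varphi$ is $2$-fold, each event labels at most two arcs, hence once its signature is fixed at one occurrence it is fixed everywhere. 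Starting from the hypothesis $sig_R(k)=-1$ and $h_{0,8}\notin R$ — which, $R$ being an actual region, is consistent — one simply reads off the forced values gadget by gadget; the content of the lemma is precisely that these forced values are uniquely determined.

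Items~\ref{lem:key_region_2-grade_2-fold_ESSP_Feasibility:item1}--\ref{lem:key_region_2-grade_2-fold_ESSP_Feasibility:item3} follow by induction on $j\in\{0,\dots,14m-1\}$, alternately analysing the headmaster submodule $H_j$ and the duplicator $D_j$. Base case $j=0$: applying $sig_R(k)=-1$ to the two $k$-arcs of $H_0$ gives $h_{0,0},h_{0,4}\in R$ and $h_{0,1},h_{0,5}\notin R$; since the zero event $z_0$ occurs once as $h_{0,1}\edge{z_0}h_{0,2}$ and once as $h_{0,6}\edge{z_0}h_{0,8}$ while $R(h_{0,1})=R(h_{0,8})=0$, we get $R(h_{0,2})=sig_R(z_0)=-R(h_{0,6})$, hence $sig_R(z_0)=0$ and $h_{0,2},h_{0,6}\notin R$; the same holds for $z_1$, and therefore $v_0,v_1$ enter while $w_0,w_1$ obey. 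Feeding $sig_R(v_0)=sig_R(v_1)=1$ and $sig_R(w_0)=0$ into the five‑state cyclic gadget $D_0$ forces $d_{0,0},d_{0,2},d_{0,4}\in R$, $d_{0,1},d_{0,3}\notin R$, whence $sig_R(k_0)=sig_R(k_1)=sig_R(k_2)=-1$ and $sig_R(a_0)=0$. Step $j\to j+1$: $sig_R(a_j)=0$ together with $R(h_{j,8})=0$ yields $R(h_{j+1,8})=0$, and $sig_R(k_{3j})=sig_R(k_{3j+1})=-1$ are exactly the two $k$-arcs of $H_{j+1}$, so the analysis of $H_0$ and $D_0$ applies verbatim to $H_{j+1}$ and $D_{j+1}$. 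This establishes items~\ref{lem:key_region_2-grade_2-fold_ESSP_Feasibility:item1}, \ref{lem:key_region_2-grade_2-fold_ESSP_Feasibility:item2}, and \ref{lem:key_region_2-grade_2-fold_ESSP_Feasibility:item3}; note that each event $z_{2j},z_{2j+1},w_{2j},a_j$ occurs only inside $H_j\cup D_j$ (or bridges $D_j$ to $H_{j+1}$), so there is no cross‑gadget interference.

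For the remaining items I would exploit that all key copies exit (item~\ref{lem:key_region_2-grade_2-fold_ESSP_Feasibility:item3}). For item~\ref{lem:key_region_2-grade_2-fold_ESSP_Feasibility:item4}, in each barter $B_q$ the arcs $b_{q,0}\edge{k_{q_1}}b_{q,1}$, $b_{q,0}\edge{c_q}b_{q,2}$, $b_{q,2}\edge{k_{q_2}}b_{q,3}$ with $sig_R(k_{q_1})=sig_R(k_{q_2})=-1$ force $b_{q,0},b_{q,2}\in R$, $b_{q,1},b_{q,3}\notin R$, hence $sig_R(c_q)=0$. For item~\ref{lem:key_region_2-grade_2-fold_ESSP_Feasibility:item5}, the consistency events $c_{4i},\dots,c_{4i+3}$ feeding $X_i$ obey by item~\ref{lem:key_region_2-grade_2-fold_ESSP_Feasibility:item4}, so $R$ is constant on $\{x_{i,0},x_{i,1},x_{i,2}\}$ and on $\{x_{i,3},x_{i,4},x_{i,5}\}$, whence $sig_R(X^\alpha_i)=sig_R(X^\beta_i)=sig_R(X^\gamma_i)=R(x_{i,3})-R(x_{i,0})$. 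For item~\ref{lem:key_region_2-grade_2-fold_ESSP_Feasibility:item6}, the linear TSs $T_{i,0},T_{i,1},T_{i,2}$ coincide, up to the renaming $X_j\mapsto X^i_j$ of the variable events, with the translator TSs of Figure~\ref{fig:Linear3-foldESSP_T}, and all of their key copies exit; hence the signature‑aggregation argument of Lemma~\ref{lemma:KeyRegionT} carries over unchanged and shows that exactly one of $X^i_a,X^i_b,X^i_c$ enters while the other two obey.

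The only real obstacle is bookkeeping: one must keep careful track of which of the at most two occurrences of each $v$-, $w$-, $k$-, $a$- and $c$-event lies in which pair of gadgets, so that the hand‑off of determined signatures between consecutive stages of the induction is sound, and one must check at every step that the value forced onto the next state lands in $\{0,1\}$ — which is automatic, the hypothesis being realised by a genuine region. Apart from that, each step is a direct application of the region equation.
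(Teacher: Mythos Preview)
Your proposal is correct and follows essentially the same approach as the paper's own proof: an induction on $j$ through the pairs $(H_j,D_j)$ to establish items~\ref{lem:key_region_2-grade_2-fold_ESSP_Feasibility:item1}--\ref{lem:key_region_2-grade_2-fold_ESSP_Feasibility:item3}, then straightforward propagation through the barters and variable manifolders for items~\ref{lem:key_region_2-grade_2-fold_ESSP_Feasibility:item4} and~\ref{lem:key_region_2-grade_2-fold_ESSP_Feasibility:item5}, and finally the signature-aggregation argument of Lemma~\ref{lemma:KeyRegionT} for item~\ref{lem:key_region_2-grade_2-fold_ESSP_Feasibility:item6}. Your treatment of the zeros via $R(h_{0,2})=sig_R(z_0)=-R(h_{0,6})$ is a slightly slicker rephrasing of the paper's interval-intersection argument, but otherwise the two proofs coincide.
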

\begin{proof}
Observe that the demonstrated regions of Figure~\ref{fig:gadgets} introduce a key region.

We show (\ref{lem:key_region_2-grade_2-fold_ESSP_Feasibility:item1}-\ref{lem:key_region_2-grade_2-fold_ESSP_Feasibility:item3}) by induction over $j$: 
For a start, let $j = 0$:
From $sig_R(k)=-1$ we have $R(h_{0,0})=R(h_{0,4})=1$ and $R(h_{0,1})=0, R(h_{0,5})=0$.
By $R(h_{0,1})=0$ we get $sig_R(z_0), sig_R(z_1)\in \{ 0,1\}$ and by $R(h_{0,8})=0$ we get $sig_R(z_0), sig_R(z_1)\in \{ -1,0\}$ resulting in $sig_R(z_0)=sig_R(z_1)=0$.
This implies $R(h_{0,2})=R(h_{0,3})=0$ and $R(h_{0,6})=R(h_{0,7})=0$ making $v_0,v_1$ entering and $w_0,w_1$ obeying.
From the entering signature of $v_0,v_1$ we have that $R$ includes $d_{0,2},d_{0,4}$ and excludes $d_{0,1},d_{0,3}$. 
As $R(d_{0,4})=1$ and $w_0$ obeys, we get $R(d_{0,0})=1$.
By $R(d_{j,0})=R(d_{j,4})=1$ and $R(d_{j,1})=0$ we obtain that $k_1,k_2$ exit.
By $R(d_{j,1})=R(d_{j,1})=0$ and $R(d_{j,2})=1$ we have that $k_0$ exits and $a_0$ obeys.

Now assume that (\ref{lem:key_region_2-grade_2-fold_ESSP_Feasibility:item1}-\ref{lem:key_region_2-grade_2-fold_ESSP_Feasibility:item3}) hold for all indices less $j$:
As $R(h_{j-1,8})=0$ and $R(a_{j-1})=0$ we have that $R(h_{j,8})=0$.
Furthermore, we have the exiting $k_{3(j-1)}, k_{3(j-1)+1}$.
Hence, we basically have the same situation as in $H_0$ and $D_0$.
Consequently, a similar argumentation as for the induction start yields that $R$ contains exactly the states $h_{j,0}, h_{j,4}$ of $H_j$ and $d_{j,0}, d_{j,2}, d_{j,4}$ of $D_j$.
This makes the vice events $v_{2j}, v_{2j+1}$ enter and the wire events $w_{2j}, w_{2j+1}$ obey.
Moreover, $D_j$ lets the key copies $k_{3j+1}$, $k_{3j+2}$ have a negative signature and $a_j$ obeys.
This concludes the induction step.

The statement (\ref{lem:key_region_2-grade_2-fold_ESSP_Feasibility:item4}) is trivially implied by (\ref{lem:key_region_2-grade_2-fold_ESSP_Feasibility:item3}).

For (\ref{lem:key_region_2-grade_2-fold_ESSP_Feasibility:item5}), it follows immediately from the obedience of the consistency events, given in (\ref{lem:key_region_2-grade_2-fold_ESSP_Feasibility:item4}), that $R(x_{i,0})=R(x_{i,1})=R(x_{i,2})$ and $R(x_{i,3})=R(x_{i,4})=R(x_{i,5})$.
This implicates $R(x_{i,3})-R(x_{i,0})=R(x_{i,4})-R(x_{i,1})=R(x_{i,5})-R(x_{i,2})$.
Consequently, the variable events $X^\alpha_i,X^\beta_i,X^\gamma_i$ have the same signature.

Finally, (\ref{lem:key_region_2-grade_2-fold_ESSP_Feasibility:item6}) is shown analogously to Lemma \ref{lemma:KeyRegionT} using signature aggregation.
\end{proof}
Lemma \ref{lem:key_region_2-grade_2-fold_ESSP_Feasibility} states that the structure of a key region defines a model of $\varphi$.
That is why we can say that the existence of key region for $U^\varphi$ implies the one-in-three satisfiablity of $\varphi$:
\begin{lemma}
\label{lem:Key_region_then_Model_2-grade_2-fold_ESSP_Feasibility}
If there is a key region of $U^\varphi$ then $\varphi$ has a one-in-three model.
\end{lemma}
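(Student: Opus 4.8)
The plan is to read the model directly off the structural description of key regions already supplied by Lemma~\ref{lem:key_region_2-grade_2-fold_ESSP_Feasibility}, so the proof will be short. Let $R$ be a key region of $U^\varphi$, so without loss of generality $sig_R(k) = -1$ and $h_{0,8} \not\in R$. First I would invoke item~\ref{lem:key_region_2-grade_2-fold_ESSP_Feasibility:item5} of that lemma: for every variable $X_i \in V(\varphi)$, occurring in the clauses $K_\alpha, K_\beta, K_\gamma$, the three manifolder events satisfy $sig_R(X_i^\alpha) = sig_R(X_i^\beta) = sig_R(X_i^\gamma)$. This common signature value is exactly what makes the following definition sound:
\[
M = \{\, X_i \in V(\varphi) \mid sig_R(X_i^\alpha) = 1 \text{ for one, hence every, clause index } \alpha \text{ with } X_i \in K_\alpha \,\}.
\]

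Next I would check that $M$ is a one-in-three model, i.e. that $|M \cap K_i| = 1$ for every $i \in \{0, \dots, m-1\}$. Fix a clause $K_i = \{X_a, X_b, X_c\}$. The events that the translator $T_i$ uses to represent $X_a, X_b, X_c$ are precisely the manifolder events $X^i_a, X^i_b, X^i_c$, since $K_i$ is one of the three clauses containing each of $X_a, X_b, X_c$ and the construction reuses those very events in $T_i$. By item~\ref{lem:key_region_2-grade_2-fold_ESSP_Feasibility:item6}, exactly one of $X^i_a, X^i_b, X^i_c$ enters with respect to $R$, while the other two obey. Combining this with item~\ref{lem:key_region_2-grade_2-fold_ESSP_Feasibility:item5} — which lets me read the test $sig_R(X^i_a) = 1$ (and likewise for $b, c$) as the membership condition defining $M$ — yields that exactly one of $X_a, X_b, X_c$ lies in $M$. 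As $i$ was arbitrary, $M$ meets every clause of $\varphi$ in precisely one variable, hence $M$ is a one-in-three model.

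I expect no genuine obstacle: the entire burden is carried by Lemma~\ref{lem:key_region_2-grade_2-fold_ESSP_Feasibility}. The only points requiring care are purely notational bookkeeping: reconciling the variable-indexed notation $X_i^\alpha$ coming from the variable manifolders with the clause-indexed notation $X^i_a$ appearing inside the translators (they denote the same events), and confirming that the ``one, hence every'' phrasing in the definition of $M$ is legitimately justified by the signature equality of item~\ref{lem:key_region_2-grade_2-fold_ESSP_Feasibility:item5}, so that $M$ is well defined independently of which of the three occurrences of a variable one inspects.
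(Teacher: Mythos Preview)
Your proposal is correct and follows essentially the same approach as the paper: both invoke items~\ref{lem:key_region_2-grade_2-fold_ESSP_Feasibility:item5} and~\ref{lem:key_region_2-grade_2-fold_ESSP_Feasibility:item6} of Lemma~\ref{lem:key_region_2-grade_2-fold_ESSP_Feasibility}, define $M$ as the set of variables whose representer events have positive signature, and conclude that each clause is hit exactly once. Your version is in fact slightly more explicit about the notational reconciliation between $X_i^\alpha$ and $X^i_a$ and about why $M$ is well defined, which the paper leaves implicit.
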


\begin{proof}
If $R_{key}$ is a key region of $U^\varphi$ then, by Lemma \ref{lem:key_region_2-grade_2-fold_ESSP_Feasibility}, we obtain for all variables $X_i$ and their three occurrences in clauses $K_\alpha, K_\beta, K_\gamma$ that the respective events have
\[sig_{R_{key}}(X^\alpha_i)=sig_{R_{key}}(X^\beta_i)=sig_{R_{key}}(X^\gamma_i).\]
Moreover, for every clause $K_i = \{X_a, X_b, X_c\}$ it is true that exactly one of the events $X^i_a, X^i_b, X^i_c$ has a positive signature while the other two obey.
Then, a variable $X_i$ is put into set $M$ if and only if their synchronized events $X^\alpha_i, X^\beta_i, X^\gamma_i$ have positive signature.
It clearly follows for all clauses $K_i$ that $|M \cap K_i| = 1$ which makes $M$ a one-in-three model of $\varphi$.
\end{proof}
The other way around, the proof of the equivalence requires to derive a key region from a one-in-three model.
We have to argue that working our way backwards through the construction ends up in a region that inhibits $k$ at the key state.
\begin{lemma}
\label{lem:Model_the_Key_Region_2-grade_2-fold_ESSP_Feasibility}
If $\varphi$ has a one-in-three model then there is a key region of $U^\varphi$.
\end{lemma}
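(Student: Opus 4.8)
The plan is to reverse-engineer the structure of key regions described in Lemma~\ref{lem:key_region_2-grade_2-fold_ESSP_Feasibility}: starting from a one-in-three model $M$ of $\varphi$, I would explicitly assemble a region $R$ of $U^\varphi$ out of the gadget fragments pictured in Figure~\ref{fig:gadgets} and then check that it inhibits $k$ at $h_{0,8}$. For the model-independent part, I would take $R^H$ to consist of exactly the states $h_{j,0}, h_{j,4}$ for all $j \in \{0, \dots, 14m-1\}$, let each duplicator contribute $R^{D_j} = \{d_{j,0}, d_{j,2}, d_{j,4}\}$, and let each barter contribute $R^{B_q} = \{b_{q,0}, b_{q,2}\}$. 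A direct inspection of Figure~\ref{fig:gadgets}(a)--(c) shows that these sets are regions of the respective TSs and that on their shared events they induce one common signature in which $k$ and all key copies exit, the vice events enter, and the wire, zero, accordance, reachability, and consistency events all obey.

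For the model-dependent gadgets I would let $M$ make the choice. For a variable $X_i$ occurring in $K_\alpha, K_\beta, K_\gamma$, I would set $R^{X_i} = \{x_{i,3}, x_{i,4}, x_{i,5}\}$ if $X_i \in M$ and $R^{X_i} = \emptyset$ otherwise; in both cases this is a region of $X_i$ in which $c_{4i}, \dots, c_{4i+3}$ obey and in which the three representation events $X_i^\alpha, X_i^\beta, X_i^\gamma$ share the signature $1$ (if $X_i \in M$) or $0$ (otherwise). For a translator $T_i$ with $K_i = \{X_a, X_b, X_c\}$, I would pick the fragment among $R^{T_i}_a, R^{T_i}_b, R^{T_i}_c$ belonging to the unique variable of $K_i$ lying in $M$, built exactly as in the argument underlying Lemma~\ref{lemma:KeyRegionT}; this is a region of $T_i$ in which the six key copies it uses exit and in which precisely one of $X^i_a, X^i_b, X^i_c$ enters while the other two obey. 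The point where the hypothesis on $M$ really enters is that $X_i$ is the chosen variable in each of its three clauses if and only if $X_i \in M$, so the signature assigned to $X_i^\alpha$ inside translator $T_\alpha$ always coincides with the one assigned to it inside the manifolder $X_i$, and likewise for $X_i^\beta, X_i^\gamma$.

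Finally I would form $R = R^H \cup \bigcup_j R^{D_j} \cup \bigcup_q R^{B_q} \cup \bigcup_i R^{X_i} \cup \bigcup_i R^{T_i}$ and argue that, since every event of $U^\varphi$ occurs in at most two gadgets and the chosen fragments agree on every such shared event (the key copies and consistency events are pinned down globally in the first step, the representation events are handled in the second step), the union admits a well-defined global signature and is therefore a region of $U^\varphi$. Since $sig_R(k) = -1$ and $h_{0,8} \notin R^H \subseteq R$ by construction, this $R$ inhibits $k$ at $h_{0,8}$ and is thus a key region, which is what the lemma asserts. I expect the only genuine difficulty to be the routine but lengthy bookkeeping of shared events across the five families of gadgets — confirming both that each listed fragment really admits a signature for its TS and that the values match wherever two gadgets overlap — with the manifolder/translator interface being the single place that actually relies on $M$ being a one-in-three model rather than an arbitrary variable set.
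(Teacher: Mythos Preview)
Your proposal is correct and follows essentially the same approach as the paper: build the key region gadget by gadget from the fragments shown in Figure~\ref{fig:gadgets}, use the one-in-three model $M$ to select the manifolder and translator fragments, and verify that all shared events receive consistent signatures. The only cosmetic difference is that for $X_i \notin M$ the paper takes $R^{X_i}$ to be all six states of the manifolder rather than the empty set, but both choices make the consistency events obey and the representation events obey, so this is immaterial.
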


\begin{proof}
Let $M \subseteq V(\varphi)$ be a one-in-three model of $\varphi$.
We progressively build a region $R$ by following the requirements of every individual gadget:
First of all, for every variable $X_i$ with occurrences in clauses $K_\alpha, K_\beta, K_\gamma$ we take care that $sig_R(X^i_\alpha) = sig_R(X^i_\beta) = sig_R(X^i_\gamma) = M(X_i)$ where we define $M(X_i) = 1$ if $X_i \in M$ and $M(X_i) = 0$, otherwise.
To this end, we create a sub region $R^{X_i}$ for TS $X_i$ and let $x_{i,3}, x_{i,4}, x_{i,5} \in R^{X_i}$.
Moreover, we let $x_{i,0}, x_{i,1}, x_{i,2} \not\in R^{X_i}$ if and only if $X_i \in M$.
Notice that this makes the consistency events $c_{4i}, c_{4i+1}, c_{4i+2}, c_{4i+3}$ obey.
As different variable manifolders do not share states or events, the regions $R^{X_0}\dots,R^{X_{m-1}}$ are pairwise compatible.

For every clause $K_i = \{X_a, X_b, X_c\}$ the model $M$ selects exactly one variable.
By the $M$-conform construction of $R^{X_a}, R^{X_b}, R^{X_c}$ we get that exactly one of the events $X^i_a, X^i_b, X^i_c$ enters and the others obey.
We let the key copies of $T_i$ be exiting and easily generate a sub region $R^{T_i}$ as in Section~\ref{sec:The_Hardness_of_Linear_3Feasibility_and Linear_3ESSP}.
Observe that unions $T_i$ and $T_j$ share no event if $i\not=j$, which makes $R^{T_0},\ldots,R^{T_{m-1}}$ pairwise compatible.
As the variable events are selected in compliance with the variable manifolders and as translators and manifolders do not share further events, their sub regions are also compatible.

Next, as all consistency events are obeying, we can define a sub region $R^{B_q}$ of $B_q$ for every $q\in \{0,\dots,4m-1\}$ in compliance to Figure~\ref{fig:gadgets}~c).
This makes the used key copies exiting.
As different barters have no event in common, share only the obeying consistency events with variable manifolders and share no events at all with translators, their regions are pairwise compatible.

Headmaster and duplicators only share key copies with translators and barters.
As we have defined all key copies exiting and as the provided sub regions meet the conditions given in Lemma \ref{lem:key_region_2-grade_2-fold_ESSP_Feasibility}, we can use the lemma as a description on how to build compatible sub regions $R^{H}, R^{D_0}, \dots, R^{D_{14m-1}}$ for headmaster and duplicators.
Altogether, we have that 
\[
R=R^{H} \cup R^{D_0} \cup \dots \cup R^{D_{14m-1}} \cup R^{B_0} \cup \dots \cup R^{B_{4m-1}} \cup R^{X_0} \cup \dots \cup R^{X_{m-1}} \cup R^{T_0} \cup\dots \cup R^{T_{m-1}}
\]
is a region of $U^\varphi$ inhibiting $k$ at $h_{0,8}$.
\end{proof}
The proof of Theorem \ref{thm:twograde_twofold_NPcompleteness} is based on the previous lemmas, that is, that $\varphi$ has a one-in-three model $M$ if and only if there is a key region for $U^\varphi$.
In particular, the \emph{if} direction of this statement follows directly from Lemma \ref{lem:Key_region_then_Model_2-grade_2-fold_ESSP_Feasibility}.
Reversely, having $M$, Lemma \ref{lem:Model_the_Key_Region_2-grade_2-fold_ESSP_Feasibility} only allows to inhibit $k$ at the key state.
For the remaining events $e$ and states $s$ of $U^{\varphi}$ we have to show that $e$ is inhibitable at $s$:
\begin{lemma}
\label{lem:Secondary_Proofs_for_the_Hardness_of_2grade_2-ESSP}
If $\varphi$ has a one-in-three model, then there is region inhibiting $e$ at $s$ for all events $e \in E(U^\varphi)$ and all states $s \in S(U^\varphi)$ that fulfill $(e,s) \not= (k,h_{0,8})$ and $\neg(s\edge{e})$.
\end{lemma}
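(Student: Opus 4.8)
The plan is to argue by an exhaustive case distinction on the type of the event $e$ and, within each type, on the class of states $s$ with $\neg(s\edge{e})$. Two tools carry the argument. The first is the region enhancement principle of Lemma~\ref{lemma:easyLiftedRegion}: a region specified on a sub-union of $U^\varphi$ extends to all of $U^\varphi$ as long as every component TS that has not yet been touched carries at most one edge whose label already has a fixed nonzero signature; for the non-linear gadgets this means that those either fall inside the chosen window or have to be pinned down explicitly. The second tool is the concrete key region $R_{key}$ that the hypothesis produces through Lemma~\ref{lem:Model_the_Key_Region_2-grade_2-fold_ESSP_Feasibility}.

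First I would dispose of the bulk of the pairs with the help of $R_{key}$. By Lemma~\ref{lem:key_region_2-grade_2-fold_ESSP_Feasibility}, $R_{key}$ satisfies $sig(k)=-1$, makes every key copy exit, every vice event enter, and every wire, zero, consistency, reachability and accordance event obey, and it records exactly which states lie inside. Consequently $R_{key}$ already inhibits $k$ and every key copy at every state it places outside, and it inhibits every vice event at every state it places inside. For the obeying event families, and for the proxy, variable and copy events of the translators, $R_{key}$ is of no use, so dedicated regions are needed throughout; here I would employ the window-and-enhance technique already used in the proof of Lemma~\ref{lem:ESSP_to_SSP}, namely pick a short window of consecutive edges inside one gadget, give exactly one event on it signature $\pm1$, let everything else obey, and enhance. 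The translator events are then handled essentially verbatim as in the auxiliary lemma of the linear construction, Lemma~\ref{lem:Secondary_Proofs_for_The_Hardness_of_Linear_3Feasibility_and Linear_3ESSP}, since the $T_{i,\ell}$ are linear and differ from the translators of Section~\ref{sec:The_Hardness_of_Linear_3Feasibility_and Linear_3ESSP} only by the renaming $X_j\mapsto X^i_j$ and can be fed the exiting key copies of $R_{key}$ through signature aggregation; the consistency events are inhibited inside the barters and the manifolders, and the reachability and accordance events by regions forming an initial segment of the headmaster--duplicator line.

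What remain are the ``wrong-side'' pairs: for $k$ and the key copies these are the states that $R_{key}$ places inside it -- the states $h_{j,0},h_{j,4}$ with $j\ge1$, the duplicator states $d_{j,0},d_{j,2},d_{j,4}$, the barter states $b_{q,0},b_{q,2}$, and the relevant manifolder and translator states -- and for the vice, wire and zero events it is the symmetric set of states placed outside $R_{key}$. For each such $s$ I would engineer a dedicated region that keeps the pertinent event nonzero but swaps the side of $s$. The main device for the headmaster is a phase shift: in the submodule $H_{j'}$ that precedes the target one puts $sig(z_{2j'})=sig(z_{2j'+1})=1$ instead of $0$; this forces the vice and wire events of $H_{j'}$ to obey, collapses the matching duplicator $D_{j'}$ to a constant region so that its three key copies obey, and thereby interrupts the production line exactly once. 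One then checks that, one submodule later, the assignment re-synchronises into a key-region-like pattern for the rest of the line, and that the now obeying free key copies allow the barters, the manifolders and all but a bounded number of translator states to be filled trivially. Enhancing this fragment yields the desired region; the duplicators, barters and manifolders themselves are handled by the analogous local trick, which together with the previous paragraph covers all admissible pairs.

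The hard part will be the bookkeeping for exactly these engineered regions. The headmaster's diamonds, the five-cycles of the duplicators and the grid shape of the manifolders impose genuine cycle equations, so a naively chosen window or phase shift need not close up and each such equation has to be checked by hand; and after enhancement one must verify that no second, accidentally non-obeying occurrence of a shared event -- above all a key copy, which runs through the headmaster, every duplicator, every barter and every translator -- spoils a distant gadget. Carrying out these closure and compatibility checks for each of the roughly a dozen event types and for each target state of $k$ and of the key copies is where essentially all of the work lies, even though no single check is deep; this is the reason the detailed proof is deferred to Section~\ref{sec:Secondary_Proofs}.
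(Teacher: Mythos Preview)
Your high-level plan matches the paper's structure: an exhaustive case split by event family, with the key region $R_{key}$ covering a large share of the pairs (the paper indeed invokes it for $k$ and for the key copies) and bespoke regions handling the rest. In that sense the outline is correct.

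Two points deserve caution, though. First, you lean on Lemma~\ref{lemma:easyLiftedRegion} as a general enhancement tool, but that lemma is only stated for \emph{linear} $A_{m+1},\dots,A_n$. In the $2$-grade construction almost none of the gadgets are linear: $H$, the $D_j$, the $B_q$, and the $X_i$ all contain branchings or cycles. You flag this, but the consequence is stronger than you suggest: the enhancement principle buys very little here, since essentially every component has to be pinned down by hand anyway, and this is precisely what the paper does via its explicit Exit/Enter tables rather than via a single enhance step.

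Second, the ``phase shift'' device you propose as the universal fix for the wrong-side states is more fragile than it looks. The paper's actual regions for these cases are not instances of one uniform trick; they are ad-hoc and often touch several headmaster modules and duplicators at once. For instance, inhibiting $k$ at $h_{1,0}$ uses a region with ten exiting and nine entering events spread over $H$, $D_0$, $D_1$, $D_2$ and $T_0$; inhibiting $k_{3j+2}$ splits into eight subcases depending on where its second occurrence lands (translator position, barter position, parity of the consistency index). Simply setting $sig(z_{2j'})=sig(z_{2j'+1})=1$ in one submodule does not by itself force the duplicator pentagon to collapse consistently, because the $w_{2j'}$-edge closes a cycle $d_{j',4}\to d_{j',0}$ whose equation also involves $k_{3j'+1}$ and $k_{3j'+2}$; you cannot make all three of $k_{3j'}$, $k_{3j'+1}$, $k_{3j'+2}$ obey while keeping $v_{2j'}$, $v_{2j'+1}$ at the values the diamond in $H_{j'}$ now dictates. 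So the phase shift is a reasonable heuristic for some of the cases, but it will not close all the cycle equations uniformly, and the proof genuinely degenerates into the dozen-table case analysis the paper carries out in Section~\ref{sec:Secondary_Proofs}. Your final paragraph acknowledges this; I would just not advertise the phase shift as ``the main device'' when in practice each event type needs its own construction.
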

The proof of Lemma \ref{lem:Secondary_Proofs_for_the_Hardness_of_2grade_2-ESSP} is very technical which makes it tedious and not much edifying.
Therefore, we split it into multiple lemmas and move it to Section~\ref{sec:Secondary_Proofs}.
 
Finally, we want to join the TSs of $U^\varphi$ to obtain a combined TS $A(U^\varphi)$.
See Figure~\ref{fig:gadgets} to be convinced that the initial state of every gadget has at most one predecessor state and that we can always find a state with at most one successor state.
More precisely, for $H$ we select $t^H = h_{0,8}$, every duplicator $D_j$ designates $t^{D_j} = d_{j,0}$, the barters $B_q$ use $t^{B_q} = b_{q,1}$, variable manifolders $X_i$ have $t^{X_i} = x_{i,5}$, and the translator TSs apply $t^{T_{i,0}} = t_{i,0,5}$, $t^{T_{i,1}} = t_{i,1,4}$, and $t^{T_{i,2}} = t_{i,2,4}$.
Using these terminal states, the definition of a joining makes sure that $A(U^\varphi)$ does not exceed the state degree of two.
Moreover, as $U^\varphi$ is $2$-fold and $A(U^\varphi)$ just introduces additional unique connector events, $A(U^\varphi)$ is a $2$-grade $2$-fold TS.

To transfer the hardness of ESSP for $U^\varphi$ to the joining $A(U^\varphi)$, feasibility and the ESSP should again be the same problem for generated unions $U^\varphi$.
This would allow to use Lemma~\ref{lem:union_validity} in order to transfer feasibility, and by that also the ESSP, from $U^\varphi$ to $A(U^\varphi)$.
This time, however, we cannot use the linearity of the used gadget TSs.
However, the design of the union $U^\varphi$ of this section guarantees the following lemma:
\begin{lemma}
\label{lem:alwaysSSP}
If the union $U^\varphi$ has the ESSP then it has the SSP.
\end{lemma}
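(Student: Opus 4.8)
The plan is to separate the two relevant kinds of state pairs of $U^\varphi$ by two quite different mechanisms. Assume $U^\varphi$ has the ESSP and let $s,s'$ be distinct states lying in the same component TS $A_i$ of $U^\varphi$; by the definition of the SSP for unions these are the only pairs that need to be separated.

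First I would treat every pair $s,s'$ for which some event $e$ is enabled at exactly one of them, say $s\edge{e}$ but not $s'\edge{e}$. Since $A_i$ fails $s'\edge{e}$, the ESSP of $U^\varphi$ hands us a region $R$ of $U^\varphi$ whose restriction to $A_i$ inhibits $e$ at $s'$, and the trivial signature computation then gives $R(s)\neq R(s')$: if $sig_R(e)=-1$ the edge at $s$ forces $R(s)=1$ while inhibition forces $R(s')=0$, and symmetrically if $sig_R(e)=+1$. So every such pair is already separated by the ESSP hypothesis alone. Inspecting Figure~\ref{fig:gadgets} one checks that inside every component of $U^\varphi$ — the whole headmaster $H$ (also across submodules, since every action event is local to one submodule while the spine states $h_{j,0}$ and $h_{j,8}$ carry the pairwise distinct events $r_j$ and $a_j$), every duplicator $D_j$, every manifolder $X_i$, and each of the translator TSs $T_{i,0},T_{i,1},T_{i,2}$ — distinct states have distinct sets of enabled events. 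The one exception is, inside each barter $B_q$, the pair $b_{q,1},b_{q,3}$, both of which have no outgoing edge.

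It then remains to separate $b_{q,1}$ from $b_{q,3}$ for every $q$, which I would do by an explicit construction. Choosing $R\cap S(B_q)$ to be either $\{b_{q,0},b_{q,1}\}$ or $\{b_{q,2},b_{q,3}\}$ gives a region of $B_q$ that separates $b_{q,1}$ from $b_{q,3}$, in which $k_{q_1},k_{q_2}$ obey and $c_q$ is leaving, respectively entering. To extend $R$ to all of $U^\varphi$, observe that $c_q$ occurs in exactly one manifolder $X_i$; depending on which of its four positions $c_q$ occupies, the orientation and the one free parameter can be chosen so that the induced subregion of $X_i$ realises the prescribed signature of $c_q$, keeps the other three consistency events of $X_i$ obeying, and forces exactly one of the three representer events of $X_i$, say $X^\alpha_i$, to a non-zero signature while the other two obey. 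That event $X^\alpha_i$ occurs in exactly one of the TSs $T_{\alpha,0},T_{\alpha,1}$ of the translator $T_\alpha$, where a short region absorbs it with all its other events obeying; every untouched component of $U^\varphi$ is then assigned the empty subregion. Since the only non-obeying events in the (three) touched components are $c_q$ and $X^\alpha_i$, and each of these occurs precisely in two touched components with the same signature there, all these subregions are pairwise compatible and their union is a region of $U^\varphi$ separating $b_{q,1}$ from $b_{q,3}$.

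The main obstacle is purely the bookkeeping of the second step: verifying, for each of the four positions of $c_q$ in $X_i$ and the (at most) three slots $X^\alpha_i$ can occupy in clause $K_\alpha$, that the claimed subregions of $X_i$ and of the relevant translator TS really exist with all remaining events obeying, and checking that barters, manifolders and translators share no further events with the rest of $U^\varphi$ so that the empty-subregion completion is legitimate. I expect this to be routine once Figure~\ref{fig:gadgets} is in hand. Finally, note that satisfiability of $\varphi$ is used nowhere: the ESSP hypothesis enters only in the first step, to supply the inhibiting regions, while the barter pairs are handled unconditionally.
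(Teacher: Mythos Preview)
Your approach is correct and in fact tighter than the paper's own argument. The paper's proof of Lemma~\ref{lem:alwaysSSP} asserts that every component TS of $U^\varphi\setminus H$ has at most one state without successor and then combines this with event-uniqueness to invoke the ESSP. But the barter $B_q$ has \emph{two} sinks, $b_{q,1}$ and $b_{q,3}$, so that claim is false as stated; the paper overlooks exactly this pair. You correctly isolate it as the sole exception to the ``distinct enabled-event-sets'' criterion and supply an explicit separating region, propagating $c_q$ through the unique manifolder $X_i$ into a single translator TS with everything else obeying. That construction is sound: for each of the four positions of $c_q$ in $X_i$ one choice of the free column makes exactly one representer event non-zero, and whichever of the three slots that representer occupies in its translator admits a prefix or suffix subregion absorbing it with all remaining events obeying; the untouched components then take the empty subregion, and no further events are shared.

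Your first step is also cleaner than the paper's. The paper reasons via event-uniqueness inside each component and then falls back on specific regions from Section~\ref{sec:Secondary_Proofs_for_the_Hardness_of_2grade_2-ESSP} to handle the states $h_{j,1},h_{j,6},h_{j,7}$ of $H$, whose outgoing zero events are not unique in $H$. You instead use the weaker criterion that any two states with different enabled-event sets are separated by the ESSP, and you correctly observe that even $h_{j,1},h_{j,6},h_{j,7}$ have pairwise distinct such sets ($\{z_{2j},z_{2j+1}\}$, $\{z_{2j}\}$, $\{z_{2j+1}\}$), so no special regions are needed for $H$ at all. The net effect is that your route trades the paper's appeal to the auxiliary-region catalogue for a short explicit computation on the barters, and in doing so actually closes a gap the paper leaves open.
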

The proof of Lemma~\ref{lem:alwaysSSP} bases crucially on the fact that, besides of the zeros in $H$, all events of a given TS $A$ of $U^\varphi$ are unique in $A$.
Additionally, there is at most one state without successor in $A$.
Therefor, given two states $s,s'$ at least one of them, say $s$, is the source of an event $e$, that is, $s\edge{e}$.
As $e$ is inhibitable at $s'$ by a region $R$ of $U^\varphi$, that is, $sig_R(e)=-1$ and $R(s')=0$, as $s$ is a source of $e$, we have that $R$ separates $s$ and $s'$.
This argument doesn't work for the sources of the zeros in $H$.
By proving Lemma~\ref{lem:Secondary_Proofs_for_the_Hardness_of_2grade_2-ESSP} in Section~\ref{sec:Secondary_Proofs} we gain appropriate regions for these remaining states.
As we need to refer to these regions, we move the detailed proof of Lemma~\ref{lem:alwaysSSP} to Section~\ref{sec:Secondary_Proofs}, too.

Hence, if $U^\varphi$ has the ESSP then it is automatically feasible.
Reversely, if it is feasible, it has the ESSP by definition.
This makes both problem equivalent, again.
If $\varphi$ has a one-in-three model then $U^\varphi$ has the ESSP by Lemmas \ref{lem:Model_the_Key_Region_2-grade_2-fold_ESSP_Feasibility} and \ref{lem:Secondary_Proofs_for_the_Hardness_of_2grade_2-ESSP}, which, by the additional SSP from Lemma \ref{lem:alwaysSSP}, implies that $U^\varphi$ is feasible.
By Lemma \ref{lem:union_validity}, $A(U^\varphi)$ is feasible and, therefore, has the ESSP.

Reversely, let $A(U^\varphi)$ have the ESSP.
As $U^\varphi$ has the SSP, Lemma \ref{lem:union_validity} provides $A(U^\varphi)$ with the SSP making it feasible.
The use of Lemma~\ref{lem:union_validity} again makes $U^\varphi$ feasible and, thus, provides the union with the ESSP.
By Lemma \ref{lem:Key_region_then_Model_2-grade_2-fold_ESSP_Feasibility} we can then derive a one-in-three model for $\varphi$.

As our construction can easily be done in polynomial time we have shown Theorem \ref{thm:twograde_twofold_NPcompleteness}. 


\section{The Hardness of 2-Grade 2-SSP}
\label{sec:The_Hardness_of_2-grade_2-SSP}

Here we prove the hardness of $2$-grade $2$-SSP by a reduction using linear $3$-SSP, shown NP-complete in Section~\ref{sec:The_Hardness_of_Linear_3-SSP}, as a source:
\begin{theorem}
The SSP is NP-complete on $g$-grade $k$-fold transition systems for all $g \geq 2$ and all $k \geq 2$.
\end{theorem}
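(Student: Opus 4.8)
The plan is to reduce linear $3$-SSP to $2$-grade $2$-SSP, paralleling the reduction of linear $3$-ESSP to linear $3$-SSP from Section~\ref{sec:The_Hardness_of_Linear_3-SSP} but now forcing the output into the $2$-grade $2$-fold class. Given a linear $3$-fold TS $A = s_0 \edge{e_1} s_1 \edge{e_2} \dots \edge{e_t} s_t$, the obstacle to a trivial reduction is twofold: first, $A$ may contain events occurring three times, violating the $2$-fold bound; second, even though a linear TS is automatically $1$-grade, the \emph{copy} of $A$ that the separating gadget needs to reference would reintroduce high event manifoldness. The approach is to build, for every pair of states $s_i, s_j$ of $A$ that needs separating, a $2$-fold union $U_{i,j}$ of $2$-grade TSs whose two key states are separable if and only if $s_i$ and $s_j$ are separable in $A$, and then aggregate all these (after rectification by the renaming scheme of Lemma~\ref{lem:rectified_union}) into one union $U^A$, finally joining it into a $2$-grade $2$-fold TS $A'$ via the joining operation and Lemma~\ref{lem:union_validity}.

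First I would handle the event-manifoldness reduction. For each event $e$ of $A$ with multiplicity up to three, I install a small \emph{splitter} chain of $2$-grade TSs that, for any region separating the key states of $U_{i,j}$, synchronizes $e$ with fresh copies $e^{(1)}, e^{(2)}, e^{(3)}$ all carrying the same signature — this is exactly the duplicator trick already used twice in the paper (Sections~\ref{sec:The_Hardness_of_Linear_3-SSP} and~\ref{sec:The_Hardness_of_2-grade_2-ESSP_2-Feasibility}), so each of the three edges of $A$ labeled $e$ in the embedded copy $C$ can be relabeled by a distinct $e^{(\ell)}$, keeping manifoldness at two. The copy $C$ of $A$ is then itself $2$-grade $2$-fold because $A$ is linear (hence $1$-grade), and the only events shared with the rest of $U_{i,j}$ are the copies $e^{(\ell)}$ and possibly a constant number of helper events. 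I would then argue, exactly as in Lemma~\ref{lem:union_separability} and Lemma~\ref{lem:region_translation}, that (i) a region of $U_{i,j}$ separating the key states restricts to a region of $A$ separating $s_i$ and $s_j$, and conversely (ii) any region of $A$ separating $s_i$ and $s_j$ lifts to a region of $U_{i,j}$ separating the key states, and moreover that all \emph{other} state pairs inside each gadget TS of $U_{i,j}$ are separable unconditionally (by the same menu-of-small-regions argument used in the proof of Lemma~\ref{lem:ESSP_to_SSP}).

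Next I would assemble $U^A = U(\tilde U_{i,j} \mid s_i, s_j \in S,\ i < j)$, the rectified aggregate, and observe: $U^A$ has the SSP if and only if every pair $(s_i,s_j)$ of $A$ is separable, i.e.\ if and only if $A$ has the SSP. Since each $\tilde U_{i,j}$ is $2$-fold and $2$-grade and the rectification makes the state/event sets pairwise disjoint, $U^A$ is a $2$-fold union of $2$-grade TSs. Joining, $A' = A(U^A)$, introduces only fresh connector events of multiplicity one and, by choosing the transfer states to have at most one successor (as done in Section~\ref{sec:The_Hardness_of_2-grade_2-ESSP_2-Feasibility}), keeps the grade at two; hence $A'$ is a $2$-grade $2$-fold TS. Lemma~\ref{lem:union_validity} gives that $A'$ has the SSP iff $U^A$ has the SSP iff $A$ has the SSP. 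The reduction is clearly polynomial (one bounded-size gadget per state pair), so linear $3$-SSP $\leq_p$ $2$-grade $2$-SSP; combined with the NP-completeness of linear $3$-SSP from Theorem~\ref{thm:The_Hardness_of_SSP} and membership of SSP in NP from \cite{BBD1997}, this yields NP-completeness of $2$-grade $2$-SSP, and hence of $g$-grade $k$-fold SSP for all $g \geq 2$, $k \geq 2$ by the subclass monotonicity noted in Section~\ref{sec:Preliminaries}.

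The main obstacle I anticipate is the \emph{grade-2 constraint on the copy $C$ of $A$ together with its enhancements}: in the linear-$3$-SSP reduction, $C$ could be enhanced with the $s \edge{h_1} p \edge{h_2} s'$ detour freely, but here every such insertion must preserve that each state has at most two in- and two out-edges, and the splitter chains feeding the $e^{(\ell)}$ into $C$ must be wired so that the states of $C$ where relabeling happens do not accrue extra degree. Verifying that all the small auxiliary regions needed for unconditional separation inside the splitters and inside $C$ actually exist — and are simultaneously consistent on shared events — is the technical heart, and, as in the analogous earlier sections, I would expect to defer the bulk of that bookkeeping (the full catalogue of separating regions, in the style of the table in the proof of Lemma~\ref{lem:ESSP_to_SSP}) to Section~\ref{sec:Secondary_Proofs}.
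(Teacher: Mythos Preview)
Your proposal and the paper both reduce from linear $3$-SSP, but the paper's construction is dramatically simpler. Instead of building a separate union $U_{i,j}$ with key states for every state pair, the paper uses a \emph{single} modified copy $A^{\text{2fold}}$ of $A$ in which each $3$-fold event $e$ has its three occurrences relabeled $e^0, e^1, e^2$, together with one small duplicator $D_e$ per such event: a six-state $2$-grade gadget whose two accordance events $a^e_0, a^e_1$ force any region of the union to assign equal signatures to $e^0, e^1, e^2$. The entire union is $U^A = U(A^{\text{2fold}}, D_{e_1}, \ldots, D_{e_n})$, and the argument is direct: regions of $U^A$ restricted to $A^{\text{2fold}}$ correspond exactly to regions of $A$, so SSP transfers both ways without any per-pair encoding; the duplicator states themselves are separated either by accordance events or (for pairs $(d_{e,0},d_{e,1})$ etc.) by lifting a region of $A$ that separates the endpoints of an $e$-edge. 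This gives a reduction of size linear in $|A|$ rather than cubic.

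Your route can be made to work, but as written it imports machinery that does not fit. The key-state mechanism of Lemma~\ref{lem:union_separability} encodes an \emph{ESSP} instance (it forces $sig_R(e)=-1$ for a fixed anchor event $e$); for separating two arbitrary states $s_i,s_j$ there is no such anchor, so ``exactly as in Lemma~\ref{lem:union_separability}'' does not apply unless your key states are literally the images of $s_i,s_j$ in $C$ --- in which case the mapper, provider, and helper events are all unnecessary and your construction collapses to the paper's, replicated once per pair. Separately, your claim that all other state pairs inside $U_{i,j}$ are separable \emph{unconditionally} is not right: pairs inside the copy $C$ are separable only when the corresponding pair is separable in $A$, so that direction is conditional on $A$ having the SSP (which suffices for the iff, but is not the unconditional statement you wrote). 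The paper's single-copy approach avoids all of these issues.
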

\begin{proof}
For a linear $3$-fold TS $A=(S,E,\delta,s_0)$, we construct a union $U^A$ of $2$-grade $2$-fold TSs such that the SSP holds for $A$ if and only if it holds for $U^A$.
By $E_3$ we denote the subset of $E$ containing all events of $A$ that actually occur three times in $A$.
For each of these events $e \in E_3$, we construct a so-called $e$-duplicator $D_e$ which provides three copies $e^0, e^1, e^2$ of $e$ that, in the following construction, replace the occurrences of $e$ in $A$:

\begin{tikzpicture}
	\begin{scope}
		\node (X) at (-0.75,0) {$D_e=$};
		\node (2) at (0,0) {\nscale{$d_{e,2}$}};
		\node (0) at (0,1) {\nscale{$d_{e,0}$}};
		\node (4) at (0,-1) {\nscale{$d_{e,4}$}};	
		\node (3) at (2,0) {\nscale{$d_{e,3}$}};
		\node (1) at (2,1) {\nscale{$d_{e,1}$}};
		\node (5) at (2,-1) {\nscale{$d_{e,5}$}};
	\end{scope}
	\graph {
			(0) ->["\escale{$e^0$}"] (1);
			(0)->[swap, "\escale{$a^e_0$}"] (2);
			
			(2) ->["$\escale{$e^1$}$"](3);
			(2)->[swap, "\escale{$a^e_1$}"](4);
			
			(1) ->["\escale{$a^e_0$}"](3);
			(3) ->["\escale{$a^e_1$}"](5);
			(4) ->["\escale{$e^2$}"](5);
	};
\end{tikzpicture}

The idea of $D_e$ is that, for every region, the accordance events $a^e_0$, $a^e_1$ enforce the same signature for all three $e$-copies.
Furthermore, we construct the $2$-fold modification $A^{\text{2fold}}$ of $A$ as follows:
For every $e\in E_3$ occurring at the transitions $u_0\edge{e}v_0,u_1\edge{e}v_1$ and $u_2\edge{e}v_2$ in $A$, we replace these transitions by $u_0\edge{e^0}v_0,u_1\edge{e^1}v_1$ and $u_2\edge{e^2}v_2$.
All other transitions of $A^{\text{2fold}}$ remain unchanged.

If $E_3=\{e_1,\ldots,e_n\}$, our construction results in $U^A=U(A^{\text{2fold}},D_{e_1},\dots,D_{e_n})$.
It is easy to see, that a region $R$ of $A$ can basically be regarded as a region of $A^{\text{2fold}}$ where the copies of $3$-fold events get the same signature.
Reversely, a region $R'$ of $A^{\text{2fold}}$ where the copies of $3$-fold event $e$ have the same signature is effectively a region of $A$ where $e$ gets the signature from its copies.
It remains to argue that no region of $U^A$ assigns different signatures to copies of the same $3$-fold event.
This is assured by the accordance events, that, for every region of $U^A$, enforce the same signature to the copies of $3$-fold events.
At this point it is clear that every region of $U^A$ can be reduced to a region of $A$.
Hence, if $U^A$ has the SSP then, even more, $A$ does.

For the other direction, where $A$ has the SSP, we argue (1) that every region $R'$ of $A$ can be lifted to a region of $U^A$ such that all states of $A^{\text{2fold}}$ become separable and (2) that the states of a duplicator can be mutually separated.
Firstly, we can lift $R'$ simply by copying the signature of every $3$-fold event to its copies and by letting accordance events obey.
Secondly, all duplicator state pairs, excluding $(d_{e,0},d_{e,1})$, $(d_{e,2},d_{e,3}),(d_{e,4},d_{e,5})$, are obviously separable by the accordance events.
As $A$ has the SSP, for every $3$-fold event $e$ occurring at a transition $u_0\edge{e}v_0$ the states $u_0,v_0$ are separable.
Consequently, $(d_{e,0},d_{e,1})$, $(d_{e,2},d_{e,3}),(d_{e,4},d_{e,5})$ become separable, too. 

As $A^{\text{2fold}}$ and duplicators share only event copies and as different duplicators have no events in common we can easily join $U^A$ in a $2$-grade $2$-fold TS that has the SSP if and only if $A$ does.
This and the observation that our construction clearly can be done in polynomial time completes the proof.
\end{proof}


\section{The Tractability of Linear 2-SSP}
\label{sec:The_Tractability_of_linear_2-SSP}

As shown in Section \ref{sec:The_Hardness_of_Linear_3-SSP}, deciding if a linear 3-fold TS has the SSP is NP-complete.
In this section, we show that linear SSP becomes tractable if input is $2$-fold.
To this end, we provide an easy to check property that is equivalent to SSP for linear $2$-fold TSs and based on the following notion of \textit{exact 2-fold} subsequences:
Let $A=s_0\edge{e_1}\dots\edge{e_t} s_t$ be a linear transition system.
For $0\leq i< j\leq t$ the subsequence of $A$ starting at $i$ and ending at $j$ is defined as $A^i_j=s_i\edge{e_{i+1}}\dots\edge{e_j}s_j$.
We call $A^i_j$ an exact 2-fold subsequence if each event of $A^i_j$ occurs exactly twice in $A^i_j$, that is, if for each $i\leq u < j$ with $u\edge{e}$ there is exactly one $v\not= u$ having $i \leq v < j$ and $v\edge{e}$.

\begin{theorem}
\label{theorem:Tractability_of_linear_2-SSP}
A linear 2-fold TS $A$ has the SSP if and only if $A$ has no exact 2-fold subsequence.
\end{theorem}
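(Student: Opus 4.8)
The statement is an equivalence, so I would prove the two implications separately. For the direction ``$A$ has an exact 2-fold subsequence $\Rightarrow$ $A$ has no SSP'' I would argue by parity using signature aggregation. Let $A^i_j$ with $i<j$ be an exact 2-fold subsequence and let $R$ be any region of $A$. Signature aggregation gives $R(s_j)-R(s_i)=\sum_{k=i+1}^{j} sig_R(e_k)$, and since every event of $A^i_j$ occurs exactly twice in $A^i_j$, the right-hand side equals $2\sum_e sig_R(e)$, hence is even. As $R(s_i),R(s_j)\in\{0,1\}$, their difference lies in $\{-1,0,1\}$ and is therefore $0$, so $R(s_i)=R(s_j)$ for every region $R$. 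Because $A$ is linear we have $s_i\neq s_j$, so $s_i$ and $s_j$ are not separable and $A$ lacks the SSP.

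For the converse I would show that if no subsequence of $A$ is exact 2-fold then every pair $s_i,s_j$ with $i<j$ is separable, and I would do so by constructing a separating region explicitly, which simultaneously yields the quadratic-time algorithm announced after the theorem. Since the complement of a region is again a region, it suffices to build a region $R$ of $A$ with $R(s_i)=0$ and $R(s_j)=1$. I would encode such a region by the set $N\subseteq\{1,\dots,t\}$ of its non-obeying edges and look for $N$ as a union of \emph{event classes} (both occurrences of each event are in $N$, or both are outside $N$). Then the $0/1$ labeling $R(s_m)=\lvert N\cap\{1,\dots,m\}\rvert \bmod 2$ always takes values in $\{0,1\}$; it is a region of $A$ exactly when between the two occurrences of every repeated event contained in $N$ there is an odd number of $N$-edges (call this \emph{(C)}); and it separates $s_i$ from $s_j$ exactly when $\lvert N\cap(i,j]\rvert$ is odd (call this \emph{(S)}). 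So the whole task reduces to producing a union of event classes satisfying (C) and (S).

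The construction of $N$ is greedy and is the technical heart of the argument. Since $A^i_j$ is not exact 2-fold it contains an event occurring exactly once in it, and I would put that event's class into $N$ as a first move towards (S). Whenever (C) fails for some event $e\in N$ occurring at edges $p<q$, the sub-path strictly between $p$ and $q$ is nonempty (no event occurs at two consecutive edges, else a length-two exact 2-fold subsequence would exist) and is itself not exact 2-fold — otherwise adjoining the two occurrences of $e$ would make $A^{p-1}_q$ an exact 2-fold subsequence, contradicting the hypothesis — so it contains an event occurring exactly once there, whose class I would add to $N$. The main obstacle is to prove that this repair process (i) terminates, for which I would show the intervals to be repaired can be taken nested and strictly shrinking so that an appropriate measure decreases, and (ii) can be steered so that the global parity (S) still holds at the end, which needs careful bookkeeping of how each newly added class changes $\lvert N\cap(i,j]\rvert$, occasionally inserting classes in compensating pairs. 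Once (C) and (S) hold simultaneously, the associated alternating-signature region separates $s_i$ and $s_j$, establishing the SSP; since only $O(t)$ classes are added and each repair step is checkable in $O(t)$ time, this is exactly the promised quadratic procedure, and the theorem follows.
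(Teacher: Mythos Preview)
Your parity argument for the forward direction is correct and coincides with the paper's. For the converse, the framework you set up---encoding a region by a union $N$ of event classes, with (C) for region-validity and (S) for separation---is sound. The gap is the greedy repair. You correctly flag (S)-maintenance as an obstacle, but you do not close it: when you add a class to repair (C) inside some $(p,q)$, one of its occurrences may land in $(i,j]$ and flip the parity that (S) requires; ``inserting compensating pairs'' is only a slogan, since two further classes can re-break (C), and you give no argument that the process stabilises. (Termination for (C) alone is in fact easier than you fear: if the first class is 2-fold at $p_0<q_0$ and you add any 2-fold class with exactly one occurrence in $(p_0,q_0)$, the two intervals interlock so that each contains exactly one $N$-edge of the other, and (C) already holds for both after a single step. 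But this observation does not rescue (S).)

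The paper takes a shorter and different route: it shows that a separating region with at most \emph{two} non-obeying events always exists, via a direct case analysis rather than iterative repair. If a unique event lies between $s_i$ and $s_j$, that event alone suffices. Otherwise one picks an extremal straddling event $e_{\min}$ (leftmost with one occurrence to the left of $s_i$ and one between $s_i$ and $s_j$), or symmetrically $e_{\max}$ on the right, together with a single compensating event $c$, and verifies that $exit_R=\{e_{\min}\}$, $enter_R=\{c\}$ already defines a region separating $s_i$ from $s_j$; the absence of exact 2-fold subsequences is precisely what guarantees a suitable $c$ exists in each sub-case. In your language this is a one-shot construction of an $N$ of size at most two satisfying (C) and (S), with no iteration, and it immediately yields the announced algorithm.
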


\begin{proof}

\textit{Only-if:} 
Let $A=s_0\edge{e_1}\dots\edge{e_t} s_t$ be a linear 2-fold TS that has the SSP.
Assume $A^i_j=s_i\edge{e_{i+1}}\dots\edge{e_j}s_j$ is an exact 2-fold subsequence of $A$ where $\{e_{i_1},\ldots,e_{i_{q}}\}$ are the $q=\frac{j-i}{2}$ events occurring in $A^i_j$.
As $A$ has the SSP, the states $s_i,s_j$ are separable. 
Without lost of generality let $R$ be a separating region of $A$ with signature $sig_R$, such that $R(s_i)=1$ and $R(s_j) =0$.
Using signature aggregation, we get that $\sum_{k=i+1}^j sig_R (e_k)=R(s_j)-R(s_i)=-1$ contradicts to $2(sig_R(e_{i_1})+\dots +sig_R(e_{i_q}))\equiv\ \text{$0$ mod $2$}$.
Hence, $A^i_j$ cannot exist.\newline
\textit{If:} Let $A=s_0\edge{e_1}\dots\edge{e_t} s_t$ be a linear 2-fold TS that has no exact 2-fold subsequence and let $s_i,s_j$ be two arbitrary states of $A$ such that $i< j$.
We proof the states to be separable by an exhausted case analyses:
\begin{description}
	\item[Case 1]  There is an unique event $ e\in E$ and a state $u$ in $A$ such that $ i \leq u < j$ and $u\edge{e}$. 
	Then a separating region $R$ is defined by the signature $exit_R=\{e\}$.

	\item [Case 2] 
	Otherwise, as $A$ has no exact 2-fold subsequence, there has to be an event $e\in E$ and states $s_u,s_v$ with $s_u \edge{e}, s_v\edge{e}$ such that $u < i \leq v < j$ or $ i \leq u < j\leq v$.
	Then we have an event between $s_i,s_j$ whose second occurrence is the \glqq leftmost\grqq. 
	More exactly, we have an event $e_{min}$ and states $s_a,s_b$ with $s_a\edge{e_{min}},s_b\edge{e_{min}}$ such that $a < i\leq b < j$ and for all events $e$ and states $s_u,s_v$ with $s_u \edge{e}, s_v\edge{e}$ such that $u < i \leq v < j$ we have that $a\leq u$.
	If there is an unique event $c$ and a state $s_u$ with $s_u \edge{c}$ such that $a < u < i$ or if there is an event $c$ and  states $s_u,s_v$ with $s_u \edge{c}, s_v\edge{c}$ such that $a < u < i \leq b < j < v $ or $u < a < v < i \leq b < j$.
	Then it is easy to check that a separating region $R$ is defined by the signature given by $enter_R=\{c\}$ and $exit_R=\{e_{min}\}$.
	Otherwise, all events occurring between $s_a$ and $s_i$ occur twice between $s_a$ and $s_j$, that is, for all events $c$ and states $s_u,s_v$ with $s_u \edge{e}, s_v\edge{e}$ we have if $a < u < i$ then $a < v < j$.
	As there is no exact 2-fold subsequence in $A$ and no unique event between $s_i$ and $s_j$ there has to be an event between $s_i$ and $s_j$ whose second occurrence is on the right side of $s_j$. 
	Consequently, we have an event $e_{max}$ and states $s_{a'},s_{b'}$ with $s_{a'}\edge{e_{max}},s_{b'}\edge{e_{max}}$ such that $  i\leq a' < j < b'$ and for all events $e$ and states $s_u,s_v$ with $s_u \edge{e}, s_v\edge{e}$ such that $ i \leq u < j < v$ we have that $v\leq b'$.
	As there is no exact 2-fold subsequence in $A$ there has to be an unique event $c$ and a state $s_u$ with $s_u \edge{c}$ such that $j\leq u < b'$ or there has to be an event $c$ and states $s_u,s_v$ with $s_u \edge{c}, s_v\edge{c}$ such that $ u < a < i \leq a' < j \leq v < b' $ or $ i \leq a' < j \leq u < b' < v$.	
	It is easy to check that a separating region $R$ is defined by the signature given by $enter_R=\{c\}$ and $exit_R=\{e_{max}\}$.
	\item[Case 3] 
	Finally, we have to deal with the case, when neither an event between $s_i,s_j$ occur once nor has a second occurrence on the left side of $s_i$.
	We have no $e,s_u,s_v$ such that  $u < i \leq v < j$.
	As argued above, we must have $e,s_u,s_v$ such that $s_u \edge{e}, s_v\edge{e}$ and $ i \leq u < j \leq v$.
	Again, we have an event $e_{max}$ as defined in the previous case and analogously we can argue for the separability of $s_i$ and $s_j$.
\end{description}
This analysis is comprehensive and proves the lemma.
\end{proof}
\begin{figure}
\begin{algorithmic}[1]
\renewcommand{\algorithmicprint}{\textbf{break}}
\renewcommand{\algorithmicrequire}{\textbf{Input:}}
\renewcommand{\algorithmicensure}{\textbf{Output:}}
\REQUIRE  $2$-fold TS $A=s_0\edge{e_1}\dots \edge{e_n}s_n$, $i,j: 0\leq i < j \leq n$
\ENSURE pair $(exit_R,enter_R)$ of sets of events defining a region $R$ of $A$ that separates $s_i,s_j$
\STATE preprocess $I_A$;  
\STATE $a:=-1,b:=-1$;  
\FOR{$k=i$ to $j-1$}
    \IF{$I_A(k)==-1$}
        \RETURN $(\{e_{k+1}\},\emptyset)$
    \ENDIF
\ENDFOR
\FOR{$k=0$ to $i-1$}     
	\IF{$i \leq I_A(k)\leq j-1$}
		\STATE $a:=k;\ b:=I_A(k);$
		\PRINT
	\ENDIF
\ENDFOR
\IF{$a\not= -1$}
	\FOR{$k=a+1$ to $i-1$}
		\IF{$I_A(k)==-1$ \OR $I_A(k) < a$ \OR $I_A(k) \geq j$ }
			\RETURN $(\{e_{a+1}\},\{e_{k+1}\})$
		\ENDIF
	\ENDFOR
\ENDIF
\FOR{$k=n-1$ to $j$}     
	\IF{$i \leq I_A(k)\leq j-1$}
		\STATE $a:=I_A(k);\ b:=k;$
		\PRINT
	\ENDIF
\ENDFOR
\FOR{$k=j$ to $b-1$}
	\IF{$I_A(k)==-1$ \OR $I_A(k) < i$ \OR $I_A(k) > b$ }
			\RETURN $(\{e_{b+1}\},\{e_{k+1}\})$
		\ENDIF
\ENDFOR
\RETURN $(\emptyset,\emptyset)$
\end{algorithmic}
\caption{ 
Algorithm \textsc{separator} finding, for a linear $2$-fold TS $A$ having the SSP, a separating region for given states $s_i,s_j$.
}
\label{alg:separator}
\end{figure}

It is easy to see that this property can be checked by a straight forward algorithm in time $O(\vert S\vert ^3)$.
Moreover, the proof of Theorem~\ref{theorem:Tractability_of_linear_2-SSP} implicitly motivates the algorithm \textsc{separator} in Figure \ref{alg:separator} that, given a linear $2$-fold TS $A$ that has the SSP and two states $s_i,s_j$, returns a separating region $R$ for $s_i,s_j$ with at most two non-obeying events.
For $A=s_0\edge{e_1}\dots \edge{e_n}s_n$, this algorithm uses a function $I_A(k)$ that, given an index $k \in \{0,\dots,n-1\}$, returns the index of the second occurrence of event $e_{k+1}$ if it exists and $-1$ otherwise.
This function can be preprocessed as an array in at most $O(\vert S\vert \log \vert S\vert)$ time and queried in constant time.
If the input TS has a exact $2$-fold subsequence, \textsc{separator} fails at least for non-separable states and return $(\emptyset, \emptyset)$

The correctness and termination of Algorithm~\ref{alg:separator} follows from the proof of Theorem~\ref{theorem:Tractability_of_linear_2-SSP}.
For its time complexity, we observe, that the for-loops starting at the lines $3,8,15,21,27$ query $I_A$ at most $\vert S\vert$ times.
The costs of subsequent comparisons are constant.
Consequently, after preprocessing, \textsc{separator} returns a separating region for two input states $s_i,s_j$ in $O(\vert S\vert \log \vert S\vert)$ time.

To check if $A$ has the SSP, \textsc{separator} can be called for all state pairs.
As we have to investigate $\frac{\vert S\vert(\vert S\vert -1)}{2}$ input pairs this takes $O(\vert S\vert^3)$ time as the preprocessing is done only once.


\section{Conclusion}

With the present work on feasibility, ESSP, and SSP of ENS synthesis we attempt to demonstrate the surprisingly high difficulty of this kind of Petri net synthesis, which is convincingly illustrated by Figure \ref{fig:result_overview}.
While general intractability has been known before, we show that even simultaneously fixing two obvious input parameters, the restrictions of event occurrence and state degree, has no positive effect on the complexity of synthesis or even one of the subproblems SSP and ESSP.
In fact, as we can narrow down the hardness barrier of ENS synthesis close to trivial inputs, it actually becomes tough to think of other promising parameters.
In a narrow sense, bringing intractability even to linear TSs actually makes many obvious considerations of restricting the TS graph structure futile, too.
Consequently, our results rule out many straight forward approaches from \emph{fixed parameter tractability}.


\section{Auxiliary Proofs}
\label{sec:Secondary_Proofs}

In the following two subsections we proof Lemma~\ref{lem:Secondary_Proofs_for_The_Hardness_of_Linear_3Feasibility_and Linear_3ESSP} (Subsection~\ref{sec:Secondary_Proofs_for_The_Hardness_of_Linear_3Feasibility_and Linear_3ESSP}) and Lemma~\ref{lem:Secondary_Proofs_for_the_Hardness_of_2grade_2-ESSP} (Subsection~\ref{sec:Secondary_Proofs_for_the_Hardness_of_2grade_2-ESSP}).
In both cases we present a series of lemmata that investigate successively every type of event that occur in the appropriate reduction.
In particular, for each type of event we provide inhibiting regions by specifying the signature of the non-obeying events.
More exactly, these regions are presented in tables with columns 
\begin{itemize}
	\item \textit{States}: states are listed at which the investigated event is inhibited by the current region, 
	\item \textit{Exit}: all exiting events of the current region,
	\item \textit{Enter}: all entering events of the current region,
	\item \textit{Affected TS}: transition systems in which at least one of the events listed in \textit{Exit} ot \textit{Enter} occur.
\end{itemize}
It is easy to check, that, on the one hand, the non-obeying events define a region of the union $U^\varphi$ recursively build from the affected TSs and, on the other hand, each of the listed states are excluded by the current region.
By definition, this inhibits the current event at the appropriate states in $U^\varphi$.

Moreover, it is easy to see that if $A_1,\ldots,A_n$ are TSs with event sets $E_1, \ldots, E_n$ and $B_1, \ldots, B_m$ are TSs with event sets $E_1',\ldots,E_m'$ and 
$E_i \cap E_j' = \emptyset$ holds for all $i,j$ and $R=R^{1}\cup \ldots \cup R^{n}$ is a region of $U=U(A_{1},\ldots,A_{n})$  then $R$ is a region of $U(U,B_1,\ldots,B_m)$.
We obtain the permitted signature by simply defining \textit{additionally} $sig_R(e)=0$ for all events of $e\in E_1'\cup \ldots \cup E_m'$.

With respect to a selected row of a table, this aspect allows us to consider an event $e$ which is listed in the \textit{Exit}-cell as inhibited at each state of each transition system which is not mentioned in the \textit{Affected TS}-cell. 
The following lemma states another situation when we can easily lift a region $R$ of a union $U$ to a region $R'$ of a union $U(U,A_{m+1},\dots,A_{n})$:
\begin{lemma}
\label{lemma:easyLiftedRegion}
Let $U=U(A_1,\dots,A_m)$ be a union with region $R$ and let $A_{m+1},\dots,A_{n}$ be linear TSs such that for all $i \in \{m+1,\dots, n\}$ there is at most one edge $s_i \edge{e_i} s'_i$ in $A_i$ where $sig_R(e_i)$ is defined and $sig_R(e_i) \not= 0$.
Moreover, for all $i \in \{m+1, \dots, n\}$, let $P_i$ be the set of $s_i$ and all its predecessor states and let $P_i = \emptyset$ in case $s_i$ does not exist.
Then there is a region $R'= R \cup R^{m+1} \cup \dots \cup R^{n}$ of union $U(A_1,\dots,A_m,A_{m+1},\dots,A_{n})$ with $sig_{R'} = sig_{R}$ such that, for all $i \in \{m+1, \dots, n\}$, the region is extended by $R^i=S_i \setminus P_i$ if $sig(e_i) = 1$, where $S_i$ is the state set of $A_i$, and otherwise, the region is extended by $R^i = P_i$.
\end{lemma}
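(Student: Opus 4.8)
The plan is to verify directly that the claimed set $R' = R \cup R^{m+1} \cup \dots \cup R^{n}$, with the signature $sig_{R'}$ that extends $sig_R$ by $sig_{R'}(e) = 0$ for every event $e$ occurring only in some $A_i$ with $i > m$, is a region of the combined union. By the definition of a region of a union (Section~\ref{sec:UnionsTSContainers}), it suffices to check that each restriction $R' \cap S_\ell$ is a region of the corresponding component TS with a signature compatible with $sig_{R'}$, and that these component signatures agree on shared events. Since the component TSs $A_{m+1},\dots,A_n$ may share events among themselves and with $A_1,\dots,A_m$, the compatibility of signatures is exactly the point where the hypothesis ``at most one edge $s_i \edge{e_i} s'_i$ in $A_i$ has $sig_R(e_i)$ defined and nonzero'' does its work: every shared event either already lives in $U$ (so its signature is dictated by $R$) or is forced to obey in $R'$, hence no conflict can arise.

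First I would treat the components $A_1,\dots,A_m$: for these, $R' \cap S_\ell = R \cap S_\ell$ and the signature is unchanged, so they remain regions by the assumption that $R$ is a region of $U$. Next I would fix $i \in \{m+1,\dots,n\}$ and check that $R^i$ is a region of the linear TS $A_i = s^i_0 \edge{g_1} s^i_1 \edge{g_2} \dots \edge{g_t} s^i_t$ carrying the signature that assigns $sig_R(e_i)$ to $e_i$ and $0$ to every other event of $A_i$. Here one walks along the linear chain: before reaching the state $s_i$ (the head of the special edge) every state is a predecessor of $s_i$, hence lies in $P_i$; after $s_i$ no state is a predecessor of $s_i$, hence none lies in $P_i$. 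Consequently the only index at which the membership indicator of $R^i$ changes along the chain is precisely the edge $s_i \edge{e_i} s'_i$, and it changes by exactly $sig_R(e_i)$: it drops by one (if $sig_R(e_i) = -1$ and $R^i = P_i$) or rises by one (if $sig_R(e_i) = 1$ and $R^i = S_i \setminus P_i$). All other edges $s^{k-1}_i \edge{g_k} s^k_i$ have both endpoints on the same side of the cut, so the indicator is constant across them, matching the signature value $0$. If no such special edge exists ($P_i = \emptyset$ or $e_i$ undefined), then $R^i = \emptyset$ is a region with the all-zero signature. This establishes that each $R^i$ is a region of $A_i$ with the prescribed signature.

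Finally I would argue signature agreement on shared events. Take an event $e$ shared by two components of the big union. If both components are among $A_1,\dots,A_m$, agreement holds because $R$ is a region of $U$. If $e$ occurs in some $A_i$ with $i > m$, then by hypothesis $sig_R(e)$ is either undefined or zero unless $e = e_i$; in the reduction at hand the events $e_i$ are the ones that already belong to $U$ (so their signature comes from $R$ and is consistently used), while every other event of $A_i$ is assigned $0$ in $R'$ and hence agrees trivially with whatever value the other component assigns it — which, by the same reasoning applied to that other component, is also $0$ (or comes from $R$ and was already $0$). Thus $sig_{R'}$ is well defined and $R'$ is a region of $U(A_1,\dots,A_m,A_{m+1},\dots,A_n)$ with $sig_{R'} = sig_R$ on $E(U)$, as claimed. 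I expect the only mildly delicate point to be the bookkeeping of which shared events can carry a nonzero signature; once one observes that the hypothesis confines nonzero signatures in each new component to a single edge whose event is already accounted for, the verification is routine.
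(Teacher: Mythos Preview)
Your proposal is correct and is precisely the routine verification the paper has in mind. In fact the paper does not give a separate proof of this lemma at all: where the mechanism is first introduced (inside the proof of Lemma~\ref{lem:ESSP_to_SSP}) it is described as following ``immediately from the definition of regions'', and in Section~\ref{sec:Secondary_Proofs} the lemma is merely restated for reference. Your walk along each linear $A_i$, observing that the membership indicator of $R^i$ changes exactly once (at the unique special edge, by the claimed amount) and is constant across every other edge, together with the check that shared events cannot receive conflicting signatures, is exactly the direct unpacking of the definition that the paper leaves to the reader.
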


\subsection{Proof for Lemma~\ref{lem:Secondary_Proofs_for_The_Hardness_of_Linear_3Feasibility_and Linear_3ESSP}} 
\label{sec:Secondary_Proofs_for_The_Hardness_of_Linear_3Feasibility_and Linear_3ESSP}

Before we start, we explain how we use indices of refreshers and duplicators in this section.
In particular, for a pair $(D_j, F_j)$ with $j\in \{0,\dots,6m-1\}$ we use the equivalent representation of the index in terms of $j = 6i+l$ where $l\in \{0,\dots,5\}$.
That relates $(D_j, F_j)$ better with the translators of clause $K_i$ that make use of events provided by $(D_j, F_j)$, namely a key event copy $k_{3(6i+l)+2}$.

\begin{lemma}
\label{lemma:KeyInhibitable}
The key event $k$ is inhibitable in $U^\varphi$.
\end{lemma}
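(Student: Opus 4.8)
The plan is to exhibit, in the table format used throughout this appendix, a small fixed number of regions of $U^\varphi$ whose \emph{States} columns together cover every state at which $k$ must be inhibited. First I would record that $k$ labels exactly the three edges $m_0\edge{k}m_1$, $m_3\edge{k}m_4$, $m_7\edge{k}m_8$ of the master $M$ and occurs in no other gadget; hence the statement unpacks to: $k$ is inhibitable at every state $s$ with $\neg(s\edge{k})$ other than $m_6$, i.e.\ at $m_1,m_2,m_4,m_5,m_8$ and at every state of every refresher $F_j$, every duplicator $D_j$, and every translator TS $T_{i,\ell}$. To inhibit $k$ at such a state $s$ it suffices, up to taking complements of regions, to produce a region $R$ with $sig_R(k)=-1$ and $s\notin R$.

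The key point — and what separates this from the $m_6$ case treated in Lemma~\ref{lemma:KeyRegionTruthAssignement} — is that here we are not forced to keep $m_6$ outside $R$. Consequently the rigidity established in the proof of Lemma~\ref{lemma:KeyRegionB}, which used $sig_R(k)=-1$ \emph{together with} $R(m_6)=0$ to pin down $R\cap S(B)=R^B$ and thereby ignite the whole assembly line of key copies, no longer applies: by putting $m_6$ into $R$ and choosing the internal master events accordingly (e.g.\ letting the zero $z_0$ enter instead of obey, which then frees $o_0$ to obey), the propagation of the key copies is stopped right at the master, so that $k_0,k_1,k_2$ obey, $F_0$ can be taken entirely outside $R$, and only a mild, localized defect remains in $D_0$. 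With the cascade broken, every $F_j,D_j$ with $j\ge 1$ and every translator can be taken to obey on all its events and is then entirely inside or entirely outside $R$; choosing all of them outside excludes all their states at once. I would therefore list two or three such ``degenerate'' regions with $sig_R(k)=-1$, each whose non-obeying events are $k$ and at most two further events located in $M$ and in the first pair $F_0,D_0$ (the only gadget sharing events with $M$); by Lemma~\ref{lemma:easyLiftedRegion} and the lifting observation from the preamble each such assignment indeed defines a region of the full union $U^\varphi$. One then checks that these regions leave no master state but $m_6$ uncovered, cover every state of $F_j,D_j$ with $j\ge 1$ and every state of every $T_{i,\ell}$ trivially, and cover the states of the perturbed gadgets $F_0,D_0$ by playing the regions off against one another — a state forced into one region is left outside another.

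The obstacle I expect is exactly this last bookkeeping step. The key-copy cascade is self-reinforcing, so breaking it at the master still perturbs $F_0$ and $D_0$ through the shared events $o_0,z_0$; a single region therefore cannot exclude all of their states, and one must verify that a handful of regions — each breaking the cascade in a slightly different way, or placing the unavoidable ``defect'' at a different position inside $D_0$ (left half versus right half, using $z_0$, $z_1$, or $z_2$ as the entering event, possibly at the cost of also perturbing $T_0$ harmlessly via $k_2$) — jointly exclude every state of $F_0$ and $D_0$. The residual work, namely checking that each listed Exit/Enter assignment really is the signature of a region (i.e.\ satisfies $R(s')=sig_R(e)+R(s)$ on all affected edges) and that the resulting table is exhaustive, is routine.
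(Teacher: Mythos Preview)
Your plan is correct and is essentially the paper's approach: the paper too exhibits three regions with $sig_R(k)=-1$ whose non-obeying events lie entirely in $M$, $F_0$, $D_0$ (namely the Exit/Enter pairs $(\{k,k_0\},\{o_0,h,h_0\})$, $(\{k,h_0\},\{z_0,z_1\})$, $(\{k,k_0\},\{z_0\})$), so that every other gadget is unaffected and hence covered automatically, and then checks that these three regions jointly exclude the states of $D_0$. The one cosmetic difference is that for the master states $m_1,m_2,m_4,m_5,m_8$ the paper takes the shortcut of citing ``the key region'' rather than reading those states off the table; this is legitimate because Lemma~\ref{lem:Secondary_Proofs_for_The_Hardness_of_Linear_3Feasibility_and Linear_3ESSP} is only ever invoked in the \emph{if} direction where a model (and hence the key region) exists, whereas your unconditional handling via the constructed regions themselves matches the lemma's stated hypothesis more cleanly --- and in fact the paper's three table regions already cover those master states anyway.
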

\begin{proof}
The key event is already inhibited in the master by the key region. 
It remains to show, that $k$ is inhibitable in all the other TSs.

\begin{longtable}{ p{2cm}    p{2cm}   p{2cm}     l  }
\emph{States} & \text{Exit} & \text{Enter} & \emph{Affected TS}  \\ \hline
 $d_{0,1},d_{0,2}$	& $k,k_0$	& $o_0,h,h_0$	& $M,F_0,D_0$\\ \hline
$d_{0,0},d_{0,3}$	& $k,h_0$	& $z_0,z_1$	& $M,D_0$\\ \hline
$ d_{0,4},d_{0,14}$	& $k,k_0$	& $z_0$	& $M,D_0$
\end{longtable}
\end{proof}

\begin{lemma}
\label{lemma:Helpers}
The helper events are inhibitable in $U^\varphi$.
\end{lemma}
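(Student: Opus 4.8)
The helper events of $U^\varphi$ are the event $h$ of the master $M$, which occurs only at $m_4$, and, for every $j \in \{0,\dots,6m-1\}$, the event $h_j$ of the duplicator $D_j$, which occurs at $d_{j,2}$ and $d_{j,5}$. For each such event $e$ and each state $s$ with $\neg(s\edge{e})$ the plan is to exhibit a region of $U^\varphi$ inhibiting $e$ at $s$, presented — exactly as in Lemma~\ref{lemma:KeyInhibitable} and throughout this section — by a table with columns \emph{States}, \emph{Exit}, \emph{Enter}, \emph{Affected TS}: every row fixes a region of the sub-union built from the affected transition systems by prescribing which events exit and which enter, all remaining events obeying, and lists the states it handles. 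By the remark preceding Lemma~\ref{lemma:easyLiftedRegion}, such a region extends to all of $U^\varphi$ by assigning signature $0$ to every unmentioned event, so it then suffices to check that each listed state is excluded from the region when $e$ exits, respectively contained in it when $e$ enters, which makes $e$ inhibited there.

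Since $h$ is unique in $M$, a single region handles every state outside $\{m_0,\dots,m_4\}$: take $R=\{m_0,\dots,m_4\}$, in which every event of $M$ obeys except $h$, which exits, and which requires nothing on the other gadgets because $h$ occurs nowhere else. The four states $m_0,m_1,m_2,m_3$ need a second region in which $h$ enters; the smallest such region inside $M$ already forces a positive signature for $o_0$, which drags in the refresher $F_0$ and, via the chain of implications of Lemma~\ref{lemma:KeyRegionB}, the rest of $B$, so this row's affected transition systems are essentially all of $B$ and its consistency is the computation already carried out in that lemma. For $h_j$ the key region $R^B$ itself carries most of the load, because $sig_{R^B}(h_j)=1$, so $R^B$ inhibits $h_j$ at every state it contains — in particular $d_{j,0},d_{j,3},d_{j,6},d_{j,7},d_{j,10},d_{j,13}$ together with $R^M$, all $R^{F_i}$ and all $R^{D_i}$ — and, extended by a translator region as in Lemma~\ref{lemma:KeyRegionT}, also at the corresponding states of the $T_i$'s. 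The states left over, namely the complementary states of the gadgets and the interior states $d_{j,1},d_{j,4},d_{j,8},d_{j,9},d_{j,11},d_{j,12},d_{j,14}$ of $D_j$, are covered by a few further rows; because $h_j$ occurs twice in $D_j$, none of these regions can be localized to a prefix of $D_j$, so each is a genuine global region of $B$ whose consistency has to be re-verified along the master--refresher--duplicator assembly line, while on $T$ it stays obeying since $B$ and $T$ share only the key copies, whose signatures are already pinned down.

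\textbf{Main obstacle.} The bulk of the work is the bookkeeping for the events $h_j$: the double occurrence of $h_j$ in $D_j$ rules out any locally supported separating region, so inhibiting it at every state of $U^\varphi$ forces us to produce several distinct global regions, and for each of them a small variant of the consistency argument of Lemma~\ref{lemma:KeyRegionB} must be run again through the $M$--$F_j$--$D_j$ chain; the handful of states of $M$ lying outside $R^B$ produce the analogous, smaller nuisance for $h$. Once the appropriate regions are tabulated, checking that the prescribed signatures define regions and exclude the claimed states is entirely routine.
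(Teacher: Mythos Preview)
Your plan has a genuine gap that breaks the argument. Lemma~\ref{lem:Secondary_Proofs_for_The_Hardness_of_Linear_3Feasibility_and Linear_3ESSP}, of which this lemma is a piece, must hold \emph{unconditionally}: every non-key ESSP instance is solvable whether or not $\varphi$ has a one-in-three model. Your approach for $h_j$, however, leans on the key region $R^B$ (``the key region $R^B$ itself carries most of the load'') and then says it is ``extended by a translator region as in Lemma~\ref{lemma:KeyRegionT}''. But Lemma~\ref{lemma:KeyRegionT} only \emph{characterises} regions of $T$ in which all key copies exit; it does not guarantee one exists. In fact such an extension to $T$ exists precisely when $\varphi$ has a one-in-three model (this is Lemma~\ref{lemma:KeyRegionTruthAssignement}). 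So your inhibiting regions for $h_j$ would be available only in the satisfiable case, which is exactly what you are not allowed to assume here. The same problem lurks in your treatment of $h$ at $m_0,\dots,m_3$: once you start pushing signatures through the $M$--$F_0$--$D_0$ chain ``via the chain of implications of Lemma~\ref{lemma:KeyRegionB}'', you risk forcing key copies to exit and hence needing a model to extend over $T$.

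The paper avoids this entirely by choosing regions in which the relevant key copies \emph{obey}, so nothing propagates into $T$ and the region is supported on only two or three gadgets. For instance, to inhibit $h$ at $m_0,m_3$ it takes $exit=\{h,o_0\}$, $enter=\{k,o_1\}$; in $F_0$ this makes $k_0,k_1,k_2$ all obey, so only $M$ and $F_0$ are affected. Likewise for $h_x$ the paper uses rows affecting only $D_{x-1},D_x$ or $F_x,D_x$. Your claim that ``the double occurrence of $h_j$ in $D_j$ rules out any locally supported separating region'' and that ``each is a genuine global region of $B$'' is therefore false, and it is precisely this locality that makes the regions exist unconditionally. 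You need to redo the construction with local regions that keep the key copies obeying.
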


\begin{proof}
We start with regarding the helper event $h$ once occurring in master $M$. \newline

\begin{longtable}{ p{2cm}  p{2cm}p{2cm}p{2cm}  }

\emph{States} & \emph{\text{Exit}} & \text{Enter} & \emph{Affected TS}  \\ \hline
$m_5,\dots,m_8$ 		& $h$		&			&	$M$		\\ \hline
$m_0,m_3$			& $h,o_0$		& $k,o_1$		&	$M, F_0$	\\ \hline
$m_1,m_2$			& $h,k_0$		& $o_0,h_0$	&	$M,F_0,D_0$
\end{longtable}

We proceed with inhibiting $h_0$ which indirectly affects the master segment $M$.

\begin{longtable}{p{4cm}  p{2cm}p{2cm}p{2cm}   }

\emph{States} & \emph{\text{Exit}} & \text{Enter} & \emph{Affected TS}  \\ \hline
 $d_{0,3}, d_{0,4}, d_{0,6},\dots, d_{0,8}$ &  $h_0$ 		& $z_1$ 		& $D_0 $ \\ \hline
  $d_{0,0}, d_{0,1}$					& $h_0, h$ 	&  $ z_0,z_1$ 	& $D_0, M$  \\ \hline
$d_{0,9},\dots, d_{0,14}$				& $h_0$		& $k_0$		& $D_0, F_0$
\end{longtable}

Next, for $i\in \{0,\dots,m-1\}$ and $l\in \{0,\dots,5\}$, assuming that one of $i$ or $l$ is different from zero, we regard the helper event $h_{6i+l}$. 
For readability, we set $x=6i+l$. 

\begin{longtable}{ p{4cm}  p{2cm}p{2cm}p{2cm}   }

\emph{States} & \text{Exit} & \text{Enter} & \emph{Affected TS}  \\ \hline
 $d_{x,0},d_{x,1}, d_{x,3},d_{x,4}, \newline d_{x,6},\dots, d_{x,8}$ 	& $h_x, z_{2x-1}$ 	& $z_{2x}, z_{2x+1}$	& $D_{x-1}, D_x$ \\ \hline
 $d_{x,9},\dots, d_{x,14}$ 									& $h_x$			& $k_{3x}$		& $F_x,D_x$	
\end{longtable}

\end{proof}

\begin{lemma}
\label{lemma:KeyCopies}
The key-copy events are inhibitable in $U^\varphi$.
\end{lemma}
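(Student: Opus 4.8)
The plan is to stay within the table-based scheme already used for Lemma~\ref{lemma:KeyInhibitable} and Lemma~\ref{lemma:Helpers}: for every key copy $k_i$ with $i\in\{0,\dots,18m-1\}$ and every state $s$ satisfying $\neg(s\edge{k_i})$, I would exhibit a region of $U^\varphi$ inhibiting $k_i$ at $s$, and package the finitely many generic situations into tables whose rows each list the exiting and the entering events of one region, the states it inhibits $k_i$ at, and the affected transition systems. By the remark preceding Lemma~\ref{lemma:easyLiftedRegion}, a row that lists $k_i$ among its exiting (or entering) events already handles every state of every transition system absent from its \emph{Affected TS} column, so only the states of the few gadgets where $k_i$ actually occurs remain to be covered explicitly. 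I stress that all regions used below touch only gadgets of $B$ and, for $k_{3j+2}$, the single translator TS carrying its free application along its unique $k_{3j+2}$-edge; hence their existence is unconditional, as required, since $(k,m_6)$ is the only pair whose inhibitability depends on $\varphi$.

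First I would fix the occurrence pattern. Writing $j=6i+\ell$ with $\ell\in\{0,\dots,5\}$, the copy $k_{3j}$ occurs once in the refresher $F_j$ (at $f_{j,1}$) and twice in the duplicator $D_j$ (at $d_{j,0}$ and $d_{j,3}$); the copy $k_{3j+1}$ occurs once in $F_j$ (at $f_{j,3}$) and twice in $D_j$ (at $d_{j,7}$ and $d_{j,10}$); the copy $k_{3j+2}$ occurs once in $F_j$ (at $f_{j,5}$), once in $D_j$ (at $d_{j,13}$), and once in the translator TS of clause $K_i$ selected by $\ell$, namely $T_{i,0}$ if $\ell\in\{0,3\}$, $T_{i,1}$ if $\ell\in\{1,4\}$ and $T_{i,2}$ if $\ell\in\{2,5\}$, at its opening key-copy edge for $\ell\in\{0,1,2\}$ and its closing one for $\ell\in\{3,4,5\}$. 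Thus the only states the \emph{Affected TS} shortcut leaves to be dealt with are those of $F_j$, $D_j$, a bounded number of adjacent pipeline gadgets and, for $k_{3j+2}$, one translator TS.

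For $k_{3j+2}$, which occurs just once per gadget, the single-event region with $exit_R=\{k_{3j+2}\}$ and all other events obeying is a region of $U^\varphi$ (constant on every linear component away from the one $k_{3j+2}$-edge) and inhibits $k_{3j+2}$ everywhere except on the blocks $f_{j,0},\dots,f_{j,5}$, $d_{j,0},\dots,d_{j,13}$ and the head of the relevant translator TS; a few companion rows pairing $k_{3j+2}$ with an opposite of $F_j$, with the helper $h_j$ of $D_j$, and with the other key copy of that translator TS then mop up these blocks. The delicate family is $k_{3j}$, and symmetrically $k_{3j+1}$: since it occurs \emph{twice} inside $D_j$, no region can make it the only non-obeying event, because the region value cannot drop by one twice over $d_{j,0}\to d_{j,4}$ without being lifted in between. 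The only events separating the two occurrences of $k_{3j}$ are $z_{2j}$ and $h_j$, so the region must be recharged by one of them. Choosing $h_j$ to enter (and everything else in $F_j,D_j$ obeying) gives $R\cap\{f_{j,0},\dots,f_{j,7}\}=\{f_{j,0},f_{j,1}\}$ and $R\cap\{d_{j,0},\dots,d_{j,14}\}=\{d_{j,0},d_{j,3},d_{j,6},\dots,d_{j,14}\}$, a row that disposes of $f_{j,2},\dots,f_{j,7}$ and $d_{j,1},d_{j,2},d_{j,4},d_{j,5}$. The remaining states $f_{j,0}$ and $d_{j,6},\dots,d_{j,14}$ force the other recharge, $z_{2j}$, which is shared with the previous duplicator (with the master for $j=0$); following its effect backwards produces a region touching $D_{j-1}$ and $F_{j-1}$ (respectively $M$ and $F_0$) in which $k_{3j}$ still exits while $f_{j,0}$ and the tail $d_{j,6},\dots,d_{j,14}$ drop out of the region, giving the missing rows. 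The sources $f_{j,1}$, $d_{j,0}$, $d_{j,3}$ need no region; swapping the roles of $k_{3j}$ and $k_{3j+1}$ settles the latter.

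The main obstacle is the bookkeeping, of the same flavour as in Lemma~\ref{lemma:KeyInhibitable} and Lemma~\ref{lemma:Helpers} but heavier here because of the recharging: for each of the constantly many index configurations one must (i) verify that the events declared non-obeying really extend to a region of the union recursively built from the affected transition systems, i.e.\ that their signatures are simultaneously consistent across \emph{all} occurrences --- the sensitive points being the two occurrences of $k_{3j}$, the doubled $h_j$, $z_{2j+1}$ and $z_{2j+2}$ inside $D_j$, and the zeros and opposites linking consecutive pipeline gadgets --- and (ii) check that the union of the \emph{States} columns exhausts $S(U^\varphi)$ minus the at most three source states of $k_i$. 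Because all refresher--duplicator pairs are isomorphic up to renaming and the translator attachment reduces to the six cases $\ell\in\{0,\dots,5\}$, a fixed bounded set of generic tables suffices, and checking each row is a short signature-aggregation computation as in Lemma~\ref{lemma:KeyRegionB}.
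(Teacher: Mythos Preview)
Your plan is essentially the paper's: split into the three index classes $k_{3j}$, $k_{3j+1}$, $k_{3j+2}$, and for $k_{3j}$ start with the row $exit=\{k_{3j}\}$, $enter=\{h_j\}$, which is exactly the paper's first row and covers $f_{j,2},\dots,f_{j,7}$ and $d_{j,1},d_{j,2},d_{j,4},d_{j,5}$. Two concrete slips, however. First, the $z_{2j}$-recharge region cannot reach $F_{j-1}$ (and for $j=0$ not $F_0$ either): the zero $z_{2j}$ occurs only in $D_{j-1}$ (twice) and $D_j$ (once), never in any refresher, and following the cascade backwards through $D_{j-1}$ forces at most $z_{2j-1}$, which again lives only in $D_{j-1}$. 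More importantly, that region cannot inhibit $k_{3j}$ at $f_{j,0}$: in your description $k_{3j}$ is the only non-obeying event present in $F_j$, so $f_{j,0}$ and $f_{j,1}$ are tied, and since $f_{j,1}$ is a source of the exiting $k_{3j}$ you get $f_{j,0}\in R$. The paper handles $d_{j,6},\dots,d_{j,14}$ with the zero row ($exit=\{k_{3j},z_{2j-1}\}$, $enter=\{z_{2j}\}$) but uses a \emph{separate} third row with $o_{2j}$ entering (together with $h_j$) to force $f_{j,0}\notin R$.

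Second, ``swapping the roles of $k_{3j}$ and $k_{3j+1}$'' is not a literal symmetry: the two $D_j$-occurrences of $k_{3j+1}$ sit at positions $7$ and $10$, separated by $z_{2j+1}$ and $o_{2j+2}$ rather than by $z_{2j}$ and $h_j$, so the recharge event is different. The paper's rows for $k_{3j+1}$ use $z_{2j+1}$ entering (optionally with $o_{2j+1}$) and, for the stubborn states $d_{j,5},d_{j,6},d_{j,9}$, pair $k_{3j+1}$ with $k_{3(j+1)}$ and the next-stage $z_{2j+2},o_{2j+2}$. Your $k_{3j+2}$ discussion is fine in spirit; the paper simply splits the $\ell\in\{0,1,2\}$, $\ell=3$, and $\ell\in\{4,5\}$ cases because the translator position (opening vs.\ closing key copy) changes which companion event ($o_{2x+3}$, $X_c$, or $p_i$) is available to mop up the translator tail.
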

\begin{proof}
For $i\in \{0,\dots,m-1\}$ and $l\in \{0,\dots,5\} $ we distinguish three cases indicated by (the different types of indices of): $k_{3(6i+l)},k_{3(6i+l)+1},k_{3(6i+l)+2}$.
For abbreviation we set $x=6i+l$.
Firstly, we present regions of $U^\varphi$ which altogether inhibit $k_{3x}$ in $U^\varphi$. We observe, that $k_{3x}$ occurs in $F_x$ at state $f_{x,1}$ and in $D_x$ at the states $d_{x,0}$ and $d_{x,3}$.

\begin{longtable}{  p{3cm}  p{2cm}p{2cm}p{2cm}   }
\emph{States} & \text{Exit} & \text{Enter} & \emph{Affected TS}  \\ \hline
$d_{x,1},d_{x,2},d_{x,4},d_{x,5}$ $f_{x,2},\dots,f_{x,7}$	& $k_{3x}$		& $h_x$		& $D_x,F_x$ \\ \hline
$d_{x,6},\dots,d_{x,14}$							& $k_{3x},z_{2x-1}$	& $z_{2x}$	& $D_{x-1},D_x,F_x$ \\ \hline	
$f_{x,0}$ 										& $k_{3x}$		& $o_{2x},h_x$	& $D_{x-1}, F_x,D_x$
\end{longtable}

Secondly, we present regions of $U^\varphi$ which altogether inhibit $k_{3x+1}$ in $U^\varphi$ which occurs in  in $F_x$ at state $f_{x,3}$ and in $D_x$ at the states $d_{x,7}$ and $d_{x,10}$.  
For $i=0=l$, we have no $D_{x-1}$ (second row). 
In this case the opposite event $o_{2x+1}=o_1$ occurs in $M$ exactly once. 

\begin{longtable}{p{3cm}  p{2cm}p{2cm}p{3cm} }
\emph{States} & \text{Exit} & \text{Enter} & \emph{Affected TS}  \\ \hline
$f_{x,4},\dots,f_{x,7}$, $d_{x,0},\dots,d_{x,4},d_{x,8}$, $d_{x,11},\dots,d_{x,14}$	& $k_{3x+1}$ 			& $z_{2x+1}$				& $F_x,D_x$ \\ \hline
$f_{x,0},\dots,f_{x,2}$												& $k_{3x+1}$			& $z_{2x+1}, o_{2x+1}$		& $F_x,D_x, D_{x-1}$, $(M)$ \\ \hline
$d_{x,5},d_{x,6},d_{x,9}$												& $k_{3x+1},k_{3(x+1)}$	& $z_{2x+2},o_{2x+2}$		& $F_x,D_x,F_{x+1},D_{x+1}$
\end{longtable}
Thirdly, we present regions of $U^\varphi$ which for $l \in \{0,1,2\} $ altogether inhibit $k_{3(6i+l)+2}$ in $U^\varphi$ which occurs in $F_x$ at state $f_{x,5}$, in $D_x$ at the state $d_{x,13}$ and in $T_{i,0},T_{i,1},T_{i,2}$ at the states $t_{i,0,0},t_{i,1,0}$ or $t_{i,2,0}$, respectively.

For abbreviation, we set $x=6i+l$.

\begin{longtable}{ p{3cm}  p{2cm}p{2cm}p{4cm}  }
\emph{States} & \text{Exit} & \text{Enter} & \emph{Affected TS}  \\ \hline
$d_{x,0},\dots,d_{0,12},d_{0,14}$ $t_{i,l,1}\dots,t_{i, l, 4},(t_{i, l, 5})$	& $k_{3x+2},k_{3x+4}$	& $o_{2x+3},z_{2x+3}$	& $F_x,D_x, T_{i,l}$ $F_{x+1},D_{x+1}$\\ \hline
$f_{x,0},f_{x,2},\dots,f_{x,4}$, $f_{x,6},f_{x,7}$			& $k_{3x},k_{3x+2}$		& $o_{2x}, h_{x}$		& $F_x,D_x, T_{i,l}$, $(D_{x-1},M)$	 \\ \hline
$f_{x,1}$											& $k_{3x+2}  $			& $o_{2x+1}$			& $F_x,D_x,T_{i,l}$, $(D_{x-1},M)$
\end{longtable}

Now, we present regions of $U^\varphi$ inhibiting $k_{3(6i+3)+2}$ in $U^\varphi$ which occurs in $F_x$ at state $f_{x,5}$, in $D_x$ at the state $d_{x,13}$ and in $T_{i,0}$ at the state $t_{i,0,4}$ where $x=6i+3$.

The translator $T_i$ represents the clause $K_i=\{X_{a},X_{b},X_{c}\}$, where $a,b,c\in \{0,\dots,m-1\}$.
As $\varphi$ is cubic and monotone, there are exact two further clauses $K_j$ and $K_l$ such that $X_{c}\in K_j\cap K_l$.
Consequently, $X_{c}$ occurs exactly once in $T_{j}$ as well as in $T_{l}$.
By the uniqueness of $X_c$ in $T_j$ respectively $T_l$ and Lemma \ref{lemma:easyLiftedRegion} we need no further investigations at which state of $t_{j,0,1},t_{j,0,3},t_{j,1,1}$ or $t_{l,0,1},t_{l,0,3},t_{l,1,1}$ the event $X_c$ occurs a second respectively a third time in $U^\varphi$, for proving that the following table provides reliable regions of $U^\varphi$.
Inhibiting regions for $k_{3x+2}$ are given by

\begin{longtable}{ p{3cm}  p{2cm}p{2cm}p{4cm}  }
\emph{States} & \text{Exit} & \text{Enter} & \emph{Affected TS}  \\ \hline
$d_{x,0},\dots,d_{0,12},d_{0,14}$  $t_{i,0,0}\dots,t_{i, 0, 5}$, $t_{i, 0, 7}$	& $k_{3x+2},k_{3x+5}$	& $o_{2x+3}$, $C_i$	& $F_x,D_x, T_{i,0}, T_{i,1}$ $F_{x+1},D_{x+1}$, $(T_{j},T_{l})$\\ \hline
$f_{x,0},f_{x,2},\dots,f_{x,4}$, $f_{x,6},f_{x,7}$							& $k_{3x},k_{3x+2}$		& $o_{2x}, h_{x}$		& $F_x,D_x, T_{i,0}$, $D_{x-1}$	 \\ \hline
$f_{x,1}$														& $k_{3x+2}  $			& $o_{2x+1}$			& $F_x,D_x,T_{i,0}$, $D_{x-1}$
\end{longtable}

Finally, we represent regions of $U^\varphi$ inhibiting $k_{3(6i+l)+2}, \ l\in \{4,5\}$, in $U^\varphi$ which occurs in $F_x$ at state $f_{x,5}$, in $D_x$ at the state $d_{x,13}$ and in $T_{i,1},T_{i,2}$ at the states $t_{i,1,3}$ or $t_{i,2,3}$, respectively.
We abbreviate $x=6i+l$ and $y=l-3,l\in\{4,5\}$.
Inhibiting regions for $k_{3x+2}$ are given by

\begin{longtable}{ p{3cm}  p{2cm}p{2cm}p{4cm}  }
\emph{States} & \text{Exit} & \text{Enter} & \emph{Affected TS}  \\ \hline
$d_{x,0},\dots,d_{0,12},d_{0,14}$  $t_{i,y,0}\dots,t_{i, y, 2}$, $t_{i, y, 4}$	& $k_{3x+2},k_{3x+5}$	& $o_{2x+3}$, $p_i$	& $F_x,D_x, T_{i,1},T_{i,2},(T_{i+1,0})$,  $F_{x+1},D_{x+1}$\\ \hline
$f_{x,0},f_{x,2},\dots,f_{x,4}$, $f_{x,6},f_{x,7}$							& $k_{3x},k_{3x+2}$		& $o_{2x}, h_{x}$		& $F_x,D_x, T_{i,y}$, $D_{x-1}$	 \\ \hline
$f_{x,1}$														& $k_{3x+2}  $			& $o_{2x+1}$			& $F_x,D_x,T_{i,y}$, $D_{x-1}$
\end{longtable}
\end{proof}

\begin{lemma}
\label{lemma:Zeros}
The zero events are inhibitable in $U^\varphi$.
\end{lemma}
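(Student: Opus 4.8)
The plan is to follow the template already set by Lemmas~\ref{lemma:Helpers} and~\ref{lemma:KeyCopies}: for each zero event, exhibit a small family of regions of $U^\varphi$, specified only by their non-obeying events, that together inhibit the event at every state that lacks an outgoing occurrence of it. Recall from Figure~\ref{fig:Linear3-foldESSP_B} that the zero events are $z_0$ (occurring in $M$ at $m_1,m_5$ and in every $D_j$ with $j\equiv 0$ appropriately) and, more generally, the events $z_{2j},z_{2j+1},z_{2j+2}$ appearing in each duplicator $D_j$; after the indexing convention of Section~\ref{sec:Secondary_Proofs_for_The_Hardness_of_Linear_3Feasibility_and Linear_3ESSP} the relevant index is $x=6i+l$. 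Since each $z$-event occurs in at most two consecutive duplicators $D_{x-1},D_x$ (and $z_0$ additionally in $M$), the affected subunion is always tiny, so the region-compatibility bookkeeping is the same routine check as before: on the events $E_i\cap E_j$ shared by the listed component TSs the prescribed signature agrees, and on all other components the region is extended by obeying signatures (this is exactly the lifting principle spelled out before Lemma~\ref{lemma:easyLiftedRegion}, together with Lemma~\ref{lemma:easyLiftedRegion} itself for the linear parts where a single non-obeying edge determines the extension).

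Concretely, I would first handle $z_0$. It occurs in $M$ at $m_1$ and $m_5$; to inhibit it at $m_0,m_3$ use a region with $exit_R=\{z_0\}$ restricted to the segment $m_2\edge{o_0}m_3\edge{k}m_4$ region boundary, i.e.\ one where $o_0$ enters; to inhibit it at $m_2,m_6,m_7,m_8$ and inside $D_0$ use a region with $z_0$ exiting together with a compensating $k_0$ or $h_0$ entering, affecting only $M,F_0,D_0$. Then, for the generic zero event, I split by which of the three roles $z_{2x},z_{2x+1},z_{2x+2}$ the event plays in its home duplicator $D_x$ and whether it is the incoming pair $z_{2(x-1)+\{0,1\}}$ of $D_x$ or not. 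In each case the prescribed region uses $z$ as an exiting event paired with one neighboring exiting/entering key copy $k_{3x}$ or $k_{3(x\pm1)}$ or a helper $h_x,h_{x\pm1}$, so that the affected TSs are among $D_{x-1},D_x,D_{x+1},F_x$ (and $M$ only when $x=0$). Writing these out gives two or three short tables per zero-type, in complete analogy with the $k_{3x}$, $k_{3x+1}$ tables of Lemma~\ref{lemma:KeyCopies}.

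The main obstacle — which is really just case-management rather than mathematical difficulty — is the boundary case $x=0$: the event $z_1$ (and $z_0$) lives simultaneously in the master $M$ and in $D_0$, and there is no $D_{-1}$, so the ``incoming opposite/zero pair'' story degenerates and one must instead borrow $M$'s structure, exactly as in the $k_{3x+1}$ analysis where $o_1$ occurs once in $M$. I would therefore treat $z_0,z_1$ as a separate preliminary block before the uniform treatment of $z_{2x},z_{2x+1},z_{2x+2}$ for $x\ge 1$, and within that block make sure every state of $M$ and of $D_0$ at which the event does not occur is covered by some row. Once the tables are assembled, correctness is the same mechanical verification as announced at the start of Section~\ref{sec:Secondary_Proofs}: each row defines a region of the subunion built from the Affected TSs (signatures agree on shared events, everything else obeys), and each listed state is excluded by that region, hence the event is inhibited there; collecting the rows over all zero events and all $x$ covers every required (event, state) pair, which proves the lemma.
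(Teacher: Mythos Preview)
Your overall plan is exactly the paper's: for each zero event, exhibit a short list of regions specified by their non-obeying events, verify compatibility on the affected components, and lift to $U^\varphi$ via the obey-everywhere-else mechanism (Lemma~\ref{lemma:easyLiftedRegion}). The case split you propose --- the boundary event $z_0$ treated separately, then the even-indexed and odd-indexed zeros uniformly for $x\ge 1$ --- also matches the paper's organisation.

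Where your plan goes wrong is in the concrete gadget bookkeeping, and these errors would derail the tables you intend to write. First, $z_1$ does \emph{not} occur in the master $M$; only $z_0$ does (it appears at $m_1$ and $m_5$, and once in $D_0$ at $d_{0,1}$). The event $z_1$ lives entirely inside $D_0$, so your ``$z_1$ lives simultaneously in $M$ and $D_0$'' boundary analysis is based on a misreading of Figure~\ref{fig:Linear3-foldESSP_B}. Second, your suggested region for $z_0$ with $o_0$ entering cannot be extended to $F_0$: there $o_0$ occurs at both $f_{0,0}\edge{o_0}f_{0,1}$ and $f_{0,4}\edge{o_0}f_{0,5}$, and with everything else obeying on $F_0$ the segment $f_{0,1},\dots,f_{0,4}$ would have constant membership, contradicting $sig(o_0)=1$ at both ends. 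The paper instead uses $enter_R=\{k\}$ for the first $z_0$-region (since $k$ occurs only in $M$, extension is trivial) and $enter_R=\{h,k_0\}$ for the second; your proposed ``$k_0$ or $h_0$ entering'' alone also fails in $M$ because you need $h$ to enter to reconcile $m_4\notin R$ with $m_5\in R$.

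So the plan is sound but the execution sketched would not go through as written; you should recheck exactly which events appear in which gadgets before committing to specific signatures.
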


\begin{proof}
First, we consider the special case of $z_0$ which occurs in the master segment $M$ at $m_1$ and $m_5$ and in the duplicator $D_0$ at $d_{0,1}$.
Appropriate regions are defined by:

\begin{longtable}{ p{3cm}  p{2cm}p{2cm}p{3cm}  }
\emph{States} & \text{Exit} & \text{Enter} & \emph{Affected TS}  \\ \hline
$m_0,m_2,m_3,q,m_7$ $d_{0,2},\dots,d_{0,14}$	& $z_0$	& $k$	&	$M,D_0$ \\ \hline
$m_4,m_8,d_{0,0}$								& $z_0$	& $h,k_0$	&	$M,D_0,F_0$
\end{longtable}

Secondly, we consider for $i\in \{1,\dots,m\}$ and $l\in \{0,\dots,5\}$ (not both of them zero) the event $z_{2(6i+l)}$ occurring in the duplicators $D_{x-1}$ and $D_x$ at the states $d_{x-1,6}$, $d_{x-1,11}$ and $d_{x,1}$, respectively, where $x=6i+l$. 
The following table represent regions which altogether inhibit $z_{2x}$ in $U^\varphi$.

\begin{longtable}{ p{3cm}  p{2.7cm}p{2.7cm}p{3cm}   }
\emph{States} & \text{Exit} & \text{Enter} & \emph{Affected TS}  \\ \hline
$d_{x-1,0},\dots, d_{x-1,4}$, $d_{x-1,7},d_{x-1,8}$, $d_{x-1,12},\dots, d_{x-1,14}$, $d_{x,2},\dots,d_{x,14}$	& $z_{2x}$					& $z_{2x-1}$				& $D_{x-1},D_x$ \\ \hline
$d_{x-1,5},d_{x-1,9}$																& $z_{2x}, k_{3(x-1)}, k_{3x+1}$	& $h_{x-1},o_{2x},z_{2x+1}$	& $D_{x-1},F_{x-1}, D_x, F_x$ \\ \hline
$d_{x-1,10},d_{x,0}$																	& $z_{2x},o_{2x}$				& $k_{3x-2},k_{3x}$			& $D_{x-1},F_{x-1}, D_{x}, F_{x}$
\end{longtable}

Thirdly, we present regions which in sum inhibit $z_{2(6i+l)+1}$ in $U^\varphi$ where  $i\in \{0,\dots,m-1\}$ and $l\in \{0,\dots,5\}$.
The event $z_{2(6i+l)+1}$ occurs in the duplicator $D_{6i+l}$ at the states $d_{6i+l,4}$ and $d_{6i+l,8}$.
For abbreviation, we set $x=6i+l$.
In case of $i=l=0$, no duplicator $D_{x-1}$ and no zero $z_{2x-1}$ is defined. 
In this case, the event $z_{2_x}=z_0$ occurs in the master $M$, helper $h$ plays the same role as $z_{2x-1}$ for the other cases and we refer to this special case by setting $M$ and $h$ in brackets.

\begin{longtable}{ p{3cm}  p{2cm}p{3.2cm}p{3cm}  }
\emph{States} & \text{Exit} & \text{Enter} & \emph{Affected TS}  \\ \hline
$d_{x,0}, d_{x,2},d_{x,3}$, $d_{x,5},d_{x,6}$, $d_{x,9},\dots,d_{x,11}$									& $z_{2x+1},z_{2x}$				& $z_{2x-1}, k_{3x}, z_{2x+2}, (h)$	& $D_{x-1}, D_{x}, D_{x+1}, F_x$, $ (M)$\\ \hline
$d_{x,1},d_{x,12},\dots,d_{x,14}$, $t_{i,0,0},t_{i,0,1},t_{i,0,3}$	 	& $z_{2x+1}$					& $h_x$				& $D_x$\\ \hline
 $d_{x,7}$																				& $z_{2x+1}$					& $k_{3x+1}$					& $D_x,F_x$
\end{longtable}
\end{proof}

\begin{lemma}
\label{lemma:OpponentsInhibitable}
The opposite events are inhibitable in $U^\varphi$.
\end{lemma}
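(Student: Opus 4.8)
The plan is to follow the accounting scheme already used in Lemmas~\ref{lemma:KeyInhibitable}--\ref{lemma:Zeros}: for each opposite event I would exhibit a short list of regions of $U^\varphi$, presented as a table with columns \emph{States}, \emph{Exit}, \emph{Enter}, \emph{Affected TS}, so that the union of the \emph{States} entries covers every state at which that event is absent. First fix the occurrence pattern. For $1 \leq j \leq 6m-1$ the opposite $o_{2j}$ occurs exactly twice inside the refresher $F_j$, at $f_{j,0}$ and $f_{j,4}$, and once inside the preceding duplicator $D_{j-1}$, at $d_{j-1,9}$; symmetrically $o_{2j+1}$ occurs at $f_{j,2},f_{j,6}$ in $F_j$ and at $d_{j-1,12}$ in $D_{j-1}$; the opposites $o_0,o_1$ occur once in the master $M$ and twice in $F_0$; and the two topmost opposites $o_{12m},o_{12m+1}$ occur only once, inside $D_{6m-1}$. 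This last, unique case is immediate from Lemma~\ref{lemma:easyLiftedRegion}, which yields a region with a single non-obeying event that inhibits the opposite wherever it is absent, so only the first three cases need work. Since no opposite occurs in any translator, the translators never appear in the \emph{Affected TS} column and are automatically handled by the non-shared-events argument recalled before Lemma~\ref{lemma:easyLiftedRegion}.

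For the generic opposite, say $o_{2j}$ with $1 \leq j \leq 6m-1$, I would build an exiting region, $sig_R(o_{2j}) = -1$. Inside $F_j$ this forces $f_{j,0},f_{j,4} \in R$ and $f_{j,1},f_{j,5} \notin R$, and by signature aggregation along $f_{j,1}\edge{k_{3j}}f_{j,2}\edge{o_{2j+1}}f_{j,3}\edge{k_{3j+1}}f_{j,4}$ it only remains to satisfy $sig_R(k_{3j})+sig_R(o_{2j+1})+sig_R(k_{3j+1})=1$; choosing, for example, $k_{3j},o_{2j+1},k_{3j+2}$ obeying and $k_{3j+1}$ entering gives a region of $F_j$ whose only interior states are $f_{j,0},f_{j,4}$, hence inhibiting $o_{2j}$ at all six remaining states of $F_j$ in one go. This region then has to be propagated along the entering $k_{3j+1}$ into the accompanying duplicator $D_j$ (where a suitable obeying or entering choice for the internal zero $z_{2j+1}$ keeps the extension inside $D_j$), and along the single occurrence of $o_{2j}$ in $D_{j-1}$, where the induced exclusion together with a matching choice for the neighbouring zeros $z_{2j-1},z_{2j}$ and the key copies of $D_{j-1}$ leaves only $d_{j-1,9}$ inside. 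One or two further rows, each flipping a single neighbouring zero or key-copy event, mop up whichever states of $F_j$, $D_{j-1}$, or $D_j$ the first region still leaves inside. The case $o_{2j+1}$ is symmetric, with $f_{j,2},f_{j,6}$ in place of $f_{j,0},f_{j,4}$ and with $D_{j-1}$ entered at $d_{j-1,12}$; the boundary cases $o_0,o_1$ are identical, with $M$ playing the role of $D_{j-1}$, and since each of $o_0,o_1$ occurs at only one state of $M$ the master contributes essentially no constraint.

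Once these tables are written down, the remaining work is purely mechanical: check that the displayed non-obeying events admit a signature on the sub-union spanned by the listed affected TSs, extended by zero everywhere else, and that the exits and enters exclude exactly the listed states. This verification is also what I expect to be the main obstacle, for a structural reason: an opposite is used twice within one refresher, and the key copy it is forced to activate, here $k_{3j+1}$, in turn occurs twice within the accompanying duplicator, so the induced signatures have to be traced through two gadgets rather than one, and the complementary regions must be chosen carefully so that no state of $F_j$, $D_{j-1}$, $D_j$ (or of $M$ when $j=0$) escapes coverage. Apart from this careful chaining there is nothing deep in the argument, and the construction is clearly polynomial.
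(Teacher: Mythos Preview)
Your outline follows the same table-based strategy as the paper, and your identification of the occurrence pattern of the opposites is correct. However, the proposal remains a sketch, and the one concrete construction you give contains an error that undermines the subsequent hand-waving.

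Having chosen $k_{3j+1}$ as the entering event to balance the exiting $o_{2j}$ in $F_j$, you must propagate into $D_j$, where $k_{3j+1}$ occurs at $d_{j,7}$ and $d_{j,10}$. Your claim that ``a suitable obeying or entering choice for the internal zero $z_{2j+1}$ keeps the extension inside $D_j$'' is false: since $d_{j,8}\in R$ is forced by $k_{3j+1}$ entering at $d_{j,7}$, the edge $d_{j,8}\edge{z_{2j+1}}d_{j,9}$ makes $z_{2j+1}$ entering impossible, and if $z_{2j+1}$ obeys then $d_{j,9}\in R$, whence the edge $d_{j,9}\edge{o_{2j+2}}d_{j,10}$ forces $o_{2j+2}$ to exit, cascading into $F_{j+1}$. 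Only $z_{2j+1}$ \emph{exiting} avoids this, the one option you did not list. Even with that fix, your first region still places ten states of $D_j$ and nine states of $D_{j-1}$ inside $R$; ``one or two further rows, each flipping a single neighbouring zero or key-copy event'' is not enough to cover them.

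The paper's first region for $o_{2x+2}$ instead takes the companion opposite $o_{2x+3}$ as the entering event. Because both opposites live only in $D_x$ and $F_{x+1}$ and all key copies then obey, nothing propagates into $D_{x+1}$ at all, so the whole of $D_{x+1}$ can be taken outside $R$ for free. The two remaining rows are not single-event flips either: they recruit the helper events $h_x,h_{x+1}$, and the third row even reaches two steps ahead into $D_{x+2},F_{x+2}$ via $h_{x+2}$, in order to exclude the stubborn block $d_{x,0},\dots,d_{x,5}$. Your remark that the master ``contributes essentially no constraint'' for $o_0,o_1$ is likewise too quick: $M$ has nine states, eight of which require inhibition, and the paper spends three separate regions on each of $o_0$ and $o_1$; for $o_1$ some of these regions use the free key copy $k_2$ and therefore \emph{do} touch the translator $T_{0,0}$, contrary to your blanket claim that translators never appear.
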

\begin{proof}
Firstly, we investigate the opposite event $o_0$ occurring in the master $M$ and refresher $F_0$ at the states $m_2$ and $f_{0,0},f_{0,4}$.

\begin{longtable}{p{3cm}  p{2cm}p{3.2cm}p{3cm}}
\emph{States} & \text{Exit} & \text{Enter} & \emph{Affected TS}  \\ \hline
$m_3,\dots,m_8$, $f_{0,1},\dots, f_{0,3}$, $f_{0,5},\dots,f_{0,7}$, $d_{0,0},\dots, d_{0,7}$	& $o_0,o_2,h_1$	& $k_1,k_3$	& $M,D_0,F_0,D_1,F_1$	\\ \hline
$m_0,m_1$, 															& $o_0,z_1$		& $z_0,k_1$	& $M, F_0, D_0, T_{0,0}$ \\ \hline
$d_{0,8},\dots,d_{0,14}$													& $o_0, h_0$		& $k_0$		& $M, F_0, D_0$
\end{longtable}

Secondly, we investigate the opposite event $o_1$ occurring in the master $M$ and refresher $F_0$ at the states $q$ and $f_{0,2},f_{0,6}$.

\begin{longtable}{ p{3.5cm}  p{2cm}p{2cm}p{3cm} }
\emph{States} & \text{Exit} & \text{Enter} & \emph{Affected TS}  \\ \hline
$m_0,m_1,m_5,m_7,m_8$, $f_{0,3},\dots,f_{0,5},f_{0,7}$	&	$o_1,h,o_2,h_1$	& $z_0,k_2,k_3$	& $M,D_0,F_0,D_1,F_1$, $T_{i,0}$ \\ \hline
$m_2,\dots,m_4$, $f_{0,0},f_{0,1}$					&	$o_1,h_0$		& $h,k_0,k_2$		& $M,F_0,D_0,T_{i,0}$ \\ \hline
$d_{0,0},\dots,d_{0,14}$							&	$o_1$			& $o_0$			& $M, F_0$
\end{longtable}

Thirdly, we present inhibiting regions for the event $o_{2x+2}$ occurring in $D_x$ and $F_{x+1}$ at the states $d_{x,9}$, $f_{x+1,0}$ and $f_{x+1,4}$, where $x=6i+l$ for $i\in \{0,\dots,m\}$ and $l\in \{0,\dots,5\}$.

\begin{longtable}{ p{3.3cm}  p{2.7cm}p{2.7cm}p{2.8cm}   }
\emph{States} & \text{Exit} & \text{Enter} & \emph{Affected TS}  \\ \hline
$d_{x,10},\dots,d_{x,12}$, $f_{x+1,1},f_{x+1,2}, f_{x+1,5}$, $f_{x+1,6}$		&	$o_{2x+2}$					& $o_{2x+3}$			& $D_x,F_{x+1}$ \\ \hline
$d_{x,3}, d_{x,4},d_{x,6},\dots,d_{x,8}$, $d_{x,13}, d_{x,14},f_{x+1,7}$		& 	$h_x,o_{2x+2},h_{x+1}$			& $z_{2x+1},k_{3(x+1)}$	& $D_x,F_{x+1},D_{x+1}$\\ \hline
$d_{x,0},\dots, d_{x,2},d_{x,5}$, $f_{x+1,3}$							&	$o_{2x+2}, o_{2x+4}, h_{x+2}$	& $k_{3x+1}, k_{3x+4},k_{3x+6}$			& $D_x,F_x,D_{x+1},F_{x+1}$, $D_{x+2}, F_{x+2}$
\end{longtable}

Finally, we investigate the state $o_{2x+3}$ occurring in $D_x$ and $F_{x+1}$ at the states $d_{x,12}$ and $f_{x+1,2}$ and $f_{x,6}$, where $x=6i+l$ for $i\in \{0,\dots,m\}$ and $l\in \{0,\dots,5\}$.
We observe, that $z_{2x+1}$ and $z_{2x+3}$ never occur simultaneously in an $T_{i,0}$.

\begin{longtable}{ p{3.8cm}  p{2.7cm}p{2.4cm}p{2.5cm}  }
\emph{States} & \text{Exit} & \text{Enter} & \emph{Affected TS}  \\ \hline
$d_{x,0},\dots,d_{x,9},d_{x,13}, d_{x,14}$, $f_{x+1,0},f_{x+1,3},f_{x+1,4}$, $f_{x+1,7}$		& $o_{2x+3}$			& $o_{2x+2}$					& $D_x,F_{x+1}$\\ \hline
$d_{x,10},d_{x,11}$														& $z_{2x+1},z_{2x+3},o_{2x+3}$	& $z_{2x+2},k_{3x+4}$	& $D_x, F_{x+1},D_{x+1}$\\ \hline
$f_{x+1,1},f_{x+1,5}$														& $o_{2x+3},h_{x+1}$	& $k_{3x+3},k_{3x+5}$			& $D_x, D_{x+1}, F_{x+1}$
\end{longtable}
\end{proof}

\begin{lemma}
\label{lemma:B-Copies}
If $i\in \{0,\dots,m-1\}$ and $K_i=\{X_a,X_b,X_c\}$ is the clause that is represented by the translator $T_i$ then the event $\tilde{X}_b$ occurring at $t_{i,0,2}$ and $t_{i,2,1}$ is inhibitable in $U^\varphi$.
\end{lemma}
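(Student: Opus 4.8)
The plan is to treat $\tilde{X}_b$ exactly like the events in Lemmas~\ref{lemma:KeyInhibitable}--\ref{lemma:OpponentsInhibitable}: exhibit a short list of regions of $U^\varphi$, each given by its non-obeying events, so that their excluded states jointly cover every state at which $\tilde{X}_b$ must be inhibited. The starting observation is the only structural fact we need: by the definition of the translators (Figure~\ref{fig:Linear3-foldESSP_T}), $\tilde{X}_b$ occurs in exactly two TSs of $U^\varphi$, namely in $T_{i,0}$ on the edge $t_{i,0,2}\to t_{i,0,3}$ and in $T_{i,2}$ on the edge $t_{i,2,1}\to t_{i,2,2}$. Hence $\tilde{X}_b$ has a source only at $t_{i,0,2}$ and $t_{i,2,1}$, and we have to inhibit it at every other state of $U^\varphi$.

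First I would use the region $R_1$ with $exit_{R_1}=\{\tilde{X}_b\}$ and all other events obeying. Because $\tilde{X}_b$ occurs in no component besides $T_{i,0}$ and $T_{i,2}$, the restriction of $R_1$ to every other gadget can be chosen empty, while in $T_{i,0}$ the obeying of $k_{18i+2},X_a,X_c,k_{18i+11}$ together with non-negativity force $R_1\cap S(T_{i,0})=\{t_{i,0,0},t_{i,0,1},t_{i,0,2}\}$, and likewise $R_1\cap S(T_{i,2})=\{t_{i,2,0},t_{i,2,1}\}$. Thus $R_1$ excludes every state of $U^\varphi$ except those five, so it already inhibits $\tilde{X}_b$ everywhere except possibly at $t_{i,0,0}$, $t_{i,0,1}$ and $t_{i,2,0}$ (the other two excluded states $t_{i,0,2}$, $t_{i,2,1}$ being sources of $\tilde{X}_b$ and therefore irrelevant).

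These three leftover states are then settled by a single further region $R_2$ with $enter_{R_2}=\{\tilde{X}_b\}$ and $exit_{R_2}=\{X_a,k_{18i+8}\}$, realised by $R_2\cap S(T_{i,0})=\{t_{i,0,0},t_{i,0,1},t_{i,0,3},t_{i,0,4},t_{i,0,5}\}$ and $R_2\cap S(T_{i,2})=\{t_{i,2,0},t_{i,2,2},t_{i,2,3},t_{i,2,4}\}$. One checks directly that these two subregions carry the announced signature and agree on the only event they share, $\tilde{X}_b$; since the whole construction is $3$-fold, $k_{18i+8}$ additionally occurs once in $F_{6i+2}$ and once in $D_{6i+2}$, and $X_a$ occurs once in each of the at most two further translators containing it, so Lemma~\ref{lemma:easyLiftedRegion} lets us extend $R_2$ over those linear TSs by the predecessor-closures of the sources of $k_{18i+8}$ and $X_a$, and by the empty region over all remaining components. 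As $R_2$ contains $t_{i,0,0},t_{i,0,1},t_{i,2,0}$ and $\tilde{X}_b$ enters $R_2$, it inhibits $\tilde{X}_b$ precisely at the states that $R_1$ missed. I would finally package $R_1$ and $R_2$ in the four-column table (\emph{States}, \emph{Exit}, \emph{Enter}, \emph{Affected TS}) used throughout this section, with affected TSs $T_{i,0},T_{i,2}$ for $R_1$ and $T_{i,0},T_{i,2},F_{6i+2},D_{6i+2}$ together with the at most two translators housing $X_a$ for $R_2$.

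The only genuinely delicate point is the verification that $R_2$ is a region of the \emph{union} and not merely of its pieces: one has to confirm that the prescribed signature is consistent on every shared event of the involved components — the shared events are just $\tilde{X}_b$, $p_i$, and the occurrences of $X_a$, each check being immediate — and that the hypotheses of Lemma~\ref{lemma:easyLiftedRegion} are met, i.e.\ that $k_{18i+8}$ and $X_a$ each label at most one edge carrying a nonzero signature in every linear TS we extend over, which is exactly the $3$-fold bookkeeping of the key copies. Everything else reduces to the routine sign computations already visualised in Figures~\ref{fig:Linear3-foldESSP_B} and~\ref{fig:Linear3-foldESSP_T}.
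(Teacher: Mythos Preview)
Your proposal is correct and follows essentially the same approach as the paper. The paper presents exactly the same two regions in its table: your $R_1$ coincides with the row having $exit=\{\tilde{X}_b\}$ and affected TSs $T_{i,0},T_{i,2}$, while your $R_2$ is precisely the complement of the row with $exit=\{\tilde{X}_b\}$, $enter=\{X_a,k_{18i+8}\}$ and affected TSs $F_{6i+2},D_{6i+2},T_{i,0},T_{i,2},(T_j,T_l)$---and since a region and its complement inhibit the same event at the same states, the arguments are identical.
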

\begin{proof}

The translator $T_i$ represents the clause $K_i=\{X_{a},X_{b},X_{c}\}$, where $a,b,c\in \{0,\dots,m-1\}$.
As $\varphi$ is cubic and monotone, there are exact two further clauses $K_j$ and $K_l$ such that $X_{a}\in K_j\cap K_l$.
Consequently, $X_{a}$ occurs exactly once in $T_{j}$ as well as in $T_{l}$.
By the uniqueness of $X_a$ in $T_j$ ($T_l$) and Lemma \ref{lemma:easyLiftedRegion} we need no further investigations at which state of $t_{j,0,1},t_{j,0,3},t_{j,1,1}$ ($t_{l,0,1},t_{l,0,3},t_{l,1,1}$) the event $X_a$ occurs a second respectively a third time in $U^\varphi$, for proving that the following table provides reliable regions of $U^\varphi$.

\begin{longtable}{ p{4cm}  p{2cm}p{2cm}p{3cm} }
\emph{States} & \text{Exit} & \text{Enter} & \emph{Affected TS}  \\ \hline
$t_{i,0,0},t_{i,0,1}, t_{i,0,3}\dots t_{i,0,5}$, $t_{i,2,0},t_{i,2,2},\dots,t_{i,2,4}$ 	& $\tilde{X}_b$ & $X_a,k_{18i+8}$ & $F_{6i+2},D_{6i+2}$, $T_{i,0},T_{i,2}$, $(T_{j},  T_{l})$ \\ \hline
																& $\tilde{X}_b$&					& $T_{i,0},T_{i,2}$
\end{longtable}

\end{proof}

\begin{lemma}
\label{lemma:ProxiesInhibitable}
The proxy events are inhibitable in $U^\varphi$.
\end{lemma}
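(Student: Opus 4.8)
The proxy events are $p_0,\dots,p_{m-1}$, and each $p_i$ labels exactly two edges of $U^\varphi$, namely $t_{i,1,2}\edge{p_i}t_{i,1,3}$ in $T_{i,1}$ and $t_{i,2,2}\edge{p_i}t_{i,2,3}$ in $T_{i,2}$. Fixing $i$, we must inhibit $p_i$ at every state of $U^\varphi$ other than $t_{i,1,2}$ and $t_{i,2,2}$, and, in the style of this section, we plan to list two regions in a table that together do the job; in both of them $p_i$ will be made exiting, so that each region inhibits $p_i$ at exactly the states it excludes.

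For the first region we take the subregion $\{t_{i,1,0},t_{i,1,1},t_{i,1,2}\}$ on $T_{i,1}$, the subregion $\{t_{i,2,0},t_{i,2,1},t_{i,2,2}\}$ on $T_{i,2}$, and the empty subregion on every other TS of $U^\varphi$. Here $p_i$ is the only non-obeying event -- the surrounding key copies $k_{18i+5},k_{18i+14},k_{18i+8},k_{18i+17}$ as well as $X_b$ and $\tilde X_b$ all obey -- so this is a genuine region of the whole union, and it inhibits $p_i$ at every state outside it; this already covers all of $B$, all of $T_{i,0}$, every translator $T_j$ with $j\neq i$, and the states $t_{i,1,3},t_{i,1,4},t_{i,2,3},t_{i,2,4}$, leaving only $t_{i,1,0},t_{i,1,1},t_{i,2,0},t_{i,2,1}$ uncovered. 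For the second region we shrink the intersections with $T_{i,1}$ and $T_{i,2}$ to the singletons $\{t_{i,1,2}\}$ and $\{t_{i,2,2}\}$ while keeping $p_i$ exiting; this forces $X_b$ to enter in $T_{i,1}$ and $\tilde X_b$ to enter in $T_{i,2}$. Since $\varphi$ is cubic and monotone, the variable of $X_b$ occurs in exactly two further clauses $K_j,K_l$, labelling at most one edge of the respective translator in each, and $\tilde X_b$ occurs only once more, in $T_{i,0}$; hence Lemma~\ref{lemma:easyLiftedRegion} extends this choice to a region of $U^\varphi$ without creating any further non-obeying event, and this region excludes every non-$p_i$ state of $T_{i,1}$ and $T_{i,2}$, in particular the four states still open above. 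The two rows together thus inhibit $p_i$ everywhere required, and doing this uniformly for all $i$ proves the lemma.

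The only point that needs care is the compatibility bookkeeping for the second region: one has to check that letting $X_b$ and $\tilde X_b$ enter and invoking Lemma~\ref{lemma:easyLiftedRegion} on the translators of $K_j,K_l$ and on $T_{i,0}$ stays consistent with the empty subregions placed on $B$ and on the $T_k$ with $k\neq i$. This is routine, because $X_b$ and $\tilde X_b$ never occur in $B$, never occur twice in the same translator TS, and are distinct from each other and from $p_i$; it is the same argument already carried out for $\tilde X_b$ in the proof of Lemma~\ref{lemma:B-Copies}, which is where I expect the (modest) main effort of this proof to lie.
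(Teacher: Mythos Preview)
Your proposal is correct and follows essentially the same approach as the paper: two regions, one with $p_i$ the only non-obeying event (your first region, the paper's second table row) and one with $exit_R=\{p_i\}$, $enter_R=\{X_b,\tilde X_b\}$ (your second region, the paper's first row), together inhibiting $p_i$ everywhere needed. The compatibility analysis you flag as the main effort is exactly what the paper appeals to via Lemma~\ref{lemma:easyLiftedRegion}, and your verification is sound.
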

\begin{proof}
Let $i\in \{0,\dots,m-1\}$ arbitrary but fixed. 
The event $p_i$ occurs once in the translator $T_{i,0}$ at the state $t_{i,1,2}$ and in $T_{i,2}$ at $t_{i,2,2}$.
The translator $T_i$ represents the clause $K_i=\{X_{a},X_{b},X_{c}\}$, where $a,b,c\in \{0,\dots,m-1\}$.
As $\varphi$ is cubic and monotone, there are exact two further clauses $K_j$ and $K_l$ such that $X_{b}\in K_j\cap K_l$.
Consequently, $X_{b}$ occurs exactly once in $T_{j}$ as well as in $T_{l}$.
By the uniqueness of $X_b$ in $T_j$ ($T_l$) and Lemma \ref{lemma:easyLiftedRegion} we need no further investigations at which state of $t_{j,0,1},t_{j,0,3},t_{j,1,1}$ ($t_{l,0,1},t_{l,0,3},t_{l,1,1}$) the event $X_b$ occurs a second respectively a third time in $U^\varphi$, for proving that the following table provides reliable regions of $U^\varphi$.

\begin{longtable}{ p{4cm}  p{2cm}p{2cm}p{3cm}  }
\emph{States} & \text{Exit} & \text{Enter} & \emph{Affected TS}  \\ \hline
$t_{i,0,0}\dots t_{i,0,2},t_{i,1,0},t_{i,1,1}$, $t_{i,1,3},t_{i,1,4},t_{i,2,0},t_{i,2,1}$, $t_{i,2,3},t_{i,2,4}$	& $p_i$	& $X_b,\tilde{X}_b$ 	& $T_{i,0},\dots,T_{i,2}$, $(T_{j},T_{l})$ \\ \hline
		$t_{i,0,3},\dots,t_{i,0,5}$												& $p_i$	& 			& $T_{i,1},T_{i,2}$
\end{longtable}
\end{proof}

\begin{lemma}
\label{lemma:VariablesInhibitable}
The variable events are inhibitable in $U^\varphi$.
\end{lemma}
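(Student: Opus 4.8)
The plan is to argue exactly as in the preceding lemmas of this subsection: for a fixed variable $X_v\in V(\varphi)$ I would exhibit, in the same tabular format, a small family of regions of $U^\varphi$ (given by their non-obeying events and the affected TSs) that together inhibit $X_v$ at every state $s$ with $\neg(s\edge{X_v})$. The structural remark driving everything is that $X_v$ occurs exactly three times in $U^\varphi$, once for each clause $K_i$ containing it, and always inside a translator: if $X_v$ is the first variable $X_a$ of $K_i$ it labels $t_{i,0,1}\edge{X_v}t_{i,0,2}$ in $T_{i,0}$; if it is the second variable $X_b$ it labels $t_{i,1,1}\edge{X_v}t_{i,1,2}$ in $T_{i,1}$; and if it is the third variable $X_c$ it labels $t_{i,0,3}\edge{X_v}t_{i,0,4}$ in $T_{i,0}$. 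In particular $X_v$ never occurs in $B$, nor in a translator of a clause that does not contain $X_v$, and it is unique in each translator in which it does occur.

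First I would build the principal region $R^-$ in which $X_v$ exits at each of its three occurrences while every other event of $U^\varphi$ obeys. By signature aggregation on the (linear) translators, $R^-$ is indeed a region: in a translator containing $X_v$ it fixes a prefix of states, namely $\{t_{i,0,0},t_{i,0,1}\}$, $\{t_{i,1,0},t_{i,1,1}\}$, or $\{t_{i,0,0},\dots,t_{i,0,3}\}$ according to the role of $X_v$, whereas on every other component all events obey, so the empty subregion works there; in particular $R^-\cap S(B)=\emptyset$. Hence $R^-$ already inhibits $X_v$ at every state of $B$, at every state of every translator not containing $X_v$, and at all remaining states of the translators containing $X_v$ except for those prefix states above which are not the source of $X_v$: that is $t_{i,0,0}$ in the role-$a$ case, $t_{i,1,0}$ in the role-$b$ case, and $t_{i,0,0},t_{i,0,1},t_{i,0,2}$ in the role-$c$ case.

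It then remains to deal with these leftover states, one occurrence at a time. A role-$c$ occurrence is cheap: the three states $t_{i,0,0},t_{i,0,1},t_{i,0,2}$ are excluded all at once by a region keeping $X_v$ exiting and letting the copy event $\tilde{X}_b$ of $K_i$ enter, extended into $T_{i,2}$ where $\tilde{X}_b$ also occurs, with all remaining events obeying. For a role-$a$ or role-$b$ occurrence only the initial state $t_{i,0,0}$, resp.\ $t_{i,1,0}$, is left; since the $X_v$-source in that translator has a single incoming edge, labelled by the bordering key copy $k_{18i+2}$, resp.\ $k_{18i+5}$, excluding that initial state forces this key copy to enter, so the required region is obtained by giving the key copy an entering signature and propagating it back through its refresher/duplicator pair (and, where the shared zeros and opposites demand it, further along the assembly line) with exactly the techniques used in Lemmas~\ref{lemma:KeyInhibitable}--\ref{lemma:ProxiesInhibitable}; this mirrors, up to complementation and a shift of the affected states, the regions built for $k_{3x+2}$ in Lemma~\ref{lemma:KeyCopies}. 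In every case the disjointness of event sets between unrelated gadgets together with Lemma~\ref{lemma:easyLiftedRegion} lets the local description extend to a genuine region of $U^\varphi$, and combining $R^-$ with the appropriate leftover region for each of the at most three occurrences of $X_v$ covers all states.

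The hard part will be bookkeeping, not any new idea: a variable can take any of the roles $a$, $b$, $c$ in each of its three clauses, so the precise list of leftover states and bordering key copies varies with the multiset of these roles, and for the role-$a$ and role-$b$ leftovers one must ensure simultaneously that $X_v$ stays non-obeying, that the targeted initial state ends up on the correct side of the region, and that the induced signature on the shared key copy is consistently realised on the refresher/duplicator chain — which is precisely the technical effort already carried out for the events $k_{3x+2}$ in Lemma~\ref{lemma:KeyCopies}. Together with Lemmas~\ref{lemma:KeyInhibitable}--\ref{lemma:ProxiesInhibitable}, this completes the proof of Lemma~\ref{lem:Secondary_Proofs_for_The_Hardness_of_Linear_3Feasibility_and Linear_3ESSP}.
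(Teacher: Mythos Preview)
Your decomposition matches the paper's exactly: a baseline region with only $X_v$ exiting handles everything except a handful of prefix states in the translators containing $X_v$, and those leftovers are then covered by one auxiliary region per occurrence, using $\tilde X_b$ as the entering partner in the role-$c$ case and the bordering key copy in the role-$a$/$b$ cases.

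The one place where you over-anticipate difficulty is the role-$a$/$b$ leftover. You expect that letting the bordering key copy (say $k_{18i+2}$) enter will force signature constraints to cascade through the refresher/duplicator chain, as happened for the $k_{3x+2}$ regions in Lemma~\ref{lemma:KeyCopies}. In fact no propagation is needed: with $sig(k_{18i+2})=1$ and \emph{every other event obeying}, one can take the region $\{f_{6i,6},f_{6i,7}\}$ in $F_{6i}$ and $\{d_{6i,14}\}$ in $D_{6i}$, and all the doubled events ($o_{12i},o_{12i+1},k_{18i},k_{18i+1},z_{12i},z_{12i+1},z_{12i+2},h_{6i}$, etc.) are consistent with signature~$0$ because their sources and targets lie on the same side. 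Thus the paper's table entry for $X_a$ reads simply $\text{Exit}=\{X_a\}$, $\text{Enter}=\{k_{18i+2}\}$, affected TSs $T_{i,0},D_{6i},F_{6i},T_j,T_l$, with no assembly-line bookkeeping. The difference from Lemma~\ref{lemma:KeyCopies} is that there the key copy had to \emph{exit} while also inhibiting at states inside its own refresher/duplicator, which is what forced the interplay with opposites and zeros; here the key copy merely enters and you only need to exclude a single translator state, so the trivial extension suffices. Your plan still works, it just does more than necessary.
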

\begin{proof}
Let $i\in \{0,\dots,m-1\}$ arbitrary but fixed. 
The translator $T_i$ represents the clause $K_i=\{X_{a},X_{b},X_{c}\}$, where $a,b,c\in \{0,\dots,m-1\}$.
As $\varphi$ is cubic and monotone, if $Z\in \{X_{a},X_{b},X_{c}\}$ there are exact two further clauses $K_j$ and $K_l$ such that $Z\in K_j\cap K_l$ and $j,l\in \{0,\dots,m-1\}$.
Consequently, $Z$ occurs exactly once in $T_{j}$ as well as in $T_{l}$.
We proceed as follows.
If $Z\in \{X_{a},X_{b},X_{c}\}$ is arbitrary but fixed and $T_{j}$ and $T_{l}$ are the translators where the second and third occurrence of $Z$ appear, then we show that $Z$ is inhibitable in $U^\varphi\setminus \{T_{j},T_{l}\}$ with regions of $U^\varphi$.
Then it follows directly, that $Z\in \{X_{a},X_{b},X_{c}\}$ is inhibitable in $U^\varphi$, regardless at which state of $t_{j,0,1},t_{j,0,3},t_{j,1,1}$ respectively $t_{l,0,1},t_{l,0,3},t_{l,1,1}$ the event $Z$ occurs it second or third time, respectively. 
Observe, that for all presented regions at most the event $Z$ is non-obeying in $T_j$ and $T_l$ and, considered $T_j$ ($T_l$) in isolation, it is unique.
Therefor, by Lemma~\ref{lemma:easyLiftedRegion} the following tables present reliable regions of $U^\varphi$.

Firstly, we show $X_a$, occurring in $T_{i,0}$ at state $t_{i,0,1}$, is inhibitable in $U^\varphi\setminus \{T_{j},T_{l}\}$ on the basis of regions of $U^\varphi$.

\begin{longtable}{ p{3cm}  p{1.5cm}p{2cm}p{3.5cm}  }
\emph{States} & \text{Exit} & \text{Enter} & \emph{Affected TS}  \\ \hline
$t_{i,0,0},t_{i,0,2},\dots,t_{i,0,5}$ 	& $X_a$	& $k_{18i+2}$ 	& $T_{i,0}, D_{6i},F_{6i}, T_{j},T_{l}$ \\ \hline
														& $X_a$	& 			& $T_{i,0}, T_{j},T_{l}$
\end{longtable}

Secondly, we show $X_b$, occurring in $T_{i,1}$ at state $t_{i,1,1}$, is inhibitable in $U^\varphi\setminus \{T_{j},T_{l}\}$ with regions of $U^\varphi$.

\begin{longtable}{ p{3cm}  p{1.5cm}p{2cm}p{2.5cm}  }
\emph{States} & \text{Exit} & \text{Enter} & \emph{Affected TS}  \\ \hline
$t_{i,1,0},t_{i,1,2},\dots,t_{i,1,4}$ 	& $X_b$	& $k_{18i+8}$ & $D_{6i+3}, F_{6i+3}$, $T_{i,1}, T_{j},T_{l}$\\ \hline
																& $X_b$	&					& $T_{i,1}, T_{j},T_{l}$
\end{longtable}

Finally, we show $X_c$, occurring in $T_{i,0}$ at state $t_{i,0,3}$, is inhibitable in $U^\varphi\setminus \{T_{j},T_{l}\}$ with regions of $U^\varphi$.

\begin{longtable}{ p{3cm}  p{1.5cm}p{2cm}p{2.5cm}  }
\emph{States} & \text{Exit} & \text{Enter} & \emph{Affected TS}  \\ \hline
$t_{i,0,0},\dots, t_{i,0,2}$, $t_{i,0,4},t_{i,0,5}$ 	& $X_c$	& $\tilde{X}_b$ & $T_{i,0}, T_{i,2},T_{a,j},T_{b,l}$\\ \hline
																& $X_c$	&					& $T_{i,0}, T_{j},T_{l}$
\end{longtable}
\end{proof}


\subsection{Proof of Lemma \ref{lem:Secondary_Proofs_for_the_Hardness_of_2grade_2-ESSP}}
\label{sec:Secondary_Proofs_for_the_Hardness_of_2grade_2-ESSP}

As in case of Lemma~\ref{lem:Secondary_Proofs_for_the_Hardness_of_2grade_2-ESSP} we have to refer to some regions, the tables of this subsection are additionally equipped with a column \textit{Name} where the name of the current region is listed.
Moreover, in the sequel, if the context is clear we refer with a little abuse of notation with $A$ to both the TS $A$ itself and the states of $A$ only.
In such a way we can refer easily to subsets of the states of $A$.
For example, by $H\setminus \{H_0\}$ we mean all states of the headmaster excluding all states belonging to the part $H_0$.

\begin{lemma}
\label{lemma:key}
The key event is inhibitable in $U^\varphi$.
\end{lemma}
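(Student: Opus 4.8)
The plan is to exhibit, for every state $s$ of $U^\varphi$ with $\neg(s\edge{k})$, a region of $U^\varphi$ that inhibits $k$ at $s$. The first observation is that $k$ labels exactly two edges of the whole union, both inside the headmaster, namely $h_{0,0}\edge{k}h_{0,1}$ and $h_{0,4}\edge{k}h_{0,5}$; hence $h_{0,0}$ and $h_{0,4}$ are the only states at which $k$ occurs, and it suffices to inhibit $k$ at every other state.

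Since we work under the hypothesis of Lemma~\ref{lem:Secondary_Proofs_for_the_Hardness_of_2grade_2-ESSP} that $\varphi$ has a one-in-three model, Lemma~\ref{lem:Model_the_Key_Region_2-grade_2-fold_ESSP_Feasibility} supplies a key region $R_{key}$ with $sig_{R_{key}}(k)=-1$ whose shape is pinned down by Lemma~\ref{lem:key_region_2-grade_2-fold_ESSP_Feasibility}. In particular $R_{key}$ excludes every state $h_{j,\ell}$ with $\ell\notin\{0,4\}$, excludes $d_{j,1},d_{j,3}$, excludes $b_{q,1},b_{q,3}$, and excludes every state of a variable manifolder $X_i$ or translator $T_i$ to which it assigns the value $0$. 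Because $sig_{R_{key}}(k)=-1$, the single region $R_{key}$ inhibits $k$ at each of these excluded states at once. What is left are the states that $R_{key}$ contains other than $h_{0,0}$ and $h_{0,4}$, that is, $h_{j,0}$ and $h_{j,4}$ for $j\geq 1$, $d_{j,0},d_{j,2},d_{j,4}$, $b_{q,0},b_{q,2}$, and the states of the $X_i$ and $T_i$ that $R_{key}$ assigns the value $1$.

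For these leftover states I would give, following the pattern of this section, a short table of further regions of $U^\varphi$ organised by the gadget containing the target state. In each row one specifies the signature of the few non-obeying events, which include $k$, verifies that this choice already determines a region on the handful of affected gadget TSs, and invokes Lemma~\ref{lemma:easyLiftedRegion} to lift it to all of $U^\varphi$ by letting every remaining event obey; the target states are then exactly those that lie on the side of the border opposite to the orientation of $k$. A region with $k$ exiting (so $h_{0,0},h_{0,4}\in R$ and $h_{0,1},h_{0,5}\notin R$) and a suitable selection of exiting key copies reaches the duplicator states $d_{j,0},d_{j,2},d_{j,4}$ through the duplicator edges and the vice and wire events of $H_j$; a region that lets the two key copies of a single barter $B_q$ exit handles $b_{q,0},b_{q,2}$; and the remaining variable manifolder and translator states are reached by regions that put a key copy feeding the respective gadget into $exit_R$.

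The main obstacle will be the headmaster states $h_{j,0}$ and $h_{j,4}$ with $j\geq 1$: they sit deep in the $14m$-module production line, so a region forcing $k$ to be non-obeying must be threaded consistently through the zero, vice, wire, key-copy and accordance events of the submodules $H_{j-1}$ and $H_j$ and the adjacent duplicators, without breaking any signature equation or the $2$-fold and $2$-grade bounds. The way around this is to work with a region in which $k$ exits, re-synchronise the touched submodule via its obeying wire events so that only a bounded neighbourhood of $H_j$ carries a non-zero signature, and then let Lemma~\ref{lemma:easyLiftedRegion} absorb everything outside the few gadgets that were actually modified.
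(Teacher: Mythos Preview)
Your overall architecture—use the key region first, then patch the remaining states with a handful of explicit regions—matches the paper. But you miss the paper's main shortcut and make one technical slip.

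\textbf{States outside $H$.} The event $k$ (and the zeros $z_0,z_1$) occur only in the headmaster. Hence a single region of $U^\varphi$ with $exit=\{k\}$, $enter=\{z_0,z_1\}$ and all other events obeying has the empty subregion on every component of $U^\varphi\setminus H$, and therefore inhibits $k$ at \emph{every} state of every duplicator, barter, variable manifolder and translator at once. The paper does exactly this in one line. Your plan of building gadget-specific regions with exiting key copies for the $d_{j,\cdot}$, the $b_{q,\cdot}$, the $x_{i,\cdot}$ and the $t_{i,\cdot,\cdot}$ is a long detour around a one-step argument.

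\textbf{States $h_{j,0},h_{j,4}$ with $j\ge 1$.} You anticipate a family of $j$-local regions. The paper instead uses essentially one region, supported only on $H_0,H_1,D_0$ (exit $k,k_0,a_0,r_1$; enter $z_0,v_1,w_1$), which already forces $h_{j,0},h_{j,4}\notin R$ for all $j\ge 2$ and $h_{1,4}\notin R$; only the single state $h_{1,0}$ needs a separate, more elaborate region. So the ``threading through the production line'' you worry about is avoided: the region dies out after $H_1$ and the rest of $H$ sits uniformly outside it.

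\textbf{Technical point.} Lemma~\ref{lemma:easyLiftedRegion} is stated only for \emph{linear} extension TSs, whereas here $H$, the $D_j$, the $B_q$ and the $X_i$ are not linear. The lifting you actually need is the simpler observation stated just before that lemma: if the added TSs share no event with the already-constructed region, one extends by the empty set and lets all their events obey. Cite that remark, not Lemma~\ref{lemma:easyLiftedRegion}.
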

\begin{proof}
First, we show that $k$ is inhibitable in $H$ by regions of $U^\varphi$.
Second, we represent a region of $U^\varphi$ where $k$ exits such that all non-obeying events occur only in $H$ inhibiting $k$ at all states of $U^\varphi\setminus H$.
Event $k$ is already inhibited at certain states by the key region.
By Figure \ref{fig:gadgets} the states left in $H$ for investigation are $h_{j,0},h_{j,4}$ for $j\in \{1,\dots,14m-1\}$.

\begin{figure}[H]
\centering
\begin{tabular}{ p{3cm}   p{3cm}    p{3cm}      p{3cm}  }
\textit{States} & \textit{Exit} & \textit{Enter} & \textit{Affected TS}  \\ \hline
Remaining states of $H$ except $h_{1,0}$. & $k, k_0,a_0,r_1$ & $z_0,v_1,w_1$ & $H,D_0$\\
$h_{1,0}$ & $k,r_0,k_1,a_1, k_3,r_2$, $w_4,w_5,k_2,k_8$, & $z_0,v_1,w_1,z_2,z_3$, $k_4,a_2,k_6,k_5$	& $H,D_0,D_1,D_2$, $T_0$\\	
$U^\varphi\setminus H$	& $k$ & $z_0,z_1$ & $H$							
\end{tabular}
\end{figure}
\end{proof}

\begin{lemma}
\label{lemma:keycopies}
The key event copies are inhibitable in $U^\varphi$.
\end{lemma}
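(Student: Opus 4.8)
The plan is to argue the statement exactly as Lemma~\ref{lemma:key} was argued, namely by splitting the key copies according to the gadgets in which they occur and exhibiting, for each type, a short table of regions of $U^\varphi$ in the format of this section (exiting events, entering events, affected TSs), together excluding every state at which the copy is not defined. Because $U^\varphi$ is $2$-fold and reduced, a glance at Figure~\ref{fig:gadgets} pins down the possibilities: a copy $k_{3j}$ or $k_{3j+1}$ is produced in the duplicator $D_j$ (at $d_{j,2}\!\to\!d_{j,3}$ and $d_{j,0}\!\to\!d_{j,1}$) and, for $j\leq 14m-2$, consumed once in the headmaster submodule $H_{j+1}$ (at $h_{j+1,0}\!\to\!h_{j+1,1}$ and $h_{j+1,4}\!\to\!h_{j+1,5}$); a \emph{free} copy $k_{3j+2}$ is produced in $D_j$ (at $d_{j,4}\!\to\!d_{j,1}$) and re-used once, either in one of the linear translator TSs $T_{i,\ell}$ (with $j=6i+\ell$, $\ell\in\{0,\dots,5\}$) or in a barter $B_q$; and the two copies of the terminal duplicator $D_{14m-1}$ that are not handed on occur only once. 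I would devote one block of rows to each of these types.

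The regions themselves I would build by perturbing the headmaster--duplicator key-region fragments described in Lemma~\ref{lem:key_region_2-grade_2-fold_ESSP_Feasibility}, keeping as many events obeying as possible. To inhibit $k_{3j}$ (and symmetrically $k_{3j+1}$, $k_{3j+2}$) at the remaining states of $D_j$ itself, the idea is to retain a duplicator fragment in which only $d_{j,2}$ (resp.\ $d_{j,0}$, $d_{j,4}$) lies in the region, so that the copy exits and exactly one vice or wire event of $D_j$ becomes non-obeying; since such an event occurs at only one further edge, in $H_j$, it can be matched there inside a region of $H$ while everything else obeys. To inhibit $k_{3j}$, $k_{3j+1}$ at the states of $H_{j+1}$, I would instead cut the production line at $H_{j+1}$: force $h_{j+1,0}$ into the region and the rest of $H_{j+1}$ out so that the copy exits, let all downstream submodules and duplicators be empty, and adjust only the $D_j$-fragment carrying the copy -- the same prefix/suffix trick as in the proof of Lemma~\ref{lemma:key}. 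To inhibit a free copy $k_{3j+2}$ at the states of the translator $T_{i,\ell}$ (resp.\ the barter $B_q$) that contains it, I would combine a $D_j$-fragment with a fragment of that gadget in which the other surrounding key copy and all of its variable-representative and proxy events obey; here the monotone cubic structure of $\varphi$ ensures that those shared event names are unique in the remaining translators and variable manifolders, so Lemma~\ref{lemma:easyLiftedRegion} (and its evident analogue for the non-linear barters) extends the region without our having to track where those names reoccur. The once-only copies of $D_{14m-1}$ are covered by the first, simplest kind of region alone.

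The main obstacle is the one shared by every assembly-line event of this construction: a key copy may be used only twice, so its negative signature cannot be switched on locally but cascades along the production line -- forcing entering vices, obeying wires and accordance events, and empty downstream gadgets -- so each candidate row must be checked both to close up into a genuine region of the whole union $U^\varphi$ and to genuinely exclude the intended states. Choosing the right cut point in the line, and, for the translator case, synchronising the $D_j$-fragment with the two key copies bracketing the variable events of $T_{i,\ell}$, is where the real work lies; once the cut is fixed, verifying a single table row reduces to routine signature aggregation in the style of Lemma~\ref{lemma:KeyRegionT}. One recurring extra subtlety is that the zeros $z_{2j},z_{2j+1}$ are the only events appearing twice within a single headmaster submodule, so whenever a region here is forced to give a zero a non-zero signature -- which happens precisely when its restriction to $H_j$ makes exactly one of $v_{2j},v_{2j+1}$ entering -- the corresponding $H_j$ must be listed in the affected-TS column; this is also the situation later exploited in the proof of Lemma~\ref{lem:alwaysSSP}.
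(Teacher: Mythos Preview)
Your outline is broadly on target for $k_{3j}$ and $k_{3j+1}$ and matches the paper's treatment there: two or three rows built from perturbed duplicator/headmaster fragments, with the only non-obeying events being vice/wire/reachability events local to $H_j, H_{j+1}, D_j$, suffice. The paper also freely uses the key region itself (which exists by the hypothesis of Lemma~\ref{lem:Secondary_Proofs_for_the_Hardness_of_2grade_2-ESSP}) to cover most headmaster states up front; you never invoke it, which is not fatal but does force you to build larger explicit regions.

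The genuine gap is in your plan for the free copies $k_{3j+2}$. You propose to inhibit $k_{3j+2}$ at the states of the translator or barter containing it by choosing a fragment in which ``the other surrounding key copy and all of its variable-representative and proxy events obey.'' This fails whenever $k_{3j+2}$ sits at the \emph{trailing} position of $T_{i,0}$ (at $t_{i,0,4}$), at $t_{i,1,3}$ or $t_{i,2,3}$, or anywhere in a barter $B_n$. Concretely, if $k_{3j+2}=k_{18i+11}$ exits then $t_{i,0,4}\in R$, and with $X^i_a,\tilde X^i_b,X^i_c$ all obeying you get $t_{i,0,1},t_{i,0,2},t_{i,0,3}\in R$ as well, so the region does \emph{not} inhibit $k_{3j+2}$ there. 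To exclude $t_{i,0,3}$ you are forced to make $X^i_c$ enter; in the barter case you are analogously forced to give the consistency event $c_n$ a non-zero signature. Either way the perturbation now hits a \emph{non-linear} variable manifolder $X_c$ (respectively $X_i$), and Lemma~\ref{lemma:easyLiftedRegion} no longer applies. The paper resolves exactly this by a ten-way case split on the second occurrence of $k_{3j+2}$: three sub-cases for $t_{i,0,4}$ depending on whether $X^i_c$ sits at $x_{c,0}$, $x_{c,1}$, or $x_{c,2}$ (each requiring a different choice of compensating consistency events $c_{4c+2},c_{4c+3}$), two sub-cases for the $T_{i,1}/T_{i,2}$ trailing positions using the proxy $p_i$, and four sub-cases for the barter positions, each pairing $c_n$ with the ``parallel'' consistency event so that the manifolder stays consistent without touching any variable representer. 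Your sketch needs this case analysis, or an argument that the manifolder can always be closed up uniformly; as written it does not cover these states.
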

\begin{proof}
Let $j\in \{0,\dots,14m-1\}$.
The following two regions show that $k_{3j}$ to is inhibitable in $U^\varphi$.
In both regions, the event $k_{3j+2}$ has a non-obeying signature and except for $D_j$ it might occur in a translator or a barter where, in both cases, it is unique.
Because of its uniqueness, in both cases, the appropriate region of the translator respectively the barter can be chosen such that $k_{3j+2}$ is the only non obeying event. 
That is, at all states of the translator respectively barter where $k_{3j}$ is not inhibited when $k_{3j+2}$ is exiting (first region) it is inhibited when $k_{3j+2}$ is entering (second region) simply by complementation. 
This justifies the third row of the table.
\begin{figure}[H]
\centering
\begin{tabular}{p{1.2cm} p{3.2cm}   p{2.5cm}    p{2.5cm}      p{2.7cm}  }
\textit{Nom} &\textit{States} & \textit{Exit} & \textit{Enter} & \textit{Affected TS}  \\ \hline
$R^{k_{3j}}_1$  & $H\setminus \{h_{j,4},h_{j,5},h_{j,0}\}$, $D_j\setminus \{d_{j,4}\}$ & $w_{2j},w_{2j+1},k_{3j}$, $r_{j+1}, k_{3j+2}$ & $v_{2j},v_{2j+1},r_{j}$ & $H,D_j$, a~single translator/barter \\
 $R^{k_{3j}}_2$ & $h_{j,4},h_{j,5},d_{j,4}$&  $r_{j-1},k_{3j},a_j$& $w_{2j},w_{2j+1},k_{3j+2}$, $r_j$ & $H,D_j$, a~single translator/barter \\	
 $R^{k_{3j}}_1, R^{k_{3j}}_2$ & states of the single translator/barter where $k_{3j+2}$ occur &  & &\\							
\end{tabular}
\end{figure}
The next table shows the inhibiting regions for $k_{3j+1}$.
Observe, that the key region inhibits $k_{3j+1}$ at all states of $H$ except for $h_{i,0},h_{i,4}, i\in \{0,\dots,14m-1\},i\not=j$.
\begin{figure}[H]
\centering
\begin{tabular}{p{1cm} p{3.5cm}   p{2.5cm}    p{2.5cm}      p{2.7cm}  }
\textit{Nom} &\textit{States} & \textit{Exit} & \textit{Enter} & \textit{Affected TS}  \\ \hline
$R_{3j+1}^1$&  remaining states in $H~\setminus\{h_{j-1,0},h_{j,0}\}$, $D_j$ &$z_{2j-2},z_{2j-1}$, $r_j,k_{3j+1}$ &  $r_{j-2},w_{2j-2},w_{2j-1}$  & $H,D_{j}$ \\
$R_{3j+1}^2$ & $h_{j-1,0},h_{j,0}$& $k_{3j+1},k_{3j+4}, k_{3j+2}$& $k_{3j},w_{2j},w_{3j},a_{j-1}$, $z_{2j+2},z_{2j+2},r_{j+1}$ & $H,D_j,D_{j+1}$, a~single translator/barter  \end{tabular}
\end{figure}
Proving that $k_{3j+2}$ is inhibitable in $U^\varphi$ let us face the special challenge to examine all possible cases where, instead of $D_j$, the event $k_{3j+2}$ can occur a second time.
Roughly said, $k_{3j+2}$ occurs its second time either in a translator or in a barter. 
In a translator it can occur either at the first or the last position.
These two cases require different investigations concerning the variable representers or the proxy events, respectively.
If a variable manifolder is affected, we have to distinguish different cases concerning the consistency events.
In a barter $B_n$ event $k_{3j+2}$ can occur at $b_{n,0}$ or $b_{n,2}$.
In both cases we have to observe different effects on the variable manifolders.
We firstly list all possible cases and give them names implied by the listing.
Then we show a table where for each possible case an appropriate region is presented.

There is an $i\in \{0,\dots,m-1\}$ such that
\begin{itemize}
\item[1.]  $k_{3j+2}$ occurs in $T_i$ at $t_{i,0,0}$ or $t_{i,1,0}$ or $t_{i,2,0}$.
\item[2.] $k_{3j+2}$ occurs in $T_i$ at $t_{i,0,4}$, and the variable representer at $t_{i,0,3}$ is $X^{i}_c$.
	\begin{itemize}
		\item[2.1] The second occurrence of the variable representer $X^{i}_c$ is at $x_{c,0}$.
		\item[2.2]The second occurrence of the variable representer $X^{i}_c$ is at $x_{c,1}$.
		\item[2.3]The second occurrence of the variable representer $X^{i}_c$ is at $x_{c,2}$.
	\end{itemize}
\item[3.] $k_{3j+2}$ occurs in $T_i$ 
	\begin{itemize}
		\item[3.1] at $t_{i,1,3}$. 
		\item[3.2] at $t_{i,2,3}$.
	\end{itemize}
\item[4.] There is an $n\in \{0,\dots,4m-1\}$  and there is an $i\in \{0,\dots,m-1\}$ such that $k_{3j+2}$ occurs in $B_{n}$
	\begin{itemize}
		\item[4.1]  at $b_{n,0}$ where $n=4i$ or $n=4i+2$.
		\item[4.2] at $b_{n,0}$ where $n=4i+1$ or $n=4i+3$. 
		\item[4.3]  at $b_{n,2}$  and $n=4i$ or $n=4i+2$.
		\item[4.4]  at $b_{n,2}$ and $n=4i+1$ or $n=4i+3$.
	\end{itemize}
\end{itemize}

Observe, that $k_{3j+2}$ is inhibited by the key region at all states of $H$ except for $h_{i,0},h_{i,4}$ for $i\in \{0,\dots,14m-1\}$.
It also inhibits $k_{3j+2}$ at the states $d_{j,0},d_{j,1},d_{j,3}$.
%
\begin{longtable}{p{0.9cm} p{3cm}   p{2.5cm}    p{2.5cm}      p{2.7cm}  }
\textit{Nom} &\textit{States} & \textit{Exit} & \textit{Enter} & \textit{Affected TS}  \\ \hline
$R_{1.}$ & Remaining~states~of $H\setminus~\{h_{j,4},h_{j,5},h_{j+1,0}\}$, $d_{j,0}$, $T_i$& $w_{2j},w_{2j+1},k_{3j}$, $r_{j},k_{3j+2}$ & $v_{2j},v_{2j+1},r_{j-1}$ & $H,D_j, T_i$\\
 $R_{2.1}$ & Remaining~states~of $H\setminus~\{h_{j,4},h_{j,5},h_{j+1,0}\}$, $d_{j,0},d_{j,1},d_{j,3}$, $T_i$& $w_{2j},w_{2j+1},k_{3j}$, $r_{j},k_{3j+2},c_{4c+2}$ & $v_{2j},v_{2j+1},r_{j-1}$, $X^{i}_c$ & $H,D_j, T_i, X_c$, $B_{4c+2}$\\
 $R_{2.2}$ & Remaining~states~of $H\setminus~\{h_{j,4},h_{j,5},h_{j+1,0}\}$, $d_{j,0},d_{j,1},d_{j,3}$, $T_i$& $w_{2j},w_{2j+1},k_{3j}$, $r_{j},k_{3j+2},c_{4c+3}$ & $v_{2j},v_{2j+1},r_{j-1}$, $x^{i}_c,c_{4c+2}$ & $H,D_j, T_i, X_c$, $B_{4c+2},B_{4c+3}$\\
 $R_{2.3}$ & Remaining~states~of $H\setminus~\{h_{j,4},h_{j,5},h_{j+1,0}\}$, $d_{j,0},d_{j,1},d_{j,3}$, $T_i$& $w_{2j},w_{2j+1},k_{3j}$, $r_{j},k_{3j+2}$ & $v_{2j},v_{2j+1},r_{j-1}$, $X^{i}_c,c_{4c+3}$ & $H,D_j, T_i, X_c$, $B_{4c+3}$\\
 $R_{3.1}$ & Remaining~states~of $H\setminus~\{h_{j,4},h_{j,5},h_{j+1,0}\}$, $d_{j,0},d_{j,1},d_{j,3}$,  $T_i\setminus \{t_{i,2,3},t_{i,2,4}\}$& $w_{2j},w_{2j+1},k_{3j}$, $r_{j},k_{3j+2}$ & $v_{2j},v_{2j+1},r_{j-1}$, $p_i$ & $H,D_j, T_i$\\
 $R_{3.1}$ & Remaining~states~of $H\setminus~\{h_{j,4},h_{j,5},h_{j+1,0}\}$, $d_{j,0},d_{j,1},d_{j,3}$,  $T_i\setminus \{t_{i,1,3},t_{i,1,4}\}$& $w_{2j},w_{2j+1},k_{3j}$, $r_{j},k_{3j+2}$ & $v_{2j},v_{2j+1},r_{j-1}$, $p_i$ & $H,D_j, T_i$\\
  $R_{4.1}$ & Remaining~states~of $H\setminus~\{h_{j,4},h_{j,5},h_{j+1,0}\}$, $d_{j,0},d_{j,1},d_{j,3}$, $B_{n}$ & $w_{2j},w_{2j+1},k_{3j},r_j$, $k_{3j+2},c_{4i},c_{4i+2}$ & $v_{2j},v_{2j+1},r_{j-1}$ & $H,D_j,X_i,B_{4i}$, $B_{4i+2}$\\
  $R_{4.2}$ & Remaining~states~of $H\setminus~\{h_{j,4},h_{j,5},h_{j+1,0}\}$, $d_{j,0},d_{j,1},d_{j,3}$, $B_{n}$ & $w_{2j},w_{2j+1},k_{3j},r_j$, $k_{3j+2},c_{4i+1},c_{4i+3}$ & $v_{2j},v_{2j+1},r_{j-1}$ & $H,D_j,X_i,B_{4i+1}$, $B_{4i+3}$\\
 $R_{4.3}$ & Remaining~states~of $H\setminus~\{h_{j,4},h_{j,5},h_{j+1,0}\}$, $d_{j,0},d_{j,1},d_{j,3}$, $B_{n}$ & $w_{2j},w_{2j+1},k_{3j},r_j$, $k_{3j+2}$ & $v_{2j},v_{2j+1},r_{j-1}$, $c_{4i},c_{4i+2}$ & $H,D_j,X_i,B_{4i}$, $B_{4i+2}$\\
  $R_{4.4}$ & Remaining~states~of $H\setminus~\{h_{j,4},h_{j,5},h_{j+1,0}\}$, $d_{j,0},d_{j,1},d_{j,3}$, $B_{n}$ & $w_{2j},w_{2j+1},k_{3j},r_j$, $k_{3j+2}$ & $v_{2j},v_{2j+1},r_{j-1}$, $c_{4i+1},c_{4i+3}$ & $H,D_j,X_i,B_{4i+1}$, $B_{4i+3}$\\
   $R_{1./2./3.}$ & $h_{j,4},h_{j,5},h_{j+1,0},d_{j,2}$, $t_{i,2,3},t_{i,2,4}$~respectiv. $t_{i,1,3},t_{i,1,4}$ & $k_{3j+1},k_{3j+2},k_{3j+4}$ & $a_j,k_{3j},w_{2j+2}$, $w_{2j+3},z_{2j+4}$, $z_{2j+5}, r_{j+2}$ & $H,D_j,D_{j+1},T_i$\\	
$R_{4.}$ & $h_{j,4},h_{j,5},h_{j+1,0},d_{j,2}$  & $k_{3j+1},k_{3j+2},k_{3j+4}$ & $a_j,k_{3j},w_{2j+2}$, $w_{2j+3},z_{2j+4}$, $z_{2j+5}, r_{j+2}$ & $H,D_j,D_{j+1},B_n$\\	\end{longtable}
\end{proof}

\begin{lemma}
\label{lemma:zero}
The zero events are inhibitable in $U^\varphi$.
\end{lemma}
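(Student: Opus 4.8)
The plan is to continue exactly in the style of the preceding lemmas of this subsection: for each kind of zero event I would present a short table of regions of $U^\varphi$, each described by the signatures of its non-obeying events, chosen so that the rows together exclude every state at which that event must be inhibited. Since a region may freely set $sig_R(z)=-1$ for a zero $z$, inhibiting $z$ at a state $s$ just means producing such a region with $s\notin R$.

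First I would record the occurrences. Inside the headmaster submodule $H_j$ (Figure~\ref{fig:gadgets}~a)) the event $z_{2j}$ occurs at $h_{j,1}$ and $h_{j,6}$, and $z_{2j+1}$ occurs at $h_{j,1}$ and $h_{j,7}$; no zero occurs outside $H$, and inside $H$ none occurs outside $H_j$. A region with $sig_R(z_{2j})=-1$ is therefore forced to contain $h_{j,1},h_{j,6}$ and to exclude $h_{j,2},h_{j,8}$, and symmetrically $sig_R(z_{2j+1})=-1$ forces $h_{j,1},h_{j,7}\in R$ and $h_{j,3},h_{j,8}\notin R$. The generic submodules $H_j$ with $j\geq 1$ then receive a uniform treatment, whereas $H_0$, hence the events $z_0$ and $z_1$, needs a dedicated row: there the edge entering $h_{0,0}$ carries the key event $k$ — which, like the zeros, occurs twice inside $H_0$ — and there is neither a reachability event $r_{-1}$ nor a predecessor duplicator $D_{-1}$, just as $z_0$ received a separate row in Lemma~\ref{lemma:Zeros}.

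For the generic case I would use two, occasionally three, regions per zero. Each is a region with $z_{2j}\in exit_R$ whose support inside $H_j$ is one of the admissible completions of $\{h_{j,1},h_{j,6}\}$; taking complementary completions of the optional states $h_{j,0},h_{j,3},h_{j,4},h_{j,5},h_{j,7}$ (together with the corresponding states of $D_{j-1}$ and $D_j$) lets the rows between them exclude every target state other than $h_{j,1}$ and $h_{j,6}$. Whichever completion is used, it drives the interface key copies $k_{3j-3},k_{3j-2}$ and the reachability events into fixed signatures; since $r_{j-1},r_j$ are unique in $U^\varphi$ and each key copy is unique in every affected gadget other than the single duplicator $D_{j-1}$ (and, for a free copy $k_{3j+2}$, the single translator or barter where it reappears), Lemma~\ref{lemma:easyLiftedRegion} extends the partial region on $H_j$, $D_{j-1}$ and $D_j$ to a region of all of $U^\varphi$ with every remaining event obeying. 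The odd zeros $z_{2j+1}$ are handled by the mirror-image choices, with $h_{j,7}$, $v_{2j+1}$ and $w_{2j+1}$ replacing $h_{j,6}$, $v_{2j}$ and $w_{2j}$.

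The main obstacle is bookkeeping rather than any new idea: a zero is the only sort of event in $U^\varphi$ whose two occurrences lie in one and the same transition system, so every candidate region must respect two well-separated edges of $H_j$ at once while keeping the interface events of $H_j$ — the incoming key copies $k_{3j-3},k_{3j-2}$, the reachability events $r_{j-1},r_j$, the accordance events $a_{j-1},a_j$, and the vice and wire events — compatible with whatever $H_{j-1}$, $D_{j-1}$, $D_j$ and the adjacent translators or barters demand. After that it only remains to verify, row by row, that the displayed non-obeying events do define a region of the union assembled from the listed affected transition systems, that everything else obeys, and that each listed state lies outside that region, which by definition inhibits the zero event there.
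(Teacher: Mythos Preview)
Your plan is essentially the paper's own: special-case $z_0,z_1$ because $H_0$ carries the key event, then handle every other $z_{2j},z_{2j+1}$ by a small family of regions whose supports inside $H_j$ complement one another; the paper even economises by letting one row treat $z_{2j}$ and $z_{2j+1}$ together before splitting off the leftover states $h_{j,6},h_{j,7}$.

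One technical point to fix: you invoke Lemma~\ref{lemma:easyLiftedRegion} to extend your partial region to the whole of $U^\varphi$, but that lemma only covers \emph{linear} target TSs, and in this construction the barters, the variable manifolders, and the duplicators are not linear. The extension you need still works --- whenever the only non-obeying event is unique in the target TS (translator, barter, manifolder, or duplicator) one checks by hand, as the paper's tables implicitly do, that a compatible sub-region exists --- but you should not cite Lemma~\ref{lemma:easyLiftedRegion} for it. Also, your remark that the zeros are ``the only sort of event in $U^\varphi$ whose two occurrences lie in one and the same transition system'' is not quite right: $k$ itself has both its occurrences inside $H$, which is exactly why $H_0$ needs the special treatment you already flagged.
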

\begin{proof}
Because of the two occurrences of $k$ in $H_0$ we have to investigate the events $z_0,z_1$ in isolation.
The region $R_{z_0\&z_1}$ of the following table shows that $z_0,z_1$ are inhibitable in $U^\varphi$ except for the states $h_{0,6},h_{0,7}$.
The region $R_{z_0}$ ($R_{z_1}$) proves $z_0$ ($z_1$) to be inhibitable at $h_{0,7}$ ($h_{0,6}$) .
\begin{longtable}{p{0.7cm} p{3cm}   p{2.5cm}    p{2.5cm}      p{2.7cm}  }
\textit{Nom} &\textit{States} & \textit{Exit} & \textit{Enter} & \textit{Affected TS}  \\ \hline
$R_{z_0\&z_1}$ & $H\setminus \{h_{0,6},h_{0,7}\}$& $z_0,z_1$ & $k$ & $H$ \\
 $R_{z_0}$& $h_{0,7}$& $z_0,v_1,r_0$ & $w_0,k_2$ & $H,D_0,T_0$ \\	
$R_{z_1}$ & $h_{0,6}$& $r_0,v_0,z_1,z_2,v_3$  & $w_1,k_0,w_2,k_5$ & $H,D_0,D_1,T_0$\\							
\end{longtable}

For $j\in \{1,\ldots,14m-1\}$ we decide the two cases $z_{2j}$ and $z_{2j+1}$.
The region $R_z$ proves that both of $z_{2j},z_{2j+1}$ are inhibitable at the stated states.
The region $R_{z_{2j}}$ ($R_{z_{2j+1}}$) proves $z_{2j}$ ($z_{2j+1}$) to be inhibitable at the remaining states.
We assume $ i\in \{0,\dots,j-1\}$.
\centering
\begin{longtable}{p{0.9cm} p{3cm}   p{2.5cm}    p{2.5cm}      p{2.7cm}  }
\textit{Nom} &\textit{States} & \textit{Exit} & \textit{Enter} & \textit{Affected TS}  \\ \hline
$R_z$ & $H_j \setminus \{h_{j,6},h_{j,7}\}$, $H_{j+1},\dots, H_{14m-1}$ & $z_{2j},z_{2j+1}$, $v_{2i},z_{2i+1},w_{2i}$, $w_{2i+1}$ & $k,k_{2i},k_{2i+1}$, $k_{2j},k_{2j+1}$  & $H,D_i$,  \\
$R_{2j}$ & $h_{j,7}$, $H_0,\dots,H_{j-1}$& $z_{2j},r_{j+1},w_{2j+1}$, $k_{3j+3}$ & $r_{j-1},v_{2j}$ & $H,D_{j+1}$ \\
$R_{2j+1}$ & $h_{j,7}$, $H_0,\dots,H_{j-1}$  & $r_j,z_{2j+1},w_{2j+1}$, $k_{3j+2}$ & $r_{j-1},v_{2j+1}$ & $H,D_j$, a~single transl./barter \\						
\end{longtable}
\end{proof}

\begin{lemma}
\label{lemma:wildcards}
The wildcard events are inhibitable in $U^\varphi$.
\end{lemma}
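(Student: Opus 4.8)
The plan is to reuse the template of Lemmas~\ref{lemma:key}, \ref{lemma:keycopies} and \ref{lemma:zero}: for every wildcard event $e$ and every state $s$ of $U^\varphi$ with $\neg(s\edge{e})$, exhibit a region of $U^\varphi$ whose restriction to the gadget containing $s$ inhibits $e$ at $s$, and collect these regions in the usual \textit{States}/\textit{Exit}/\textit{Enter}/\textit{Affected TS} tables. First I would record the occurrences of the wildcard events; since $U^\varphi$ is $2$-fold, each such event labels at most two edges, so I would pin down the (at most two) gadgets in which it occurs and the exact states involved, separating the boundary index $j=0$ from a generic $j\geq 1$ as in the earlier lemmas, because $k$ appears twice in $H_0$ and slightly changes the analysis there.

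Next I would discard the pairs that are already handled. By Lemma~\ref{lem:key_region_2-grade_2-fold_ESSP_Feasibility} the key region contains $h_{j,0},h_{j,4}$ and $d_{j,0},d_{j,2},d_{j,4}$ and excludes the complementary states of each $H_j$ and each $D_j$; moreover, once any table row exhibits a region in which $e$ exits, $e$ is inhibited at every state of every transition system not listed as affected by that row, as noted in the preamble of this section. For each surviving pair $(e,s)$ I would then assemble a separating region as a union of gadget fragments: a fragment on $H$ and on the duplicator(s) carrying the relevant occurrence of $e$, arranged so that $e$ exits and $s$ is forced outside the region; the all-zero signature on every gadget whose events are disjoint from the ones touched (legal by the disjointness observation preceding Lemma~\ref{lemma:easyLiftedRegion}); and, whenever a second occurrence of $e$ lies in a linear gadget in which $e$ is the only non-obeying event there, an extension via Lemma~\ref{lemma:easyLiftedRegion}, which removes the need to know exactly where that occurrence sits. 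The table rows would then be indexed by which of the two occurrences we aim at and, should a wildcard occurrence fall inside a translator or a variable manifolder, by a secondary split on where it lies relative to the consistency events $c_{4i},\dots,c_{4i+3}$ of the affected manifolder, exactly as in the case list of Lemma~\ref{lemma:keycopies}.

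Finally, for each row I would verify the two routine facts: that the \textit{Exit} and \textit{Enter} columns admit a common signature on the \textit{Affected} gadgets (each fragment is a genuine region of its gadget and the fragments agree on shared events), so that together with the zero extension the whole set is a region of $U^\varphi$; and that every state in the \textit{States} column is excluded by this region, so that $e$ is inhibited there. The main obstacle I expect is pure bookkeeping: keeping the union of fragments signature-consistent across the many events shared between $H$, the duplicators, the barters and the translators while simultaneously pushing each target state out of the region, and checking that the case list remains exhaustive whenever a wildcard occurrence reaches into a translator or manifolder. No new idea beyond the ones already used for the key copies is required; the analysis is lengthy but routine.
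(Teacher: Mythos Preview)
Your outline is in the same spirit as the paper's proof, which likewise separates $w_0,w_1$ (because $k$ sits in $H_0$) from the generic $w_{2j},w_{2j+1}$ with $j\geq 1$ and then builds explicit table regions. Two points in your plan, however, would not go through as stated.

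First, the key region is of no use here. By Lemma~\ref{lem:key_region_2-grade_2-fold_ESSP_Feasibility} the wire events $w_{2j},w_{2j+1}$ \emph{obey} in every key region, so that region inhibits them at no state whatsoever; you cannot discard any pair $(w,s)$ on its basis. The paper accordingly does not invoke the key region in this lemma and constructs fresh regions for all target states.

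Second, the contingency ``should a wildcard occurrence fall inside a translator or a variable manifolder'' never arises: $w_{2j}$ occurs only in $H_j$ and in $D_j$ (on the edge $d_{j,4}\edge{w_{2j}}d_{j,0}$), while $w_{2j+1}$ occurs only in $H_j$. No barter, manifolder, or translator is ever touched directly by a wire event, so the secondary case split on the consistency events $c_{4i},\dots,c_{4i+3}$ is unnecessary and the analysis is shorter than that of Lemma~\ref{lemma:keycopies}. With these two corrections your plan coincides with the paper's proof.
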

\begin{proof}
Because of $k$ in $H_0$ we have to treat $w_0,w_1$ separately.
Region $R_{w_0,w_1}$ inhibits $w_0,w_1$ simultaneously at the listed states.
Region $R_{w_0}$ ($R_{w_0}$) concerns $w_0$ ($w_1$) only.
\begin{longtable}{p{0.9cm} p{3cm}   p{2.5cm}    p{2.5cm}      p{2.7cm}  }
\textit{Nom} &\textit{States} & \textit{Exit} & \textit{Enter} & \textit{Affected TS}  \\ \hline
$R_{w_0,w_1}$ & $H\setminus \{h_{0,4},h_{1,0}\}$, $d_{0,0},d_{0,1},d_{0,3}$ & $w_0,w_1,k_0,r_1$ &  $v_0,v_1,r_0$& $H,D_0$ \\
 $R_{w_0}$ & $h_{0,4},h_{1,0},d_{0,2}$ & $v_0,z_1,w_0$, $z_2,z_3$ & $ k,k_0,k_1 $& $H,D_0$ \\	
 $R_{w_1}$& $h_{0,4},h_{1,0}$& $z_0,v_1,w_1$ & $k$&  $H$\\							
\end{longtable}
%
For $j\in \{1,\ldots,14m-1\}$ we decide the two cases $w_{2j}$ and $w_{2j+1}$.
The region $R_w$ proves that both of $w_{2j},w_{2j+1}$ are inhibitable at the listed states.
The region $R_{w_{2j}}$ ($R_{w_{2j+1}}$) proves $w_{2j}$ ($w_{2j+1}$) to be inhibitable at the remaining states.
We assume $ i\in \{0,\dots,j-1\}$.
%
\begin{longtable}{p{0.7cm} p{3cm}   p{2.5cm}    p{2.5cm}      p{2.7cm}  }
\textit{Nom} &\textit{States} & \textit{Exit} & \textit{Enter} & \textit{Affected Segments}  \\ \hline
$R_w$& $H\setminus\{h_{j,4},h_{j+1,0}\}$, $d_{j,0}$, $d_{j,1},d_{j,3}$ & $w_{2j},w_{2j+1},k_{3j}$ & $v_{2j},v_{2j+1},r_j$ & $H,D_j$, a~single transl./barter\\
 $R_{w_{2j}}$& $h_{j,4},h_{j+1,0}$, $d_{j,2}$ & $r_0,w_{2i},w_{2i+1}$ $r_j,v_{2j},z_{2j+1}$, $w_{2j},z_{2j+2},z_{2j+3}$  & $k_{3i+1},r_{j-1}$, $ k_{3j-2},k_{3j},k_{3j+1}$ & $H,D_0,\dots,D_j$ \\	
$R_{w_{2j+1}}$ &$h_{j,4},h_{j+1,0}$, $d_{j,2}$& $r_0,w_{2i},w_{2i+1}$, $r_j,z_{2j},v_{2j+1},w_{2j+1}$ & $k_{3i+1},r_{j-1}$, $k_{3j-2},a_j,k_{3j}$& $H,D_0,\dots,D_j$ \\	\end{longtable}
\end{proof}

\begin{lemma}
\label{lemma:vice}
The vice events are inhibitable in $U^\varphi$.
\end{lemma}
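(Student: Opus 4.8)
The plan is to proceed exactly as in the proofs of Lemmas~\ref{lemma:zero} and \ref{lemma:wildcards}, presenting for each vice event a short list of regions of $U^\varphi$ in the customary table format (columns \emph{Name}, \emph{States}, \emph{Exit}, \emph{Enter}, \emph{Affected TS}). The starting observation is that, for every $j\in\{0,\dots,14m-1\}$, the event $v_{2j}$ occurs precisely at $h_{j,2}$ inside the headmaster and at $d_{j,1}$ inside the duplicator $D_j$, while $v_{2j+1}$ occurs precisely at $h_{j,3}$ and at $d_{j,3}$; no vice event appears anywhere else in $U^\varphi$. Hence, for a fixed vice event it suffices to cover every state distinct from its two sources either by a region that exits the event and excludes that state, or by a region that enters the event and includes it. Because $k$ occurs twice inside $H_0$, the indices $0$ and $1$ must again be handled separately: first a single region $R_{v_0,v_1}$ exiting both $v_0$ and $v_1$ and inhibiting them at all states except the few states of $H_0$ and $D_0$ that border an occurrence of $k$, and then two small corrections $R_{v_0}$, $R_{v_1}$ for the leftover states, in full analogy with $R_{z_0\&z_1},R_{z_0},R_{z_1}$ and $R_{w_0,w_1},R_{w_0},R_{w_1}$.

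For the generic range $j\ge 1$ I would use a region $R_v$ that exits both $v_{2j}$ and $v_{2j+1}$. Tracing signatures through $H_j$ by signature aggregation, this forces $R\cap H_j=H_j\setminus\{h_{j,4}\}$ (or, under the dual choice, $R\cap H_j=\{h_{j,0},h_{j,1},h_{j,2},h_{j,3}\}$) together with obeying $z_{2j},z_{2j+1},w_{2j},w_{2j+1}$, and through $D_j$ it forces $R\cap D_j=\{d_{j,1},d_{j,3}\}$ with obeying $a_j$; the only events pushed outside $H_j\cup D_j$ are then $k_{3j},k_{3j+1}$ (which reappear in $H_{j+1}$), $k_{3j+2}$ (which reappears in a single translator or barter), and $k_{3j-2}$ (which reappears in $D_{j-1}$), so $R_v$ already inhibits both $v_{2j}$ and $v_{2j+1}$ at the vast majority of states. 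Two further regions $R_{v_{2j}}$ and $R_{v_{2j+1}}$, each entering the respective vice event and absorbing a contiguous block of the assembly line the way the analogous $R_{w_{2j}},R_{w_{2j+1}}$ of Lemma~\ref{lemma:wildcards} do, then dispose of the remaining states. Routine checking of signature aggregation along each affected TS, plus one appeal to Lemma~\ref{lemma:easyLiftedRegion} for the components absent from the \emph{Affected TS} column, finishes the argument.

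The real work — and the reason these region tables are bulky — is controlling the propagation of a forced non-obeying signature along the headmaster's production line. Making $v_{2j}$ non-obeying fixes the orientation of the incident zeros $z_{2j},z_{2j+1}$, of the wires $w_{2j},w_{2j+1}$, and, through $D_j$, of $k_{3j},k_{3j+1},k_{3j+2}$ and $a_j$; each of these is shared with a neighbouring submodule $H_{j\pm1}$, a neighbouring duplicator $D_{j\pm1}$, or a translator/barter, and an entering $k_{3j}$ or $k_{3j+1}$ re-triggers the same cascade one submodule further down the line. The step I expect to be genuinely delicate is therefore choosing each region so that this cascade terminates after touching only finitely many components: the region must either drive the spilled events back to obeying or swallow an entire contiguous block of submodules — a prefix $H_0,\dots,H_{j-1}$ or a suffix $H_{j+1},\dots,H_{14m-1}$ together with the matching duplicators — so that the reachability and accordance chains $r_\bullet,a_\bullet$ are severed at exactly one place and at most one translator or barter lands in \emph{Affected TS}. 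Once this bookkeeping is carried out for each of the finitely many shapes that the second occurrence of the relevant key copy can take, as already catalogued in the proof of Lemma~\ref{lemma:keycopies}, the remaining verifications are mechanical.
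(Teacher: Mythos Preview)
Your high-level strategy---present a short collection of regions in table form and verify each by signature aggregation---matches the paper's proof. But your concrete plan diverges from the paper in two ways, and one of them hides a real gap.

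First, the paper does \emph{not} special-case $v_0,v_1$; all vice events are handled uniformly over $j\in\{0,\dots,14m-1\}$. Second, and more importantly, the paper never uses a joint region exiting both $v_{2j}$ and $v_{2j+1}$. Instead it treats them separately: three regions $R^1_{2j},R^2_{2j},R^3_{2j}$ for $v_{2j}$ and four regions $R^1_{2j+1},\dots,R^4_{2j+1}$ for $v_{2j+1}$, each built from a carefully chosen mix of reachability, wire, and zero events so that the cascade terminates after touching at most $H$, $D_j$, $D_{j\pm 1}$, and one translator/barter.

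The gap in your plan is the analysis of your proposed $R_v$. You claim that when both $v_{2j},v_{2j+1}$ exit and $R\cap D_j=\{d_{j,1},d_{j,3}\}$, the only events pushed outside $H_j\cup D_j$ are $k_{3j},k_{3j+1},k_{3j+2},k_{3j-2}$. This is not true. With that choice, $k_{3j},k_{3j+1}$ both \emph{enter}, which forces $R(h_{j+1,0})=R(h_{j+1,4})=0$ and $R(h_{j+1,1})=R(h_{j+1,5})=1$ in $H_{j+1}$; this in turn forces non-obeying $z_{2j+2},z_{2j+3}$ and constrains $D_{j+1}$ via $w_{2j+2}$. Going backward, $k_{3j-2}$ entering forces $R(d_{j-1,0})=0$, $R(d_{j-1,1})=1$ in $D_{j-1}$, which then pushes at least one of $v_{2j-1}$, $w_{2j-2}$ to a non-obeying signature and cascades into $H_{j-1}$. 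So the ``only four events leak'' claim fails in both directions; the bookkeeping you defer to the third paragraph is not a routine afterthought but the heart of the construction. The paper's choice to treat the two vice events asymmetrically (only one of them exiting in each region, paired with suitable $r_\bullet$, $w_\bullet$, $z_\bullet$ events) is precisely what keeps the affected-TS column short.
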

\begin{proof}
For $j\in \{0,\dots, 14m-1\}$ we investigate the two cases $v_{2j}$ and $v_{2j+1}$ separately.
The following table proves $v_{2j}$ to be inhibitable in $U^\varphi$.
%
\begin{longtable}{p{0.5cm} p{3.9cm}   p{2.5cm}    p{2.7cm}      p{2cm}  }
\textit{Nom} &\textit{States} & \textit{Exit} & \textit{Enter} & \textit{Affected TS}  \\ \hline
$R_{2j}^1$ & $H_0,\dots,H_{j-1}, D_0,\dots,D_{j-1}$, $H_j\setminus\{h_{j,0},h_{j,1},h_{j,7}\}$, \newline$H_{j+1}\setminus\{h_{j+1,1},h_{j+1,6},h_{j+1,7}\}$, $d_{j,2}, D_{j+1}\setminus\{d_{j+1,0}\}$, $H_{j+2}\setminus\{h_{j+2,0},\ldots, h_{j+2,4}\}$, $H_{j+3},\dots,H_{14m-1}$, $D_{j+2},\dots,D_{14m-1}$ & $v_{2j},z_{2j+1},r_j,z_{2j+2}$, $z_{2j+3},k_{3j+4},r_{j+2}$ & $r_{j-1},w_{2j+1},k_{3j}$, $w_{2j+2},w_{2j+3},r_{j+1}$ & $H,D_{j},D_{j+1}$ \\
 $R_{2j}^2$ & $h_{j,0},h_{j,1}, D_j,d_{j+1,0}$, $h_{j+1,1},h_{j+1,6},h_{j+1,7}$, $h_{j+2,0},\ldots$, $h_{j+2,4}$& $r_{j-1},v_{2j},a_j$, $w_{2j+2},\ldots,w_{28m-1}$  & $z_{2j},w_{2j+1},k_{3j+2}$, $k_{3i+1}$, \newline $i\in\{j+1,\dots,14m-1\}$ & $H$, $D_j,\dots,D_{14m-1}$, single trans./ barter \\	
 $R_{2j}^3$ & $h_{j,7}$ & $v_{2j},v_{2j+1},r_j$, $z_{2j}, z_{2j+1}$  & $r_{j-1},k_{3j},k_{3j+1}$, $k_{3j+2}$ &  $H,D_j$, single trans./barter \\						\end{longtable}
%
The following table proves $v_{2j+1}$ to be inhibitable in $U^\varphi$.
%
\begin{longtable}{p{0.7cm} p{3cm}   p{2.5cm}    p{2.5cm}      p{2.7cm}  }
\textit{Nom} &\textit{States} & \textit{Exit} & \textit{Enter} & \textit{Affected TS}  \\ \hline
$R_{2j+1}^1$ & $H_0,\dots,H_{j-1}$, $H_{j+1},\dots,H_{14m-1}$, $H\setminus\{h_{j,0},h_{j,1},h_{j,6}\}$& $r_j,z_{2j},v_{2j+1}$ & $r_{j-1},w_{2j}, k_{3j+2}$& $H, D_j$, single trans./barter\\
 $R_{2j+1}^2$ & $h_{j,6}$& $r_j,v_{2j},v_{2j+1}$, $z_{z_j+2},z_{2j+3}$  & $r_{j-1},k_{3j},k_{3j+1}$, $k_{3j+2}$ &  $H,D_j$,  single trans./barter\\	
 $R_{2j+1}^3$& $h_{j,0}, D_j\setminus \{d_{j,3}\}$ &  $z_{2j},v_{2j+1}$, $v_{2i},z_{2i+1},w_{2i}$, $i\in \{0,\dots,j-1\}$ & $k,a_j,k_{3j}$, $k_{3i},k_{3i+1}$, $i\in \{0,\dots,j-1\}$  & $H,D_0,\dots,D_j$\\
 $R_{2j+1}^4$ & $h_{j,1}$ & $r_{j-1},v_{2j+1}$ & $ r_j,z_{2j+1},w_{2j},k_{3j+2}$ & $H,D_j$, single transl./barter\\							
\end{longtable}
\end{proof}

\begin{lemma}
\label{lemma:reachability}
The reachability events are inhibitable in $U^\varphi$.
\end{lemma}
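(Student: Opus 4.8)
The plan is to exploit that every reachability event is \emph{unique} in $U^\varphi$: by construction $r_j$, for $j \in \{0,\dots,14m-2\}$, labels only the edge $h_{j,0}\to h_{j+1,0}$ of the headmaster and occurs in no duplicator, barter, variable manifolder or translator. Consequently, as in the preceding lemmas of this subsection, a single region already does a lot of work: if $R$ is a region of $U^\varphi$ with $sig_R(r_j)=-1$ then $r_j$ is inhibited at every state outside $R$, and if $sig_R(r_j)=+1$ then $r_j$ is inhibited at every state inside $R$. Since $r_j$ occurs only at $h_{j,0}$, it therefore suffices, for each $j$, to exhibit regions $R^{r_j}_1,R^{r_j}_2$ of $U^\varphi$ such that $r_j$ exits in $R^{r_j}_1$, enters in $R^{r_j}_2$, and the covered sets $\overline{R^{r_j}_1}$ and $R^{r_j}_2$ together contain $S(U^\varphi)\setminus\{h_{j,0}\}$. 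As in the rest of this subsection, these regions will be presented in a table.

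\textbf{Key steps.} First I would build $R^{r_j}_1$ as a \emph{front fragment} of a key region: put $h_{j,0}$ and $h_{j,4}$ into the region and propagate the forced consequences backwards through $H_j, D_{j-1}, H_{j-1}, \dots, D_0, H_0$. Exactly as in the proof of Lemma~\ref{lem:key_region_2-grade_2-fold_ESSP_Feasibility}, this forces $R^{r_j}_1\cap H_i=\{h_{i,0},h_{i,4}\}$ and $R^{r_j}_1\cap D_i=\{d_{i,0},d_{i,2},d_{i,4}\}$ for $i<j$ (with the genuine key pattern on $H_0$, which carries two copies of $k$ rather than key copies), makes the key copies $k_0,\dots,k_{3(j-1)+2}$ exit, the vice events $v_0,\dots,v_{2j-1}$ enter, all reachability events $r_0,\dots,r_{j-1}$ obey while $r_j$ exits, and leaves $H_{j+1},\dots,H_{14m-1}$, $D_j,\dots,D_{14m-1}$ and the remaining gadgets entirely outside $R^{r_j}_1$; the free key copies $k_{3i+2}$, $i<j$, merely force their host translators/barters into a prefix fragment, which is harmless by signature aggregation and Lemma~\ref{lemma:easyLiftedRegion}. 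Hence $\overline{R^{r_j}_1}$ covers all states of $U^\varphi$ except the few lying in $R^{r_j}_1$. The region $R^{r_j}_2$ is then obtained by the mirror construction on the ``tail'' $H_{j+1},\dots$ together with $D_j,\dots$: it makes $r_j$ enter and is chosen to contain precisely the states still in $R^{r_j}_1$, so the two regions together inhibit $r_j$ wherever required. The base case $r_0$ is the same construction truncated to $H_0$, $D_0$ and the translator/barter hosting $k_2$.

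\textbf{Main obstacle.} I expect the difficulty to be entirely bookkeeping: verifying that the chosen state sets genuinely \emph{close up} to regions of the recursively built union — that the extensions forced into the translators, barters and variable manifolders by the exiting free key copies $k_{3i+2}$ are mutually compatible, agree on shared events, and leave the intended tail gadgets untouched — and that, for each $j$, the covered sets of $R^{r_j}_1$ and $R^{r_j}_2$ really exhaust $S(U^\varphi)\setminus\{h_{j,0}\}$. This is the same mechanical verification carried out in Lemma~\ref{lemma:keycopies}: once the rows of the table are written down, each is checked against Figure~\ref{fig:gadgets} using signature aggregation and Lemma~\ref{lemma:easyLiftedRegion}. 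Since every such region is clearly constructible in polynomial time, this establishes the lemma and, together with the previous lemmas of this subsection, completes the proof of Lemma~\ref{lem:Secondary_Proofs_for_the_Hardness_of_2grade_2-ESSP}.
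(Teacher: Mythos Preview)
Your approach has a genuine gap in the construction of $R^{r_j}_1$. You propose to put $h_{j,0}$ and $h_{j,4}$ into the region (the key-region pattern on $H_j$) while leaving $D_j,\dots,D_{14m-1}$ and $H_{j+1},\dots,H_{14m-1}$ entirely outside. These two requirements are incompatible. With $h_{j,4}\in R$ and $h_{j,1}\notin R$ (forced by $k_{3j-3}$ exiting), the zero $z_{2j}$ obeys, so $h_{j,2}\notin R$; hence $v_{2j}$ \emph{enters}. But $v_{2j}$ also labels $d_{j,1}\edge{v_{2j}}d_{j,2}$ in $D_j$, so $D_j$ cannot be all-outside. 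Worse, once $D_j$ is forced into the key pattern ($d_{j,0},d_{j,2},d_{j,4}\in R$), the copies $k_{3j},k_{3j+1}$ exit, and since $h_{j+1,0}\edge{k_{3j}}h_{j+1,1}$ in $H_{j+1}$ you get $h_{j+1,0}\in R$. Then $r_j$, which runs $h_{j,0}\to h_{j+1,0}$, has both endpoints in $R$ and \emph{obeys} rather than exits. The key-region pattern propagates forward inexorably; it cannot simply be truncated at row $j$.

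The paper's proof avoids this by abandoning the key-region template altogether and using small \emph{local} regions concentrated on $H_{j-1},H_j,D_{j-1}$ (plus one translator/barter). The crucial trick is to let the wire events $w_{2j-2},w_{2j-1}$ \emph{exit} rather than obey: this flips $d_{j-1,0}$ out of the region, so $k_{3j-2}$ can obey, which in turn lets $h_{j,4}$ stay outside and stops the forward cascade. Concretely, for $j\geq 1$ the first region has $r_{j-1}$ entering, $r_j$ exiting, $v_{2j-2},v_{2j-1}$ entering and $w_{2j-2},w_{2j-1},k_{3j-3},k_{3j-1}$ exiting; the only $H$-states inside are $h_{j-1,4},h_{j-1,5},h_{j,0}$. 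A second local region then covers those three states. The case $r_0$ needs separate treatment because $H_0$ carries $k$ rather than key copies. Your table should record regions of this local shape, not truncated key regions.
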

\begin{proof}
Because of the special role of $k$ we have to treat the cases $r_0$ and $r_j,j\in \{1,\dots,14m-1\}$ separately.
The next table proves $r_0$ to be inhibitable in $U^\varphi$.
%
\begin{longtable}{p{0.7cm} p{3cm}   p{2.5cm}    p{2.5cm}      p{2.7cm}  }
\textit{Nom} &\textit{States} & \textit{Exit} & \textit{Enter} & \textit{Affected TS}  \\ \hline
$R_{r_0}^1$ &  $h_{0,2},h_{0,4},h_{0,5},h_{0,7}$, $h_{0,8},H_1,\ldots,H_{14m-1}$, $t_{0,0,0}$ & $r_0,z_0,v_1$ & $w_0,k_2$ & $H,D_0,T_{0,0}$ \\
 $R_{r_0}^2$& $h_{0,1}, h_{0,3}, h_{0,6}$, $D_0\setminus\{d_{0,0},d_{0,1}\}$, $t_{0,0,1},\ldots,t_{0,0,5}$ & $r_0,k,k_1,k_2$ & $z_0,v_1,w_1,z_2,z_3,r_1$ & $H,D_0,T_{0,0}$ \\	
 $R_{r_0}^3$ & $d_{0,0},d_{0,1}$&  $r_0,v_0,v_1,z_2,z_3$ & $k_0,k_1,k_2$ & $H,D_0,T_{0,0}$\\							
\end{longtable}
%
The next table proves $r_j$ to be inhibitable in $U^\varphi$ where $j\in \{0,\dots,14m-1\}$.
The event $k_{3j-1}$ has in both cases a non obeying signature.
But in the first case it exits and in the second case it enters. 
As it is unique in the TS where it occurs the second time the construction admits to chose $k_{3j+2}$ to be the only non obeying event in the TS.
Therefor, by complementation we can assure that $r_j$ is inhibitable at the appropriate TS.
%
\begin{longtable}{p{0.7cm} p{3cm}   p{2.5cm}    p{2.5cm}      p{2.7cm}  }
\textit{Nom} &\textit{States} & \textit{Exit} & \textit{Enter} & \textit{Affected TS}  \\ \hline
$R_{r_j}^1$ & $H\setminus H_{j-1}$, $H_{j-1}\setminus\{h_{j-1,4},h_{j-1,5}\}$, $d_{j-1,0},d_{j-1,1},d_{j-1,3}$ & $w_{2j-2},w_{2j-1},r_j$, $k_{3j-3},k_{3j-1}$ & $v_{2j-2},v_{2j-1},r_{j-1} $ & $H,D_{j-1}$, single transl./barter \\
 $R_{r_j}^2$& $h_{j-1,4},h_{j-1,5}$, $d_{j-1,0},d_{j-1,2}$ & $r_{j-2},v_{2j-2},a_{j-1}$, $r_j,v_{2j}, w_{2j}, z_{2j+1}$, $z_{2j+2},z_{2j+3}$ & $ z_{2j-2},w_{2j-1},r_{j-1}$, $k_{3j-2},k_{3j-1}$, $k_{3j},k_{3j+1}$ & $H,D_{j-1},D_j$						
\end{longtable}
\end{proof}

\begin{lemma}
\label{lemma:accordance}
The accordance events are inhibitable in $U^\varphi$.
\end{lemma}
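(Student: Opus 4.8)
The plan is to follow the table template already used throughout this section. The accordance event $a_j$, for $j\in\{0,\dots,14m-1\}$, occurs on the edge $d_{j,1}\edge{a_j}d_{j,3}$ of the duplicator $D_j$ and, when $j\le 14m-2$, additionally on the edge $h_{j,8}\edge{a_j}h_{j+1,8}$ of the headmaster; in particular $a_{14m-1}$ is unique in $D_{14m-1}$. By the ESSP requirement I must produce, for every state $s$ of $U^\varphi$ with $s\ne d_{j,1}$ and, when $j\le 14m-2$, with $s\ne h_{j,8}$, a region of $U^\varphi$ inhibiting $a_j$ at $s$; observe that the key region is useless here, since it leaves every accordance event obeying. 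As before, the regions will be collected in tables with columns Name, States, Exit, Enter and Affected TS, each region specified by its non-obeying events, under the standing convention that such a region is empty on, hence has every event obeying and every state outside, each transition system not listed under Affected TS, and with Lemma~\ref{lemma:easyLiftedRegion} available to extend a region freely over any linear gadget in which the forced non-obeying events occur at most once; this holds for every key copy met inside a translator or a barter, and for every reachability, vice, wire, proxy and variable-representer event in the gadgets we touch.

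For the main region I would make $a_j$ exit. Inspecting $D_j$, one checks that $a_j$ is forced to obey whenever $v_{2j}$, $v_{2j+1}$ and $w_{2j}$ all obey together with one of $k_{3j+1},k_{3j+2}$, and that the diamonds of $H_j$, namely the two $v$-edges into $h_{j,4}$ and the four $z$-edges touching $h_{j,1}$ and $h_{j,8}$, prevent $v_{2j}$ from being non-obeying without over-constraining the zeros $z_{2j},z_{2j+1}$. Hence the only consistent way to make $a_j$ non-obeying is to keep $v_{2j},v_{2j+1},w_{2j}$ obeying and take $R\cap D_j=\{d_{j,1},d_{j,2}\}$ or its complement, which turns $k_{3j}$ into an exiting event, $k_{3j+1}$ and $k_{3j+2}$ into entering events, and $a_j$ into an exiting event. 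The copy $k_{3j+2}$ is unique in its translator or barter, so Lemma~\ref{lemma:easyLiftedRegion} disposes of it; but $k_{3j}$ and $k_{3j+1}$ are shared with $H_{j+1}$, so the forced membership pattern must be propagated deterministically through $H_{j+1}$, $D_{j+1}$, $H_{j+2}$, and so on, and one checks that this propagation cannot be stopped early, because making the next duplicator collapse would force $h_{j+1,8}\in R$, contradicting the exiting of $a_j$ in the headmaster. The result is a region whose support is a window along the production line of the headmaster together with the matching duplicators and a single translator or barter, which excludes all the remaining translators, barters and variable manifolders, and which therefore inhibits $a_j$ at all but a handful of states near $H_j$, $H_{j+1}$ and $D_j$; at the far end $D_{14m-1}$ pushes its forced non-obeying copies into translators or barters, where they are unique, so the cascade terminates cleanly.

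It then remains to cover the residual states at which the main region takes value $1$, essentially $d_{j,2}$, the few headmaster and duplicator states swept in by the cascade, and $h_{j+1,8}$ if it was not already excluded. For these I would supply one or two complementary regions: the complement of the main region, which makes $a_j$ enter and inhibits $a_j$ exactly at the states the main region contained; a region with $R\cap D_j=\{d_{j,3}\}$ making $a_j$ enter and cascading symmetrically through $H_j$, $H_{j-1}$, and so on; and, for the two or three states immediately adjacent to $d_{j,1}$ and $h_{j,8}$, an ad hoc region in which $a_j$ is the only non-obeying event inside $D_j$, respectively inside the headmaster, and the remaining non-obeying events are reachability or wire events that are unique in their transition system. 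The boundary cases are easy: for $j=14m-1$ the event $a_j$ lives only in $D_{14m-1}$, so a single region of the form ``$D_{14m-1}$ with $a_j$ exiting and $k_{3j},k_{3j+1},k_{3j+2}$ suitably oriented, those copies being unique in their translators and barters'' inhibits $a_j$ wherever needed; for $j=0$ the upstream end of the cascade reaches the module $H_0$, which carries the key event $k$ itself, and this only simplifies matters.

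I expect the cascade to be the main obstacle. Because each zero $z_{2j},z_{2j+1}$ lies on two edges inside $H_j$, forcing $a_j$ non-obeying forces a key copy non-obeying, which forces the behaviour of the next module, so the main region cannot be kept local: it threads the whole production line of the headmaster and all of the duplicators. The delicate point is to keep the window of affected states small enough that only finitely many residual states survive, and to verify along the entire cascade that every forced non-obeying event either stays on the headmaster-and-duplicator backbone or occurs at most once in each translator, barter or variable manifolder it meets, so that Lemma~\ref{lemma:easyLiftedRegion} together with the convention on unlisted transition systems completes every region without inconsistency. This bookkeeping, while purely mechanical, is exactly what makes the present case, like its companions in this section, lengthy and technical.
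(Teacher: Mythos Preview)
Your analysis of the duplicator $D_j$ contains a false uniqueness claim that derails the whole plan. You assert that ``the only consistent way to make $a_j$ non-obeying is to keep $v_{2j},v_{2j+1},w_{2j}$ obeying and take $R\cap D_j=\{d_{j,1},d_{j,2}\}$ or its complement.'' This is wrong: nothing forces $w_{2j}$ to obey. The paper's main region for $a_j$ takes $w_{2j}$ \emph{entering} and $R\cap D_j=\{d_{j,0},d_{j,1},d_{j,2}\}$, which makes $k_{3j+1}$ \emph{obey}. With $k_{3j+1}$ obeying, there is no unstoppable cascade through $H_{j+1},D_{j+1},H_{j+2},\ldots$; the paper instead localises the region by toggling a few reachability events $r_{j-1},r_j,r_{j+2}$ so that only $H$, $D_j$ and the single translator/barter carrying $k_{3j+2}$ are affected. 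Four further regions (again using reachability and wire events, and in one case the consistency events of a variable manifolder) pick off the handful of residual states $h_{j,6},h_{j,7},h_{j+1,0}$, the states of $H_0,\dots,H_{j-1}$, the duplicator $D_j$ itself, and the sink of $k_{3j+2}$.

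Your fallback plan is also broken. You propose to cover the residual states with ``the complement of the main region, which makes $a_j$ enter and inhibits $a_j$ exactly at the states the main region contained.'' But complementing a region flips both the membership function and the signature, so a region and its complement inhibit $a_j$ at \emph{exactly the same} set of states; the complement buys nothing. Moreover, even granting your cascade region, it would force $k_{3(j')+2}$ non-obeying for every $j'\ge j$, hence would touch \emph{many} translators and barters simultaneously rather than ``a single translator or barter,'' and you have not checked that these constraints are mutually consistent (they need not be once two forced key copies land in the same barter, which does happen). The paper avoids all of this by exploiting $w_{2j}$ and the reachability events to keep each region local.
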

\begin{proof}
The following table proves $a_j$ to be inhibitable in $U^\varphi$.
 In $R_{a_j}^4$ the key copy $k_{3j-1}$ exits and $k_{3j+2}$ enters. 
 Both them can occur a second time in a translator or a barter.
 If they occur in different TSs the set of states of these TSs can be chosen in a way that $k_{3j-1}$ and $k_{3j+2}$ are the only non obeying events.
 A special case occur when both of them are in the same TS.
For avoiding a tedious detailed case analyses we give a brief justification for $R_{a_j}^3$  being a region in that case.
If  $k_{3j-1},k_{3j+2}$ occurin the same TS, then, by construction, they occur in a barter $B_n$ since no consecutive \glqq free to use\grqq\ copies occur in the same translator.
Again by construction, $k_{3j-1}$ occur at $b_{n,0}$ and $k_{3j+2}$ occur at $b_{n,2}$.
Consequently, by the signature of these events, the event $c_n$ must have a negative signature.
Now there is an $i\in \{0,\dots,m-1\}$ such that $n\in \{4i ,4i+1,4i+2,4i+3\}$.
If $n\in \{4i,4i+2\}$ we chose both $c_{4i},c_{4i+2}$ as the only exiting events of $X_i$ which assures that no variable representer is concerned.
It is easy to see that the choice of the signature of the second consistency event, which is not $c_n$, can be done such that it is the only non obeying event in the barter where it occur.
Similarly, if $n\in \{4i+1,4i+3\}$ we chose both of them as the only exiting events of $X_i$ which surely is possible by construction.
Again, the second consistency event is the only non obeying event in the barter where it occur.
Therefor, the regions of all the concerned barters and the variable manifolder can be chosen compatible with the signature given in $R_{a_j}^4$.
%
\begin{longtable}{p{0.7cm} p{3cm}   p{2.5cm}    p{2.5cm}      p{2.7cm}  }
\textit{Nom} &\textit{States} & \textit{Exit} & \textit{Enter} & \textit{Affected TS}  \\ \hline
$R_{a_j}^1$ & $H_j\setminus\{h_{j,6},h_{j,7}\}$, $H_{j+1}\setminus\{h_{j+1,0}\}$, $H_{j+2},\dots H_{14m-1}$ & $r_{j-1},r_{j+2},a_j$, $k_{3j}$ & $r_j,w_{2j},w_{2j+1}$, $k_{3j+2}$ & $H,D_j$, single transl./barter  \\
 $R_{a_j}^2$ & $h_{j,6},h_{j,7}$ & $r_{j-2},k_{3j-4},a_j$, $r_{j+1}, k_{3j}$, $w_{2i},w_{2i+1}$, \scalebox{0.8}{$i\in \{j+1,\dots,14m-1\}$}  & $w_{2j-2},w_{2j-1},z_{2j}$, $z_{2j+1}, r_{j}, k_{3i+1}$, \scalebox{0.8}{$i\in \{j+1,\dots,14m-1\}$}& $H,D_{j-1},\dots,D_{14m-1}$ \\	
 $R_{a_j}^3$  & $H_0,\dots,H_{j-1}$& $k_{3j-2},a_j,r_{j+1}$, $k_{3j},k_{3j-1}$, possibly $c_{4i},c_{4i+2}$ or $c_{4i+1},c_{4i+3}$ &  $k_{3j-3},w_{2j},w_{2j+1}$,  $a_{j-1},r_j,k_{3j+2}$& $H,D_{j-1},D_j$, some transl./barters/ var. manifld.\\	
 $R_{a_j}^4$ & $h_{j+1,0},D_j$& $r_{j-1},v_{2j},a_j$, $w_{2i},w_{2i+1}$, \scalebox{0.8}{$i\in \{j+1,\dots,14m-1\}$} & $z_{2j},w_{2j+1},k_{3j+2}$, $k_{3i+1}$, \scalebox{0.8}{$i\in \{j+1,\dots,14m-1\}$} & $H,D_j,\dots,D_{14m-1}$, single transl./barter \\ 		
 $R_{a_j}^5$ & sink of $k_{3j+2}$ & $r_{j-1},k_{3j-2},a_j$, $r_{j+1},k_{3j}$& $z_{2j},v_{2j+1},w_{2j+1},r_j$&	$H,D_j$			
\end{longtable}
\end{proof}

\begin{lemma}
\label{lemma:consistency}
The consistency events are inhibitable in $U^\varphi$.
\end{lemma}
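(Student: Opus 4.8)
The statement to prove is that all consistency events $c_q$, $q \in \{0, \dots, 4m-1\}$, are inhibitable in $U^\varphi$. Recall from the construction that $c_q$ occurs exactly twice: once in the barter $B_q$ (at the arc $b_{q,0} \edge{c_q} b_{q,2}$) and once in the variable manifolder $X_i$ for the appropriate $i$ (where $q \in \{4i, 4i+1, 4i+2, 4i+3\}$); depending on whether $q$ is $4i,4i{+}1$ or $4i{+}2,4i{+}3$, $c_q$ appears on the arc $x_{i,0}\edge{}x_{i,1}$ / $x_{i,1}\edge{}x_{i,2}$ or $x_{i,3}\edge{}x_{i,4}$ / $x_{i,4}\edge{}x_{i,5}$.

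**Plan.** The approach mirrors the previous lemmas in this subsection: for each relevant state $s$ where $\neg(s \edge{c_q})$ I will exhibit a region of $U^\varphi$ that excludes $s$ and assigns $c_q$ a non-obeying (say exiting) signature, presenting these regions in a table with columns \emph{Name}, \emph{States}, \emph{Exit}, \emph{Enter}, \emph{Affected TS}. First I would set $c_q \in exit_R$ and propagate the forced consequences through $B_q$: making $c_q$ exit forces $b_{q,0} \in R$ and $b_{q,2} \notin R$, which in turn fixes the signatures of the two key copies $k_{q_1}, k_{q_2}$ incident to $B_q$. Since each key copy also occurs once in the headmaster/duplicator machinery, I would let those key copies be obeying where possible, or otherwise chase their effect one step into the corresponding $D_j$ and $H_j$, exactly as in Lemma~\ref{lemma:KeyCopies}-style arguments — keeping the affected-TS list short. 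This handles inhibition of $c_q$ at all states of $B_q$ except possibly $b_{q,1}$ (the state ``above'' the arc), which I would cover with a second region that instead sets $k_{q_1}$ non-obeying and uses $b_{q,1} \notin R$.

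For the occurrences inside the variable manifolder $X_i$, I would use regions where $c_q$ exits and the \emph{other three} consistency events of $X_i$ obey, so that $R \cap S(X_i)$ is forced into one of the shapes described in Lemma~\ref{lem:key_region_2-grade_2-fold_ESSP_Feasibility}(\ref{lem:key_region_2-grade_2-fold_ESSP_Feasibility:item5})'s proof (either all of $x_{i,0},x_{i,1},x_{i,2}$ in/out together, similarly for $x_{i,3},x_{i,4},x_{i,5}$) — but now with exactly one ``step'' along the broken consistency arc, letting me place whichever states of $X_i$ are not yet excluded outside $R$. Because $c_q$ is the only non-obeying event in the other barter $B_{q'}$ sharing a key copy route, and by Lemma~\ref{lemma:easyLiftedRegion} (or the simpler ``disjoint-event-set'' lifting principle stated just before it), I can extend these $X_i$-local and $B_q$-local signatures to a genuine region of $U^\varphi$ without touching translators or other gadgets, so the affected-TS column stays limited to $X_i$, $B_q$, and at most one more barter or duplicator.

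**Main obstacle.** The delicate point will be the bookkeeping around the shared key copies $k_{q_1}, k_{q_2}$: these indices are chosen (via $q_1 = 6q + 18m + 1$, $q_2 = q_1 + 3$) to come from the ``free'' portion of the key-copy list, so each also occurs as $k_{3j+2}$ in some duplicator $D_j$, and forcing a non-obeying signature on it may ripple into $H_{j+1}$ and $D_{j+1}$. I expect to need a small case split on whether the two occurrences of a key copy land in the same auxiliary TS or in different ones — the same phenomenon already handled in Lemma~\ref{lemma:KeyCopies} — and I would reuse that analysis verbatim, invoking complementation (a region and its complement have negated signatures) so that the state where $c_q$ fails to be inhibited by one choice of key-copy orientation is covered by the opposite orientation. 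Once that is organized, the table itself is routine; the only genuine check is that every listed set of non-obeying events is consistent across the shared events of the affected TSs, which follows from the fact that $c_q$-incident gadgets intersect only in key copies and consistency events, all of whose signatures we are explicitly pinning down.
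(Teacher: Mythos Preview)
Your plan contains a real gap in the treatment of the variable manifolder $X_i$. You write that you will ``use regions where $c_q$ exits and the \emph{other three} consistency events of $X_i$ obey'' and that this lets you ``extend \ldots\ to a genuine region of $U^\varphi$ without touching translators.'' But this is impossible: in $X_i$ the three arcs $x_{i,\ell}\edge{X_i^{\cdots}}x_{i,\ell+3}$ tie the two columns together, so if exactly one consistency event is non-obeying while the other three obey, the two columns $\{x_{i,0},x_{i,1},x_{i,2}\}$ and $\{x_{i,3},x_{i,4},x_{i,5}\}$ are forced into incompatible patterns and at least one variable-representer event $X_i^\alpha,X_i^\beta,X_i^\gamma$ must get a non-zero signature. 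Each such event occurs a second time in some translator $T_\alpha,T_\beta,T_\gamma$, so the region necessarily reaches into a translator; you cannot keep the affected-TS list to ``$X_i$, $B_q$, and at most one more barter or duplicator.''

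The paper handles this by two complementary moves you are missing. First, in the $R^1$-type regions it simply accepts that one representer event (e.g.\ $X_i^\alpha$) becomes non-obeying, lists the corresponding translator $T_\alpha$ among the affected TSs, and simultaneously makes the key copy $k_{q_1}=k_{3(y+8i)+2}$ exit, which also takes care of $b_{q,1}$ in one shot (so your separate ``second region for $b_{q,1}$'' is unnecessary). Second, and this is the key trick, in the $R^2$-type regions it pairs $c_{4i}$ with $c_{4i+2}$ (respectively $c_{4i+1}$ with $c_{4i+3}$) and lets \emph{both} exit; then both columns of $X_i$ get the same pattern, all three representer events obey, and no translator is touched --- this region covers precisely the states (in particular those of $T_\alpha$ and the leftover duplicator state $d_{y+8i,4}$) that the first region could not reach. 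Your ``main obstacle'' paragraph focuses on key-copy ripple into $H$ and $D$, which the paper does address, but the structural issue you actually have to solve is the manifolder--translator coupling, and for that you need either the pairing trick or an explicit treatment of the affected translator.
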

\begin{proof}
Before we show the consistency events to be inhibitable, we remind in which way the indices of the parts of the headmaster, the duplicators, the translators and the barters are connected.
The duplicator $D_j$ forces with the aid of the events $v_{2j},v_{2j+1},w_{2j}$, occurring a second time in $H_j$, the key copies $k_{3j},\ldots,k_{3j+2}$ to exit where $k_{3j+2}$ is for free use elsewhere.
For each clause $K_i\in \{K_0,\ldots,K_{m-1}\}$ the concerning union $T_i$ needs six key copies.
That is, for $i\in \{0,\dots,m-1\}$ the duplicators $D_{6i},\dots,D_{6i+5}$  provides the six key copies $k_{3(6i)+2},k_{3(6i+1)+2},\dots,k_{3(6i+5)+2}$ for $T_{i}$ with the aid of the vice events from $H_{6i},\dots,H_{6i+5}$.
Each of the $4m$ barters $B_0,\dots,B_{4m-1}$ provides exactly one of the consistency events $c_0,\dots,c_{4m-1}$ with the aid of $2$ key copies.
That is, the duplicator $D_{6(m-1)+6}=D_{6m}$ provides the first key copy for free use in a barter.
More exactly, with $y=6m$, by our aforementioned considerations we have that the duplicators $D_{y+2j}$ and $D_{y+2j+1}$ provides the $2$ key copies $k_{3(y+2j)+2}$ and $k_{3(y+2j+1)+2}$ for the barter $B_j$ with the aid of the vice events provided by $H_{y+2j}$ and $H_{y+2j+1}$.
Eventually, for each $c_j$, where $j\in \{0,\ldots,4m-1\}$, there is a unique $i\in \{0,\dots,m-1\}$ and a unique $l\in \{0,\dots,3\}$such that $j=4i+l$.
It depends on $l$ at which state of $X_i$ the event $c_j$ occurs.
With $R_{4i},\dots, R_{4i+3}$ we refer to these cases.
%
\begin{longtable}{p{0.7cm} p{3cm}   p{2.5cm}    p{2.5cm}      p{2.7cm}  }
\textit{Nom} &\textit{States} & \textit{Exit} & \textit{Enter} & \textit{Affected TSs}  \\ \hline
$R_{4i}^1$ & $X_i,B_{4i}$, $d_{y+8i,0},\dots, d_{y+8i,3}$ & $c_{4i},X_i^\alpha$, $r_{y+8i},z_{2(y+8i)+1}$, $w_{2(y+8i)}$, $k_{3(y+8i)+2}$ & $r_{y+8i-1}$, $v_{2(y+8i)+1}$& $H, D_{y+8i},B_{4i},X_i$, $T_\alpha$\\
$R_{4i}^2$ & $d_{y+8i,4}$, $T_\alpha$ & $c_{4i},c_{4i+2}$& & $B_{4i},B_{4i+2}$ \\
$R_{4i+1}^1$ & $X_i,B_{4i+1}$, $d_{y+8i+2,0},\dots, d_{y+8i+2,3}$ & $c_{4i+1},X_i^\beta$, $r_{y+8i+2}$, $z_{2(y+8i+2)+1}$, $w_{2(y+8i+2)}$, $k_{3(y+8i+2)+2}$ & $c_{4i},r_{y+8i+1}$, $v_{2(y+8i+2)+1}$& $H, D_{y+8i+2},B_{4i+1}$, $X_i$, $T_\beta$ \\
$R_{4i+1}^2$ &  $d_{y+8i+2,4}$, $T_\beta$ & $c_{4i+1},c_{4i+3}$ & & $X_i,B_{4i+1},B_{4i+3}$\\
$R_{4i+2}^1$ & $X_i,B_{4i+2}$, $d_{y+8i+4,0},\dots, d_{y+8i+4,3}$ & $c_{4i+2}$, $r_{y+8i+4}$, $z_{2(y+8i+4)+1}$, $w_{2(y+8i+4)}$, $k_{3(y+8i+4)+2}$ & $r_{y+8i+3}$, $v_{2(y+8i+4)+1},X_i^\alpha$ & $H, D_{y+8i+4},B_{4i+2}$, $X_i$, $T_\alpha$ \\
$R_{4i+2}^2$ & $d_{y+8i+4,4}$, $T_\alpha$ & $c_{4i},c_{4i+2}$& & $B_{4i},B_{4i+2}$ \\
$R_{4i+3}^1$ & $X_i,B_{4i+3}$, $d_{y+8i+6,0},\dots, d_{y+8i+6,3}$ & $c_{4i+3}$, $r_{y+8i+6}$, $z_{2(y+8i+6)+1}$, $w_{2(y+8i+6)}$, $k_{3(y+8i+6)+2}$ & $c_{4i+2}$,$r_{y+8i+5}$, $v_{2(y+8i+6)+1}$, $X_{i}^\beta$ & $H, D_{y+8i+6},X_i$, $B_{4i+2},B_{4i+3}$, $T_\beta$ \\						
$R_{4i+3}^2$ &  $d_{y+8i+6,4}$, $T_\beta$ & $c_{4i+1},c_{4i+3}$ & & $X_i,B_{4i+1},B_{4i+3}$
\end{longtable}
\end{proof}

\begin{lemma}
\label{lemma:variable}
The variable events are inhibitable in $U^\varphi$.
\end{lemma}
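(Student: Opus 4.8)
The plan is to imitate the preceding inhibitability lemmas, in particular Lemmas~\ref{lemma:B-Copies}, \ref{lemma:ProxiesInhibitable} and \ref{lemma:keycopies}: fix a variable $X_i\in V(\varphi)$, occurring in the clauses $K_\alpha,K_\beta,K_\gamma$, and handle each of its representer events $X_i^\alpha,X_i^\beta,X_i^\gamma$ separately; by the symmetry of the manifolder $X_i$ it suffices to treat a generic one of them, say $Z=X_i^\zeta$. Since every event of $U^\varphi$ occurs at most twice, $Z$ has exactly two occurrences: one in $X_i$, on the edge from $x_{i,0}$, $x_{i,1}$ or $x_{i,2}$ to $x_{i,3}$, $x_{i,4}$ or $x_{i,5}$ respectively, and one in $T_\zeta$, at $t_{\zeta,0,1}$, $t_{\zeta,1,1}$ or $t_{\zeta,0,3}$ according to whether $X_i$ is the first, second or third variable of $K_\zeta$. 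I would exhibit, for every state $s$ with $\neg(s\edge{Z})$, a region of $U^\varphi$ inhibiting $Z$ at $s$, collecting these regions in a table that lists only the non-obeying events with their signatures; in every case the affected gadgets form a small subset of $U^\varphi$, and the region extends to all of $U^\varphi$ by letting every remaining event obey, using the lifting facts stated at the start of this section together with Lemma~\ref{lemma:easyLiftedRegion}.

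The first, ``bulk'' region is the one with $R\cap X_i=\{x_{i,0},x_{i,1},x_{i,2}\}$. Evaluating the six edges of $X_i$ shows that this forces all four consistency events $c_{4i},\dots,c_{4i+3}$ to obey and all three variable events $X_i^\alpha,X_i^\beta,X_i^\gamma$ to exit, while nothing else is constrained; hence $R$ may be taken empty on every gadget apart from $T_\alpha,T_\beta,T_\gamma$, on each of which it is the short prefix forced by the corresponding exiting variable event. This single region inhibits $Z$ at every state outside $X_i$ and the three translators --- in particular at the whole of $H$, every duplicator and every barter --- as well as at $x_{i,3},x_{i,4},x_{i,5}$ and at every state of $T_\alpha,T_\beta,T_\gamma$ outside those forced prefixes. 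A second region, the one whose restriction to $X_i$ is the appropriate singleton ($\{x_{i,0}\}$, $\{x_{i,1}\}$ or $\{x_{i,2}\}$, according to $\zeta$), forces $Z$ and one or two consistency events to be non-obeying while the other two variable events obey; extended across the one or two barters $B_{4i+\ell}$ carrying those consistency events and kept empty elsewhere, it inhibits $Z$ at the remaining states $x_{i,1},x_{i,2}$ of $X_i$ and at all of $T_{\zeta'}$ for $\zeta'\ne\zeta$ (the states of $B_{4i}$ that this region misses are already covered by the bulk region). After these two regions the only states at which $Z$ is not yet inhibited are the few lying in $T_\zeta$ strictly before the occurrence of $Z$: namely $t_{\zeta,0,0}$, or $t_{\zeta,1,0}$, or $t_{\zeta,0,0},t_{\zeta,0,1},t_{\zeta,0,2}$, according to the position of $X_i$ in $K_\zeta$.

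The third group of regions handles exactly those ``early'' translator states. In the third-variable case it is enough to make $Z$ exit and the copy event $\tilde X_b^\zeta$ of $T_\zeta$ enter, which shapes the region on $T_{\zeta,0}$ so that $t_{\zeta,0,0},t_{\zeta,0,1},t_{\zeta,0,2}$ are excluded and only drags in the already-relevant $T_{\zeta,2}$; the restriction to $X_i$ is again a singleton or $\{x_{i,0},x_{i,1},x_{i,2}\}$, so compatibility with $B_{4i}$ and $T_\beta,T_\gamma$ is inherited from the earlier regions. In the first- and second-variable cases, however, the only event lying in $T_\zeta$ between the start and the occurrence of $Z$ is a key copy --- $k_{18\zeta+2}$, respectively $k_{18\zeta+5}$ --- so making it non-obeying drags the region back through the duplicator $D_{6\zeta}$, respectively $D_{6\zeta+1}$, and hence into the headmaster, and one reconstructs the headmaster/duplicator fragment exactly as in the proof of Lemma~\ref{lemma:keycopies}. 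I expect this last step to be the main obstacle: the ``early'' states of $T_\zeta$ demonstrably cannot be covered by any region whose non-obeying events all live in $X_i$ and $T_\zeta$, so one is forced to switch on an extra key copy and then verify, component by component through a duplicator and into $H$, that the resulting multi-event signature is globally consistent; this is routine but is where the bulk of the remaining bookkeeping sits.
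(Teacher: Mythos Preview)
Your proposal is correct and follows essentially the same approach as the paper. Both identify the same three-way case split on the position of $Z$ in $T_\zeta$; both use the ``$\tilde X_b^\zeta$ enters'' trick for the third-variable case and the key-copy-through-$D_{6\zeta}$ (resp.\ $D_{6\zeta+1}$) and $H$ construction for the first- and second-variable cases; and your ``singleton'' regions are exactly the paper's $R_{X_i^\alpha},R_{X_i^\beta},R_{X_i^\gamma}$. The only organizational difference is that the paper folds your ``bulk'' pattern (all three of $X_i^\alpha,X_i^\beta,X_i^\gamma$ exiting, consistency events obeying) together with the headmaster/duplicator machinery into single regions $R_{t_{\alpha,0,1}},R_{t_{\alpha,1,1}},R_{t_{\alpha,0,3}}$ that inhibit $Z$ on all of $T_\zeta$ at once, whereas you separate the far-away states (bulk region) from the early $T_\zeta$ states (third group); either packaging works, and the bookkeeping you anticipate for the first/second-variable case is exactly what the paper carries out.
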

\begin{proof}
On the one hand we have to distinguish between $X_i^\alpha,X_i^\beta,X_i^\gamma$, where $i\in \{0,\dots,m-1\}$, and on the other hand, we have to investigate different cases arising by the different possible positions at which a variable representer can occur.
By definition, $T_\alpha$ is the translator where the variable representer $X_a^\alpha$ occur.
We show the inhibition of $X_i^\alpha$ in $T_\alpha$ for every position where it can occur, that is, either at $t_{\alpha,0,1}$ ($R_{t_{\alpha,0,1}}$), at $t_{\alpha,0,3}$ ($R_{t_{\alpha,0,3}}$) or at $t_{\alpha,1,1}$ ($R_{t_{\alpha,1,1}}$).
Observe that, as each variable occur in exactly three clauses, $\alpha,\beta,\gamma$ are pairwise different.
Concerning the inhibition in their translators, the proof can be done in an absolutely similar way for $X_i^\beta,X_i^\gamma$.
Hence, we omit these proofs.
That each of $X_i^\alpha,X_i^\beta,X_i^\gamma$ is inhibitable at all other states, at which inhibition isn't done by one of the regions of $R_{t_{\alpha,0,1}},R_{t_{\alpha,1,1}},R_{t_{\alpha,0,3}}$,  the inhibition of each of $X_i^\alpha,X_i^\beta,X_i^\gamma$ is shown explicitly by the regions $R_{X_i^\alpha},R_{X_i^\beta},R_{X_i^\gamma} $.

\begin{longtable}{p{0.7cm} p{3cm}   p{2.5cm}    p{2.5cm}      p{2.7cm}  }
\textit{Nom} &\textit{States} & \textit{Exit} & \textit{Enter} & \textit{Affected TSs}  \\ \hline
$R_{t_{\alpha,0,1}}$ & $T_{\alpha,0}$ & $X_i^\alpha,X_i^\beta,X_i^\gamma$, $r_{6\alpha-1},v_{12\alpha+1}$ & $z_{12\alpha+1}, k_{18\alpha+2}$, $r_{6\alpha},w_{12\alpha}$ & $X_i, T_{\alpha,0},T_\beta,T_\gamma$, $D_{6\alpha},H$  \\
$R_{t_{\alpha,1,1}}$ & $T_{\alpha,1}$ & $X_i^\alpha,X_i^\beta,X_i^\gamma$, $r_{6\alpha},v_{12\alpha+3}$ & $z_{12\alpha+3}, k_{18\alpha+5}$, $r_{6\alpha+1}$, $w_{12\alpha+2}$ & $X_i, T_{\alpha,0},T_\beta,T_\gamma$, $D_{6\alpha+1},H$  \\
 $R_{t_{\alpha,0,3}}$& $T_{\alpha,0}$   & $X_i^\alpha,X_i^\beta,X_i^\gamma$ &  $\tilde{X}_b^\alpha$  & $X_i,T_{\alpha,0},T_{\alpha,2}$, $T_\beta,T_\gamma$   \\	
 $R_{X_i^\alpha} $ & remaining &  $X_i^\alpha,c_{4i}$ &  & $X_i,B_{4i}, T_\alpha$\\	
 $R_{X_i^\beta} $ & remaining &  $X_i^\beta,c_{4i+1}$ & $c_{4i}$ & $X_i,B_{4i},B_{4i+1}$, $T_\beta$\\	
 $R_{X_i^\gamma} $ & remaining &  $X_i^\gamma$ &  $c_{4i+1}$ & $X_i,B_{4i+1},T_\gamma$
\end{longtable}

\end{proof}

\begin{lemma}
\label{lemma:proxylocum}
The proxy events are inhibitable and the event $\tilde{X}^i_b$ is inhibitable in $U^\varphi$ for each $i \in \{0,\dots,m-1\}$.
\end{lemma}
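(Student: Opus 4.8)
The statement concerns two kinds of event, both entirely local to the translator $T_i=U(T_{i,0},T_{i,1},T_{i,2})$: the proxy $p_i$ is a source only at $t_{i,1,2}$ and $t_{i,2,2}$, and $\tilde{X}_b^i$ is a source only at $t_{i,0,2}$ and $t_{i,2,1}$. By the general principle of Section~\ref{sec:Secondary_Proofs} it therefore suffices, for each of these two events $e$, to provide a few regions of $U^\varphi$ whose non-obeying events lie only in the affected gadgets and which together exclude every state of $T_i$ at which $e$ is not a source; the remaining TSs can then be padded by zeros. I would present the result as two tables in the style of the preceding lemmas of this subsection.

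First I would record, for each of $p_i$ and $\tilde{X}_b^i$, the \emph{single-event region} $R^1$ whose only non-obeying event is $e$ itself (leaving). Forcing every key copy and every variable representer of $T_i$ to obey, such an $R^1$ necessarily sets one component of $T_i$ to $\emptyset$ and the other two to the prefixes ending just before the two occurrences of $e$; hence $R^1$ already inhibits $e$ at every state of every other gadget and at all states of $T_i$ except two short prefixes (for $p_i$: $t_{i,1,0},t_{i,1,1}$ and $t_{i,2,0},t_{i,2,1}$; for $\tilde{X}_b^i$: $t_{i,0,0},t_{i,0,1}$ and $t_{i,2,0}$). These leftover states are then covered by a single \emph{correction region} $R^2$ in which $e$ still leaves but the neighbour on each prefix is forced non-obeying: for $p_i$ these neighbours are the representer $X_b^i$ and the copy $\tilde{X}_b^i$; for $\tilde{X}_b^i$ they are the representer $X_a^i$ and the key copy $k_{18i+8}$.

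The substance of the proof is then to check that these forced perturbations extend to a region of the whole union. When a variable representer $X_v^i$ is forced to enter, the manifolder $X_v$ propagates the signature to all three copies $X_v^\alpha,X_v^\beta,X_v^\gamma$; taking $R\cap X_v=\{x_{v,3},x_{v,4},x_{v,5}\}$ (symmetrically $\{x_{v,0},x_{v,1},x_{v,2}\}$) makes all three copies enter (resp. leave) while keeping the consistency events $c_{4v},\dots,c_{4v+3}$ obeying, so that no barter is disturbed; in each of the two other clauses containing $v$ the affected copy is the unique occurrence in its (linear) translator component, so Lemma~\ref{lemma:easyLiftedRegion} extends $R^2$ there while the remaining components are padded by zeros. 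When $k_{18i+8}=k_{3j+2}$ is forced non-obeying, it is the unique occurrence of that key copy in its duplicator $D_j$, and I would write the induced region of $D_j$ explicitly (it must disturb $w_{2j}$ or $k_{3j+1}$, hence a few states of $H$), exactly as in Lemma~\ref{lemma:keycopies}. Putting these pieces together, $R^2$ is a region of $U^\varphi$ leaving $e$ and excluding the leftover prefix states, so $R^1$ and $R^2$ together inhibit $e$ everywhere required.

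The main obstacle is precisely this propagation bookkeeping: a correction region for $p_i$ or $\tilde{X}_b^i$ cannot be confined to $T_i$, because the manifolders deliberately synchronise the three copies of each variable and $k_{18i+8}$ also lives in a duplicator. One must verify that, with the canonical choices above, every further gadget touched (another translator, a manifolder, a barter, $D_j$, $H$) is hit in an internally consistent way and never re-disturbed after being fixed. This is routine but tedious, which is why it is relegated to tables; no ingredient beyond Lemma~\ref{lemma:easyLiftedRegion} and the region shapes already used for key copies and consistency events is needed.
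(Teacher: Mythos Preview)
Your proposal is correct and follows essentially the same strategy as the paper: a single-event region handles everything outside $T_i$ and most of $T_i$, and a second ``correction'' region pushes the immediate neighbours ($X_b^i,\tilde X_b^i$ for $p_i$; $X_a^i,k_{18i+8}$ for $\tilde X_b^i$) to enter so that the remaining prefix states are excluded, with the resulting propagation through the manifolder, the two foreign translators, $D_{6i+2}$ and $H$ handled exactly as you describe. The paper merely reverses the order (its $R^1$ is your correction region, its $R^2$ your single-event region) and spells out the induced $H$/$D$ events $r_{6i+1},v_{12i+5},z_{12i+5},w_{12i+4},\dots$ explicitly in the tables rather than leaving them as ``routine bookkeeping''.
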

\begin{proof}
We show that $\tilde{X}^i_b$ is inhibitable in $U^\varphi$ where $i \in \{0,\dots,m-1\}$.
Let $X_a^i$ be the variable representer occurring at $t_{i,0,1}$ and $X_a^j,X_a^l$ be the second and third representation of the variable $X_a$.
\begin{figure}[H]
\centering
\begin{longtable}{p{0.7cm} p{3cm}   p{2.5cm}    p{2.5cm}      p{2.7cm}  }
\textit{Nom} &\textit{States} & \textit{Exit} & \textit{Enter} & \textit{Affected TSs}  \\ \hline
$R_{\tilde{X}^i_b}^1$ & $T_i$ & $\tilde{X}^i_b,r_{6i+1}$, $v_{12i+5}$ & $z_{12i+5},w_{12i+4}$, $X_a^i,X_a^j,X_a^l$, $k_{18i+8}$, $r_{6i+2}$ & $H,D_{6i+2},X_a$, $T_i,T_j,T_l$ \\
 $R_{\tilde{X}^i_b}^2$ & remaining & $\tilde{X}^i_b$ & & $T_i$ \\						
\end{longtable}
\end{figure}
We show that $p_i$ is inhibitable in $U^\varphi$ where $i\in \{0,\dots,m-1\}$.
Let $X_b^{i}$ be the variable representer occurring at $t_{i,1,1}$  and $X_b^j,X_b^l$ be the second and third representation of the variable $X_b$.
\begin{figure}[H]
\centering
\begin{tabular}{p{0.7cm} p{3cm}   p{2.5cm}    p{2.5cm}      p{2.7cm}  }
\textit{Nom} &\textit{States} & \textit{Exit} & \textit{Enter} & \textit{Affected TSs} \\ \hline
$R_{p_i}^1$ & $T_i\setminus \{t_{0,0,3},\dots,t_{0,0,5}\}$ & $p_i,r_{6i}$, $v_{12i+3}$ & $\tilde{X}^i_b, X^i_b, X^j_b, X^l_b$, $z_{12i+3},w_{12i+2}$, $k_{18i+5}$, $r_{6i+2}$ & $H,D_{6i+1}, X_b$, $T_i, T_j,T_l$ \\
 $R_{p_i}^2$ & remaining & $p_i$ & & $T_i$ \\						
\end{tabular}
\end{figure}
\end{proof}
%


\subsection{Proof of Lemma \ref{lem:alwaysSSP}}

\begin{proof}
The observation of the following features for each transition system $A$ in $U^\varphi\setminus H$ will help us to see easily, that each SSP instance $(s,s')$ of $U^\varphi\setminus H$ is solvable in $U^\varphi$ if $U^\varphi$ has the ESSP:
\begin{itemize}
	\item Each event occurring in $A$ is unique in $A$.
	\item There is at most one state in $A$ having no successor.
\end{itemize}
Now let $s,s'$ be two states of $A$, a TS of $U^\varphi\setminus H$.
The second feature assures that at least one of $s,s'$ is the source of an event $e$.
Without lost of generality we assume $s\edge{e}$.
By the first feature we have that $\neg(s'\edge{e})$.
As $U^\varphi$ has the ESSP, we have that $e$ is inhibitable at $s'$ by a region of $U^\varphi$.
That is, there is a region $R$ of $U^\varphi$ such that $sig_R(e)=-1$, implying $R(s)=1$, and $R(s')=0$, witnessing the separability of $s$ and $s'$.
Hence, it only remains to proof with regions of $U^\varphi$ that $H$ has the SSP, too. 
We observe that for each $j\in \{0,\dots,14m-1\}$ the events $r_j,v_{2j},v_{2j+1},w_{2j},w_{2j+1},a_j$ are unique in $H$ and that $z_{2j}$ and $z_{2j+1}$ occur only in the part $H_j$ of $H$.
Moreover, the state $h_{14m-1,8}$ is the only state of $H$ without successor.
With the same arguments stated before we have that each state of $s$ of $H\setminus \{h_{j,1}, h_{j,6},h_{j,7}\}$ is separable from each state of $H\setminus\{s\}$ and, furthermore, each of $\{h_{j,1}, h_{j,6},h_{j,7}\}$ is additionally separably from each state of $H\setminus H_j$.

Consequently, it remains to show, that $\{h_{j,1}, h_{j,6},h_{j,7}\}$ are pairwise separably.
Section~\ref{lem:Secondary_Proofs_for_the_Hardness_of_2grade_2-ESSP} yields us appropriate regions:
For all $j\in \{0,\dots,14m-1\}$ the region $R_{z_{2j}}$ separates the states $h_{j,1},h_{0,7}$ and $h_{j,6}, h_{j,7}$ and the region $R_{z_{2j+1}}$ separates $h_{j,1}$ and $h_{j,6}$.
\end{proof}


\end{document}